\documentclass[11pt]{article}
\usepackage[margin=.7in,left=.7in]{geometry}
\usepackage{mathrsfs}
\usepackage{amsmath}
\usepackage{amsfonts}
\usepackage{amssymb}
\usepackage{amsthm}
\usepackage{xfrac}
\usepackage{multirow}
\usepackage{rotating}
\usepackage{graphicx}
\usepackage{color}
\usepackage{xcolor}
\usepackage{authblk}
\usepackage{hyperref}
\usepackage[all]{xy}

\usepackage{tikz}

\usepackage{floatflt}  
\usepackage{wrapfig} 
\usepackage{array}     
\newcolumntype{L}[1]{>{\raggedright\let\newline\\\arraybackslash\hspace{0pt}}m{#1}}
\newcolumntype{C}[1]{>{\centering\let\newline\\\arraybackslash\hspace{0pt}}m{#1}}
\newcolumntype{R}[1]{>{\raggedleft\let\newline\\\arraybackslash\hspace{0pt}}m{#1}}

\usepackage{adjustbox}

\usepackage{diagbox}  

\usepackage{stmaryrd}    

\usepackage{enumerate} 

\usepackage{helvet}   

\setcounter{tocdepth}{2}

\usepackage{tabularx}   

\makeatletter
\newif\if@sup
\newtoks\@sups
\def\append@sup#1{\edef\act{\noexpand\@sups={\the\@sups #1}}\act}%
\def\reset@sup{\@supfalse\@sups={}}%
\def\mk@scripts#1#2{\if #2/ \if@sup ^{\the\@sups}\fi \else%
  \ifx #1_ \if@sup ^{\the\@sups}\reset@sup \fi {}_{#2}%
  \else \append@sup#2 \@suptrue \fi%
  \expandafter\mk@scripts\fi}
\def\tensor#1#2{\reset@sup#1\mk@scripts#2_/}
\def\multiscripts#1#2#3{\reset@sup{}\mk@scripts#1_/#2%
  \reset@sup\mk@scripts#3_/}
\makeatother

\makeatletter
\newbox\slashbox \setbox\slashbox=\hbox{$/$}
\def\itex@pslash#1{\setbox\@tempboxa=\hbox{$#1$}
  \@tempdima=0.5\wd\slashbox \advance\@tempdima 0.5\wd\@tempboxa
  \copy\slashbox \kern-\@tempdima \box\@tempboxa}
\def\slash{\protect\itex@pslash}
\makeatother

\def\clap#1{\hbox to 0pt{\hss#1\hss}}

\def\mathrlap{\mathpalette\mathrlapinternal}

\def\mathrlapinternal#1#2{\rlap{$\mathsurround=0pt#1{#2}$}}

\let\oldroot\root
\def\root#1#2{\oldroot #1 \of{#2}}
\renewcommand{\sqrt}[2][]{\oldroot #1 \of{#2}}

\DeclareSymbolFont{symbolsC}{U}{txsyc}{m}{n}
\SetSymbolFont{symbolsC}{bold}{U}{txsyc}{bx}{n}
\DeclareFontSubstitution{U}{txsyc}{m}{n}

\DeclareSymbolFont{stmry}{U}{stmry}{m}{n}
\SetSymbolFont{stmry}{bold}{U}{stmry}{b}{n}

\DeclareFontFamily{OMX}{MnSymbolE}{}
\DeclareSymbolFont{mnomx}{OMX}{MnSymbolE}{m}{n}
\SetSymbolFont{mnomx}{bold}{OMX}{MnSymbolE}{b}{n}
\DeclareFontShape{OMX}{MnSymbolE}{m}{n}{
    <-6>  MnSymbolE5
   <6-7>  MnSymbolE6
   <7-8>  MnSymbolE7
   <8-9>  MnSymbolE8
   <9-10> MnSymbolE9
  <10-12> MnSymbolE10
  <12->   MnSymbolE12}{}


\makeatletter
\def\Decl@Mn@Delim#1#2#3#4{%
  \if\relax\noexpand#1%
    \let#1\undefined
  \fi
  \DeclareMathDelimiter{#1}{#2}{#3}{#4}{#3}{#4}}
\def\Decl@Mn@Open#1#2#3{\Decl@Mn@Delim{#1}{\mathopen}{#2}{#3}}
\def\Decl@Mn@Close#1#2#3{\Decl@Mn@Delim{#1}{\mathclose}{#2}{#3}}
\Decl@Mn@Open{\llangle}{mnomx}{'164}
\Decl@Mn@Close{\rrangle}{mnomx}{'171}
\Decl@Mn@Open{\lmoustache}{mnomx}{'245}
\Decl@Mn@Close{\rmoustache}{mnomx}{'244}
\makeatother

\makeatletter
\DeclareRobustCommand\widecheck[1]{{\mathpalette\@widecheck{#1}}}
\def\@widecheck#1#2{%
    \setbox\z@\hbox{\m@th$#1#2$}%
    \setbox\tw@\hbox{\m@th$#1%
       \widehat{%
          \vrule\@width\z@\@height\ht\z@
          \vrule\@height\z@\@width\wd\z@}$}%
    \dp\tw@-\ht\z@
    \@tempdima\ht\z@ \advance\@tempdima2\ht\tw@ \divide\@tempdima\thr@@
    \setbox\tw@\hbox{%
       \raise\@tempdima\hbox{\scalebox{1}[-1]{\lower\@tempdima\box
\tw@}}}%
    {\ooalign{\box\tw@ \cr \box\z@}}}
\makeatother


\makeatletter
\def\udots{\mathinner{\mkern2mu\raise\p@\hbox{.}
\mkern2mu\raise4\p@\hbox{.}\mkern1mu
\raise7\p@\vbox{\kern7\p@\hbox{.}}\mkern1mu}}
\makeatother




\newcommand{\maps}{\colon} 
\newcommand{\C}{\mathbb{C}} 
\renewcommand{\H}{{\mathbb H}}  
\renewcommand{\O}{{\mathbb O}}  
\newcommand{\K}{{\mathbb K}}  

\renewcommand{\Im}{\mathrm{Im}} 
\newcommand{\tr}{\mathrm{tr}} 

\newcommand{\U}{{\rm U}} 
\newcommand{\SU}{{\rm SU}} 
\newcommand{\SO}{{\rm SO}} 
\newcommand{\Pin}{{\rm Pin}} 
\newcommand{\Spin}{{\rm Spin}} 

\newcommand{\h}{\mathfrak{h}} 
\newcommand{\so}{\mathfrak{so}} 

\newcommand{\Spaces}{{\rm Spaces}} 
\newcommand{\Orb}{{\rm Orb}} 

\newcommand{\Cl}{{\rm C}\ell} 
\newcommand{\Ho}{{\rm Ho}} 

\newcommand{\id}{{\rm id}} 
\newcommand{\A}{\mathcal{A}} 
\newcommand{\B}{\mathcal{B}} 




\newcommand{\gt}{>}
\newcommand{\lt}{<}

\newcommand{\R}{\ensuremath{\mathbb R}}
\newcommand{\Z}{\ensuremath{\mathbb Z}}

\renewcommand{\(}{\begin{equation*}}
\renewcommand{\)}{\end{equation*}}
\newcommand{\bea}{\begin{eqnarray*}}
\newcommand{\eea}{\end{eqnarray*}}

\newcommand{\dslash}{\hspace{-1mm}\sslash \hspace{-1mm}}

\theoremstyle{italics}
\newtheorem{theorem}{Theorem}[section]
\newtheorem{lemma}[theorem]{Lemma}
\newtheorem{prop}[theorem]{Proposition}

\theoremstyle{definition}
\newtheorem{defn}[theorem]{Definition}

\newtheorem{example}[theorem]{Example}

\newtheorem{remark}[theorem]{Remark}
\newtheorem{note[theorem]}{Note}

\usepackage{amsfonts}



\begin{document}

\title{Real ADE-equivariant (co)homotopy and Super M-branes}

\author{John Huerta\thanks{CAMGSD, Instituto Superior T\'ecnico, Av.\ Ravisco Pais, 1049-001 Lisboa, Portugal },
  Hisham Sati\thanks{Division of Science and Mathematics, New York University, Abu Dhabi, UAE},
  Urs Schreiber\thanks{Division of Science and Mathematics, New York University, Abu Dhabi, UAE, on leave from Czech Acad. of Science, Prague}}

\maketitle

\begin{abstract}
A key open problem in M-theory is the identification of the degrees of freedom that are expected to be hidden
at ADE-singularities in spacetime. Comparison with the classification of D-branes by K-theory suggests that
the answer must come from the right choice of generalized cohomology theory for M-branes.
Here we show that \emph{real equivariant Cohomotopy on superspaces} is a consistent such choice,
at least rationally. After explaining this new approach, we demonstrate how to use  Elmendorf's Theorem
in equivariant homotopy theory to reveal ADE-singularities as part of the data of equivariant $S^4$-valued
super-cocycles on 11d super-spacetime. We classify these super-cocycles and find a detailed
\emph{black brane scan} that enhances the entries of the old brane scan to cascades of fundamental brane
super-cocycles on strata of intersecting black M-brane species.
We find that
on each singular stratum the black brane's 
instanton contribution, 
namely its super Nambu-Goto/Green-Schwarz action,
appears
as the homotopy datum associated to the morphisms
in the \emph{orbit category}.
\end{abstract}

\newpage

\tableofcontents

\newpage

\section{Introduction}
\label{Survey}

A {\bf homotopy theory} (e.g. \cite{Lurie09}, see Sec. \ref{HomotopyTheory}) is a mathematical theory in which the concept of
strict equality is generalized to that of \emph{homotopy}. A special case is higher gauge theory, where equality of
higher gauge field configurations  is generalized to that of \emph{higher gauge equivalence}. Homotopy theory is extremely rich, involving a zoo of higher-dimensional structures and exhibiting a web of interesting and often unexpected \emph{equivalences}, which say that very different-looking homotopy theories are, in fact, equivalent.
One such equivalence is \emph{Elmendorf's Theorem} (Proposition \ref{ElmendorfTheorem} below). This says, roughly, that
homotopy theory for \emph{equivariant} homotopies is equivalent to another homotopy theory where no
equivariance on homotopies exists anymore, but where instead extra structure appears \emph{on singularities},
namely on the fixed point strata of the original group action.

\medskip
\noindent {\bf String/M-theory} (e.g. \cite{Duff99B, BeckerBeckerSchwarz06}, see Sec. \ref{Physics}) is also extremely rich, involving a zoo of higher-dimensional objects
(branes) and expected to exhibit a web of interesting and often unexpected \emph{dualities}, which say that very
different-looking string theories are, in fact, equivalent. The most striking such duality is the one between all
superstring theories on the one hand, and something with the working title \emph{M-theory} on the other.
Under this duality, ``fundamental'' or ``perturbative'' strings and branes, whose sigma-model description is
equivariant with respect to certain finite group actions, are supposed to be related to ``black'' or ``non-perturbative''
branes located at the singular fixed points of this group action. This duality is crucial for M-theory to be viable at all,
since realistic gauge force fields can appear \emph{only} at these singularities (reviewed in Sec. \ref{MBraneInterpretation}). But with the mathematics of M-theory still elusive, the identification of the extra
degrees of freedom of M-theory, that ought to be ``hidden'' at these singularities, has remained a key open problem.

\medskip
We highlight that {\bf string theory and homotopy theory are closely related}: every homotopy theory
induces a flavor of \emph{generalized cohomology theories} (see \ref{Coh}), and a fundamental insight of string theory is that the true
nature of the F1/D$p$-branes (and the higher gauge fields that they couple to) is as cocycles in
the generalized cohomology theory \emph{twisted K-theory} \cite{Witten98, FreedWitten99, MooreWitten00, EvslinSati06, GS19} (however, see 
\cite{ADerivationofK, Evslin06, KS2, S4} for limitations) , or rather \emph{real} twisted K-theory
on \emph{real} orbifolds (``orientifolds'') \cite[Sec. 5.2]{Witten98}, \cite{Gukov99, Hori99, DFM09, DMR13}. See Example \ref{ExamplesOfCohomologyTheories} below, or for the general setting see the gentle survey  \cite{FSS19}.

\medskip
This suggests that the solution to the {\bf open problem of the elusive nature of M-branes} requires,
similarly, identifying the right generalized cohomology theory, hence the right homotopy theory, in which the M-branes
(their charges) are cocycles -- see \cite{S1, S2, S3, S4}. For the \emph{fundamental} M2/M5-brane, we already know
this generalized cohomology theory in the rational approximation:
it is \emph{Cohomotopy on superspaces} in degree 4 (\cite[Sec. 2.5]{S-top}\cite{cohomotopy}\cite{FSS16a}, recalled in \cite{FSS19}
and below in Section \ref{TheFundamentalBraneScan}).
Hence the open question is: which enhancement of rational Cohomotopy
also captures the \emph{black} M-branes located at real ADE-singularities?

\medskip
Here we present a {\bf candidate solution}:
we set up \emph{equivariant rational Cohomotopy on superspaces} (Sec. \ref{ERSHTForSuperBranes}) and
show that (Sec.  \ref{ADESingularitiesInSuperSpacetime}, \ref{ADEEquivariantRationalCohomotopy},  and \ref{ADEEquivariantMBraneSuperCucycles})
the equivalence of homotopy theories that is given by Elmendorf's Theorem
translates into a duality in string/M-theory that makes the black branes at real ADE-singularities appear from the equivariance
of the super-cocycle of the fundamental M2/M5-brane:

\vspace{.3cm}
\hspace{-.8cm}
\begin{tabular}{|C{6.5cm}  C{3cm}  C{7cm} |}
\hline
  $G$-equivariance &
  ${\xymatrix{\ar@{<->}[rrrr]^{
    \mbox{
      \footnotesize
      \begin{tabular}{c}
        Elmendorf's Theorem
      \end{tabular}
    }}&&&&}}$  & $G$-fixed points
  \\
  \hline
  \hline
  \begin{tabular}{c}
    Fundamental M2/M5-branes
    \\
    on 11d superspacetime with
    \\
     real ADE-equivariant sigma-model
  \end{tabular}
    & ${\xymatrix{\ar@{<->}[rrr]^{\mbox{\footnotesize Theorem \ref{RealADEEquivariantEndhancementOfM2M5Cocycle}}}&&&}}$ &
  \begin{tabular}{c}
    Fundamental F1/M2/M5-branes
    \\
    on intersecting black M-branes
    \\
    at real ADE-singularities
    \\
  \end{tabular}
  \\
  \hline
\end{tabular}

\vspace{.3cm}
Our {\bf main theorem}, Theorem \ref{RealADEEquivariantEndhancementOfM2M5Cocycle},
 shows that enhancement
of the fundamental M2/M5-brane cocycle from rational Cohomotopy of superspaces to
\emph{equivariant  rational Cohomotopy} exists.
Furthermore, the possible choices correspond to fundamental branes propagating
on intersecting \emph{black M-branes} at real ADE-singularities,
with the brane instanton contribution
\cite{BeckerBeckerStrominger95, HarveyMoore99}
of the superembedding of the black brane \cite{Sorokin99, Sorokin01}
providing the equivariant coherence.
Part of this statement is a classification of finite
group actions on super Minkowski super spacetime $\mathbb{R}^{10,1\vert \mathbf{32}}$
by super-isometries. Our {\bf first theorem}, Theorem \ref{SuperADESingularitiesIn11dSuperSpacetime},
shows that this classification accurately reproduces the local models for $\geq \sfrac{1}{4}$-BPS black brane
solutions in 11-dimensional supergravity.

\medskip

\noindent The {\bf outline} of this article is as follows.
We start by providing novel consequences for the understanding of M-branes
in Sec.  \ref{Physics}. In particular,  in Sec. \ref{MBraneInterpretation}, we explain the physical meaning
of Theorem \ref{RealADEEquivariantEndhancementOfM2M5Cocycle},
by comparison to the story told in the informal string theory literature, the main points of which we
streamline there and in Sec.   \ref{TheFundamentalBraneScan}.
Sec.   \ref{EquivariantHomotopyTheory} provides the proper mathematical setting for our formulation.
After collecting the concepts and techniques of equivariant homotopy theory in Sec. \ref{HomotopyTheoryOfGSpaces},
we provide an extension to the super setting in Sec. \ref{ERSHTForSuperBranes}, which we hope would also be
of independent interest.  Our main results on Real ADE-Equivariant Cohomotopy classification of super M-branes
are given in Sections \ref{ADESingularitiesInSuperSpacetime}, \ref{ADEEquivariantRationalCohomotopy},  and \ref{ADEEquivariantMBraneSuperCucycles}, where
we discuss equivariant enhancements (according to Example \ref{EquivariantEnhancementOfSuperCocycles})
of the M2/M5-brane cocycle (Prop. \ref{M2M5SuperCocycle}).
Group actions on the 4-sphere model space, as well as the resulting incarnation of the 4-sphere
as an object in equivariant rational super homotopy theory, are given in Sec.  \ref{ADEEquivariantRationalCohomotopy}.
 In Sec.  \ref{ADESingularitiesInSuperSpacetime}
 we describe real ADE-actions on 11-dimensional superspacetime $\mathbb{R}^{10,1\vert \mathbf{32}}$
and their fixed point super subspaces.
The corresponding super-orbifolds constitute a supergeometric refinement of the
du Val singularities in Euclidean space. Having discussed real ADE-actions both
on $\mathbb{R}^{10,1\vert \mathbf{32}}$ (Sec. \ref{ADESingularitiesInSuperSpacetime}) and on $S^4$ (Sec. \ref{ADEEquivariantRationalCohomotopy}), the possible equivariant enhancements of the M2/M5-brane cocycle
that are compatible with these actions are studied in Sec. \ref{ADEEquivariantMBraneSuperCucycles}. In particular,
in Sec. \ref{FundamentalBraneCocyclesOnSystemsOfIntersectingBlackBranes}
we show that the homotopy-datum in these enhancements is
the instanton contribution of the corresponding black brane.

\medskip
\noindent The final statement is Theorem \ref{RealADEEquivariantEndhancementOfM2M5Cocycle}.
This involves three ingredients:
\begin{center}
\fbox{
$
  \xymatrix@R=.5em{
    &
    \mathbb{R}^{10,1\vert \mathbf{32}}
    \ar@(ul,ur)[]^{G_{\mathrm{ADE}} \times G_{\mathrm{HW}}}
    \ar[rr]^-{\mu_{{}_{M2/M5}}}_<<<<<<<<{\ }="s"
    &&
    S^4
    \ar@(ul,ur)[]^{G_{\mathrm{ADE}} \times G_{\mathrm{HW}}}
    \\
    \\
    \\
    \\
    &
    \mathbb{R}^{p,1\vert \mathbf{N}}
    \ar@{^{(}->}[uuuu]|-{
      \mbox{
        \tiny
        \color{blue}
        \begin{tabular}{c}
          super-
          \\
          embedding
        \end{tabular}
      }
    }
    \ar[rr]^>>>>>>>>{\ }="t"
    &&
    S^d
    \ar@{^{(}->}[uuuu]
    \\
    &
    \mbox{
      \tiny
      \color{blue}
      \begin{tabular}{c}
        black brane
        \\
        at
        \\
        Real ADE-singularity
      \end{tabular}
    }
    &
    \mbox{
      \tiny
      \color{blue}
      \begin{tabular}{c}
        Real ADE-equivariant
        \\
        fundamental brane
        \\
        cocycle
      \end{tabular}
    }
    &
    \mbox{
      \tiny
      \color{blue}
      \begin{tabular}{c}
        coefficient for
        \\
        Real ADE-equivariant
        \\
        rational Cohomotopy
      \end{tabular}
    }
    \\
    &
    \mbox{\bf Sec. \ref{ADESingularitiesInSuperSpacetime}}
    &
    \mbox{\bf Sec. \ref{ADEEquivariantMBraneSuperCucycles}}
    &
    \mbox{\bf Sec. \ref{ADEEquivariantRationalCohomotopy}}
    \ar@{==>}|{
      \mbox{
        \tiny
        \color{blue}
        \begin{tabular}{c}
          brane instanton
          \\
          contribution from
          \\
          Green-Schwarz action
        \end{tabular}
      }
    }
    "s"; "t"
  }
$}
\end{center}
\begin{itemize}
    \vspace{-2mm}
  \item In Section \ref{ADESingularitiesInSuperSpacetime} we classify group actions on the \emph{domain space}, namely on $D = 11$, $\mathcal{N} =1$
    super-Minkowski spacetime, which have the same fixed point locus as an involution and are at least $\sfrac{1}{4}$-BPS.
    If orientation-preserving, these actions are given by finite subgroups of $\mathrm{SU}(2)$, hence finite groups in the ADE-series.
    By the comparison in Section \ref{Physics},
    these singular super-loci recover the \emph{superembedding}
    of super $p$-branes advocated in \cite{Sorokin99, Sorokin01}.

  \vspace{-2mm}
  \item In Section \ref{ADEEquivariantRationalCohomotopy} we discuss real ADE-space structure on the \emph{coefficient space}, namely the rational 4-sphere.
    This establishes the cohomology theory \emph{ADE-equivariant rational cohomotopy in degree 4} to which the M2/M5-cocycle enhances.

  \vspace{-3mm}
  \item In Section \ref{ADEEquivariantMBraneSuperCucycles} we discuss the possible equivariant enhancements of the M2/M5-cocycle itself, mapping
   between these real ADE-spaces.
   \newline
   In particular, in Section \ref{FundamentalBraneCocyclesOnSystemsOfIntersectingBlackBranes}
   we show
   that the homotopies appearing in the equivariant enhancement
   exhibit the super Nambu-Goto/Green-Schwarz Lagrangian 
   of the solitonic brane loci,
   hence the local density of their brane instanton contributions
   \cite{BeckerBeckerStrominger95, HarveyMoore99}.
\end{itemize}

In the appendices
we provide our spinorial conventions (Section \ref{SpacetimesAndSpin}) as well as some basic notions from homotopy theory and cohomology theories (Section \ref{HomotopyTheory}).

\newpage

\vspace{0cm}

\hypertarget{TableA}{$\,$}

{\footnotesize
\begin{center}
\begin{tabular}{|c|l|c|}
  \hline
  \begin{tabular}{c}
    {\bf Real ADE-Singularities}
    \\
    {\bf according to Thm \ref{SuperADESingularitiesIn11dSuperSpacetime}, Prop. \ref{IntersectingBlackBraneSpecies}}
  \end{tabular}
  &
  \begin{tabular}{c}
    {\bf Selected physics literature}
    \\
    {\bf on black brane species}
  \end{tabular}
  &
  \begin{tabular}{c}
    \bf Example
  \end{tabular}
  \\
  \hline
  \hline
  General &
  \begin{tabular}{l}
    \cite{AcharyaWitten01}
    \\
   \cite[Sec. 3]{Acharya02}
   \\
   \cite{AtiyahWitten03}
   \\
   \cite{AcharyaGukov04}
  \end{tabular}
  & (Sec. \ref{MBraneInterpretation})
  \\
  \hline
  \hline
  $\mathrm{MO}9$ &
  \begin{tabular}{l}
    \cite{HoravaWitten96a}
    \\
    \cite{HoravaWitten96b}
    \\
    \cite[Sec. 3]{GKST01}
  \end{tabular}
  & \ref{TheMO9}
  \\
  \hline
  $\mathrm{MO5}$ &
  \begin{tabular}{l}
    \cite{Witten95II}
    \\
    \cite{Hori97}
    \\
    \cite[Sec. 3.1]{HananyKol00}
  \end{tabular}
  &
  \ref{TheBlackM5}
  \\
  \hline
   $\mathrm{MO1}$
   &
   \begin{tabular}{l}
     \cite{Hull84}
     \\
     \cite[p. 94]{Philip05}
     \\
     \cite[Sec. 3.3]{HananyKol00}
   \end{tabular}
   & \ref{TheMWave}
   \\
   \hline
   \hline
  $\mathrm{M}2$ &
  \begin{tabular}{l}
    \cite{MFFGME10}
    \\
    \cite{BLMP13}
  \end{tabular}
  & \ref{TheBlackM2}
  \\
  \hline
  $\mathrm{MK}6$ &
  \begin{tabular}{l}
    \cite[p. 6-7]{Townsend95},
    \\
    \cite[Sec. 2]{Sen97},
    \\
    \cite[p. 17-18]{AtiyahWitten03}
  \end{tabular}
  & \ref{TheMK6}
  \\
  \hline
  \hline
  $\mathrm{M5}_{\mathrm{ADE}}$
  &
  \begin{tabular}{l}
    \cite[Sec.  8.3]{MF10}
    \\
    \cite{HMV13}
    \\
    \cite[Sec. 3]{ZHTV15}
    \\
  \end{tabular}
  & \ref{ADEM5}
  \\
  \hline
  \begin{tabular}{c}
    $\tfrac{1}{2}\mathrm{NS}5 =\mathrm{M5} \; \Vert \; \mathrm{MK6} \;\dashv\; \mathrm{MO9}_I$
  \end{tabular} &
  \begin{tabular}{l}
    \cite[Sec. 2.4]{BrodieHanany97}
    \\
    \cite{Berkooz98}
    \\
    \cite[Sec. 2]{EGKRS00}
    \\
    \cite[around Fig. 6.1, 6.2]{GKST01}
    \\
    \cite[Sec. 6, 7]{ZHTV15}
    \\
    \cite[p. 38 and around Fig. 3.9, 3.10]{Fazzi17}
  \end{tabular}
  & \ref{TheBlackNS5}
  \\
  \hline
  \begin{tabular}{c}
    $\mathrm{M1}= \mathrm{M2} \; \dashv \; \mathrm{M5} $
  \end{tabular}
  &
  \begin{tabular}{l}
    \cite[Sec. 2.2.1]{BPST10}
    \\
    \cite[Sec. 2.3]{HI13}
    \\
    \cite[HIKLV15]{HIKLV15}
  \end{tabular}
  & \ref{TheSelfDualString}
  \\
  \hline
  \begin{tabular}{c}
    $\mathrm{NS}1_{H}=\mathrm{M}2 \; \dashv \; \mathrm{MO}9_{H} $
  \end{tabular}
  &
  \begin{tabular}{l}
    \cite{LLO97},
    \\
    \cite{Kashima00}
    \\
    \cite[Sec. 2.2.2]{BPST10}
  \end{tabular}
  & \ref{TheBlackNS1H}
  \\
  \hline
  \begin{tabular}{c}
    $\mathrm{E}1=\mathrm{M2} \; \dashv \; \mathrm{MO}9_I$
  \end{tabular}
  &
  \begin{tabular}{l}
    \cite{KKLPV14}
  \end{tabular}
  & \ref{TheBlackNS1H}
  \\
  \hline
\end{tabular}
\end{center}
}

\noindent
{
{\bf {Table L.}} The list of symbols for the real ADE-singularities in 11d super spacetime,
as they appear in the classification of  Theorem \ref{SuperADESingularitiesIn11dSuperSpacetime} and Theorem \ref{RealADEEquivariantEndhancementOfM2M5Cocycle},
matched with pointers to selected references (out of many) in the physics literature, from which
the established name of the corresponding brane species may be identified, as discussed in detail in Sec. \ref{MBraneInterpretation}.
}

\begin{center}
\begin{tabular}{l}
{\bf List of results, tables and figures.}
\\
{
\footnotesize
\begin{tabular}{|l|ll|}
  \hline
  Singularities in $D = 11$, $\mathcal{N} = 1$ super spacetime & Table & \hyperlink{SingularitiesTable}{1}
  \\
  \hspace{.3cm} Simple singularities & Thm. & \ref{SuperADESingularitiesIn11dSuperSpacetime}
  \\
  \hspace{.3cm} Non-simple singularities & Prop. & \ref{NonSimpleRealSingularities}
  \\
  &Figs.&  \hyperlink{Figure1}{1}, \hyperlink{Figure2}{2}
  \\
  \hline
  Cocycles in equivariant super cohomotopy & Table & \hyperlink{EquivariantEnhancementsList}{2}
  \\
  & Thm.  & \ref{RealADEEquivariantEndhancementOfM2M5Cocycle}
  \\
  \hline
  Branes & &
  \\
  \hspace{.3cm} The old brane scan & Table & \hyperlink{TableB}{B}
  \\
  \hspace{.3cm} The fundamental brane bouquet & Figure & \hyperlink{Figure3}{3}
  \\
  \hspace{.3cm} Selected literature on black branes & Table & \hyperlink{TableA}{L}
  \\
  \hline
\end{tabular}
}
\end{tabular}
\end{center}

\newpage

\noindent {\bf Interpretation.} We suggest that the results of this article
may be understood as providing the {\bf black brane scan}:
we recall in Sec. \ref{TheFundamentalBraneScan} below, that for the \emph{fundamental branes}
(or ``probe branes'': the consistent Green--Schwarz-type sigma-models) such a cohomological classification of species is famously known as the \emph{old brane scan}
\cite{AETW87}\cite[p. 15]{Duff88}, recalled as Prop. \ref{TheOldBraneScan} below. Careful consideration of higher symmetries leads to a completion
 to the \emph{fundamental brane bouquet} \cite{FSS13}, indicated in \hyperlink{Figure2}{Figure 2} below.

\medskip
It has been an open problem (see \cite[p. 6-7]{Duff99}, \cite{Duff08}) to improve this ``fundamental brane scan/bouquet'' to a \emph{cohomological} classification\footnote{
  One may try to organize some of the branes missing from the old brane scan by other than cohomological means,
  such as by BPS solutions to supergravity \cite{DuffLu92, DuffKhuriLu95}. We discuss this in Section \ref{MBraneInterpretation}.
  } that includes the ``black'' brane species. Theorem \ref{RealADEEquivariantEndhancementOfM2M5Cocycle} suggests that the missing \emph{black brane scan}
is obtained by enhancing to equivariant cohomology; see Sec. \ref{MBraneInterpretation} for elaborations.

\medskip
Here it may be noteworthy that the emerging picture of M-theory thus obtained exhibits the foundational
paradigms of \emph{Klein geometry} and of \emph{Cartan geometry} (see e.g. \cite[Chapter 1]{CapSlovak09}):
\begin{enumerate}[\bf (i)]
\vspace{-2mm}
\item {\bf Klein geometry.} In the \emph{Erlangen program} of \cite{Klein1872} the \emph{basic shapes} of interest in geometry
are taken to be fixed loci of group actions on an ambient model space.\footnote{ From \cite[Sec. 1]{Klein1872}:
``As a generalization of geometry arises then the following comprehensive problem: {\it given a manifoldness
and a group of transformations of the same; to investigate the configurations belonging to the
manifoldness with regard to such properties as are not altered by the transformations of the group}.''.}
Theorem \ref{SuperADESingularitiesIn11dSuperSpacetime} shows that, when the ambient space is taken
 to be $D = 11$, ${\cal N} = 1$ super-spacetime, then the \emph{basic shapes} in the Kleinian sense are
precisely the black M-brane species and their bound states.

\begin{center}
{\small
\begin{tabular}{|c|c|c|c|}
  \hline
  & \bf General & \bf Black M-brane species
  \\
  \hline
  \hline
  \begin{tabular}{c}
    \bf Orbifold
    \\
    \bf Klein geometry
  \end{tabular}
    &
  $\Gamma \backslash G / H$ &
  $
  \underset{
    \mbox{
      cone with real ADE-singularity
    }
  }{
  \underbrace{
  \left( G_{\mathrm{ADE}} \times \mathbb{Z}_2 \right)
  \backslash
  \underset{
    \underset{
      \mbox{
        \begin{tabular}{c}
          super
          Minkowski spacetime
        \end{tabular}
      }
    }{ \mbox{\large $\mathbb{R}^{10,1\vert \mathbf{32}}  $ } }
  }{
  \underbrace{
  \overset{ \mbox{
    \begin{tabular}{c} super \\ Poincar{\'e} group\end{tabular}
  } }{\overbrace{\mathrm{Iso}(\mathbb{R}^{10,1\vert \mathbf{32}})}} / \mathrm{Spin}(10,1)
  }}
  }}
  $
  \\
  \hline
\end{tabular}
}
\end{center}

For example, the following picture (from \hyperlink{Figure1}{Figure 1} below) illustrates the super orbifold Klein geometry that is the local model
for M2-branes at an ADE-singularity
intersecting an MO9-plane at a Ho{\v r}ava-Witten $\mathbb{Z}_2$-singularity (see \hyperlink{TableA}{Table L} for a list of literature, and
see  Sec. \ref{Physics} for discussion of the physics background):

\hspace{3cm}
\scalebox{.9}{
\includegraphics[width=.5\textwidth]{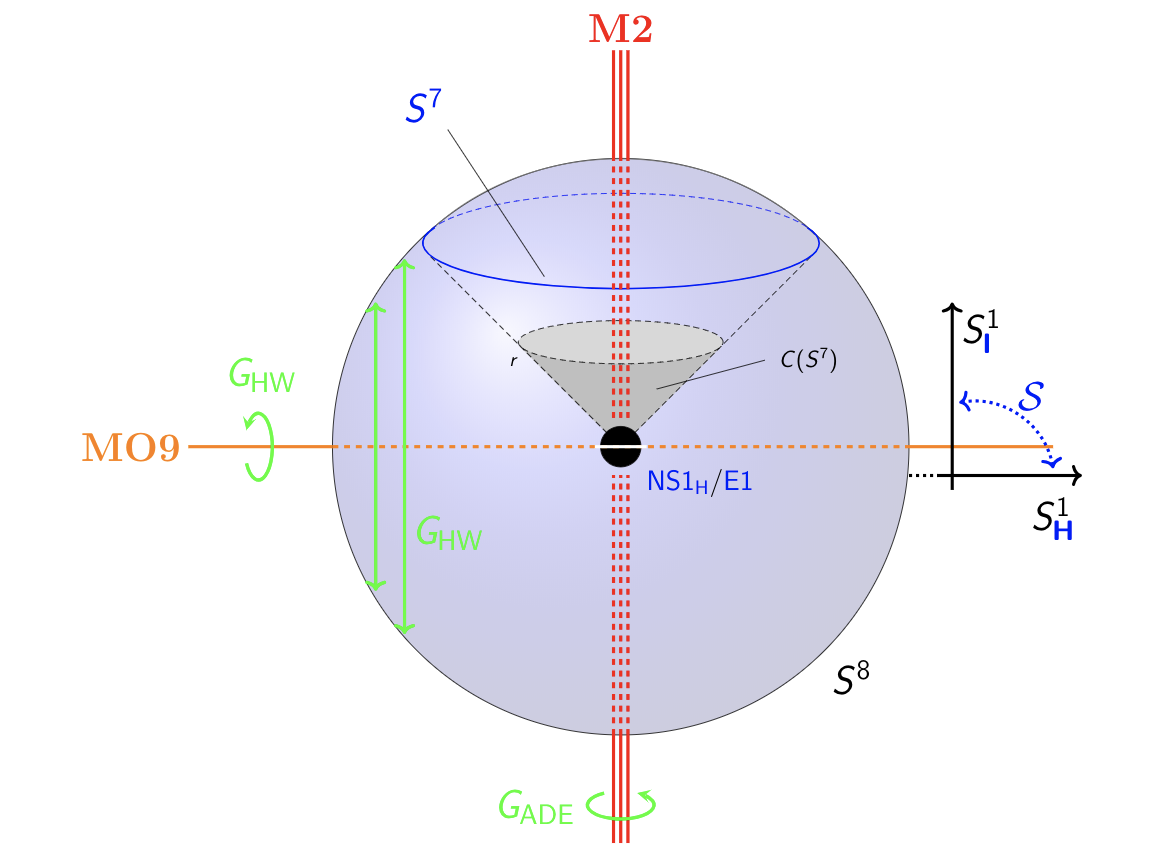}
}

Our main Theorem \ref{RealADEEquivariantEndhancementOfM2M5Cocycle} characterizes cocycles on these Klein geometries in equivariant super homotopy theory, with coefficients in
the 4-sphere, equipped with analogous group actions.

By Example \ref{EquivariantEnhancementOfSuperCocycles} below, these are, in particular, systems of maps from the fixed point strata of the singular super spacetime, to those of the (rational) 4-sphere, as indicated by the following picture:

\vspace{-1.4cm}
\hspace{3cm}
\raisebox{-190pt}{
\includegraphics[width=.6\textwidth]{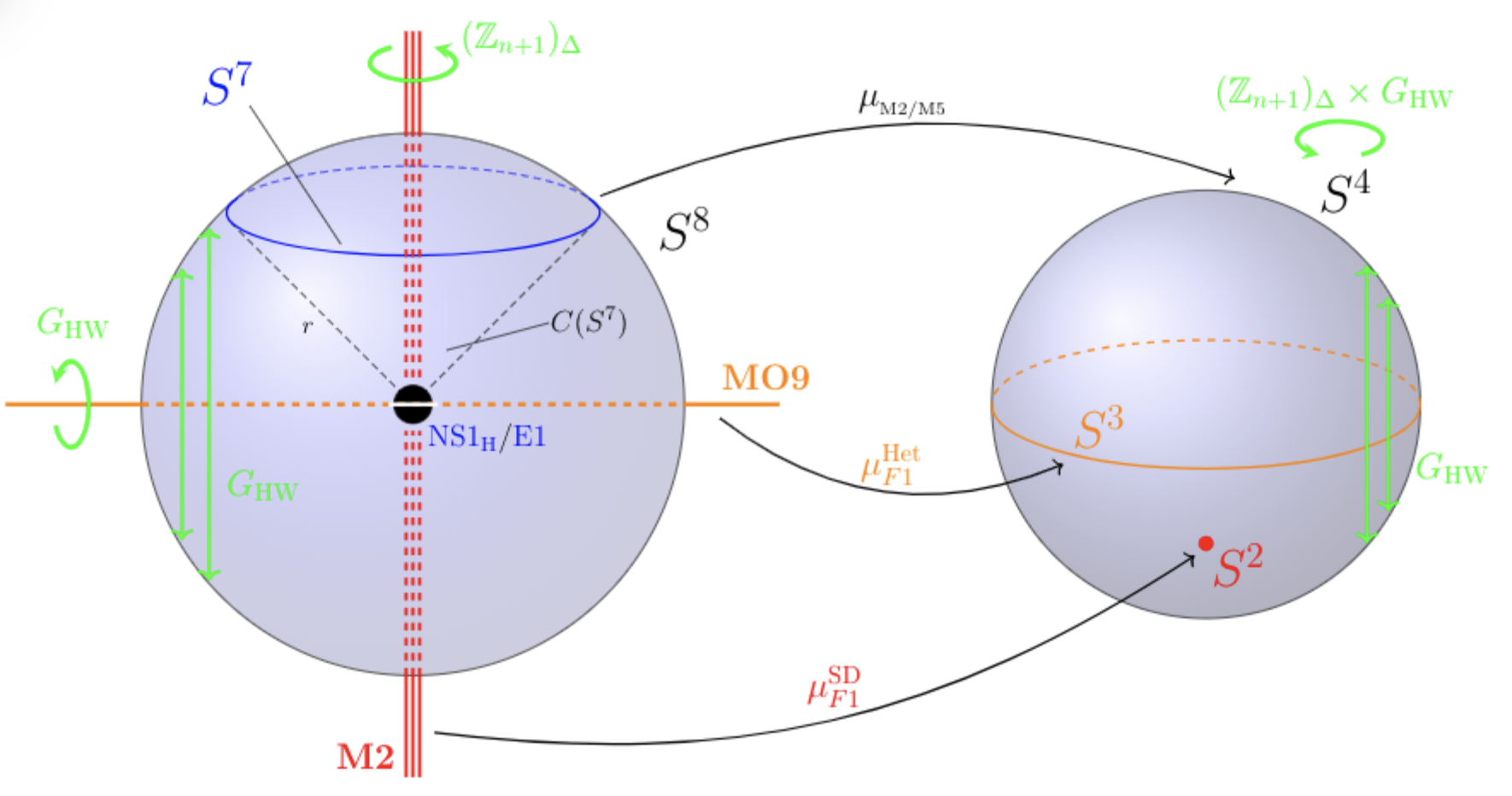}
}

Here the group actions on our \emph{coefficient 4-sphere} on the right (defined below in Sec. \ref{RealADEActionsOnThe4Sphere}) are those on the \emph{spacetime} 4-sphere
around a black M5-brane, we explain this in Sec. \ref{MBraneInterpretation}.
In addition to the ADE-singularities captured by ADE-equivariance, the \emph{real} structure (see Example \ref{ExamplesOfCohomologyTheories}) reflects the presence of M-theoretic O-planes, such as the MO9 (Example \ref{TheMO9}).
Each of the stratum-wise maps indicated in the above picture turns out to encode a super-cocycle that characterizes a fundamental brane species
propagating on a given black brane singularity. Moreover, the compatibility relations in the datum of an equivariant cocycle
makes the Green--Schwarz action functional for the corresponding sigma-model appear; this is explained in Sec. \ref{FundamentalBraneCocyclesOnSystemsOfIntersectingBlackBranes} below.
\label{On4Sphere}

\vspace{-1mm}
\item  {\bf Cartan geometry.} The geometry of \cite{Cartan1923} is the \emph{local-to-global principle} applied to Klein geometry: the local model space of Klein is
promoted to a \emph{moving frame} that characterizes each tangent space of a curved Cartan geometry as compatibly identified with the
local model space. Indeed, the geometry of supergravity is intrinsically Cartan geometric \cite{Lott90, EE}, and specifically
11d supergravity is \emph{equivalent} to torsion-free super Cartan geometry modeled on $\mathbb{R}^{10,1\vert \mathbf{32}}$ \cite{CL, Howe97, FOS}.
This generalizes to orbifold Cartan geometry (\emph{higher Cartan geometry} \cite{Schreiber15, Wellen}) locally modeled on orbifold Klein spaces.

The left half of the following picture illustrates a curved higher Cartan geometry locally modeled on the orbifold Klein geometry shown before.

\vspace{-.2cm}

\begin{center}
 \includegraphics[width=.8\textwidth]{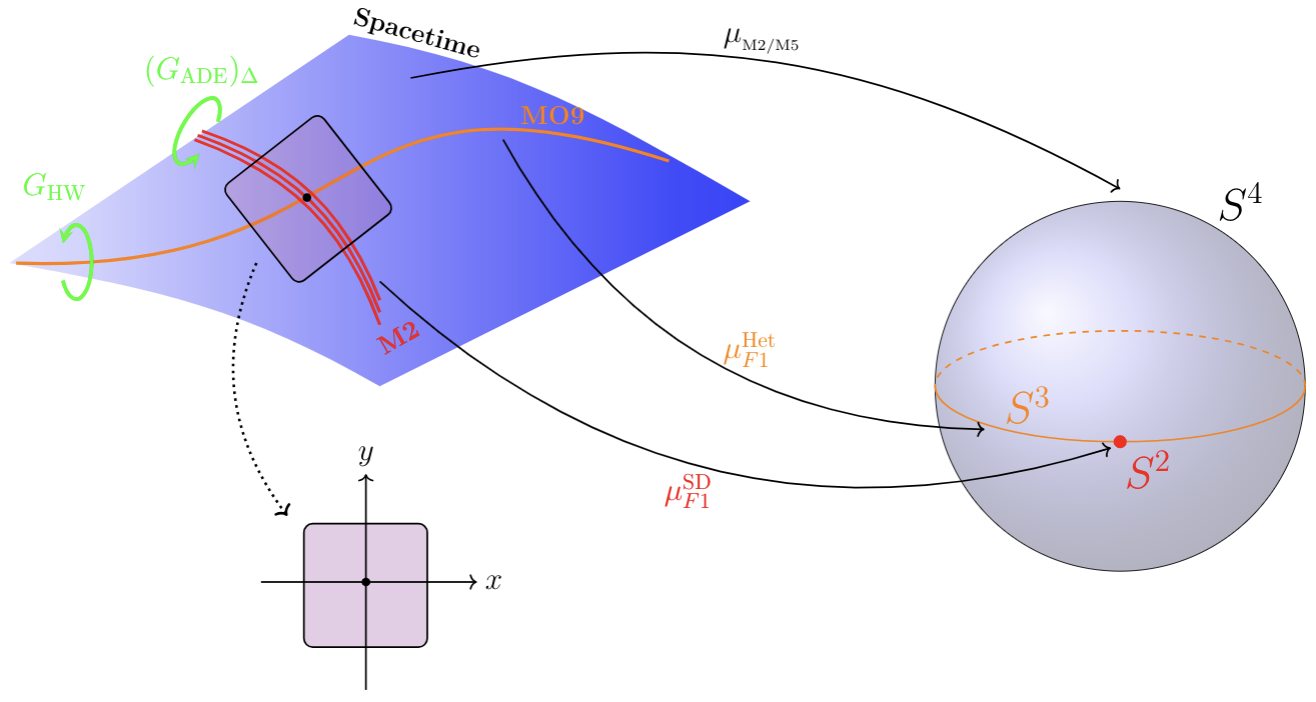}
\end{center}

\end{enumerate}

\vspace{-.5cm}

\noindent This perspective of orbifold Klein geometry controlling its curved generalizations, via higher Cartan geometry,
serves to conceptually explain how the equivariant cohomology of super-Minkowski spacetime itself, which we study here,
is able to see so much of the structure of M-theory.

\medskip
Therefore, since Theorem \ref{RealADEEquivariantEndhancementOfM2M5Cocycle} shows that equivariant cohomotopy \emph{locally},
i.e super tangent space-wise, captures M-brane physics,
this suggests that, at least rationally, the generalized cohomology theory \emph{real ADE-equivariant 4-cohomotopy of superspaces} (Sec. \ref{ADEEquivariantRationalCohomotopy}) serves as the missing {\bf definition of the concept of \emph{M-brane} species} also globally; in direct analogy to how the generalized (co-)homology theory \emph{K-theory} is understood
to provide the precise definition of the concept of D-branes:

{\small
\vspace{-2mm}
\hspace{1cm}
\begin{center}
\begin{tabular}{|c|c|l}
  \cline{1-2}
  {\bf Objects} & \begin{tabular}{c} \bf Cohomology theory \\ (Sec. \ref{Coh}) \end{tabular}
  \\
  \cline{1-2}
  \cline{1-2}
  M-branes & $\phantom{{A \atop A} \atop {A \atop A}}$ \begin{tabular}{c} {\color{blue} Real ADE-equivariant } \\ {\color{blue} Cohomotopy } \end{tabular} $\phantom{{A \atop A} \atop {A \atop A}}$
  & \multirow{2}{*}{ \raisebox{5pt}{\xymatrix{ \ar@{<->}@/^1.6pc/[d]^{\mbox{\begin{tabular}{c}
  \footnotesize stabilized \\ \footnotesize Ext/Cyc-adjunction \end{tabular}}} \\ \,  }} }
  \\
  \cline{1-2}
  D-branes & $\phantom{{A \atop A} \atop {A \atop A}}$ \begin{tabular}{c} Real \\ K-theory \end{tabular} $\phantom{{A \atop A} \atop {A \atop A}}$
  \\
  \cline{1-2}
\end{tabular}
\end{center}
}
\noindent To support this, there must be:
\begin{enumerate}[{\bf (1)}]
\vspace{-3mm}
\item a homotopy-theoretic formulation of the famous but informal idea
of ``compactifying M-theory on a circle'' (\cite{DuffHoweInamiStelle87, Witten95I}, see \cite[Sec. 6]{Duff99B} and via cohomology in \cite{MathaiSati04}), such that
\vspace{-3mm}
\item under this operation the cohomology theory \emph{degree-4 cohomotopy} transmutes into the cohomology theory \emph{K-theory}, matching how the M-branes are supposed to reduce to F1/D$p$-branes under {\bf double dimensional reduction}.
\footnote{
While such \emph{a derivation of K-theory from M-theory} is suggested by the title of \cite{ADerivationofK},
that article only checks that the behavior of the partition function of the 11d supergravity $C$-field is compatible
with the a priori K-theory classification of D-branes. }
\end{enumerate}
\vspace{-2mm}
Indeed, in \cite{FSS16a} \cite{FSS16b} it is shown that, rationally, {\bf (1)} is exhibited by the \emph{Ext/Cyc-adjunction} and then {\bf (2)} follows, since 6-truncated twisted K-theory appears, rationally, in the cyclic loop space of the 4-sphere.
In the companion article \cite{GaugeEnhancement} it is shown that the {\bf gauge enhancement} of this result to the
full, untruncated, twisted K-theory spectrum arises from the fiberwise stabilization of the unit of the
Ext/Cyc-adjunction applied to the A-type orbispace of the 4-sphere (Def.\ \ref{SuspendedHopfAction} below).

\medskip
In conclusion, equivariant rational cohomotopy of superspaces goes a long way towards capturing the folklore on
the zoo of brane species; and of course the vast majority of discussions in the
string theory literature is sensitive only to rational (non-torsion) effects, anyway. Nevertheless, as shown by the classification of
D-branes in string theory by K-theory, a {\it non-rational lift} of this cohomology theory will be necessary
to fully capture M-brane physics. For the moment we leave this as an open problem.
However, we suggest that detailed study of the
rational theory provides {crucial clues} for passing beyond the rational approximation.
This is because we are not just faced with one rational cohomology theory in isolation, but with a web of rational cohomology theories that are subtly related to each other and
which collectively paint a large coherent picture (see also \cite{FSS19}):
\begin{enumerate}
\vspace{-2mm}
\item {\bf Fundamental M2/M5-branes} \cite[Sec. 2.5]{S-top}\cite{cohomotopy}:
 The M2/M5-brane cocycle is, rationally, in equivariant cohomotopy.
 \vspace{-2mm}
\item {\bf Black M-branes} (Thm.\ \ref{RealADEEquivariantEndhancementOfM2M5Cocycle}):
The corresponding ADE-equivariant enhancement exhibits the black M-branes at ADE-singularities.
\vspace{-2mm}
\item  {\bf M/IIA duality} (\cite{FSS16a}
  The corresponding double dimensional reduction is cohomological cyclification and yields the F1/D$(p \leq 4)$-cocycle in type IIA string theory.
\vspace{-2mm}
\item  {\bf Gauge enhancement} \cite{GaugeEnhancement}:
The corresponding lift through the fiberwise stabilization of the Ext/Cyc-adjunction
yields the gauge enhancement to the full type IIA F1/D$p$ cocycle in twisted K-theory (rationally).
\vspace{-2mm}
\item {\bf IIA/IIB T-duality} \cite{FSS16b}:  The further double dimensional
reduction of that, via further cyclification, exhibits T-duality between the cocycles of the type IIA and type IIB F1/D$p$-branes.
\vspace{-2mm}
\item {\bf M/HET duality} \cite{Higher-T}\cite{PLB}:
 The higher analogue of this Fourier-Mukai transform applied to the M2/M5-brane cocycle itself yields a higher T-duality of
a 7-twisted cohomology theory \cite{Sa09} that connects to the Green--Schwarz mechanism of heterotic string theory.
\end{enumerate}
\vspace{-2mm}
\noindent This web of dualities between various rational cohomology theories accurately captures a fair bit of the web of dualities
expected in string/M-theory.
Since every non-rational lift of the cohomology theory for M-branes will have to lift that entire web of dualities, this
puts strong conditions on such a lift, considerably constraining the freedom in lifting an isolated rational cohomology theory.

\medskip

We read all this as indication that our analysis narrows in on the correct generalized cohomology theory classifying super M-branes,
and thus, at least in part, on the elusive definition of M-theory itself.


\section{Understanding M-Branes}
 \label{Physics}

We now provide an
 informal discussion, that is meant to put the formal results of
 Sections \ref{ADESingularitiesInSuperSpacetime}, \ref{ADEEquivariantRationalCohomotopy},  and \ref{ADEEquivariantMBraneSuperCucycles}
  into the perspective of string/M-theory.
For completeness and to highlight the concepts involved,
we  first quickly review a perspective on branes within string/M-theory. Then in
Sec. \ref{TheFundamentalBraneScan} we briefly recall the mathematical classification of \emph{fundamental branes}, which is the conceptual background for
the starting point of our mathematical discussion in Sec. \ref{ERSHTForSuperBranes}. Finally, Sec. \ref{MBraneInterpretation} concerns the physics interpretation
of our classification result from Sections \ref{ADESingularitiesInSuperSpacetime}, \ref{ADEEquivariantRationalCohomotopy},  and \ref{ADEEquivariantMBraneSuperCucycles}: we walk there through selected examples from the informal
string/M-theory literature (as listed in \hyperlink{TableA}{Table L}) and point out how to match, item by item, the entries of
our classification Tables \hyperlink{SingularitiesTable}{1} and \hyperlink{EquivariantEnhancementsList}{2} to structures in the folklore on M-branes.

\medskip

The concept of \emph{fundamental brane} is the evident higher-dimensional generalization of the concept of a
\emph{fundamental particle}:
a precise concept of \emph{fundamental particles}, in turn, is obtained by combining
perturbative quantum field theory\footnote{Contrary to wide-spread perception, perturbative quantum field theory,
such as pertaining to the standard model of particle physics,
has a perfectly rigorous mathematical formulation, going back to \cite{EpsteinGlaser73}, see e.g. \cite{Schreiber18}. }  with an insight called the \emph{worldline formalism} (reviewed in \cite{SchmidtSchubert95, Schubert96}).
Here, the trajectories of fundamental particles in some spacetime $X$ are represented by maps from the abstract worldline of the particle,
modeled by a 1-manifold $\Sigma_1$, to $X$
$$
  \xymatrix@R=-1pt{
    \Sigma_1 \ar@/^1pc/[rrrr]^-{\phi}_{\tiny \color{blue}
      \begin{tabular}{c}
      abstract
        \\
        worldline
      \end{tabular}
    }
    &&&& X
    \\
   { \tiny
    \begin{tabular}{c} particle \\ trajectory \end{tabular}
    }
    &&&&
  {
      \tiny \begin{tabular}{c} spacetime \end{tabular}
    }
  }
$$
The physically realizable particle trajectories are characterized as being the local extrema of a certain
non-linear functional
on the space of all these maps, the \emph{action functional}.
This has two contributions:
\begin{enumerate}[{\bf (i)}]
\vspace{-2mm}
\item The first contribution is that of the \emph{proper volume} of $\phi$, as measured by the
pseudo-Riemannian metric on $X$. This encodes the forces that a background field of gravity exerts on the particle,
it is known as the \emph{Nambu-Goto action functional}.

\vspace{-2mm}
\item The second contribution encodes the remaining forces felt by the particle, exerted by further background fields.
Notably, if the particle is charged under an electromagnetic field captured by a differential 2-form $\mu_2$ on $X$ (the \emph{Faraday tensor}),
then the corresponding contribution to the action functional is the holonomy functional of a principal connection
whose curvature 2-form is $\mu_2$.
If $X$ is Minkowski spacetime, then this connection is given by a differential 1-form $\Theta_1$ on $X$ (the \emph{vector potential})
and the holonomy functional is just the integration of $\Theta_1$ along $\phi$. This is the simplest example of what is called a \emph{WZW term}
in an action functional.
\end{enumerate}
\vspace{-2mm}
Hence, a fundamental charged particle is characterized by an action functional which is the integration
over the particle's worldline of a differential 1-form $\mathbf{L}_1$ (the \emph{Lagrangian density}) which, just slightly schematically, reads:
$$
  \mathbf{L}_1 \;=\; \underset{ \mathrm{NG} }{\underbrace{\mathrm{vol}_{1}}} + \underset{ \mathrm{WZW} }{\underbrace{ \mathbf{\Theta}_1 }}
  \;\;
 \xymatrix{\ar@{|->}[r]^d &}
  \mu_2
  \,.
$$

The Mellin transform of these action functionals yields distributions in two variables, called \emph{Feynman propagators}, which
may be interpreted as the probability amplitude for a \emph{quantum} fundamental particle to come into an accelerator experiment
on a fixed asymptotic trajectory, and emerge on the other end on some fixed asymptotic trajectory, \emph{without interacting}, in between,
with anything.
More generally, given a finite graph, a product of distributions may be assigned to it, with one Feynman propagator factor for each edge.
These products turn out to be well defined and unique away from coincident vertices, and may be extended
to the locus of coinciding vertices. The choice involved in these extensions of distributions is called \emph{(re-)normalization}.
The resulting distribution is called the \emph{Feynman amplitude} associated with the graph.
If the graph has external edges, this may be interpreted as the probability amplitude for
some number of quantum fundamental particles to come into an accelerator experiment on given asymptotic trajectories, interacting with each other,
as determined by the shape of the graph, and emerge on the other side on some given asymptotic trajectories.
The \emph{Feynman perturbation series} is the sum over all graphs of these Feynman amplitudes, as a formal power series in
powers of the number of loops of the graphs. This, finally, may be interpreted as the probability amplitude that may be compared to
experiment, describing an arbitrary scattering process of several quantum fundamental particles.

\medskip
What is striking about this worldline formulation of perturbative quantum field theory is that it immediately suggests
a tower of possible deformations: it is compelling, at least mathematically, to investigate the variants of this prescription where 1-dimensional graphs are replaced by $(p+1)$-dimensional manifolds $\Sigma_{p+1}$, where hence the particle trajectories are replaced by maps
out of this \emph{abstract worldvolume} of dimension $p + 1$

\vspace{-5mm}
\begin{equation}
  \label{pBraneTrajectory}
  \xymatrix@R=1pt{
    \Sigma_{p+1} \ar@/^1pc/[rrrr]^-{\phi}_{\tiny \color{blue}
      \begin{tabular}{c}
        abstract
        \\
        worldvolume
      \end{tabular}
    }
&&&&
X
\\
{
    \tiny \begin{tabular}{c} $p$-brane \\ trajectory \end{tabular}
    }
    &&&&
    \mbox{\tiny
      spacetime
    }
  }
\end{equation}
and where, finally the action functional is replaced by the integral of a suitable $(p+1)$-form
\begin{equation}
  \label{Lagrangian}
  \mathbf{L}_{p+1}
    \;:=\;
  \underset{ \mathrm{NG} }{\underbrace{\mathrm{vol}_{p+1}}} + \underset{ \mathrm{WZW} }{\underbrace{ \Theta_{p+1} }}
  \;\;
 \xymatrix{\ar@{|->}[r]^d &}
  \mu_{p+2}
  \,.
\end{equation}
This is naturally thought of as possibly producing probability amplitudes for higher dimensional fundamental objects
to scatter off of each other. For $p = 1$ these objects look like strings (whence the name); for $p = 2$ they look like
mem\emph{branes}. Hence for general $p$ one speaks of \emph{fundamental $p$-branes} \cite{DIPSS88}.
Moreover, at least for $p = 1$ there is a good candidate of what may replace the sum over all graphs: since 2-dimensional
surfaces have a nice classification by genus and punctures, there is a good mathematical definition
of a \emph{string perturbation series}, deforming the above concept of the Feynman perturbation series.
The study of this string perturbation series, thought of as a deformation of the Feynman perturbation series,
is the subject of \emph{perturbative String theory} (e.g. \cite{Witten15}, \cite{AMS} and references therein).


\begin{center}
 \includegraphics[width=.4\textwidth]{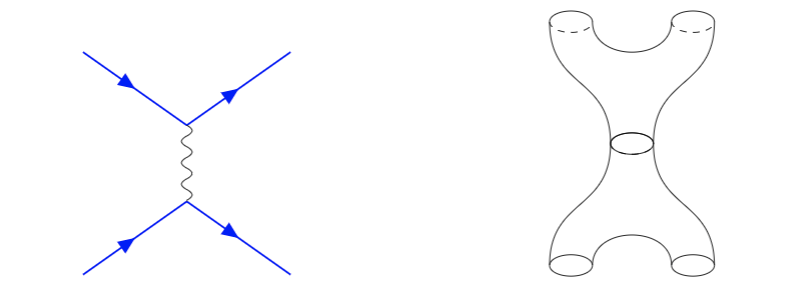}
\end{center}

\noindent Despite their immense successes, both the Feynman perturbation series as well as the string perturbation series
have a severe conceptual problem in their very perturbative nature. More precisely, while one would like to interpret the result
of these perturbation series as \emph{numbers}, characterizing concrete probability amplitudes that may be compared to
measurement results, mathematically
they are not numbers but just formal power series. Worse, simple arguments show that in all cases of interest, the radius of
convergence of these formal power series \emph{vanishes} (e.g. \cite[Sec. 1]{Suslov05}). This means that Feynman/string perturbation theory provides
no reason why it would make sense to sum up the first few terms of the perturbation theory and regard that as a decent approximation to
experiment. Sometimes, notably for computations in QED, it does happen to produce excellent agreement with experiment. But sometimes it does not
(as in much of QCD) and besides trial and error and experimental experience, there is no mathematical reason to tell.
Hence the perturbativeness of fundamental particles and of fundamental strings
is as much their intrinsic nature as it is their fatal shortcoming. The understanding of non-perturbative effects
in quantum field theory such as quark confinement, hence existence of ordinary baryonic matter,
is a wide open \emph{millenium problem} \cite{ClayInstitute}.

\medskip
In view of this, it is noteworthy that perturbative String theory exhibits concrete hints as
to the nature of its non-perturbative completion. Notably, one finds \cite{DuffHoweInamiStelle87} that when the fundamental
membrane mimics a fattened fundamental string by wrapping around a small circle fiber of spacetime
(\emph{double dimensional reduction}), then the volume of that
circle fiber is proportional to the strength of the fundamental string's interactions. Read the other way around, this says that the
fundamental string at strong coupling, hence beyond perturbation theory, should be nothing but the fundamental membrane.
This led to the speculation that a
non-perturbative version of perturbative string theory does exist and is embodied by M(embrane)-theory
 \cite{Townsend95, Witten95I}, even if its actual nature remains elusive.\footnote{
It may be worthwhile to recall that, in mathematics, it is not unusual to postulate the existence of certain theories before one actually knows about their nature.
Famous historical examples include the theory of \emph{motives} or the theory of the \emph{field with one element}. The former has meanwhile
been found. }
That is, there is no straightforward way to generalize the summation over all 1-dimensional graphs beyond a summation over surfaces
to a summation over higher-dimensional manifolds, because there is no classification parameter for higher dimensional manifolds,
that could organize such a sum as a formal power series.
Therefore, a potential \emph{Membrane theory}, along the above lines, if it makes sense at all,
must be more subtle than being a \emph{direct} variant of perturbative string theory, which itself \emph{is} a direct
variant of the Feynman perturbation theory of fundamental particles. This subtlety is the reason for the name ``M-theory'':
this is a non-committal shorthand\footnote{ \cite[p. 1]{HoravaWitten96a}: ``{\it As it has been proposed that [this] theory is a supermembrane theory but there are some reasons to doubt that interpretation, we will non-committedly call it the $M$-theory, leaving to the future the relation of $M$ to membranes.}'' }\raisebox{4pt}{\footnotesize ,\,}\footnote{
\cite[p. 2]{Witten95II}: ``{\it $M$ stands for magic, mystery, or membrane, according to taste}.''} for ``Membrane theory'',
to cautiously highlight that if a concept of fundamental membranes makes sense
then its relation to fundamental strings must be more subtle than that of the relation of fundamental strings to fundamental particles.

\medskip
Evidence for M-theory had been actively accumulated up to just around the turn of the millenium, see \cite{Duff99B}; then it waned, the community
getting distracted by other topics:

\begin{quote}
\it \footnotesize We still have no fundamental formulation of ``M-theory'' -
Work on formulating the fundamental principles underlying M-theory has noticeably waned. \it [\dots]. \it If history is a good guide, then we should expect that anything as profound and far-reaching as a fully satisfactory formulation of M-theory is surely going to lead to new and novel mathematics. Regrettably, it is a problem the community seems to have put aside - temporarily. But, ultimately, Physical Mathematics must return to this grand issue.
\hfill {\rm \cite[Sec. 12]{Moore14}}
\end{quote}

 This may remind one of an old prophecy\footnote{
\cite{Witten03}: ``{\it Back in the early '70s, the Italian physicist, Daniele Amati reportedly said that string theory was part of 21st-century physics that fell by chance into the 20th century. I think it was a very wise remark. How wise it was is so clear from the fact that 30 years later we're still trying to understand what string theory really is.}''}
which suggests that unraveling the true nature of string theory, hence of what came to be called M-theory, will require developments that become available only in the
21st century. One development that the new millenium has brought is the blossoming of homotopy theory into an immensely rich
(see for instance \cite{Ravenel03, HHR09}), powerful (\cite{Lurie09})
and foundational (\cite{Shulman17}) subject.
Above we have shown that in homotopy theory one finds curious
classifications, whose entries we have labeled with symbols used in the String/M-theory literature. Now we will discuss
how these entries match a zoo of phenomena that the informal string theory literature expects to play a role in M-theory.

\subsection{The fundamental brane scan}
  \label{TheFundamentalBraneScan}

Here we informally recall the some background on the cocycle $\mu_{ M2/M5}$ for the fundamental M2/M5-brane,
 which is the starting point of the analysis to come in Prop. \ref{M2M5SuperCocycle}, and indicate its place in a general
 cohomological classification of fundamental branes: the \emph{fundamental brane bouquet} in
\hyperlink{Figure3}{Figure 3} below.

\medskip
An early indication that homotopy theory plays a deep role in string theory was, in hindsight, the joint success
and failure of the old brane scan (recalled as Prop. \ref{TheOldBraneScan}).
Namely, via a sequence of somewhat involved arguments\footnote{
In \cite{Sorokin99, Sorokin01} it is shown that the traditional derivation of the Green--Schwarz-type sigma-models \eqref{GreenLagrangian},
is clarified drastically if one takes the worldvolume $\Sigma_{p+1}$ \eqref{pBraneTrajectory} to be a supermanifold locally modeled on the
relevant BPS super subspace $\mathbb{R}^{p,1\mathbf{N}} \hookrightarrow \mathbb{R}^{10,1\vert \mathbf{32}}$. This is exactly what we see
appear in Theorem \ref{RealADEEquivariantEndhancementOfM2M5Cocycle}, via Prop. \ref{TrivializationsOfRestrictionsOfM2M5Cocycle}.},
it was eventually found that for fundamental branes on a supergravity
background to be compatible with local supersymmetry,
the form $\mu_{p+2}$ in \eqref{TheOldBraneScan}, had to be a non-trivial cocycle in the supersymmetry super Lie algebra cohomology
of super-spacetime \cite{AzTo89}. The corresponding Lagrangians \eqref{Lagrangian}  are the \emph{Green--Schwarz-type} Lagrangians,
which in Prop. \ref{GreenSchwarzFromSuperVolumeForm} we have seen to be just the \emph{super} volume forms (hence the ``super Nambu-Goto Lagrangians'')\footnote{
  In the sprit of this physics section, we are deliberately suppressing notation for pullback of differential forms in these expressions,
  in order to bring out conceptual meaning of these formulas;
  see instead Sec. \ref{FundamentalBraneCocyclesOnSystemsOfIntersectingBlackBranes} for precise details.
}
\begin{equation}
  \label{GreenLagrangian}
  \mathbf{L}^{\mathrm{GS}}_{p+1}
    \;:=\;
  \underset{
    \underset{ \mbox{\footnotesize \begin{tabular}{c} supersymmetric volume \\ Prop. \ref{GreenSchwarzFromSuperVolumeForm} \end{tabular} }   }{
      \mathrm{svol}_{p+1}
    }
  }{
  \underbrace{
  \underset{ \mathrm{NG} }{\underbrace{\mathrm{vol}_{p+1}}} + \underset{ \mathrm{WZW} }{\underbrace{ \Theta_{p+1} }}
  }}
 \; \xymatrix{\ar@{|->}[r]^d&}
  \underset{ \mbox{ \footnotesize  \begin{tabular}{c} super- \\ cocycle\end{tabular} } }{ \underbrace{ \mu_{p+2} } }
  \!\!\!\!\!.
\end{equation}
These cocycles $\mu_{p+2}$ are what the (old) \emph{brane scan} \cite{Duff88} classifies:

\hypertarget{TableB}{$\,$}

\begin{center}
\scalebox{.8}{
\begin{tabular}{|c||c|c|c|c|c|c|c|c|c|c|c|}
  \hline
\backslashbox{~~~~~$d+1$}{\raisebox{-9pt}{$\!\!\!\!\!\!\!\!p$}}
     &   & $1$ & $2$ & $3$ & $4$ & $5$ & $6$ & $7$ & $8$ & $9$ & $10$
	 \\[5pt]
	 \hline \hline
	 $\!\!\!\!\!\!\!\!\!\!\!\!\!\!\!\!\!\!\!\!\!\!\!10+1$ &   & &
	  $\mu_{{}_{M2}}$  &
	  & &
	 &&& &  &
	 \\[5pt]
	 \hline
	 $\!\!\!\!\!\!\!\!\!\!\!\!\!\!\!\!\!\!\!\!\!\!9+1$ &  		
	    &
        $\mu_{{}_{F1}}^{H/I}$
		&
		&
        $ \phantom{ {A \atop A} \atop {A \atop A}} $
		&
		&
        $\mu^{H/I}_{{}_{\mathrm{NS5}}}$
	  &
	  &
	  &
      &
	  &
	 \\[5pt]
	 \hline
	 $\!\!\!\!\!\!\!\!\!\!\!\!\!\!\!\!\!\!\!\!\!\!8 + 1$ & & & & & $\ast$ & & & & & &
	 \\[5pt]
	 \hline
	 $\!\!\!\!\!\!\!\!\!\!\!\!\!\!\!\!\!\!\!\!\!\!7 + 1$  &  & & & $\ast$ & & & & & & &
	 \\[5pt]
	 \hline
	 $\!\!\!\!\!\!\!\!\!\!\!\!\!\!\!\!\!\!\!\!\!\!6 + 1$  &  & & $\ast$ & & & & & & & &
	 \\[5pt]
	 \hline
	 $\!\!\!\!\!\!\!\!\!\!\!\!\!\!\!\!\!\!\!\!\!\!5 + 1$  &  & $\ast$
		 & & $\ast$ & & & & & & &
	 \\[5pt]
	 \hline
	 $\!\!\!\!\!\!\!\!\!\!\!\!\!\!\!\!\!\!\!\!\!\!4 + 1$ &  &  & $\ast$ & & & & & & & &
	 \\[5pt]
	 \hline
	 $\!\!\!\!\!\!\!\!\!\!\!\!\!\!\!\!\!\!\!\!\!\!3 + 1$  & & $\ast$ & $\ast$ &&&&& & & &
	 \\[5pt]
	 \hline
	 $\!\!\!\!\!\!\!\!\!\!\!\!\!\!\!\!\!\!\!\!\!\!2 + 1$  & $ \phantom{ {A \atop A} \atop {A \atop A}} $ & $\mu_{{}_{F1}}^{D=3}$ &&&&& & & & &
     \\
     \hline
  \end{tabular}
  }
\end{center}
\begin{quote}
\footnotesize
\noindent {\bf Table B. The Old Brane Scan} classifies the non-trivial $\mathrm{Spin}$-invariant super $(p+2)$-cocycles on super Minkowski spacetimes,
for $p \geq 1$, see Prop. \ref{TheOldBraneScan} for details.
Via the associated Green--Schwarz-type Lagrangian densities \eqref{GreenLagrangian} these cocycles correspond to those fundamental super $p$-branes
propagating in $D$-dimensional super spacetimes,
that do not carry (higher) gauge fields on their worldvolume. The completion of the old brane scan to the remaining
branes and various further details is the \emph{fundamental brane bouquet},
parts of which is shown in \hyperlink{Figure3}{Figure 3}.
\end{quote}

\medskip
\noindent This scan does discover the brane species that have been argued for by various other means.
For example, the entry $\mu_{{}_{M2}}$ at $D = 11$, $p = 2$ in the old brane scan reflects the existence of the fundamental super membrane which was mentioned above, now known as the M2-brane, whose Green--Schwarz Lagrangian \eqref{GreenLagrangian} we see below in the proof of Prop. \ref{GreenSchwarzFromSuperVolumeForm}.
Similarly, the old brane scan shows that and why there
is a fundamental superstring, Example \ref{FundamentalF1Cocycle}, as well as a 5-brane propagating in 10d super spacetime, just as earlier found with rather different methods.
This suggests that super Lie algebra cohomology might be part of the missing mathematical formulation of the elusive
foundations of string/M-theory.
But this is a partial success only: a glorious insight associated with the ``second superstring revolution'' says that there
should be \emph{more} brane species than the old brane scan shows, and that jointly the enlarged system of brane species
in various super spacetimes exhibits subtle equivalence relations known as \emph{dualities}.

\medskip
In \cite{FSS13} it was pointed out that one may retain the foundational promise of the old brane scan, while improving it to
include also all these further brane species, if one passes from super Lie algebras to their homotopy theoretic incarnation,
called super \emph{strong homotopy Lie algebras} or super \emph{$L_\infty$-algebras}, for short.
In fact, this generalization emerges naturally from a closer look at the nature of cohomology on ordinary (super) Lie algebras:
it is a familiar fact that every \emph{2-cocycle} $\mu_2$ on a (super) Lie algebra classifies a central extension.
For example the type IIA superspacetime carries a Spin-invariant 2-cocycle $\mu_{{}_{D0}} = \overline{\psi}\Gamma^{10} \psi$, whose central extension
is $D = 1$, $\mathcal{N }= 1$ super Minkowski spacetime:
\vspace{-2mm}
$$
  \xymatrix@R=10pt{
    \mathbb{R}^{10,1\vert \mathbf{32}}
    \ar[dd]_-{ \mbox{ \tiny \begin{tabular}{c} central extension \\ by $\mu_{D0} = \overline{\psi}\Gamma^{10} \psi$ \end{tabular}  } }
    \\
    \\
    \mathbb{R}^{9,1\vert \mathbf{16} + \overline{\mathbf{16}}}\;.
  }
$$
Since $\mu_{D0}$ is the cocycle that defines the fundamental super D0-brane, and since informal string theory folklore has it that the
 above extension may be
understood the \emph{condensation} of D0-branes, it is natural to ask whether the other cocycles in the old brane scan correspondingly define extensions of sorts.

\medskip
In order to see how this could work, one observes that in the rational super homotopy theory,
a 2-cocycle as above is equivalently
a map, namely a map of the form  $\mathbb{R}^{9,1\vert \mathbf{16} + \overline{\mathbf{16}}} \overset{\mu_2}{\longrightarrow} B \mathbb{R}$,
and for every map in homotopy there is the corresponding homotopy fiber. Inspection shows that for a 2-cocycle, this is just the corresponding
central extension
\vspace{-4mm}
$$
  \xymatrix@R=10pt{
    \mathbb{R}^{10,1\vert \mathbf{32}}
    \ar[dd]_-{ \mbox{ \tiny \begin{tabular}{c} homotopy fiber \\ of $\mu_2 = \overline{\psi}\Gamma^{10} \psi$  \end{tabular} } }
    \\
    \\
    \mathbb{R}^{9,1\vert \mathbf{16} + \overline{\mathbf{16}}}
    \ar[rr]^-{ \mu_2 = \overline{\psi}\Gamma^{10}\psi }
    &&
    B^2 \mathbb{R}\;.
  }
$$
With this perspective, now it is clear what the generalization is: for instance, given the string cocycle
$\mu_{F1} = \tfrac{i}{2}\overline{\psi}\Gamma_{a_1 a_2} \psi \wedge e^{a_1}\wedge e^{a_2}$, its
homotopy fiber is not a super Lie algebra anymore, but a higher super Lie algebra, namely a \emph{super Lie 2-algebra} (see \cite{BaezHuerta11} for exposition).
Following established terminology for the bosonic analogue of this construction (see \cite[appendix]{FSS12} for details and further pointers),
this is called the \emph{superstring Lie 2-algebra}, and denoted $\mathfrak{string}_{\mathrm{IIA}}$.
\vspace{-2mm}
$$
  \xymatrix@R=10pt{
    \mathfrak{string}_{\mathrm{IIA}}
    \ar[dd]_-{ \mbox{\tiny \begin{tabular}{c} homotopy fiber \\ of $\mu_{F1}$
      \end{tabular} } }
    \\
    \\
    \mathbb{R}^{9,1\vert \mathbf{16} + \overline{\mathbf{16}}}
    \ar[rrrr]^-{\mu_{{}_{F1}} = \tfrac{i}{2}(\overline{\psi} \Gamma_{a} \psi) \wedge e^{a} }
    &&&&
    B^3 \mathbb{R}\;.
  }
$$
But, of course, the concept of super Lie algebra cohomology generalizes from plain super Lie algebras to
super $L_{\infty}$-algebras. Hence we may now ask whether the \emph{higher} extension of
super-spacetime by the super string Lie 2-algebra carries further Spin-invariant cohomology classes.
And indeed it does carry non-trivial Spin-invariant cocycles precisely for all the previously missing branes, namely
super D-branes of type IIA:
\vspace{-2mm}
$$
  \xymatrix@R=10pt{
    \mathfrak{d}2p\mathfrak{brane}
    \ar[dd]_-{ \mbox{ \tiny \begin{tabular}{c} homotopy fiber \\ of $\mu_{{}_{D(2p)}}$  \end{tabular}  } }
    \\
    \\
    \mathfrak{string}_{\mathrm{IIA}}
    \ar[rr]^{\mu_{D(2p)}}
    &&
    B^{2p + 2}\R\;.
  }
$$
As before, these cocycles are reflected in the homotopy fibers that they induce, which are now super Lie $2p+1$-algebras, which, following the emerging pattern,
we denote by $\mathfrak{d}2p\mathfrak{brane}$.

\medskip
By proceeding in this fashion, one finds that as soon as super spacetime is regarded in super homotopy theory,
there is, potentially, a whole \emph{bouquet} of iterative invariant higher central extensions emerging from it,
each corresponding to a fundamental brane species. In fact, as we saw with the $\mathrm{D0}$-brane
cocycle at the beginning, super-spacetime itself may emerge from the existence of super 0-branes
this way. This naturally leads one to look for a possible ``root'' of the fundamental brane bouquet.
The simplest non-trivial super spacetime is the $D = 0$, $\mathcal{N} = 1$ super spacetime, also
known as the \emph{superpoint}. Regarding the superpoint in super homotopy theory,
the bouquet of higher invariant central extensions that emerges out of it may be shown to be
the completion of the old brane scan to the full \emph{fundamental brane bouquet}.
Parts of this are shown in \hyperlink{Figure3}{Figure 3}.

\vspace{-5mm}

{\small
\hypertarget{Figure3}{$\,$}
$$
  \hspace{-.7cm}
  \xymatrix@R=1em@C=.9em{
    &
    &&&& \mathfrak{m}5\mathfrak{brane}
     \ar[dd]
    &&&&
    \mbox{ \cite{cohomotopy} }
    \\
    &
    \mbox{ \cite{Higher-T, PLB} }
    &&
    \mathbb{R}^{10,1\vert \mathbf{32}}_{\mathrm{exc},s}
    \ar[rrd]|-{\mathrm{comp}}
    \ar@/_1.1pc/[dddll]
    \ar@{<..>}@(ul,ur)[]^{ \mbox{ \color{blue} \begin{tabular}{c} Higher \\ T-duality \end{tabular}  } }
    && \;\;\;\;
    \ar@{<..>}@/^1.1pc/[drr]|{ \mbox{ \color{blue} \begin{tabular}{c} M/IIA \\ Duality  \end{tabular} } }
    &&
    &&
    \cite{FSS16a}
    \\
    &
    &&
     && \mathfrak{m}2\mathfrak{brane}
    \ar[dd]
    &&
    &
    \\
    &
    &
    \mathfrak{d}5\mathfrak{brane}
    \ar[ddr]
    &
    \mathfrak{d}3\mathfrak{brane}
    \ar[dd]
    &
    \mathfrak{d}1\mathfrak{brane}
    \ar[ddl]
    &
    & \mathfrak{d}0\mathfrak{brane}
    \ar@{}[ddd]|{\mbox{\tiny (pb)}}
    \ar[ddr]
    \ar@{..>}[dl]
    &
    \mathfrak{d}2\mathfrak{brane}
    \ar[dd]
    &
    \mathfrak{d}4\mathfrak{brane}
    \ar[ddl]
    \ar@<+22pt>@{..>}@/^2.1pc/[dd]|{ \mbox{ \color{blue} \begin{tabular}{c}  Gauge \\  enhancement \\  \end{tabular} } }
    \\
    &
    \mathbb{R}^{10,1\vert \mathbf{32}}_{\mathrm{exc}}
    \ar[dddddd]
    &
    \mathfrak{d}7\mathfrak{brane}
    \ar[dr]
    &
    &
    \mbox{
      \color{blue}
      \begin{tabular}{c}
        brane \\ bouquet
      \end{tabular}
    }
    & \mathbb{R}^{10,1\vert \mathbf{32}}
      \ar[ddr]
    &&&
    \mathfrak{d}6\mathfrak{brane}
    \ar[dl]
    &
    \\
    &
    &
    \mathfrak{d}9\mathfrak{brane}
    \ar[r]
    &
    \mathfrak{string}_{\mathrm{IIB}}
    \ar[dr]
    &
    \mbox{\cite{FSS13}}
    & \mathfrak{string}_{H}
      \ar[d]
    &&
    \mathfrak{string}_{\mathrm{IIA}}
    \ar[dl]
    &
    \mathfrak{d}8\mathfrak{brane}
    \ar[l]
    &
    \mbox{ \cite{GaugeEnhancement} }
    \\
    &
    &
    &
    &
    \mathbb{R}^{9,1 \vert \mathbf{16} + {\mathbf{16}}}
    \ar@{<-}@<-3pt>[r]
    \ar@{<-}@<+3pt>[r]
    & \mathbb{R}^{9,1\vert \mathbf{16}} \ar[dr]
    \ar@<-3pt>[r]
    \ar@<+3pt>[r]
    &
    \mathbb{R}^{9,1\vert \mathbf{16} + \overline{\mathbf{16}}}
    \\
    &
    &
    &
    &
    &
    \mathbb{R}^{5,1\vert \mathbf{8}}
    \ar[dl]
    &
    \mathbb{R}^{5,1 \vert \mathbf{8} + \overline{\mathbf{8}}}
    \ar@{<-}@<-3pt>[l]
    \ar@{<-}@<+3pt>[l]
    \\
    &
    &
    &
    &
    \mathbb{R}^{3,1\vert \mathbf{4}+ \mathbf{4}}
    \ar@{<-}@<-3pt>[r]
    \ar@{<-}@<+3pt>[r]
    &
    \mathbb{R}^{3,1\vert \mathbf{4}}
    \ar[dl]
    &
    \!\!\!\!\!\!\!
    \mbox{
      \color{blue}
      \begin{tabular}{c}
        emergent
        \\
        spacetime
      \end{tabular}
    }
    \!\!\!\!\!\!\!
    &
    &&
    \mbox{ \cite{HuertaSchreiber17} }
    \\
    &
    &
    &
    &
    \mathbb{R}^{2,1 \vert \mathbf{2} + \mathbf{2} }
    \ar@{<-}@<-3pt>[r]
    \ar@{<-}@<+3pt>[r]
    &
    \mathbb{R}^{2,1 \vert \mathbf{2}}
    \ar[dl]
    \\
    &
    \mathbb{R}^{0\vert 32}
    \ar@{<-}@<-4pt>[rrr]
    \ar@{<-}@<+4pt>[rrr]^-{\vdots}
    &
    &
    &
    \mathbb{R}^{0 \vert \mathbf{1}+ \mathbf{1}}
    \ar@{<-}@<-3pt>[r]
    \ar@{<-}@<+3pt>[r]
    &
    \mathbb{R}^{0\vert \mathbf{1}}
    \\
    \\
    & \fbox{Exceptional} && \fbox{Type IIB} \ar@{<..>}@/_2pc/[rrrr]|{\mbox{\color{blue}T-Duality}} && \fbox{Type I} && \fbox{Type IIA} && \mbox{ \cite{FSS16b} }
  }
$$
}
{\footnotesize
\noindent {\bf {Figure 3}. The fundamental brane bouquet \cite{FSS19}.} The rational homotopy theory of superspaces (Def.\ \ref{RationalSuperHomotopyTheory}) contains
a god-given object: the superpoint $\mathbb{R}^{0\vert 1}$. The diagram shows part of the web of higher rational superspaces that appears
when, starting with the superpoint, one iteratively applies the operations
of 1) doubling fermions and 2) passing to higher extensions invariant with respect to automorphisms modulo R-symmetry.
What appears are, first, the super-Minkowski spacetimes (Def.\ \ref{ExamplesOfSuperMinkowskiSpacetimes}) of the dimensions shown (see Example \ref{ExamplesOfSuperMinkowskiSpacetimes}).
These carry invariant 3-cocycles that correspond to the various species of \emph{fundamental string} (i.e. the fundamental 1-branes)
in the way reviewed above in Sec. \ref{TheFundamentalBraneScan}. The higher extensions classified by these string-cocycles, shown by the name of the
corresponding string species in the diagram, carry, in turn, further higher cocycles, these now corresponding to the D-branes and the M2-brane in their incarnation as
\emph{fundamental} or \emph{probe} branes. This process climbs up to a cocycle for the fundamental M5-brane on the higher extension classified by the cocycle
for the fundamental M2-brane on 11d super-Minkowski spacetime. By homotopical descent \cite{cohomotopy}, this is equivalently the datum of a single 4-sphere valued cocycle
on 11d super-Minkowski spacetime: the unified M2/M5-brane cocycle in rational super cohomotopy, from Prop. \ref{M2M5SuperCocycle}:
$$
  \xymatrix{
    \mathbb{R}^{10,1\vert \mathbf{32}}
    \ar[rr]^-{\mu_{{}_{M2/M5}}}
    &&
    S^4
  }.
$$
}

The brane bouquet climbs up to the fundamental membrane on $11d$ superspacetime, and then exhibits the
emergence of a further 5-brane on top of that.
By homotopical descent, as explained in detail in \cite{cohomotopy}, these two iterative higher central extensions unify to a single cocycle
on 11d super-spacetime, albeit no longer in ordinary cohomology, but in cohomotopy (Example \ref{ExamplesOfCohomologyTheories}), as controled by (at least the rational image of) the complex Hopf fibration (Def.\ \ref{HopfFibration}):
\begin{equation}
  \label{TheM2M5CocycleFromDescent}
  \raisebox{40pt}{
  \xymatrix@C=9pt@R=1.2em{
    \mathfrak{m}2\mathfrak{brane}
    \ar[dd]
    \ar[rr]^{\mu_{{}_{M5}}}
    &&
    S^7
    \ar[dd]^{ \mbox{ \footnotesize \begin{tabular}{c}  quaternionic \\ Hopf fibration \end{tabular}  } }
    \\
    \\
    \mathbb{R}^{10,1\vert \mathbf{32}}
    \ar[dr]_{\mu_{{}_{M2}}}
    \ar@{-->}[rr]^{ \mu_{{}_{M2/M5}} }
    &&
    S^4
    \ar[dl]
    \\
    & B \mathrm{SU}(2)_{\mathbb{R}}
  }
  }
  \phantom{A}
  \in \;
  \mathrm{Ho}\left( \mathrm{SuperSpaces}_{\mathbb{R}} \right).
\end{equation}
This is the fundamental M2/M5-cocycle $\mu_{{}_{M2/M5}}$, that in Prop. \ref{M2M5SuperCocycle} is the starting point for
our discussion of equivariant enhancement of fundamental brane cocycles.


\subsection{The black brane scan}
\label{MBraneInterpretation}

{\small
\begin{floatingtable}[r]{
\begin{tabular}{|c|c|}
  \hline
  \begin{tabular}{c}
    \bf Black brane
    \\
    \bf species
  \end{tabular}
  &
  \bf Example
  \\
  \hline
  \hline
  $\mathrm{MO9}$ & \ref{TheMO9}
  \\
  \hline
  $\mathrm{MO5}$ & \ref{TheBlackM5}
  \\
  \hline
  $\mathrm{MO1}$ & \ref{TheMWave}
  \\
  \hline
  \hline
  $\mathrm{M2}$ & \ref{TheBlackM2}
  \\
  \hline
  $\mathrm{MK6}$ & \ref{TheMK6}
  \\
  \hline
  \hline
  $\mathrm{M5}_{\mathrm{ADE}}$ & \ref{ADEM5}
  \\
  \hline
  $\tfrac{1}{2}\mathrm{M5}$ & \ref{TheBlackNS5}
  \\
  \hline
  $\mathrm{M1}$ & \ref{TheSelfDualString}
  \\
  \hline
  $\mathrm{NS1}_H$ & \ref{TheBlackNS1H}
  \\
  \hline
\end{tabular}
}
\end{floatingtable}
}
Here we informally recall aspects of the story of \emph{black branes}, and
then provide a list of commented pointers to statements in the
string theory literature, that serve to support  the interpretation of
Theorem \ref{SuperADESingularitiesIn11dSuperSpacetime} and Theorem
\ref{RealADEEquivariantEndhancementOfM2M5Cocycle} as
providing a precise definition and \emph{black brane scan}-classification of
bound states  of black and fundamental branes in M-theory.


\medskip

There is a curious analogy between fundamental particles and gravitional singuralities: the \emph{black hole uniqueness theorems} (``no hair theorems'')
of general relativity (see \cite{Mazur01, HollandsIshibashi12})
state
that isolated black hole spacetimes in equilibrium are completely characterized by just a handful of parameters; namely their mass, charge and spin (angular momentum).
These are of course also the \emph{quantum numbers} that characterize fundamental particles. Since, moreover, a black hole is a homogeneous spacetime,
except for a \emph{pointlike} singularity (or rather the 1+1-dimensional worldline of a point removed from spacetime, where it would become singular, if the point
were to be included),
it is natural to wonder if there is a secret relation between fundamental particles and black holes (see \cite[Sec. 5]{Duff99B}).

\medskip
In superstring theory this analogy becomes stronger: on the one hand, there is a zoo of \emph{fundamental $p$-branes} for various values of $p$ and in various spacetime dimensions,
as discussed above. On the other hand, the equations of motion of supergravity in these dimensions admit homogeneous solutions
which are much like black holes in 4d gravity, but whose singular locus is $p + 1$-dimensional, for specific values of $p$; these are called
\emph{black $p$-brane} solutions of supergravity \cite{DuffLu92, Gueven92, DuffLu93, DuffKhuriLu95}. Strikingly, one finds that, essentially, for each fundamental $p$-brane sigma-model there is a corresponding
black $p$-brane solution of supergravity which shares the same few defining parameters.

\medskip
In particular, one may therefore consider the
quantum fluctuations of fundamental $p$-branes that are aligned close to the singularity of their own black $p$-brane analogue: the result are
conformal field theories of fluctuations on asymptotically anti-de Sitter spacetimes (\cite{BlencoweDuff88, DuffSutton88}, see \cite[Sec. 5]{Duff99L}, \cite{PastiSorokinTonin99}).
This most intimate relation between fundamental $p$-branes and black $p$-branes has (later) come to be famous as the \emph{AdS/CFT correspondence}
(see \cite[Sec. 6]{Duff99B}).

\medskip
For these reasons, much of the informal literature in string theory terminologically blurs the distinction between fundamental $p$-branes and black $p$-branes,
tacitly anticipating a working \emph{M-theory} where it should make sense to, somehow, closely relate
macroscopic solutions of classical supergravity with fundamental quantum objects.
While all evidence indeed points to there being a unified perspective on these two phenomena, precise details on the conceptual relation have been emerging only gradually.
The precise unification of fundamental and black $p$-brane aspects in Theorem \ref{RealADEEquivariantEndhancementOfM2M5Cocycle}, mediated via
the discussion in Sec. \ref{FundamentalBraneCocyclesOnSystemsOfIntersectingBlackBranes}, could serve to clarify the situation.

\medskip
Concretely, asymptotically close to their horizon, the $\gt \sfrac{1}{4}$-BPS black $p$-brane spacetimes are all Cartesian products of an
anti-de Sitter spacetime with a free discrete quotient of the sphere around the singularity, such that
the result is a warped metric cone over the $p$-brane singularity, as shown here (\cite{FF98, MFFGME10}):

\begin{equation}
\label{NearHorizonGeometry}
\scalebox{.9}{
\mbox{
\begin{tabular}{|c|cccccc|}
   \hline
   \multirow{2}{*}{
   \begin{tabular}{c}
     Near horizon
     \\
     spacetime
   \end{tabular}
   }
   &
   \multicolumn{3}{c}{
     \raisebox{6pt}{\tiny
     \begin{tabular}{c}
        anti-de Sitter
\\        spacetime
     \end{tabular}
   }
   }
   &&
   &
   \\
    &
    \multicolumn{3}{c}{
     $\mathrm{AdS}_{p+2}$
    }
     &
     $\!\!\!\!\times\!\!\!\!$
     &
     $S^{D-p-2}$ &  \!\!\!\!\!\!\!\!\!\!\!\! $/G$
    \\
    &
    \multicolumn{3}{c}{
      $\overbrace{ \phantom{---------------} }$
    }
    &&&
    \\
    \begin{tabular}{c}
      Metric in
      \\
      horospheric coord.
    \end{tabular}
    &
    $\frac{R^2}{z^2} ds^2_{\mathbb{R}^{p,1}}$
    &
    $\!\!\!\!+\!\!\!\!$
    &
    $\frac{R^2}{z^2} dz^2$
    &
    $\!\!\!\!+\!\!\!\!$
    &
    $ d s^2_{S^{D-p-2}} $
    &
    \\
    \multirow{4}{*}{
    \begin{tabular}{c}
      Causal
      \\
      chart
    \end{tabular}
    }
    &{\tiny
    \begin{tabular}{c}
      singularity
    \end{tabular}
    }
    &&
    \raisebox{6pt}{\tiny
    \begin{tabular}{c}
      radial
      \\
      direction
    \end{tabular}
    }
    &&
   \raisebox{14pt}{\tiny
   \begin{tabular}{c}
     sphere
     around
     \\
     singularity
   \end{tabular}
   }
   &
    \\
    &
    $\mathbb{R}^{p,1}$ & $\!\!\!\!\times\!\!\!\!$ & $\mathbb{R}_+$  & $\times$&  $S^{D-p-2}$ &  \!\!\!\!\!\!$/G$
   \\
    && &
    \multicolumn{3}{c}{
      $ \underbrace{\phantom{-------------}} $
    }
    &
   \\
    && &
    \multicolumn{3}{c}{\tiny
      transversal space $\mathbb{R}^{D-p-2} \setminus \{0\}$
    }
    &
   \\
   \begin{tabular}{c}
     Metric in
     \\
     natural coord.
   \end{tabular}
   &
   $\frac{r^n}{\ell^n} \, d s^2_{\mathbb{R}^{p,1}}$
   &
   $\!\!\!\!+\!\!\!\!$
   &
   $\frac{\ell^2}{r^2} \, d r^2$
   &$+$ &
   $\ell^2 \, d s^2_{S^{D-p-2}}$
   &
   \\
   &&&
   \multicolumn{3}{c}{
     $\underbrace{ \phantom{ --------------- } }$
   }
   &
   \\
   & {\tiny $\frac{r^n}{\ell^n} \cdot$ Minkowski}
   &&
   \multicolumn{3}{c}{\tiny
     $\frac{\ell^2}{r^2} \cdot $
     metric cone $\underset{\phantom{a}}{C( S^{D-p-2} ) \setminus \{0\}}$
   }
   &
   \\
   \hline
\end{tabular}
}
}
\end{equation}
\begin{quote}
\footnotesize
 {\hspace{4cm} \bf Table C. $D$-dimensional black $p$-brane spacetimes}
\end{quote}

\medskip
\noindent The standard causal chart shown in the middle of
\eqref{NearHorizonGeometry}
exhibits that the tangent spaces \emph{at the singularity} (if it were to be included in the
underlying spacetime manifold) are naturally identified with
\vspace{-2mm}
$$
    T_{\mathrm{sing}} \;\simeq_{\mathbb{R}}\;
    \mathbb{R}^{p,1} \oplus
    \xymatrix{
      \mathbb{R}^{D-p-2}
      \hspace{-9mm}\ar@(ul,ur)[]^G
    }
  \;\;\;\;\;\;\;\;,
$$
where the $G$-action fixes the origin in the transversal space $\mathbb{R}^{D-p-2}$, hence in total fixes the black $p$-brane singularity.
This is just the (super) tangent space-wise situation (see \eqref{SpacetimeActions}) with which Theorem \ref{SuperADESingularitiesIn11dSuperSpacetime}
is concerned.
This way, for $G \neq \{e\}$ one may think of the black $p$-brane as sitting at a \emph{conical singularity}  \cite{AFFHS98}\cite{MP}.

\medskip
Indeed, the standard computations of BPS black $p$-brane solutions in supergravity all proceed, eventually,
by reducing a computation of Killing spinor fields on a curved supergravity spacetime, to a computation of
spinors on \emph{one tangent space} that are \emph{fixed} by an involution,
or, for intersecting branes, by several involutions (e.g. \cite[(4), (8), (11)]{Gauntlett97}), just as in Prop. \ref{ClassificationOfZ2Actions} below,
or, more generally, by larger finite ADE-groups \cite{MFFGME10}, \cite[Sec. 8.3]{MF10}, as in our Prop. \ref{M2AndMK6ADE} below. We may hence view
Theorem \ref{SuperADESingularitiesIn11dSuperSpacetime} below as a converse to these observations, saying that
indeed the spectrum of black $p$-brane species is entirely determined super tangent space-wise.
We suggest that it is useful think of this as exhibiting a higher form of the paradigm of
super Cartan geometry, as indicated in Sec. \ref{Survey}, in line with the result of \cite{CL, Howe97, FOS},
that the equations of motion of supergravity themselves are implied by the super torsion constraint, hence, via \cite{Guillemin65}, by the requirement that
the infinitesimal neighborhood of every point in super spacetime looks super-metrically like the
model super-Minkowski tangent spacetime.

\medskip
Notice how the singularity itself is not actually part of spacetime in
\eqref{NearHorizonGeometry}: the singularity would be at $r = 0$, which
is excluded from the spacetime manifold, since classical (super)gravity is not defined on singular spaces, it only sees everything
right outside the singularities. But the brane that is supposed to sit there at the singularity is meant to be part of the elusive
M-theory, of which 11d supergravity is meant to be just some approximation. There is a multitude of indirect informal arguments
that in full M-theory some extra physical degrees of freedom do appear
at singularities (\cite{AcharyaWitten01}, \cite[Sec. 3]{Acharya02}, \cite{AtiyahWitten03}, see \cite{AcharyaGukov04}
for review of the folklore).
One such indirect argument we recall as Example \ref{TheMK6} below.
Notice also that some singularities that are supposed to appear in M-theory do not have a supergravity description at all,
see Example \ref{TheMO9} below.
By the nature of these arguments, it seems plausible that identifying the missing M-theoretic degrees of freedom
at these singularities goes a long way towards identifying the elusive M-theory itself.

\medskip

This basic background on black M-branes already serves to illuminate the role of the 4-sphere as coefficient
object for measuring M-brane charge: by Prop. \ref{M2M5SuperCocycle} and the discussion in Sec. \ref{TheFundamentalBraneScan}, we know that the 4-sphere has the correct rational homotopy type
for measuring fundamental M-brane charge. What governs this is really the fact \eqref{TheM2M5CocycleFromDescent}, that the 4-sphere participates in the rational image of the quaternionic Hopf fibration
$S^7 \overset{H_{\mathbb{H}}}{\longrightarrow} S^4$ (Def.\ \ref{HopfFibration}).
But notice that, by \eqref{NearHorizonGeometry}, the two spheres involved here are exactly the
unit spheres around the singularities
in the near horizon geometries of the single M2-brane and of the M5-brane:

\medskip
\begin{center}
{\small
\begin{tabular}{|c||c|c|}
  \hline
  \begin{tabular}{c}
    \bf Black
   brane
  \end{tabular}
  &
  \begin{tabular}{c}
    \bf Near
    \bf horizon
    \bf geometry
  \end{tabular}
  &
  \begin{tabular}{c}
    \
    \bf Causal
    \bf chart
  \end{tabular}
  \\
  \hline
  \hline
  $ \phantom{{A \atop A}\atop {A \atop A} } \mathrm{M2}\phantom{{A \atop A}\atop {A \atop A} } $ & $\mathrm{AdS}_4 \times S^7$ & $\mathbb{R}^{2,1} \times \mathbb{R}_+ \times S^7$
  \\
  \hline
  $\phantom{{A \atop A}\atop {A \atop A} }\mathrm{M5}\phantom{{A \atop A}\atop {A \atop A} }$ & $\mathrm{AdS}_7 \times S^4$ & $\mathbb{R}^{5,1} \times \mathbb{R}_+ \times S^4$
  \\
  \hline
\end{tabular}
}
\end{center}
Hence, in the spirit of Dirac charge quantization (see \cite[Sec. 2]{Freed00}), the coefficient space $A$ for the generalized cohomology theory (Def.\ \ref{CohomoloyFromHomotopy}), which measures the presence of units of M-brane charge, should rationally be a 4-sphere, and should in addition have homotopy groups $\pi_4(A) \simeq \mathbb{Z}$ (for measuring the integer charge carried by
M5-branes) and $\pi_7(A) \simeq \mathbb{Z}$ (for measuring the integer charge carried by M2-branes). But the evident choice for this is just the
actual 4-sphere, $A = S^4$: this measures the presence of a single M5-brane by the identity map on the 4-sphere encircling it
\begin{equation}
  \label{MeasuringM2BraneCharge}
  \xymatrix{
    \mathrm{AdS}_7 \times S^4
    \ar[r]^-{\mathrm{pr}_2}
    &
    S^4
    \ar[r]^-{\mathrm{id}}
    &
    S^4
  },
  \phantom{AAA}
  [\mathrm{id}] = 1 \in \mathbb{Z} \simeq \pi_4(S^4)
\end{equation}
and measures the presence of a single M2-brane by way of the quaternionic Hopf fibration $H_{\mathbb{H}}$ (Def.\ \ref{HopfFibration}) from the 7-sphere encircling it:
\begin{equation}
  \label{MeasuringM5BraneCharge}
  \xymatrix{
    \mathrm{AdS}_4 \times S^7
    \ar[r]^-{\mathrm{pr}_2}
    &
    S^7
    \ar[r]^-{H_{\mathbb{H}}}
    &
    S^4
  },
  \phantom{AAA}
  [H_{\mathbb{H}}] = 1 \in \mathbb{Z} \simeq \pi_7(S^4)
  \,.
\end{equation}
That the M-brane charge should take values in degree-4 cohomotopy this way was first proposed in \cite[Sec. 2.5]{S-top}.

\begin{remark}[Origin of group actions on the 4-sphere as the cohomology theory for M-branes]
 \label{OriginOfGroupActionsOn4Sphere}
We may generalize the above reasoning to the presence of conical singularities, and thereby motivate the group actions
on the 4-sphere that we consider in Sec. \ref{ADEEquivariantRationalCohomotopy} below.
\item {\bf (i)} By \cite{MFFGME10} (see Prop. \ref{M2AndMK6ADE} below), the near horizon geometry of the
$\geq \sfrac{1}{4}$-BPS black M2-brane is $\mathrm{AdS}_4 \times S^7/G_{\mathrm{ADE}}$,
where $G_{\mathrm{ADE}}$ acts along the twisted diagonal, via the identification
$$
  \xymatrix{
    S^7
    \ar@(ul,ur)[]^{ \mathrm{SU}(2)_L  \times \mathrm{SU}(2)_R }
  }
    \hspace{-2mm}\simeq\;
  S(\hspace{-3mm}
    \xymatrix{
      \mathbb{H}
      \ar@(ul,ur)[]^{ \mathrm{SU}(2)_L }
    }
    \hspace{-2mm}
    \oplus
\hspace{-2mm}
    \xymatrix{
      \mathbb{H}
      \ar@(ul,ur)[]^{ \mathrm{SU}(2)_R }
    }
  \hspace{-3mm})\,,
$$
where both copies of $\mathrm{SU}(2)$ act by their defining representation on $\mathbb{H} \simeq_{\mathbb{R}} \mathbb{C}^2$; see \eqref{SU2ActionsOnQuaternions}.
\item {\bf (ii)} Hence, in order for the quaternionic Hopf fibration \eqref{MeasuringM5BraneCharge} to also measure the charge of M2-branes at such singularities, we need
an $\mathrm{SU}(2)_L \times \mathrm{SU}(2)_R$-action on $S^4$ which makes the quaternionic Hopf fibration be an equivariant map \eqref{GEquivarianceMap}
(see Remark \ref{EquivarianceCommutingDiagram} below on notation):
$$
  \xymatrix@=4em{
    S^7
    \ar@(ul,ur)[]^{ \mathrm{SU}(2)_L  \times \mathrm{SU}(2)_R }
    \ar[rr]^{H_{\mathbb{H}}}
    &&
    S^4
    \ar@(ul,ur)[]^{ \mathrm{SU}(2)_L  \times \mathrm{SU}(2)_R }
  }
  \hspace{-9mm}.
$$
By the explicit formula \eqref{HopfMap} for $H_{\mathbb{H}}$, this is the case precisely for the action
$$
  S^4 \;\simeq\; S( \mathbb{R} \oplus \!\!\!\!\!\!\!\!\!\!\!\!\!\!\!
  \xymatrix{ \mathbb{H} \ar@(ul,ur)[]^{ \mathrm{SU}(2)_L  \times \mathrm{SU}(2)_R } }\!\!\!\!\!\!\!\!\!\!\!\!\!\!\! )
  \,,
$$
where $\mathrm{SU}(2)_L$ acts on $\mathbb{H}$ by quaternion multiplication from the left, while $\mathrm{SU}(2)_R$ acts by quaternion multiplication
with the inverse from the right \eqref{SU2ActionsOnQuaternions}. These are the $\mathrm{SU}(2)$-actions on $S^4$ which we consider
in Def.\ \ref{SuspendedHopfAction} below.

\item {\bf (iii)} This is consistent also with the charge carried by M5-branes at singularities:
when the element acting from the right is trivial, then the remaining quotient by finite subgroups of $\mathrm{SU}(2)$ acting from the left
provides precisely the near horizon geometry $\mathrm{AdS}_7 \times S^4/(G_{\mathrm{ADE}})_{L}$ of black M5-branes at singularities;
see
Example \ref{TheBlackNS5}.

\item {\bf (iv)} Similarly, when the black M5-brane is situated at a Ho{\v r}ava-Witten $\mathbb{Z}_2$-singularity (Example \ref{TheBlackNS5}), then its near horizon geometry
is $\mathrm{AdS}_7 \times S^4\sslash(\mathbb{Z}_2)_{\mathrm{HW}}$, where the involution (Example \ref{Z2ActionsAreInvolutions}) acts by reflection of one of the coordinates
$$
  S^4 \;\simeq\; S(\mathbb{R}^4 \oplus \!\!\!\!\!\!\!\xymatrix{ \mathbb{R}  \ar@(ul,ur)[]^{ (\mathbb{Z}_2)_{\mathrm{HW}} } } \!\!\!\!\!\!)
  \,.
$$
Hence, by the same reasoning as before, the correct coefficient object to measure M-brane charge at Ho{\v r}ava-Witten singularities is again the 4-sphere,
now equipped with this $\mathbb{Z}_2$-action. This is what we consider in Def. \ref{RealADEActionsOnThe4Sphere}.

\item {\bf (v)} In conclusion, this says that the correct \emph{coefficient} space for the cohomology theory measuring M-brane charge is essentially identified
with the 4-sphere in \emph{spacetime} around a black M5-brane, as indicated in the figures on p. \pageref{On4Sphere}.
This way, analysis of the near horizon geometry of black M-branes supports the suggestion that the correct generalized cohomology theory
measuring M-brane charge is at least closely related to equivariant cohomotopy in degree 4.
\end{remark}

In order to substantiate that equivariant enhancement of fundamental brane cocycles is a plausible candidate for the M-theoretic degrees of freedom
that are ``hidden'' at spacetime singularities,
and that  Theorem \ref{RealADEEquivariantEndhancementOfM2M5Cocycle} below may reasonably be regarded as providing
a cohomological \emph{black brane scan} for branes at singularities (in fact a unified fundamental-and-black brane scan), we now walk through selected discussions in the literature, of black $p$-branes at singularities, and expand on how to match them to the equivariant cocycle data
found in Sections \ref{ADESingularitiesInSuperSpacetime}, \ref{ADEEquivariantRationalCohomotopy},  and \ref{ADEEquivariantMBraneSuperCucycles}, via this kind of translation.


\medskip

\noindent {\bf The $\geq \sfrac{1}{4}$-BPS branes.} In the following
we compare the items in the classification of simple super singularities from
Theorem \ref{SuperADESingularitiesIn11dSuperSpacetime} to the literature.

\subsubsection{The MO9}
\label{TheMO9}

The item denoted ``MO9'' in Theorem \ref{SuperADESingularitiesIn11dSuperSpacetime} is of course readiliy
identified with the $\mathbb{Z}_2$-fixed locus of Ho{\v r}ava-Witten theory \cite{HoravaWitten96a} \cite{HoravaWitten96b},
whence our notation ``$G_{\mathrm{HW}}$'' for the corresponding group action. The characteristic relation \cite[eq. (2.2)]{HoravaWitten96a}
is of course the content of Lemma \ref{1braneFixed}.
It is, however, noteworthy that the nature of the Ho{\v r}ava-Witten fixed locus among the other M-branes had been unclear,
not the least because, due to its singular nature, it is not a BPS soluion of 11-dimensional supergravity; see
the beginning of \cite{BeSc98}, where the term ``M9-brane'' for this object was first suggested. A clear identification of
the role of the MO9 among the other branes,
in its appearance as the O8-plane in type I string theory, is in \cite[Sec. 3]{GKST01}, see also Example \ref{TheBlackNS5}.

\medskip
Now, the point of \cite{HoravaWitten96a} \cite{HoravaWitten96b} is to argue that the worldvolume of the MO9
is to be identified with the spacetime that the heterotic string propagates in, and that, somehow, that
heterotic string is also to be identified with the boundary of the M2-brane ending on the MO9; see also Example \ref{TheBlackNS1H}.
That story evidently matches the data in the equivariant cocycle enhancement that appears labeled
$\mathrm{M2} \underset{\mathrm{NS1}_H}{\dashv} \mathrm{MO9}$ in \hyperlink{EquivariantEnhancementsList}{Table 3} of Theorem \ref{RealADEEquivariantEndhancementOfM2M5Cocycle}.

\subsubsection{The $\mathrm{MO5}$ and $\mathrm{M5}$}
\label{TheBlackM5}

Similarly to Example \ref{TheMO9}, the computation in \cite[Sec. 2.1]{Witten95II}, characterizing an  $\mathrm{MO5}$ is
precisely that in Lemma \ref{5braneFixed},
identifying the fixed locus of a $5$-brane involution, in the sense of Def.\ \ref{pBraneInvolution}.
The conclusion in \cite[Sec. 3.]{Witten95II} is that, for anomaly cancellation, some of these black M5-branes need to sit at the
singularity of an orbifold locally of the form
$$
  \mathbb{R}^{5,1} \times ( \mathbb{R}^5 \dslash \mathbb{Z}_2 )
  \,,
$$
where $\mathbb{Z}_2$ acts by reversing all the coordinates of $\mathbb{R}^5$. This is precisely the action of the 5-brane
involution of Lemma \ref{5braneFixed}.

\medskip
Notice that the M5 at such an $\mathbb{Z}_2$-singularity is not a solution of
supergravity anymore (just as for the $\mathrm{MO9}$ in Example \ref{TheMO9}), but must be something that M-theory
needs to make sense of. While hence an ordinary black 5-brane does not/need not sit at the $\mathbb{Z}_2$-singularity,
the analysis of flux quantization conditons in \cite{Hori97} shows that \emph{if} it meets a $\mathbb{Z}_2$-singularity,
then it cannot do so just partially.

\medskip
Also notice that the M5-branes at orientation-\emph{preserving} $\mathbb{Z}_2$-singularities arise from a further intersection
with an ADE-singularity; this is the $\mathrm{M5}_{\mathrm{ADE}}$ in Example \ref{ADEM5}.

\subsubsection{The $\mathrm{MO1}$ and the M-wave}
  \label{TheMWave}

The singularity identified as MO1 in Prop. \ref{ClassificationOfZ2Actions}
is discussed as such in \cite[Sec. 3.3]{HananyKol00}.
On the other hand, the \emph{M-wave} (MW)  is well-known as
a supergravity $\sfrac{1}{2}$-BPS solution (due to \cite{Hull84}, see \cite{Philip05} for decent review),
but remains somewhat neglected in the literature on M-branes. Where it turns out to intersect with the
M2-brane in \cite[Sec. 2.2.3]{BPST10}, the authors find that ``natural, if slightly unusual'' (bottom of p. 13).

\medskip
It seems that there is no previous reference saying that
the M-wave may sit at an MO1-singularity in the same
way that the M5 may sit at an MO5 singularity (Example \ref{TheBlackM5}), and as suggested by the classification in Prop. \ref{ClassificationOfZ2Actions}.
But observe that the image of the spinor-to-vector pairing on the M-wave
has the special property that it is just one of the two light rays (this is made fully explicit in \cite[p. 94]{Philip05}).
This identifies its worldvolume structure with that of the MO1 found in \eqref{SpinorPairingOn1Brane} in Lemma \ref{1braneFixed}.

\subsubsection{The $\mathrm{M2}$}
 \label{TheBlackM2}

The article \cite{MFFGME10} gives a complete classification of $\geq \sfrac{1}{4}$-BPS black M2-brane solutions \eqref{NearHorizonGeometry}. The result of this is that these are all of the form
$$
  \mathrm{AdS}_4 \times (S^7/G_{\mathrm{ADE}})
  \,,
$$
as in \eqref{NearHorizonGeometry}, where the quotient of the 7-sphere on the right is that induced by any one of the ADE-actions on $\mathbb{R}^8 \simeq \mathbb{R}^4 \oplus \mathbb{R}^4$ that are labeled ``M2'' in
Theorem \ref{SuperADESingularitiesIn11dSuperSpacetime}, under the identification $S^7 \simeq S(\mathbb{R}^8)$.
Comparison with \eqref{NearHorizonGeometry} shows that
the actual singularity itself, if it were included in the spacetime, would be sitting at $r =0$, hence at the origin of $\mathbb{R}^8$.
That origin, of course, is precisely the fixed point set of the $G_{\mathrm{ADE}}$-action \eqref{SpacetimeActions} on $\mathbb{R}^8$.
This situation
$$
  \underset{\mbox{M2}}{
  \underbrace{
    \mathbb{R}^{2,1}
  }}
  \times
  \underset{ C(S^7) }{
  \underbrace{  (\mathbb{R}_+ \cup \{0\}) \times S^7 }}\;
  / G_{\mathrm{ADE}}
  \;\simeq\;
  \mathbb{R}^{2,1} \oplus  \mathbb{R}^8 / G_{\mathrm{ADE}}
  \,,
$$
is what is illustrated in the two items in Figures \hyperlink{Figure1}{1} and
\hyperlink{Figure2}{2} that involve M2:

\begin{center}
\raisebox{-110pt}{
\includegraphics[width=.28\textwidth]{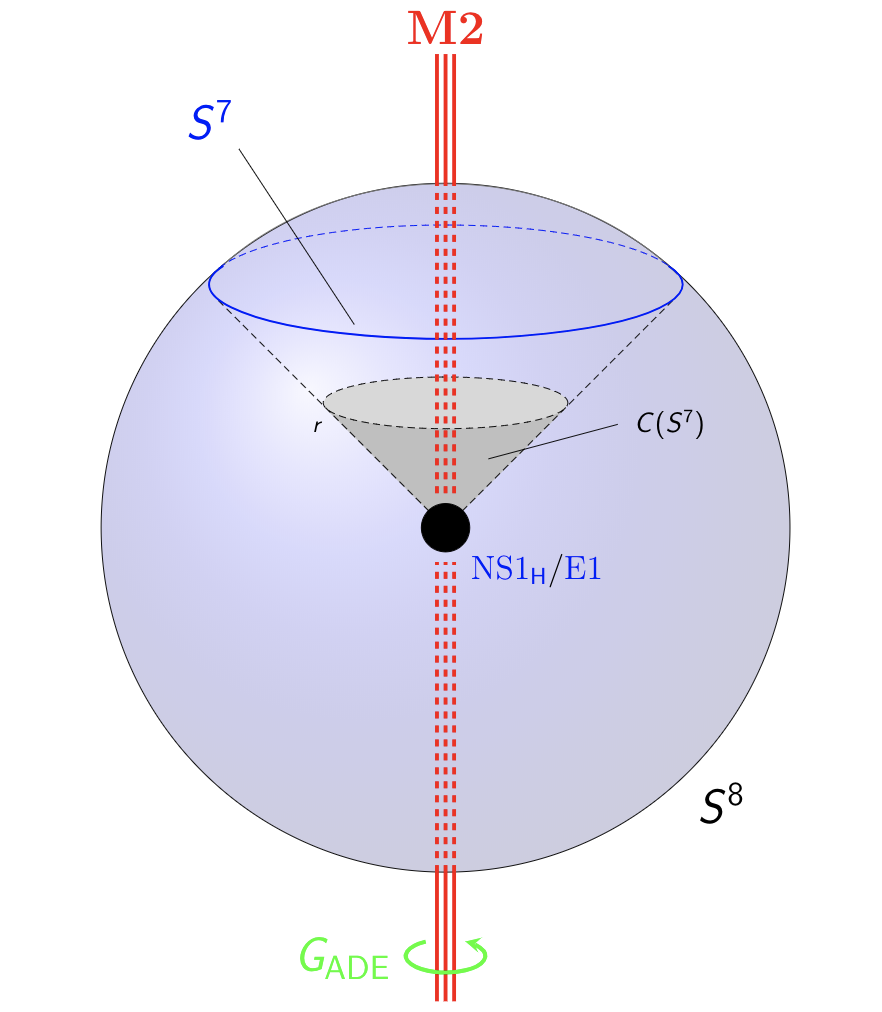}
}
\hspace{2cm}
\raisebox{-107pt}{
\includegraphics[width=.295\textwidth]{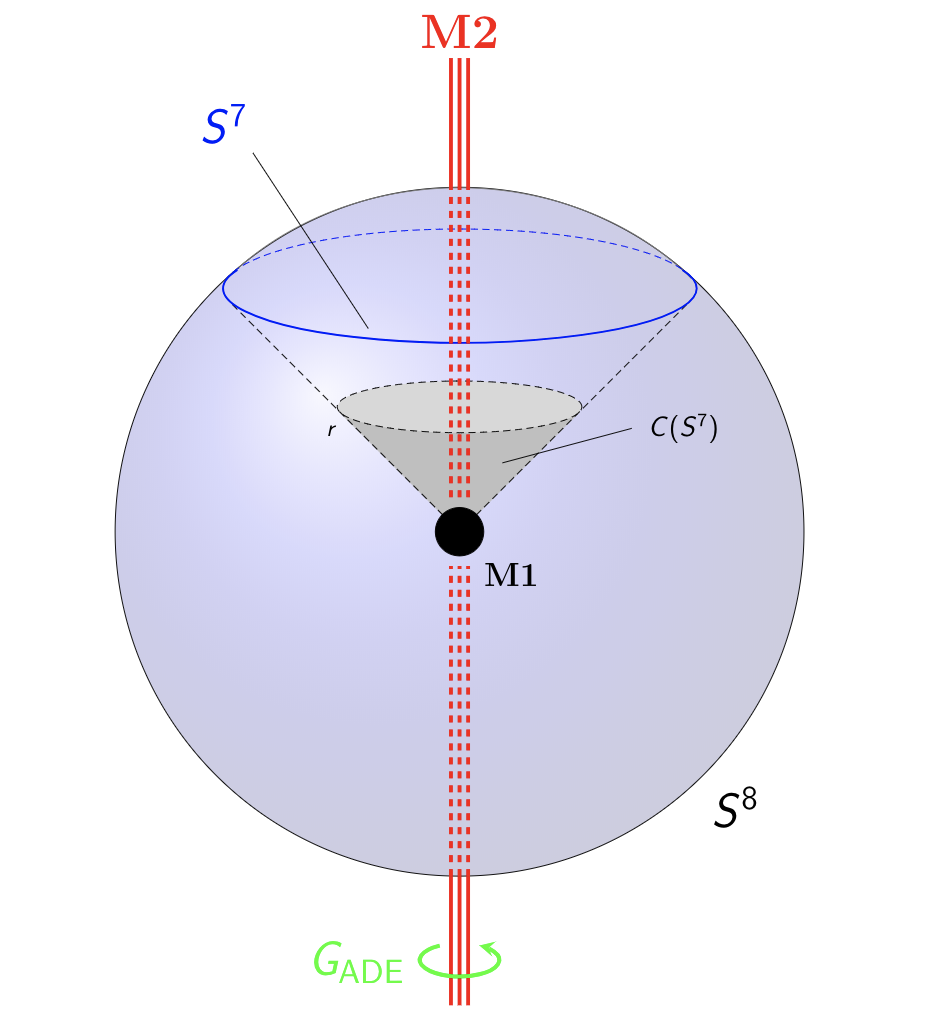}
}
\end{center}

\noindent In fact, this identification is precisely how the classification in \cite{MFFGME10} works:
using results of \cite{Wang}, the problem is first reduced to the case of spherical space forms $X_7 \simeq S^7/G$.
Now that $S^7/G$ is $\geq \sfrac{1}{4}$-BPS means equivalently that its space of
Killing spinors is at least $\sfrac{1}{4}$ of its maximally possible size.
But by a theorem of \cite{Baer}, Killing spinors on $S^7/G$ are equivalently $G$-constant spinors on the
\emph{metric cone} $C(S^7) \simeq \mathbb{R}^8$. These constant spinors, in turn, are precisely the spinorial
fixed points that appear in Theorem \ref{SuperADESingularitiesIn11dSuperSpacetime}.


\medskip
Last but not least, 3d superconformal field theories have famously been identified, which
have the properties expected of the worldvolume field theories of M2-branes, see \cite{BLMP13}.
These field theories have an ADE-classification, and inspection shows that their scalar fields
are as expected if the corresponding M2-branes sit at an ADE-singularity in the way just discussed.
(Of course this is the very motivation for the classification in \cite{MFFGME10}.)

\subsubsection{The $\mathrm{MK6}$}
 \label{TheMK6}

The \emph{Kaluza-Klein monopole} solution of plain 11-dimensional supergravity \cite{HK} is different from the other black brane
solutions, in that it does not feature a spacetime singularity. But it may be regarded as a circle fibration over the
10-dimensional type IIA super-spacetime base, and as such it \emph{is} singular, in that the circle fiber degenerates on
a $6+1$-dimensional locus. This is the 11-dimensional KK-monopole as a singular locus. Conversely, if that locus is
removed from the spacetime manifold, then on the complement
we have the total space of a non-singular fibration, in fact a principal $S^1$-bundle.

\medskip
Generally, given any $S^1$-fibration $X_{11} \to X_{10}$, one may consider the canonical inclusion of a cyclic group into
$S^1 \simeq U(1)$, as a subgroup of
roots of unity $\mathbb{Z}_n \hookrightarrow U(1)$ and hence the induced quotient bundle
$X_{11}/\mathbb{Z}_n \to X_{10}$. Since $S^1/\mathbb{Z}_n \simeq S^1$
is still a circle, just an ``$n$-times smaller'' circle, this is still a circle bundle over type IIA spacetime. Since the radius of the circle
fiber of the 11d spacetime over $X_{10}$ is supposed to be the M-theoretic incarnation of the coupling parameter of the type IIA string on $X_{10}$,
it is natural to regard such quotients, as $n$-varies. The limit where M-theory is supposed to asymptote to the perturbative type IIA superstring
would then be the limit $n \to \infty$.

\medskip
But now, if the circle fibration is actually degenerate, as it is for the Kaluza-Klein monopole, then the quotient spacetime $X_{11}/\mathbb{Z}_n$
is singular after all, with a $\mathbb{Z}_n$-singularity at the locus of the KK-monopole. One argues that from the point of view
of the type IIA string theory this configuration is the black D6-brane \cite[p. 6-7]{Townsend95}, \cite[Sec. 2]{Sen97}, \cite[p. 17-18]{AtiyahWitten03}.
Accordingly, the KK-monopole in this singular incarnation ought to be an M-brane, the MK6.

\medskip
We review the following more sophisticated (albeit still informal) argument for why the MK6-brane may occupy more general ADE-singularities,
and that there must be ``hidden M-theory degrees of freedom'' at these singularities, not seen in the supergravity approximation; namely degrees of
freedom that in the approximation of the type IIA string theory incarnate as nonabelian gauge fields on the D6-brane.

\medskip
This argument goes back to \cite[Sec. 2]{Sen97}, brief recollection may be found in \cite[Sec. 4.I]{Witten02}, \cite[Sec. 6.3.3]{IbanezUranga12}. We should amplify that, fascinating as the following
picture is, it remains a conjectural story that is waiting to be substantiated by actual mathematics of M-theory:

\medskip

Consider 11-dimensional spacetime that is locally the Cartesian product
$$
  \mathbb{R}^{6,1} \times \mathbb{C}^2 \dslash G_{\mathrm{ADE}}
$$
of 6+1-dimensional Minkowski spacetime with the orbifold quotient of the canonical action of a finite subgroup of $\mathrm{SU}(2)$
(Remark \ref{ADEGroups}). In terms of algebraic geometry, the underlying ordinary quotient $\mathbb{C}^2 / G_{\mathrm{ADE}}$
may naturally be regarded as a complex variety that is non-smooth -- hence \emph{singular} -- at the origin.
The specific singularities arising this way are known as \emph{du Val singularities} \cite{DuVal}.

\medskip
Algebraic geometry knows a canonical process of smoothing out singular points in varieties, called \emph{blowup of singularities}.
Now the blowups specifically of du Val singularities have the following striking property
(due to \cite[I, p. 1-3 (453-455)]{DuVal} see \cite{Reid} for a quick overview and \cite[Sec. 6]{Slodowy80} for a comprehensive account):

\medskip

{\bf Magic blowup property of du Val singularities.}
{\it The blowup of the du Val singularity in $\mathbb{C}^2/G_{\mathrm{ADE}}$ is a union of spheres that touch (``kiss'') each other such that
connecting the touching points by straight lines yields the Dynkin diagram given by the same ADE-label
that also classifies the group $G_{\mathrm{ADE}}$ according to the table in Remark \ref{ADEGroups}.}

\medskip

The following picture illustrates the situation of such spheres touching according to an A-type Dynkin diagram,
hence resolving the singular point of the quotient of $\mathbb{C}^2$ by the action of a cyclic group, via $\mathrm{SU}(2)$

\begin{center}
\includegraphics[width=.56\textwidth]{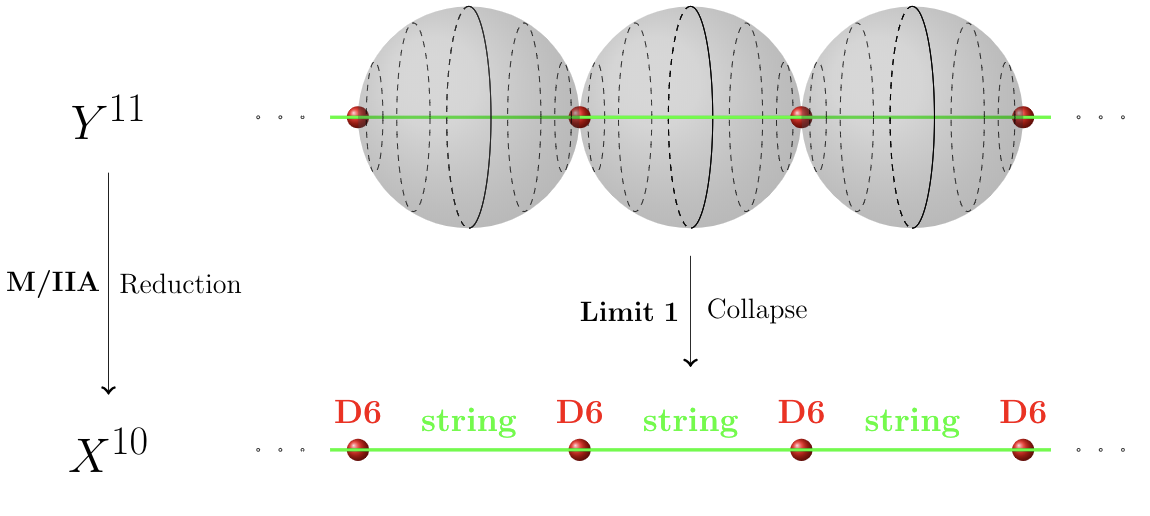}
\hspace{-1mm}
\includegraphics[width=.43\textwidth]{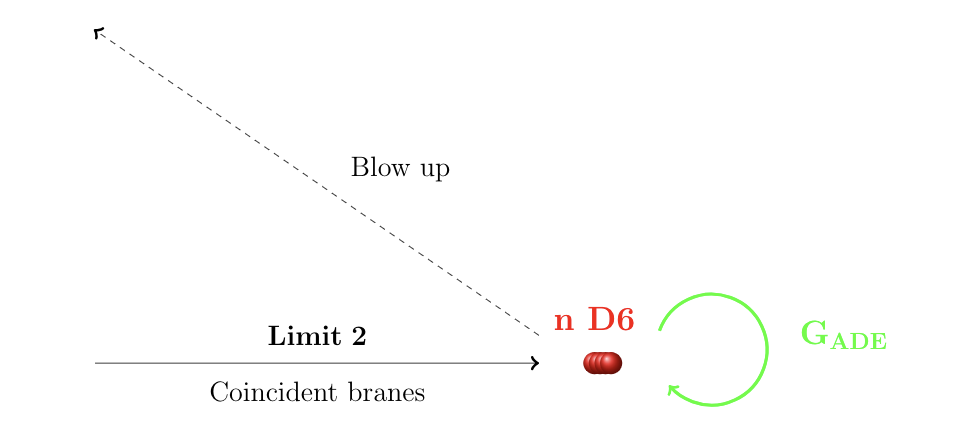}
\end{center}

So far this is the mathematics of algebraic geometry. Now the suggestion is that one interprets this situation in terms
of M-theoretic geometry:
as shown, one may think of these spheres as being $S^1$-fibrations over this Dynkin diagram, with degenerate fibers over the vertices of the diagram.
If we identify this with the M-theory circle fibration of 11-dimensional supergravity
fibered over a 10d type IIA spacetime, then this is a \emph{multi KK-monopole}-solution of 11d sugra,
with KK-monopoles centered at vertices of the Dynkin diagram.
From this perspective, one imagines that the original singularity may be understood as the result of taking
two consecutive limits, namely:
\begin{list}{}{}

\vspace{-1mm}
\item {\bf Limit 1:} first taking the type IIA-limit where the radius of the circle fibers is taken to zero;

\vspace{-2mm}
\item {\bf Limit 2:} and then taking the further limit where the vertices of the Dynkin diagram tend to coincide.
\end{list}

The string theoretic interpretation of the first limit is that the SuGra KK-monopoles becomes D6-branes in type IIA string theory,
and hence the second limit yields a configuration of $n$-coincident D6-branes. By turning this around, the original
du Val singularity must have been the M-theoretic incarnation of what in the approximation of string theory looks
like coincident D-branes: this is the black MK6-brane.

\medskip
 If one, moreover, considers M2-brane instantons wrapping these spheres, then the first of these two limits gives the
\emph{double dimensional reduction} of the M2-brane to non-perturbative strings stretching between D6-branes, and in
the second limit, where the D6-branes coincide, these D6-branes lose their tension energy and thus become \emph{massless}
perturbative strings. From perturbative string theory it is known that these are quanta of nonabelian gauge fields on the
worldvolume of the  D6-branes. Finally, if M-theory is supposed to be a refinement of string theory, a pre-image of these
nonabelian gauge field degrees of freedom must have existed already on the black MK6-brane at the singularity. This is
one incarnation of the mysterious M-theory degrees of freedom hidden at the ADE-singularity.

\medskip

\medskip

\medskip

\noindent {\bf Examples of intersecting M-branes.} Next we compare some of the examples
of intersecting singularities from Prop. \ref{NonSimpleRealSingularities} to the literature.

\subsubsection{The $\mathrm{M5}_{\mathrm{ADE}}$}
  \label{ADEM5}
  According to \cite[Sec. 3]{ZHTV15}, the configuration of an  $\mathrm{M5}$ placed inside an $\mathrm{MK6}$ is supposed to have worldvolume theory
  the $D = 6$, $\mathcal{N} = (1,0)$ superconformal QFT with the corresponding ADE-classification of its gauge field
  content.
  This corresponds to the item labeled $\mathrm{M5}_{\mathrm{ADE}}$ in Prop. \ref{NonSimpleRealSingularities}, as shown in \hyperlink{Figure1}{Figure 1}.
  Notice that this means that the M5-brane at an ADE-singularity necessarily sits inside a larger 6-brane, which it thereby
  ``divides in half''; see also
  the discussion in Example \ref{TheBlackNS5}.

\medskip
  With this in mind, we interpret the discussion in \cite[Sec.  8.3]{MF10}, which aims at a classification of the near horizon geometries for black M5-brane solutions of 11d supergravity.
  We may translate the considerations there to those here by exactly the same logic as in Example \ref{TheBlackM2}.
In that language, the result of \cite[Sec.  8.3]{MF10} is that the would-be M5 is necessarily inside a larger fixed locus, corresponding to a 6-brane:

\vspace{-.2cm}

\begin{center}
\raisebox{-30pt}{
\includegraphics[width=.4\textwidth]{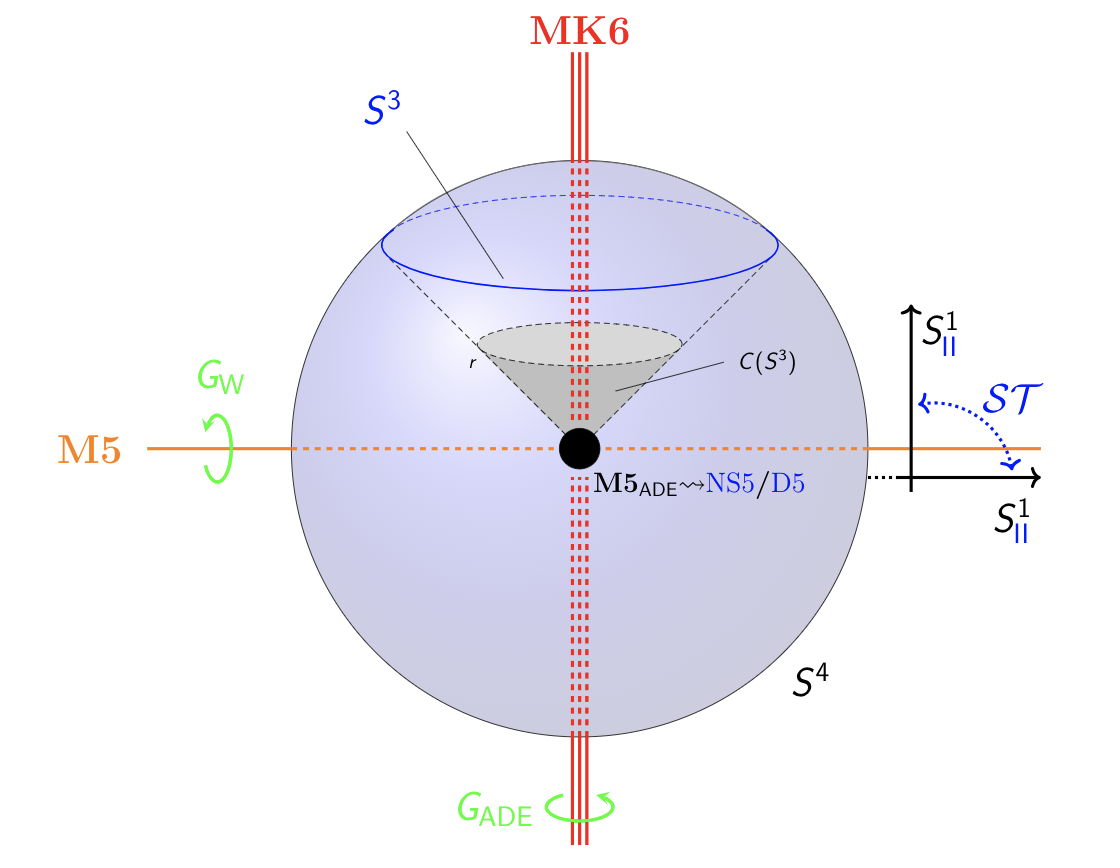}
}
\end{center}
We quote the \emph{mathematical} result of \cite[Sec.  8.3]{MF10} in the proof of our Prop. \ref{M2AndMK6ADE},
where the corresponding actions and fixed points appear labeled ``MK6''.

\medskip
  Notice that there is not supposed to be a realization of the $D =6$, $\mathcal{N} = (2,0)$ superconformal field theories in the ADE-series
  by M-branes \emph{at singularities} (the A-series however is supposed to come from coincident $\mathrm{M5}$-branes not placed at a singularity \cite{Strominger96}).
  On the other hand, such realizations are supposed to exist in F-theory at ADE-singularities \cite{HMV13}.  But the starting point
  of our discussion here does also exist for F-theory (\cite[Sec. 8]{FSS16b})
  and hence it should be possible to go through an analogous analysis of equivariant cohomology for F-branes.

\subsubsection{The $\tfrac{1}{2}\mathrm{NS5}$}
\label{TheBlackNS5}

The intersection of a black $\mathrm{M5}$ (Example \ref{TheBlackM5}) with an $\mathrm{MO9}$ (Example \ref{TheMO9})
is not a solution to supergravity, but may be argued to
be visible as an object of M-theory via the 6d superconformal worldvolume theory that it is supposed to carry \cite{Berkooz98}.
The resulting near horizon geometry involves the quotient $S^4\sslash(\mathbb{Z}_2)_{\mathrm{HW}}$ induced by the
$\mathbb{Z}_2$-action that we consider in Def.\ \ref{SuspendedHopfAction}.

\medskip
More generally, one may consider the $\mathrm{M5}_{\mathrm{ADE}}$ (Example \ref{ADEM5}) to intersect the $\mathrm{MO9}$
\cite[Sec. 2.4]{BrodieHanany97}, \cite[EGKRS00]{EGKRS00}, \cite[around Fig. 6.1, 6.2]{GKST01}, \cite[Sec.s 6]{ZHTV15}, \cite[p. 38 and around Fig. 3.9, 3.10]{Fazzi17}.
This then accordingly involves the full $G_{\mathrm{ADE}} \times \mathbb{Z}_2$-action that we consider in Def.\ \ref{SuspendedHopfAction}, see
specifically \cite[Sec. 7]{ZHTV15}.
In \cite{GKST01} this situation (regarded from the type I-perspective) is referred to as the ``$\tfrac{1}{2}\mathrm{NS5}$''-brane:

\begin{center}
\scalebox{.8}{
\raisebox{-96pt}{
\includegraphics[width=.5\textwidth]{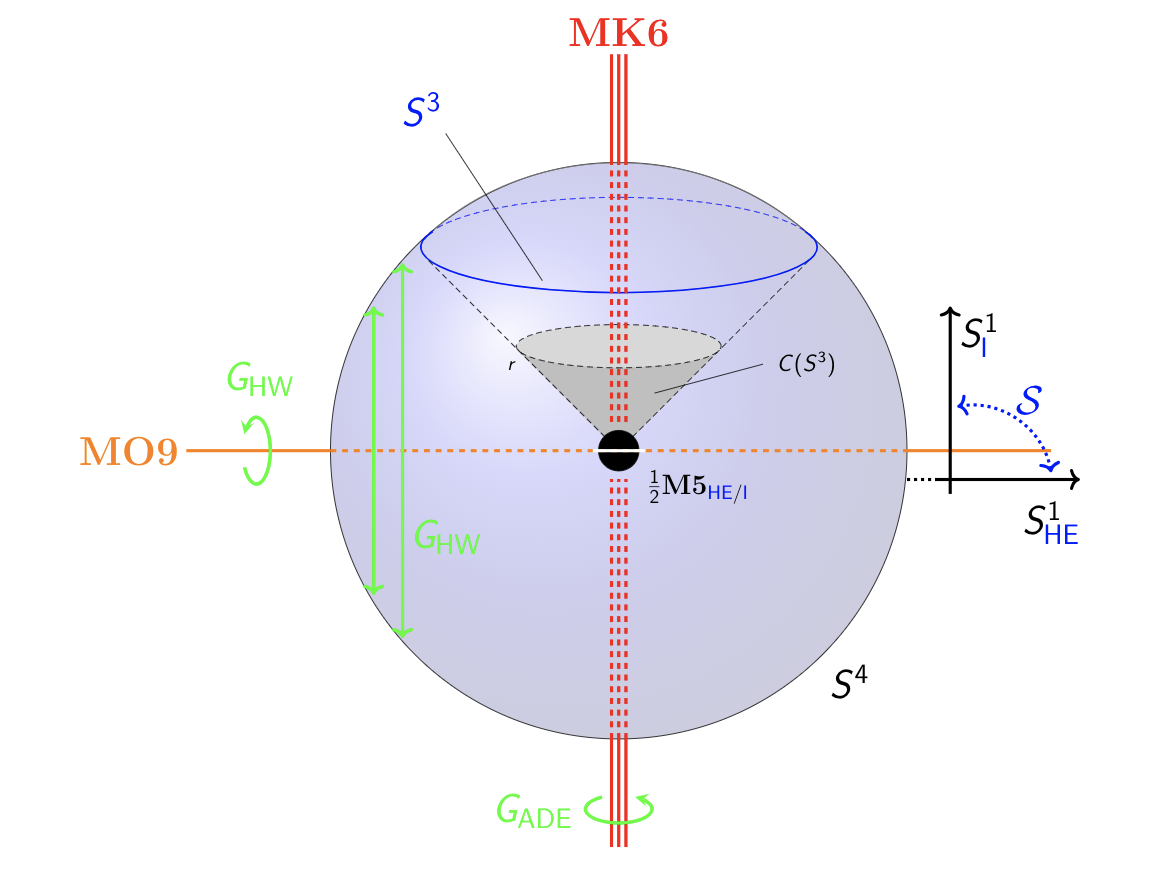}
}}
\scalebox{.8}{
\raisebox{-100pt}{
\includegraphics[width=.5\textwidth]{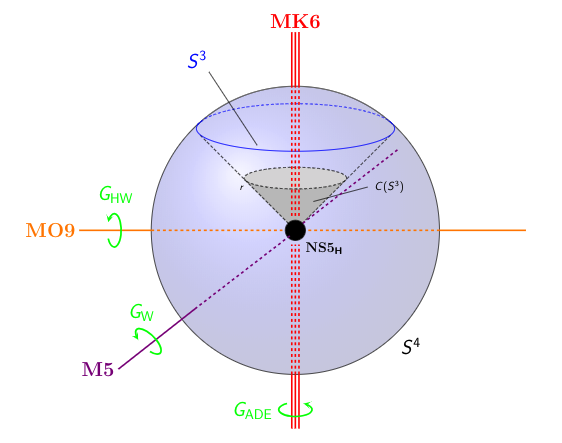}
}}

\end{center}

\subsubsection{The $\mathrm{M1}$}
\label{TheSelfDualString}

The intersection of the $\mathrm{M2}$ (Example \ref{TheBlackM2}) with the $\mathrm{M5}$ (Example \ref{TheBlackM5})
is also called the \emph{M-string} \cite[Sec. 2.3]{HI13} \cite{HIKLV15}.
In terms of boundary conditions for the expected worldvolume conformal field theory on the M2, this is argued to indeed exist
in \cite[Sec. 2.2.1]{BPST10}.
This clearly corresponds to the item $\mathrm{M1} =  \mathrm{M2} \dashv \mathrm{5}$ in Prop. \ref{NonSimpleRealSingularities}.

\subsubsection{The $\mathrm{NS1}_H$/$\mathrm{E1}$}
\label{TheBlackNS1H}

That the intersection of the $\mathrm{M2}$ (Example \ref{TheBlackM2}) with the $\mathrm{MO9}$ (Example \ref{TheMO9})
should be the heterotic string, if the $\mathrm{MO9}$ is orthogonal to the M-theory circle fiber, is the key claim of
 Ho{\v r}ava-Witten theory \cite{HoravaWitten96a} \cite{HoravaWitten96b}(if it is longitudinal to it, then this intersection
 is called the $\mathrm{E1}$ \cite{KKLPV14}).
The actual black brane configurations exhibiting this have been considered in \cite{LLO97}\cite{Kashima00}.
That the expected superconformal worldvolume theory on the M2-branes may have boundary conditions corresponding to an
MO9 is argued in \cite[Sec. 2.2.2]{BPST10}.
This clearly corresponds to the item $\mathrm{NS1}_H =  \mathrm{M2} \dashv \mathrm{MO9}$ in Prop. \ref{NonSimpleRealSingularities}.

\medskip
Notice that in the folklore the distinction, if any, between the fundamental heterotic string and its black brane incarnation remains
ambiguous, not the least because, without actual M-theory in hand, the corresponding M2-brane is really known only in its supergravity approximation.
But a look at the item denoted $\mathrm{NS1}_H = \mathrm{M2} \dashv \mathrm{MO9}$ in \hyperlink{EquivariantEnhancementsList}{Table 3}
\begin{equation}
\label{AgainNS1H}
\raisebox{30pt}{
\scalebox{.8}{
$$
  \xymatrix@C=10pt@R=4pt{
    &&&&
    &&& &&&&&&&&&
    S^4
    \ar@(ul,ur)[]^{ (\mathbb{Z}_{n+1})_{\Delta} \times G_{\mathrm{HW}} }
    \\
    \\
    &&  &&
    &&&
    \mathbb{R}^{10,1\vert \mathbf{32}}
    \ar@(ul,ur)[]^{ ( \mathbb{Z}_{n+1} )_{\Delta} \times G_{\mathrm{HW}} }
    \ar[uurrrrrrrrr]^{ \color{blue} 0  }_<<<<<<<<<<<<{\ }="s1"_<<<<<<<<<<{\ }="s2"
    \\
    &
    \mathrm{ST}
    &&&
    &&&&&&&&& S^2
    \ar@{..>}[uuurrr]|0
    &&& &&&
    S^3
    \ar[uuulll]|0
    \\
    \\
    \mathrm{M2} &&\mathrm{MO9}&&
    \mathbb{R}^{2,1\vert 8 \cdot \mathbf{2}}
    \ar@{^{(}->}[uuurrr]
    \ar@{..>}[uurrrrrrrrr]|<<<<<<<<{ \color{blue} \mu_{{}_{F1}}}^>>>>>>>>>>>>>>>>>>>>>>>>>>>>>>>>>>>>>>{\ }="t2"_<<<<<<<<<<<<<<<<<<<{\ }="s4"
    &&& &&&
    \mathbb{R}^{9,1 \vert \mathbf{16}}
    \ar@{_{(}->}[uuulll]
    \ar[uurrrrrrrrr]|>>>>>>>>>>>>>{\color{blue} \mu^{H}_{{}_{F1}}}^>>>>>>>>>>>>>>>>>>>>>>>>>>>>>>>>>>>>{\ }="t1"_<<<<<<<<<{\ }="s3"
    \\
    & \mathrm{NS1}_H &&&
    &&&   &&&&&&&&& S^1
    \ar[uuurrr]|0
    \ar@{..>}[uuulll]|0
    \\
    \\
    &&&&
    &&&
    {\mathbb{R}^{1,1\vert 8 \cdot \mathbf{1}}}
    \ar@{^{(}->}[uuurrr]
    \ar@{_{(}->}[uuulll]
    \ar[uurrrrrrrrr]|{0}^>>>>>>>>>>>>>>>>>>>>>>>>>>>>>>>>>>>{\ }="t3"^<<<<<<<<<<<<<<{\ }="t4"
    \ar@{=>}|<<<<<<<<<<<{ \color{cyan} \phantom{{A \atop A}} 0  \phantom{{A \atop A}} } "s1"; "t1"
    \ar@{::>}|<<<<<{ \color{cyan} \phantom{{A \atop A}} 0  \phantom{{A \atop A}} } "s2"; "t2"
    \ar@{=>}|{ \color{cyan} \phantom{{A \atop A}} \mathrm{svol}_{1+1} } "s3"; "t3"
    \ar@{::>}|<<<<<{ \color{cyan} \phantom{{A \atop A}} \mathrm{svol}_{1+1} } "s4"; "t4"
  }
$$
}}
\end{equation}

\noindent indicates that Theorem \ref{RealADEEquivariantEndhancementOfM2M5Cocycle}
serves to resolve the subtle distinctions and identifications involved in Ho{\v r}ava-Witten theory:
the equivariant cocycle data (via Example \ref{EquivariantEnhancementOfSuperCocycles}) shown in \eqref{AgainNS1H}
 exhibits, on the right,
the fundamental heterotic string cocycle $\mu_{F1}^{H}$ (Example \ref{FundamentalF1Cocycle}) on the worldvolume
of the MO9. At the same time, on the bottom
right, it relates the fundamental heterotic string to its own black brane incarnation, via the Green--Schwarz functional
$\mathrm{svol}_{1+1}$ appearing on the superembedding
of the string worldsheet into the MO9 worldvolume, according to Prop. \ref{TrivializationsOfRestrictionsOfM2M5Cocycle}.
Finally, the left part of the diagram witnesses that this black string is indeed the
boundary of the black M2-brane ending on the MO9, according to Prop. \ref{NonSimpleRealSingularities}.

\medskip

This concludes our comparison of selected items from Theorem \ref{SuperADESingularitiesIn11dSuperSpacetime} and Theorem \ref{RealADEEquivariantEndhancementOfM2M5Cocycle}
to existing classification of supergravity solutions and informal arguments from the string/M-theory literature. One could discuss more examples,
but this should suffice to support the suggestion that (rationally) M-branes, both fundamental branes, black branes, as well as their
various ``bound states'', are classified by real equivariant cohomotopy of super-spacetimes.

\medskip

Using this precise formulation, one may now proceed and compile comprehensive classifications of
real equivariant cohomotopy classes on super-spacetimes, explore dualities and, eventually, search for the all important lift
beyond the rational approximation.
But since the present article is clearly long enough already, we relegate such investigations to elsewhere.

\newpage

\section{Equivariant super homotopy theory}
  \label{EquivariantHomotopyTheory}

Here we establish the context of homotopy theory within which the results in Sections \ref{ADESingularitiesInSuperSpacetime}, \ref{ADEEquivariantRationalCohomotopy},  and \ref{ADEEquivariantMBraneSuperCucycles} are cast. First we briefly review ordinary equivariant
homotopy theory in Sec. \ref{HomotopyTheoryOfGSpaces}. Then, in Sec. \ref{ERSHTForSuperBranes},
we set up the \emph{equivariant rational super homotopy theory} in which
our main result, Theorem \ref{RealADEEquivariantEndhancementOfM2M5Cocycle}, will take place.


\medskip

Throughout, we take $G$ to be a finite group (equipped with the discrete topology), such as for instance a cyclic group
$$
  G = \mathbb{Z}_n \;:=\; \mathbb{Z}/(n \mathbb{Z})
$$
or more generally a finite subgroup of $\mathrm{SU}(2)$ (see Remark \ref{ADEGroups} below).
We denote by $\{ e \}$, or simply by $1$, the trivial group, i.e., the group whose only element is the neutral element.
All of the following generalizes to the case that $G$ is allowed to be a compact Lie group, such as the circle group $U(1)$,
if one considers fixed point loci for the \emph{closed} subgroups only.
But for brevity we will not explicitly discuss this generalization here.

\subsection{Ordinary equivariant homotopy theory}
\label{HomotopyTheoryOfGSpaces}

We recall just enough of the background on \emph{equivariant homotopy theory}, i.e.,
of  the homotopy theory of topological spaces equipped with $G$-actions, in order to state and explain
the relevance of  \emph{Elmendorf's Theorem}
(Theorem \ref{ElmendorfTheorem} below). This is the basis for the  generalization to equivariant
super homotopy theory in Sec.\ref{ERSHTForSuperBranes}.
For a comprehensive introduction to equivariant homotopy theory see \cite{Blu17}; for further reading
see \cite{Ma96}, \cite[appendix]{HHR09}. Some basic concepts of general homotopy theory
are recalled in Sec. \ref{HomotopyTheory}.

\subsubsection*{Homotopy theory of $G$-Spaces}

To fix notation, we begin by recalling some standard facts.
\begin{defn}[Group actions]
 \label{GroupActions}
 Let $G$ be a topological group and $X$ a topological space.
 Then a \emph{continuous action} of $G$ on $X$ is a continuous function
\begin{equation}
  \label{MapAction}
  \rho \maps G\times X \longrightarrow X
\end{equation}
such that
\begin{equation}
  \label{ActionProperty}
  \rho(1) x = x
  \qquad
  \text{and}
  \qquad
  \rho(g_1) \rho(g_1) x = \rho(g_1 g_2) x \;,
\end{equation}
where we write $\rho(g,x)$ as $\rho(g)x$, as is conventional for group actions. When
$\rho$ is understood, we will write $\rho(g)x$ simply as $gx$.
\end{defn}
\begin{remark}[Shorthand notation]
  \label{ActionShorthandNotation}
As is typical in physics, we will write:
$$
  G_\rho
    \;:=\;
  (G,\rho)
$$
for the pair of data consisting of a group with a chosen action. For instance, in Sections \ref{ADESingularitiesInSuperSpacetime}, \ref{ADEEquivariantRationalCohomotopy},  and \ref{ADEEquivariantMBraneSuperCucycles} three different actions of the group
$\mathbb{Z}_2$ play a role and we will denote them $G_{\mathrm{ADE}}$, $G_{\mathrm{HW}}$ and
$G_{\mathrm{ADE}, \mathrm{HW}}$, respectively.
\end{remark}

\begin{example}[$\mathbb{Z}_2$-actions are involutions]
  \label{Z2ActionsAreInvolutions}
  An action (Def.\ \ref{GroupActions}) of the cyclic group of order two, $\mathbb{Z}_2 = \{e,\sigma\}$,
  is equivalently an \emph{involution} on a topological space, namely a continuous function
  $$
    X \xrightarrow{\;\rho(\sigma)\;} X
  $$
which squares to the identity, $\rho(\sigma)^2 = 1_X$.
\end{example}

\begin{defn}[Spaces associated with a $G$-action]
  \label{SpacesAssociatedWithAGAction}
Let $G$ be a group equipped with an action (Def.\ \ref{GroupActions}) on some topological space $X$.
This naturally induces the following structures (we now use the shorthand notation of Remark \ref{ActionShorthandNotation}):
\begin{enumerate}[{\bf (i)}]
\vspace{-2mm}
\item  The \emph{orbit} of $x$ is the subspace of $X$ given by
$G(x):=\{ gx\;\vert\; g\in G\}$.

\vspace{-2mm}
\item   The \emph{isotropy group} of a point $x \in X$ is the subgroup of $G$ defined as
$G_x:=\{ g \in G \;:\; gx=x\}$.

Having fixed $x \in X$, the natural map $G \to X$ given by $g \mapsto gx$ induces a homeomorphism
$G/G_x \xrightarrow{\;\simeq\;} G(x)$.
Note also that the isotropy groups $G_{gx}$ of any other element $gx$
is related to $G_x$ by conjugation with $g$: $G_{gx}=gG_x g^{-1}$.

\vspace{-2mm}
\item   The \emph{fixed point space} of $G$ acting on $X$ is the subspace
\begin{equation}
  \label{FixedPointSpace}
  X^G := \{ x \in X\;|\; gx=x \text{\;for\;all\;} g \in G\}\;.
\end{equation}

\vspace{-5mm}
\item  The \emph{orbit space} $X/G$ is the quotient topological space of $X$ by the
equivalence relation generated by setting $x \sim gx$ for some $g \in G$.  Note that if $G$
acts freely on $X$ then the quotient map $X \to X/G$ is a regular covering with $G$ as a group
of deck transformation.
\end{enumerate}
\end{defn}

\begin{example}[Group actions on $\R^n$]
 Note that not every finite group action (Def.\ \ref{GroupActions}) on $\mathbb{R}^n$
 needs to have fixed points (Def.\ \ref{SpacesAssociatedWithAGAction}).
 Indeed, in \cite{CF} first examples of $\Z_n$-actions on $\R^n$
 without fixed points are given.
Later, smooth fixed point free actions on $\R^n$
 of $G=\Z_{pq}$,  for two relatively prime integers $p, q\geq 2$ are given in
 \cite[pp. 58-61]{Br}. As a consequence,  one has to pick the appropriate
 action in order to achieve gauge enhancement of M-branes.
\end{example}

\begin{defn}[Types of group actions]
  A group action (Def.\ \ref{GroupActions})
  is called
    \item {\bf (i)} \emph{free} if for any two points $x,y \in X$ there is \emph{at most} one element $g \in G$ with $g (x) = y$;
    \item {\bf (ii)} \emph{semi-free} if it is free away from the fixed points (Def.\ \ref{SpacesAssociatedWithAGAction}).
\end{defn}

\begin{defn}[Topological $G$-spaces]
  \label{GSpace}

     \begin{enumerate}[{\bf (i)}]

\item A \emph{topological $G$-space} is a topological space $X$ equipped with a continuous $G$-action (Def.\ \ref{GroupActions}). For $(X_1, \rho_1)$ and $(X_2, \rho_2)$ two topological $G$-spaces, a
\emph{$G$-equivariant map} between them is a continuous function $X_1 \overset{f}{\longrightarrow} X_2$
between the corresponding topological spaces which respects the $G$-action, in that
\begin{equation}
  \label{GEquivarianceMap}
  f(\rho_1(g) x) = \rho_2(g) f(x)
  \phantom{AAAAA}
  \mbox{for all $x \in X_1$ and $g \in G$.}
\end{equation}
\vspace{-3mm}
\item We write $G\mathrm{Spaces}$ for the corresponding category of topological $G$-spaces. Moreover, for $X_1, X_2 \in G \mathrm{Spaces}$, we write
\begin{equation}
  \label{SpaceOfEquivariantMaps}
  \mathrm{Maps}(X_1, X_2)^G \subset \mathrm{Maps}(X_1, X_2) \in \mathrm{Spaces}
\end{equation}
for the mapping space of $G$-equivariant continuous functions between them, equipped with the compact-open topology.
\end{enumerate}
\end{defn}

\begin{remark}
   \label{EquivarianceCommutingDiagram}
   We will sometimes write a $G$-equivariant map as follows:
\begin{equation}
  \label{AbbreviatedDiagramForEquivariance}
  \xymatrix{
  X_1
  \ar@(ul,ur)[]^{G}
  \ar[rr]^-{f}
  &&
  X_2
  \ar@(ul,ur)[]^{G}
  }
\end{equation}
which is to be understood as saying that $f$ is a continuous function from $X_1$ to
$X_2$ and $G$-equivariant according to (\ref{GEquivarianceMap}).
\end{remark}

\begin{example}[Ordinary topological spaces as topological $G$-spaces]
  \label{OrdinaryTopologicalSpacesAsTopologicalGSpaces}
  For $G = \{e\}=1$ the trivial group, a topological $G$-space is just a topological space.
  Similarly, the $G$-equivariant homotopy theory described in the following reduces to classical
  homotopy theory in this case.
\end{example}

\begin{example}[$G$-invariance as $G$-equivariance]
  \label{InvarianceAsEquivariance}
  If $X$ is a topological $G$-space (Def.\ \ref{GSpace}), but $A$ is just a topological space (Def.\ \ref{TopologicalSpaces}),
  regarded as a topological $G$-space with \emph{trivial} $G$-action, via Example \ref{OrdinaryTopologicalSpacesAsTopologicalGSpaces},
  then $G$-equivariant functions \eqref{GEquivarianceMap} from $X$ to $A$, which, following Remark \ref{EquivarianceCommutingDiagram},
  we may denote by
  \begin{equation}
    \xymatrix{
      X
      \ar@(ul,ur)[]^{G}
      \ar[rr]^-{f}
      &&
      A
    }
  \end{equation}
  are equivalently \emph{$G$-invariant} functions, satisfying
  $
    f(g x) = f(x)
    $.
\end{example}

\begin{example}[Real spaces]
  \label{RealSpace}
  For $G = \mathbb{Z}_2 =\{e,\sigma\}$ the cyclic group of order two,
  a topological $G$-space $X$ (Def.\ \ref{GSpace}), hence a topological \emph{$\mathbb{Z}_2$-space}
  is also called a \emph{real space} (\cite[Sec. 1]{Atiyah66}). By Example \ref{Z2ActionsAreInvolutions},
  this is a topological space equipped with a topological involution.
  For instance, if $X$ is the underlying topological space of a complex algebraic variety, it becomes
  a $\mathbb{Z}_2$-space or real space via the involution induced by complex conjugation. In this sense,
  real structure on a topological space is a generalization of real structure on a complex vector space,
  making it a real vector space.
\end{example}

\begin{example}[Basic kinds of $G$-spaces]
  \label{GSpacesExamples}
  Basic families of topological $G$-spaces (Def.\ \ref{GSpace}) include the following:
\begin{itemize}

\item
  Any topological space $X$ becomes a $G$-space by equipping it with the trivial action
  $\rho(g,x) = x$. If we do not specify a $G$-action otherwise, then this trivial action will be understood.

\item   For $H \subset G$ a subgroup of $G$, the coset space $G/H$ inherits a $G$-action from the left
multiplication of $G$ on itself. We will always understand these coset spaces to be $G$-spaces via this
choice of $G$-action.

\begin{list}{$\circ$}{}

\vspace{-2mm}
\item  Observe that given any point $x \in X$ in a topological $G$-space $X$, then the \emph{orbit} of $x$ under the $G$-action (Def.\ \ref{SpacesAssociatedWithAGAction}) looks
like $G/H$, for $H \subset G$ the \emph{stabilizer subgroup} which fixes $x$. Hence we may think of the cosets $G/H$ as the
possible \emph{orbit spaces} of $G$.

\item Observe that for the degenerate case when $H = G$, the coset $G/G = \ast$ is the point. We will find below that
equivariant homotopy theory is like ordinary homotopy theory, but with the single point $\ast = G/G$ promoted
 to a systems of generalized points given by the orbit spaces $G/H$. This is formalized by the statement of Elmendorf's Theorem (Theorem \ref{ElmendorfTheorem} below).
\end{list}

\vspace{-3mm}
\item   For $X_1$ and $X_2$ two $G$-spaces, their Cartesian product space $X_1 \times X_2$ becomes a $G$-space via the diagonal action
$g(x_1, x_2) = (gx_1, gx_2)$.
\end{itemize}
\end{example}

Using the basic cases from Example \ref{GSpacesExamples} as building
blocks yields the following concept of \emph{$G$-cell complexes}.\footnote{The cells
in a cell complex are the spatial analogues of algebra generators in an algebra.}  The
\emph{equivariant Whitehead theorem} (Theorem \ref{EquivariantWhiteheadTheorem}
below) states that the homotopy category of these complexes yields the full
equivariant homotopy theory.

\begin{defn}[$G$-cell complexes (see {\cite[Def.\ 1.2.1]{Blu17}})]
\label{GCW}
\vspace{-.5mm}
\item {\bf (i)}  For $n \in \mathbb{N}$, and $H \subset G$ a subgroup, we say that the
basic $n$-dimensional \emph{$G$-space cell} at stage $H$ is the Cartesian product
$$
  D^n \times G/H
$$
of the topological unit $n$-ball $D^n$ equipped with the trivial $G$-action and a coset space equipped with its canonical $G$-action, as in Example \ref{GSpacesExamples}.
\vspace{-1mm}
\item {\bf (ii)} A \emph{$G$-CW-complex} $X$ is the $G$-space defined inductively,
  starting with $X_0$ a disjoint union of 0-dimensional $G$-space cells and then,
  given $X_{n-1}$, gluing $n$-dimensional $G$-space cells via $G$-equivariant maps to
  obtain $X_n$. The colimit of this sequence is $X$.
\item {\bf (iii)} We write
\begin{equation}
  \label{GCWCategory}
  \xymatrix{
    G \mathrm{CWComplexes}
    \; \ar@{^{(}->}[rr]^-{I}
    &&
    G \mathrm{Spaces}
  }
\end{equation}
for the full subcategory of topological $G$-spaces (Def.\ \ref{GSpace}) on the $G$-CW-complexes.
\end{defn}

Next we consider the actual homotopy theory of topological $G$-spaces, and pass to
the corresponding homotopy categories of these two models for $G$-spaces. These
homotopy categories will turn out to be equivalent to each other, thus providing us
with two different but equivalent perspectives, each with its own advantages, on
$G$-equivariant homotopy theory.

\begin{defn}[Equivariant homotopy]
  \label{EquivariantHomotopy}
Given two topological $G$-spaces $X_1$, $X_2$ (Def.\ \ref{GSpace})
and given two $G$-equivariant maps $f_0,f_1 \maps X_1 \to X_2$ between them (see \eqref{GEquivarianceMap}), we say that
a \emph{$G$-equivariant homotopy}
 from $f_0$ to $f_1$ is a $G$-equivariant map of the form
\begin{equation}
  \label{LeftHomotopyMap}
  \eta \maps  X_1 \times [0,1] \longrightarrow X_2\;,
\end{equation}
where the interval $[0,1]$ is equipped with the trivial $G$ action, and $\eta$ satisfies:
\begin{equation}
  \label{BoundaryConditionsForLeftHomotopy}
  \eta(x,0) = f_0(x)\;,
  \quad
  \eta(x,1) = f_1(x)\;.
\end{equation}
In other words, this is a 1-parameter family of $G$-equivariant maps that
continuously interpolates between $f_0$ and $f_1$.  We denote this homotopy by
$$
  \xymatrix{
    X_1
    \ar@/^1pc/[rr]^-{f_0}_-{\ }="s"
    \ar@/_1pc/[rr]_-{f_0}^-{\ }="t"
    &&
    X_2
    \ar@{=>}^\eta "s"; "t"
  }.
$$
\end{defn}

\begin{defn}[Equivariant homotopy equivalences]
\label{EquivariantHomotopyEquivalence}
A \emph{$G$-equivariant homotopy equivalence} is a $G$-equivariant map $f \maps X_1
\to X_2$ which has an inverse up to $G$-equivariant homotopy (Def.\
\ref{EquivariantHomotopy}). This means that there exist a $G$-equivariant function
$\widetilde f \maps X_2 \to X_1$ and $G$-equivariant homotopies

\vspace{-3mm}
$$
  \xymatrix{
    X_1
    \ar@/^1pc/[rr]^-{\mathrm{id}}_-{\ }="s"
    \ar@/_1pc/[rr]_-{\widetilde f \circ f}^-{\ }="t"
    &&
    X_1
    \ar@{=>} "s"; "t"
  }
  \phantom{AAAAAA}
  \xymatrix{
    X_2
    \ar@/^1pc/[rr]^-{\mathrm{id}}_-{\ }="s"
    \ar@/_1pc/[rr]_-{f \circ \widetilde f}^-{\ }="t"
    &&
    X_2
    \ar@{=>} "s"; "t"
  }.
$$
\end{defn}
In ordinary homotopy theory, a homotopy equivalence between spaces turns out to be too strong. Instead, we pass to
weak homotopy equivalence, given by any map that induces isomorphisms between homotopy groups. This weaker
notion also has an analogue in the equivariant setting.
\begin{defn}[Weak equivariant homotopy equivalence]
\label{EquivariantWeakHomotopyEquivalence}
Let $X_1$ and $X_2$ be two topological $G$-spaces (Def.\ \ref{GSpace}).
Then a $G$-equivariant function between them (\ref{GEquivarianceMap})
$$
  f \maps X_1 \longrightarrow X_2
$$
is called a \emph{weak $G$-equivariant homotopy equivalence} if for every subgroup $H \subset G$
the induced map on $H$-fixed point spaces (Def.\ \ref{FixedPointSpaceRepresented})
$$
  f^H \maps X_1^H \longrightarrow X_2^H
$$
is an ordinary weak homotopy equivalence (Def.\ \ref{ClassicalHomotopyCategories}).
\end{defn}
\begin{remark}
As in the non-equivariant setting,  every equivariant homotopy equivalence (Def.\ \ref{EquivariantHomotopyEquivalence}) is also a weak equivariant homotopy equivalence (Def.\ \ref{EquivariantWeakHomotopyEquivalence}), but not conversely.
\end{remark}
\begin{defn}[Homotopy theory of topological $G$-spaces]
  \label{HomotopyCategoriesOfGSpaces}
  We equip the categories of topological $G$-spaces from Def.\ \ref{GSpace} and Def.\ \ref{GCW} with weak equivalences (Def.\ \ref{CategoryWithWeakEquivalences}) as follows:

\begin{enumerate}[{\bf (i)}]

    \item   On the category $G \mathrm{CWComplexes}$ (Def.\ \ref{GCW}) we take the weak equivalences to be the equivariant homotopy equivalences
     from Def.\ \ref{EquivariantHomotopyEquivalence}.

    \item  On the category $G \mathrm{Spaces}$ (Def.\ \ref{GSpace}) we take the weak equivalences to be the weak equivariant homotopy equivalences from Def.\ \ref{EquivariantWeakHomotopyEquivalence}.
\end{enumerate}

  \noindent Given any category with weak equivalences like the examples above, we can
  form its \emph{homotopy category} by inverting the weak equivalences. The resulting
  homotopy categories (Def.\ \ref{CategoryWithWeakEquivalences}) are as follows:
  \begin{align*}
    \mathrm{Ho}\left(
      G \mathrm{CWComplexes}
    \right)
    &:=\;
    \mathrm{Ho}\big(
      G \mathrm{CWCplx}
        \big[
          \left\{
            \mbox{equivariant homotopy equivalences}
          \right\}^{-1}
        \big]
    \big)
  \\
    \mathrm{Ho}\left(
      G\mathrm{Spaces}
    \right)
    & :=\;
    \mathrm{Ho}\big(
      G \mathrm{Spaces}
      \big[
        \left\{
          \mbox{weak equivariant homotopy equivalences}
        \right\}^{-1}
      \big]
    \big)
    .
  \end{align*}
\end{defn}
The following fact is the first indication that equivariant homotopy theory elevates the collection of fixed point loci to
a special role in the theory.
\begin{prop}[Equivariant Whitehead theorem ({\cite[Thm 3.4]{Wan80}}, see {\cite[Cor. 1.2.14]{Blu17}})]
  \label{EquivariantWhiteheadTheorem}
  Under passage to the homotopy categories of Def.\ \ref{HomotopyCategoriesOfGSpaces}, the inclusion $I$ in \eqref{GCWCategory} from Def.\ \ref{GCW}
   induces an equivalence of categories:
  \begin{equation}
     \xymatrix{
      \mathrm{Ho}\left(
        G \mathrm{CWComplexes}
      \right)
      \ar[rr]^-{I}_-{\simeq}
      &&
      \mathrm{Ho}\left(
        G \mathrm{Spaces}
      \right)
    \!.}
  \end{equation}
\end{prop}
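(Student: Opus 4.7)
The plan is to prove the equivalence by combining two pillars of equivariant homotopy theory: an equivariant Whitehead theorem between $G$-CW-complexes, and an equivariant CW approximation theorem. Both are cell-by-cell refinements of their classical counterparts, controlled by the orbit-type stratification implicit in Def.\ \ref{GCW}.

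First I would establish the strong form of the equivariant Whitehead theorem: between any two $G$-CW-complexes, a weak $G$-equivariant homotopy equivalence (Def.\ \ref{EquivariantWeakHomotopyEquivalence}) is already a $G$-equivariant homotopy equivalence (Def.\ \ref{EquivariantHomotopyEquivalence}). The construction of a $G$-homotopy inverse proceeds by induction on the cellular skeleta, with the inductive step amounting to solving an equivariant extension/lifting problem against the boundary inclusion $S^{n-1} \times G/H \hookrightarrow D^n \times G/H$. The key technical device is the adjunction
\[
  \mathrm{Maps}(D^n \times G/H,\, Y)^G \;\simeq\; \mathrm{Maps}(D^n,\, Y^H)\;,
\]
(together with its analogue for $S^{n-1}$), which reduces this equivariant problem to a \emph{non-equivariant} extension/lifting problem on the $H$-fixed-point space $Y^H$. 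By the very definition of weak $G$-equivalence, the input map restricts on $H$-fixed points to a non-equivariant weak equivalence, so the classical (non-equivariant) Whitehead theorem solves each such problem; coordinating these solutions cell-by-cell yields the required $G$-homotopy inverse and both homotopies.

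Second I would establish $G$-equivariant CW approximation: for every topological $G$-space $X$ there exists a $G$-CW-complex $\Gamma X$ together with a weak $G$-equivariant homotopy equivalence $\gamma_X \colon \Gamma X \to X$. This is built inductively by attaching cells $D^n \times G/H$, ranging over subgroups $H \leq G$ and $n \in \mathbb{N}$, so as to kill in turn every relative homotopy group $\pi_n\bigl((\Gamma X)^H,\, X^H\bigr)$ on each fixed-point stratum. By the same adjunction used above, attaching an equivariant cell $D^n \times G/H$ is exactly what is needed to surject/inject on $\pi_n$ of the $H$-stratum. Taking the colimit over this process yields a $G$-CW-complex whose map to $X$ is, by construction, a weak equivalence on every $H$-fixed-point space, hence a weak $G$-equivalence.

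With these two facts in hand the conclusion is formal. Essential surjectivity of $I$ on $\mathrm{Ho}(G\mathrm{Spaces})$ follows from CW approximation: $\gamma_X$ becomes an isomorphism in $\mathrm{Ho}(G\mathrm{Spaces})$, so every $G$-space is isomorphic to one in the image of $I$. Fully faithfulness follows from the equivariant Whitehead theorem: for $G$-CW-complexes $X_1, X_2$, a zig-zag of weak $G$-equivalences through $G$-CW-complexes can be collapsed (using Whitehead to invert each weak equivalence by a $G$-homotopy equivalence), so both hom-sets are canonically identified with $\pi_0 \mathrm{Maps}(X_1,X_2)^G$. The main obstacle is the Whitehead step, where one must propagate the inductive construction of the homotopy inverse coherently across the stratification by orbit types; once the fixed-point/adjunction reduction is set up correctly, the rest follows along classical lines.
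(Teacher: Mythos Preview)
The paper does not actually supply its own proof of this proposition: it is stated with citations to \cite[Thm 3.4]{Wan80} and \cite[Cor. 1.2.14]{Blu17} and no \texttt{proof} environment, so the comparison is against the standard argument in those references. Your outline is correct and is precisely that standard argument: the fixed-point adjunction $\mathrm{Maps}(D^n \times G/H, Y)^G \simeq \mathrm{Maps}(D^n, Y^H)$ reduces both the Whitehead lifting step and the CW-approximation attaching step to the non-equivariant case on each $H$-fixed stratum, and the formal deduction of the equivalence of homotopy categories from these two ingredients is exactly as you describe.
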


\noindent Therefore, we now turn our full attention to these systems of fixed point loci.

\subsubsection*{Systems of fixed point loci}
\label{SystemsOfFixedPointLoci}

For any $G$-space $X$, an orbit in $X$ is a $G$-space of the form $G/H$, where $H$ is
the stabilizer of a point in the given orbit. This means we can form the collection
of all possible orbits for all possible $G$-spaces: they are given by the coset
spaces $G/H$. We thus call these coset spaces \emph{orbit spaces}, as in Def.\
\ref{GSpacesExamples}.

\begin{defn}[The orbit category, see {\cite[Def.\ 1.3.1]{Blu17}}]
  \label{OrbitCategory}
  We write
  \begin{equation}
    \mathrm{Orb}_G  \xymatrix{ \ar@{^{(}->}[r]&} G \mathrm{Spaces}
  \end{equation}
  for the full subcategory of topological $G$-spaces (Def.\ \ref{GSpace}) which are orbit spaces (Def.\ \ref{GSpacesExamples}), called the \emph{orbit category} of $G$.
  That is, the objects in this category are the coset spaces $G/H$, one for each
  subgroup $H \subset G$, and the morphisms are the continuous $G$-equivariant functions (\ref{GEquivarianceMap}) between the coset spaces $G/H_1 \longrightarrow G/H_2$.

\end{defn}

\begin{example}[Orbit category of $\mathbb{Z}_2$]
  \label{OrbitCategoryForZMod2}
  Consider the orbit category (Def.\ \ref{OrbitCategory}) of the cyclic group of order two: $\mathbb{Z}_2 = \{e, \sigma \}$
  with a single non-trivial element $\sigma$, squaring to the neutral element $\sigma \cdot \sigma = e$.
  This has precisely two subgroups, namely itself and the trivial group $1 = \{e\}$. Hence its orbit spaces
  are $\mathbb{Z}_2/\mathbb{Z}_2 = 1$ and $\mathbb{Z}_2/1 =\mathbb{Z}_2$. The non-trivial morphisms
   in the orbit category
  are depicted succinctly as follows:
  $$
    \mathrm{Orb}_{\mathbb{Z}_2}
    \;=\;
     {\footnotesize \left\{ \hspace{-2mm}
    \raisebox{10pt}{
    \xymatrix@R=1.8em{
      \mathbb{Z}_2/1 \ar@(ul,ur)[]^\sigma
      \ar[d]
      \\
      \mathbb{Z}_2/\mathbb{Z}_2
    }} \hspace{-1.5mm}
    \right\}
    }.
  $$
\end{example}
\begin{example}[Systems of fixed point spaces]
  \label{FixedPointSpaceRepresented}
  If $X$ is a topological $G$-space (Def.\ \ref{GSpace}), and $H \subset G$ a
  subgroup, then a $G$-equivariant map
  $$
    G/H \overset{f}{\longrightarrow} X
  $$
  from the orbit space $G/H$ (Def.\ \ref{GSpacesExamples}) must, by $G$-equivariance (\ref{GEquivarianceMap}), send the
  equivalence class of the neutral element
  $e \in G$ to an $H$-\emph{fixed point} of $X$, since the action of $H \subset G$ on $G/H$ is trivial. Moreover, still by
  equivariance, the choice of the
  image of the neutral element uniquely fixes the value of $f$ on all other points of $G/H$. This means that the
  equivariant mapping space (\ref{SpaceOfEquivariantMaps}) out of $G/H$ into $X$ is equivalently the subspace of $H$-fixed
  points (Def.\ \ref{SpacesAssociatedWithAGAction})
  \begin{equation}
    \label{FixedPointSpace}
    X^H
    \; \simeq \;
    \mathrm{Maps}(G/H, X)^{G}
    .
  \end{equation}
  Accordingly, for
  \begin{equation}
    \label{AFunctionBetweenOrbitSpaces}
    G/H_1 \overset{f}{\longrightarrow}  G/H_2
  \end{equation}
  a $G$-equivariant map between two orbit spaces, precomposition with $f$ yields a
  continuous function between mapping spaces, going in the opposite direction:
  \[
  \xymatrix{ \mathrm{Maps}(G/H_2, X)^G \ar[rr]^{(-) \circ f } && \mathrm{Maps}(G/H_1,
  X)^G}\!.
  \]
  Under the equivalence with fixed-point spaces, this becomes a map:
  \begin{equation}
    \label{InducedMapOnFixedPoints}
    \xymatrix{
    X^{H_2} \ar[rr]^{X^f} && X^{H_1}. }
  \end{equation}
  We can use equivariance to describe this map very explicitly. As noted above, $f$
  is determined by where it sends the class $[e] \in G/H_1$. Let us call this value
  $[g_f] \in G/H_2$, that is $f[1] = [g_f]$, for some choice of representative $g_f
  \in G$. Then $X^f(x) = g_f x$ for any $H_2$-fixed point $x \in X^{H_2}$. The reader
  can check that this is well-defined and lands in the space of $H_1$-fixed points,
  $X^{H_1}$.

  Moreover, this construction respects composition and identities:
  \[ X^{f \circ g} = X^g X^f, \quad X^{\id} = \id . \]
  We summarize this by saying that the system of $H$-fixed point spaces $X^H$ of $X$
  as $H \subset G$ varies is a \emph{presheaf of topological spaces on the orbit
  category} $\mathrm{Orb}_G$ (Def.\ \ref{OrbitCategory}). This is denoted:
  $$
    \xymatrix@R=.6em{
      \mathrm{Orb}_G^{\mathrm{op}}
      \ar[rrr]^{X^{(-)}}
      &&&
      \mathrm{Spaces}
      \\
      G/H_1
        \ar[dd]^{f_1}
        \ar@/_2.5pc/[dddd]_{f_2 \circ f_1}
        && &
        X^{H_1}
      \\
      \\
      G/H_2 \ar[dd]^{f_2}  &&& X^{H_2} \ar[uu]^{X^{f_1}}
      \\
      \\
      G/H_3
        &&&
      X^{H_3}
      \ar@/_2.3pc/[uuuu]_{ X^{ f_2 \circ f_1 } }
      \ar[uu]^{X^{f_2}}
    }
  $$
\end{example}

It will be useful to isolate the structure of systems of fixed point spaces, like
in Example \ref{FixedPointSpaceRepresented},
as a concept in itself:
\begin{defn}[Systems of topological spaces indexed over the orbit category]
  \label{PresheavesOnOrbitCatgegory}
    \vspace{-1mm}
\item {\bf (i)}
  A \emph{system of topological space indexed by the orbit category} $\mathrm{Orb}_G$ (Def.\ \ref{OrbitCategory}),
  also called a \emph{presheaf of topological spaces on the orbit category},
  is an assignment of a topological space $X^H \in \mathrm{Spaces}$ to each subgroup $H \subset G$
  and of a continuous function $X^f \maps X^{H_2} \to X^{H_1}$ to each $G$-equivariant map
  $f \maps G/H_1 \to G/H_2$ such that this assignment respects composition identities:
  \[ X^{f \circ g} = X^g X^f, \quad X^\id = \id . \]
  This is denoted:
 $$
    \xymatrix@R=.6em{
      \mathrm{Orb}_G^{\mathrm{op}}
      \ar[rrr]^{X}
      &&&
      \mathrm{Spaces}
      \\
      G/H_1
        \ar[dd]^{f_1}
        \ar@/_2.5pc/[dddd]_{f_2 \circ f_1}
        &&&
        X^{H_1}
      \\
      \\
      G/H_2 \ar[dd]^{f_2}  &&& X^{H_2} \ar[uu]^{X^{f_1}}
      \\
      \\
      G/H_3
        &&&
      X^{H_3}
      \ar@/_2.3pc/[uuuu]_{ X^{f_2 \circ f_1} }
      \ar[uu]^{X^{f_2}}
    }
  $$

   \vspace{-2mm}
\item {\bf (ii)}  Given two such system $X_1$ and $X_2$, then a \emph{homomorphism} between them, denoted
  \begin{equation}
    \label{PresheafMap}
    F \maps X_1 \longrightarrow X_2
  \end{equation}
  is an assignment of continuous functions
  \begin{equation}
    \label{PresheafMapComponent}
    F^H \maps X_1^H \longrightarrow X_2^H
  \end{equation}
  for each subgroup $H \subset G$, such that this respects all the equivariant
  functions $G/H_1 \overset{f}{\to} G/H_2$ between orbit spaces, meaning that $X_2^f
  \circ F^{H_2} = F^{H_1} \circ X_1^f$, which we can summarize by saying that the
  following square commutes for all $f$:
  \[ \xymatrix@R=1em{
  X_1^{H_1} \ar[rr]^-{ F^{H_1} } && X_2^{H_1}
  \\
  \\
  X_1^{H_2} \ar[rr]_-{ F^{H_2} } \ar[uu]^{X_1^f} && X_2^{H_2} \ar[uu]_{X_2^f} . }
  \]
\item {\bf (iii)} We write $\mathrm{PSh}(\mathrm{Orb}_G, \mathrm{Spaces})$ for the
  category of systems of topological spaces indexed by the orbit category, with
  homomorphisms between them.
\end{defn}
\begin{example}
  \label{ExternalYonedaFromGSpacesToPresheavesOnOrbitCategory}
  The construction that associates a topological $G$-space $X$ (Def.\ \ref{GSpace}) to its system of
  fixed-point spaces $X^{(-)}$, according to Example \ref{FixedPointSpaceRepresented},
   gives a functor
  \begin{equation}
    \label{ExternalYoneda}
    \xymatrix@R=1pt{
      G\mathrm{Spaces}
      \ar[rr]^-{Y}
      &&
      \mathrm{PSh}(\mathrm{Orb}_G, \mathrm{Spaces})
      \\
      X \ar@{|->}[rr] && X^{(-)}
    }
  \end{equation}
  to the category of systems of topological spaces indexed over the orbit category (Def.\ \ref{PresheavesOnOrbitCatgegory}).
\end{example}

This gives us a machine for turning $G$-spaces into systems of spaces indexed by the
orbit category. It turns out that systems of spaces actually have the \emph{same
homotopy theory} as $G$-spaces, giving us a third model of equivariant homotopy
theory. The key idea here is that the following weak equivalences are, on each orbit
space, the same as the ordinary weak equivalences of classical homotopy theory:
\begin{defn}[Homotopy theory of systems of spaces over $\Orb_G$]
  \label{HomotopyCategoryOfGFixedPointSystems}

  \item {\bf (i)} We call a morphism $F^{(-)} \maps X^{(-)} \to Y^{(-)}$ (see \eqref{PresheafMap}) in
  the category $\mathrm{PSh}(\mathrm{Orb}_G, \mathrm{Spaces})$ from Def.\ \ref{PresheavesOnOrbitCatgegory}
  a \emph{weak equivalence} if for each subgroup $H \subset G$ its component $F^{H}$ (see \eqref{PresheafMapComponent})
  is a weak homotopy equivalence of spaces (Def.\ \ref{ClassicalHomotopyCategories}).

\item {\bf (ii)} We denote the resulting homotopy category (Def.\ \ref{CategoryHomotopy})  by
  $$
    \mathrm{Ho}\left(
      G\mathrm{FixedPointSystems}
    \right)
    :=
    \mathrm{Ho}\Big(
      \mathrm{PSh}( \mathrm{Orb}_G, \mathrm{Spaces} )
      \big[
        \left\{
          \mbox{subgroup-wise weak homotopy equivalences}
        \right\}^{-1}
      \big]
    \Big).
  $$
\end{defn}

The following proposition, known as Elmendorf's Theorem, says that the homotopy
theory of $G$-spaces and of systems of spaces over $\Orb_G$ are the same. In the next
section, we will use Elemendorf's Theorem to generalize equivariant homotopy theory
to situations that do not admit ``point-set models'', such as the 11d
super-spacetimes on which the M2/M5-brane cocycle is defined:
\begin{prop}[Elmendorf's Theorem ({\cite{Elmendorf83}}, see {\cite[Thm. 1.3.6 and 1.3.8]{Blu17}})]
  \label{ElmendorfTheorem}
  Under passage to the homotopy categories of Def.\ \ref{CategoryHomotopy}, the
  functor $Y$ (in \eqref{ExternalYoneda}) from Def.\
  \ref{ExternalYonedaFromGSpacesToPresheavesOnOrbitCategory} constitutes an
  equivalence of categories:
  \begin{equation}
    \xymatrix{
      \mathrm{Ho}\left(
        G \mathrm{Spaces}
      \right)
      \ar[rr]^-\simeq_-{\tiny \mbox{\rm Elmendorf} }
      &&
      \mathrm{Ho}\left(
        G\mathrm{FixedPointSystems}
      \right)
    }
  \end{equation}
  between the homotopy theory of $G$-spaces (Def.\ \ref{HomotopyCategoriesOfGSpaces}) and that of systems of spaces over $\Orb_G$ (Def.\ \ref{HomotopyCategoryOfGFixedPointSystems}).
\end{prop}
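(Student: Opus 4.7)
The plan is to exhibit $Y$ as the right adjoint of a Quillen equivalence, where the left adjoint reconstructs a $G$-space from its system of fixed-point strata. First I would construct a functor $L \colon \mathrm{PSh}(\mathrm{Orb}_G, \mathrm{Spaces}) \longrightarrow G\mathrm{Spaces}$ by the coend
\[
  L(F) \;:=\; \int^{G/H \in \mathrm{Orb}_G} F(G/H) \times G/H,
\]
where $G$ acts on $L(F)$ via its action on the $G/H$-factors. Verifying the adjunction $L \dashv Y$ is a direct computation: using that $(-) \times G/H$ is left adjoint to $\mathrm{Maps}(G/H, -)^G$ and that $\mathrm{Maps}(G/H, X)^G \simeq X^H$ by Example \ref{FixedPointSpaceRepresented}, the coend formula yields
\[
  \mathrm{Hom}_{G\mathrm{Spaces}}(L(F), X) \;\simeq\; \int_{G/H} \mathrm{Hom}_{\mathrm{Spaces}}\bigl(F(G/H), X^H\bigr) \;\simeq\; \mathrm{Hom}_{\mathrm{PSh}}(F, Y(X)).
\]

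Next I would observe that $Y$ immediately preserves and reflects weak equivalences: by Def.\ \ref{EquivariantWeakHomotopyEquivalence} a map $f \colon X_1 \to X_2$ of $G$-spaces is a weak $G$-equivariant equivalence precisely when each $f^H$ is a weak homotopy equivalence of spaces, which is exactly the objectwise weak equivalence condition on $Y(f)$ used in Def.\ \ref{HomotopyCategoryOfGFixedPointSystems}. Consequently, $Y$ descends to a well-defined functor $\mathrm{Ho}(G\mathrm{Spaces}) \to \mathrm{Ho}(G\mathrm{FixedPointSystems})$ whose essential surjectivity and full-faithfulness are what must be proved.

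To establish the equivalence I would combine two compatibility checks, using the equivariant Whitehead theorem (Prop. \ref{EquivariantWhiteheadTheorem}) to reduce the $G$-space side to $G$-CW complexes (Def.\ \ref{GCW}), and dually a cell-object presentation of presheaves on $\mathrm{Orb}_G$ built from the representables $\mathrm{Hom}_{\mathrm{Orb}_G}(-, G/H)$. The main content is to verify that on these cells the unit and counit of $L \dashv Y$ are weak equivalences. For a basic $G$-cell $D^n \times G/H$, the coend collapses to give $L(Y(D^n \times G/H)) \simeq D^n \times G/H$, so the counit is an isomorphism on cells; by induction along the cell attachments (which $L$ preserves, being a left adjoint, and which interact well with fixed points because $(D^n \times G/H)^K$ is a disjoint union of copies of $D^n$), the counit remains a weak $G$-equivariant equivalence on every $G$-CW complex. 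Dually, on a representable presheaf $\mathrm{Hom}_{\mathrm{Orb}_G}(-, G/H)$ the unit reproduces the Yoneda identification $\mathrm{Hom}_{\mathrm{Orb}_G}(G/K, G/H) \simeq (G/H)^K$, so the unit is an isomorphism on representables and hence a weak equivalence on projectively cofibrant presheaves.

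The hard part will be the cellular induction for the counit: one must check that attaching a $G$-cell $D^n \times G/H$ along its boundary interacts compatibly with fixed-point formation for every subgroup $K \subset G$ and with the coend defining $L$. This reduces to the elementary but careful verification that fixed-points commute with the pushouts used to build $G$-CW complexes (which holds because taking $K$-fixed points of a $G$-cell $D^n \times G/H$ produces a disjoint union of copies of $D^n$ indexed by $(G/H)^K$, hence commutes with the pushout constructions), and that the coend $L$ preserves these pushouts, being a left adjoint. Once both the unit and counit are shown to be weak equivalences on the respective cellular generators, the equivalence of homotopy categories follows by the standard argument for derived adjunctions.
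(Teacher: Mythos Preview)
The paper does not actually prove this proposition: it is stated with a citation to Elmendorf's original paper and to Blumberg's lecture notes, and no argument is given in the text. So there is no ``paper's own proof'' to compare against.

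That said, your outline is a correct sketch of the standard proof, and in fact closely tracks Elmendorf's original argument as refined in the model-categorical language used in the cited source \cite{Blu17}: one builds the left adjoint $L$ to $Y$ as the coend (equivalently, left Kan extension along the inclusion $\mathrm{Orb}_G \hookrightarrow G\mathrm{Spaces}$), observes that $Y$ creates weak equivalences by the very definition of the two homotopy theories, and then verifies that the derived unit and counit are weak equivalences by reducing to cells. One remark: you frame the argument as establishing a Quillen equivalence, but you never actually invoke model structures beyond that phrase; what you really do is check the unit and counit directly on cellular objects. This is fine, but the cleanest packaging (and the one in \cite{Blu17}) is to put the projective model structure on $\mathrm{PSh}(\mathrm{Orb}_G, \mathrm{Spaces})$ and the fixed-point model structure on $G\mathrm{Spaces}$, observe that $Y$ is right Quillen and preserves and reflects all weak equivalences, and then check that the derived counit is a weak equivalence on the generating cofibrant objects---which is exactly your cell computation. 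Your ``hard part'' (that $H$-fixed points commute with the pushouts attaching $G$-cells) is genuinely the crux, and your reason for it is the right one: the attaching maps on cells are closed inclusions and fixed points of $G$-cells are disjoint unions of ordinary cells, so fixed points preserve these particular colimits.
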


In summary, we have the following system of homotopy categories
  $$
    \xymatrix@R=1em{
      \mathrm{Ho}\left(
        G \mathrm{CWComplexes}
      \right)
      \ar[dd]
      \ar[rr]^-{\simeq}_-{\tiny\mbox{\rm Equivariant} \atop \mbox{\rm Whitehead} }
      &&
      \mathrm{Ho}\left(
        G \mathrm{Spaces}
      \right)
      \ar[dd]
      \ar[rr]^-\simeq_-{\tiny \mbox{\rm Elmendorf} }
      &&
      \mathrm{Ho}\left(
        G\mathrm{FixedPointSystems}
      \right)
      \ar[dd]
      \\
      \\
      \mathrm{Ho}\left(
        \mathrm{CWComplexes}
      \right)
      \ar[rr]^-{\simeq}_-{ \tiny \mbox{\rm Whitehead} }
      &&
      \mathrm{Ho}\left(
        \mathrm{Spaces}
      \right)
      \ar[rr]^-=
      &&
      \mathrm{Ho}\left(
        \{e\}\mathrm{FixedPointSystems}
     \right)
    }
  $$
  Since all the homotopy categories in the top row and those in the bottom row are
  equivalent to each other, we use them interchangeably. So, we will often write
  $\mathrm{Ho}(G \mathrm{Spaces})$ and $\mathrm{Ho}(\mathrm{Spaces})$ for the top row
  and bottom row, respectively.

\begin{example}[Homomorphisms of systems of fixed points up to homotopy]
  \label{HomomorphismsOfSystemsOfFixedPointsUpToHomotopy}
  Consider the case that $G = \mathbb{Z}_2$ is the cyclic group of order 2, so that the orbit category is as in Example \ref{OrbitCategoryForZMod2}.
  Consider a topological $G$-space $A$ (Def.\ \ref{GSpace}) with
  precisely one fixed point under the non-trivial element in $\mathbb{Z}_2$, so that its system of fixed point spaces
  according to Example \ref{FixedPointSpaceRepresented} is

  \vspace{-5mm}
  $$
    \xymatrix@R=4pt@C=1.5em{
      &
      \mathbb{Z}_2/1
      \ar@(ul,ur)[]^\sigma
      \ar[dd]
      &&
      A
      \ar@(ul,ur)[]^{\rho(\sigma)}
      \\
      A^{(-)}
      \maps
      &&
      \longmapsto &&
      \\
      &
      \mathbb{Z}_2/\mathbb{Z}_2
      &&
      \ast
      \ar@{^{(}->}[uu]_a
    }
  $$
  Now let $X$ be another $G$-space. Then a homomorphism $X^{(-)} \to A^{(-)}$ of
  systems of fixed point spaces, according to Def.\ \ref{PresheavesOnOrbitCatgegory},
  is a pair of continuous functions $X^1 \to A^1$ and $X^{\Z_2} \to A^{\Z_2}$ such
  that following square commutes:
  \begin{equation}
    \label{MorphismInPShOrbSpaces}
    \xymatrix@R=4pt{
      &&&& X^{(-)} \ar[rr] && A^{(-)}
      \\
      \\
      &&
      \mathbb{Z}_2/1
      \ar@(ul,ur)[]^\sigma
      \ar[dd]
      &&
      X^{1}
      \ar@(ul,ur)[]
      \ar[rr]
      &&
      A
      \ar@(ul,ur)[]
      \\
      &&&
      \\
      &&
      \mathbb{Z}_2/\mathbb{Z}_2
      &&
      X^{\mathbb{Z}_2}
      \ar[rr]
      \ar[uu]
      &&
      \ast
      \ar@{^{(}->}[uu]_a
    }
  \end{equation}
However, as we pass to the homotopy category $\mathrm{Ho}\left( \mathrm{PSh}(\mathrm{Orb}_{\mathbb{Z}_2}, \mathrm{Spaces}) \right)$
from Def.\ \ref{HomotopyCategoriesOfGSpaces},   the system $A^{(-)}$ becomes equivalent to ``more flexible'' systems. In particular,
according to Example \ref{BasedPathSpaces} there is a weak equivalence
  to the system which assigns to $\mathbb{Z}_2/\mathbb{Z}_2$ not the point, but the based path space of $A$:\footnote{
  In the language of model category theory (see e.g. \cite[Sec. 2]{Schreiber17b}), the system involving the based path space is
  a \emph{fibrant resolution} of the original
  system $A^{(-)}$ in the projective model category structure on functors (see e.g. \cite[Thm. 3.26]{Schreiber17b}).}
  \begin{equation}
    \label{ExampleForFibrantResolution}
    \xymatrix@R=4pt{
      &&
      \mathbb{Z}_2/1
      \ar@(ul,ur)[]^\sigma
      \ar[dd]
      &&
      A
      \ar@(ul,ur)[]
      \ar@{=}[rr]
      &&
      A
      \ar@(ul,ur)[]
      \\
      &&&
      \\
      &&
      \mathbb{Z}_2/\mathbb{Z}_2
      &&
      \ast
      \ar@{^{(}->}[uu]
      \ar[rr]_{\simeq}
      &&
      P_a A
      \ar[uu]_{\mathrm{ev}_1}
    }
  \end{equation}
  Still by Example \ref{BasedPathSpaces}, this means that in the homotopy category the
  commutative squares involved in the definition of the map (\ref{MorphismInPShOrbSpaces})
  may be filled by a homotopy
  \begin{equation}
    \label{HomotopyFromPathSpace}
    \raisebox{65pt}{
    \xymatrix@R=.8em{
      &
      A
        \ar@{=}[dr]
      \\
      X
        \ar[rr]
        &
        &
      A
      \\
      &
      \ast
      \ar[dr]^-\simeq
      \ar[uu]|-{\phantom{AA} \atop \phantom{AA}}
      \\
      X^{\mathbb{Z}_2}
      \ar@{^{(}->}[uu]
      \ar[rr]
      &
      &
      P_a A
      \ar[uu]_{\mathrm{ev}_1}
    }}
    \phantom{AAAA}
    \raisebox{20pt}{
    \xymatrix{\ar@{<->}[r] &}
    }
    \phantom{AAAA}
    \raisebox{40pt}{
    \xymatrix@R=.8em{
      X
      \ar[rr]_<{\ }="s"
      &&
      A
      \\
      \\
      X^{\mathbb{Z}_2}
      \ar[uu]
      \ar[rr]^>{\ }="t"
      &&
      \ast
      \ar[uu]_{a}
      \ar@{=>} "s"; "t"
    }}
  \end{equation}
\end{example}

With equivariant homotopy theory in hand, our concern below in Sections \ref{ADESingularitiesInSuperSpacetime}, \ref{ADEEquivariantRationalCohomotopy},  and \ref{ADEEquivariantMBraneSuperCucycles} will be to
find \emph{equivariant enhancements} of given cocycles in non-equivariant cohomology (see Example \ref{EquivariantEnhancementOfSuperCocycles} below):

\begin{defn}[Enhancement of cohomology to equivariant cohomology]
\label{EnhancementToEquivariantCohomology}
For $X, A$ topological spaces (Def.\ \ref{TopologicalSpaces}), let
$$
  [c]=[
    X \overset{c}{\longrightarrow} A
  ]
  \;\in\; \mathrm{Ho}(\mathrm{Spaces})
$$
be the homotopy class of a map, hence the cohomology class of a cocycle on $X$ with coefficients in $A$.
We will say that an \emph{enhancement} of this to the cohomology class of a $G$-equivariant
cocycle is a lift of this map through
the forgetful functor
$$
  \xymatrix@R=1.5em{
    && \mathrm{Ho}(G\mathrm{Spaces}) \ar[d]
    \\
    \ast
    \ar[rr]_-{[c]}
    \ar[urr]^-{[c_G]}
    &&
    \mathrm{Ho}( \mathrm{Spaces} )\;.
  }
$$
\end{defn}

The following phenomenon will be of key importance in Sections
\ref{ADESingularitiesInSuperSpacetime}, \ref{ADEEquivariantRationalCohomotopy},
\ref{ADEEquivariantMBraneSuperCucycles}. It explains how equivariant enhancement
yields what physicists would call ``extra degrees of freedom'':

\begin{remark}[Equivariant enhancement is extra structure]
  \label{EquivariantEnhancementIsExtraStrujcture}
  Note that an equivariant enhancement as in Def.\
  \ref{EnhancementToEquivariantCohomology} may not exist, and if it does, it
  involves a choice. This is because it may happen that there is a plain homotopy
  between $G$-equivariant maps but not a $G$-equivariant homotopy, so that the two
  maps represent the same homotopy class in $\Ho(\Spaces)$, but two different classes
  in $\Ho(G\Spaces)$:
$$
  \xymatrix@R=4pt@C=.3em{
    X
    \ar@/^2.1pc/[dd]^{c_1}_{\ }="s1top"
    \ar@/^0pc/[dd]_{\ }="s2top"^{\ }="t1top"
    \ar@/_2.1pc/[dd]_{c_2}^{\ }="t2top"
    \\
    &&&&&& [c_1] \neq [c_2]  & \in  & \mathrm{Ho}(G\mathrm{Spaces}) \ar[dddd]
    \\
    A
    \ar@{}|{\times} "s1top"; "t1top"
    \ar@{=>} "s2top"; "t2top"
    \\
    \\
    X
    \ar@/^2.1pc/[dd]^{c_1}_{\ }="s1"
    \ar@/^0pc/[dd]_{\ }="s2"^{\ }="t1"
    \ar@/_2.1pc/[dd]_{c_2}^{\ }="t2"
    \\
    &&&&&& [c_1] = [c_2]  & \in &  \mathrm{Ho}(\mathrm{Spaces})
    \\
    A
    \ar@{=>} "s1"; "t1"
    \ar@{=>} "s2"; "t2"
  }
$$
\end{remark}

In Example \ref{EquivariantEnhancementOfSuperCocycles} we spell out what this extra structure of equivariant enhancement
means for general super-cocycles. Our main Theorem \ref{RealADEEquivariantEndhancementOfM2M5Cocycle} determines
the ADE-equivariant enhancements of the cocycle of the fundamental M2/M5-brane.

\subsection{Equivariant rational super homotopy theory}
  \label{ERSHTForSuperBranes}

In this section we set the scene for the discussion in Sections \ref{ADESingularitiesInSuperSpacetime},
\ref{ADEEquivariantRationalCohomotopy},  and \ref{ADEEquivariantMBraneSuperCucycles} by establishing the
homotopy theory in which the equivariant gauge enhancement of the M2/M5-cocycle (Theorem \ref{RealADEEquivariantEndhancementOfM2M5Cocycle} below) takes
place, namely \emph{equivariant rational super homotopy theory} (Def. \ref{GEquivariantRationalSuperHomotopy} below).

\medskip
Here \emph{rational super homotopy theory} (Def. \ref{RationalSuperHomotopyTheory} below), is the evident generalization
of plain rational homotopy theory (recalled as Def. \ref{RationalHomotopyTheory} below) regarded via
Sullivan's equivalence (recalled as Prop. \ref{SullivanEquivalence} below). This equivalence identifies the rational homotopy theory
of sufficiently well-behaved spaces with that of sufficiently well-behaved differential-graded commutative algebras (see Def. \ref{dgAlgebrasAnddgModules}).
Rational super homotopy theory (Def. \ref{RationalSuperHomotopyTheory} below) results from generalizing the latter to differential-graded \emph{super-}algebras. In the supergravity
literature, the cofibrant objects among these are known as ``FDAs'', following \cite{vanNieuwenhuizen82, CDF}.

\medskip
We had studied the rational super homotopy theory of super $p$-branes in {\cite{FSS16b,FSS16a, cohomotopy, FSS13}};
 for expository review see \cite{Schreiber16}. From this discussion we here need that the M2/M5-brane WZW-term is a cocycle in
 non-equivariant degree four rational super cohomotopy, which we recall as Prop. \ref{M2M5SuperCocycle} below.
Then we use the perspective provided by Elmendorf's Theorem (Prop. \ref{ElmendorfTheorem}) to introduce the equivariant refinement of
rational super homotopy theory (Def. \ref{EquivariantAbstractHomotopyTheory} and Def. \ref{GEquivariantRationalSuperHomotopy} below).
We point out how super Lie algebras with $G$-action provide examples (Example \ref{SuperLieAlgebrasWithGActionAsGSuperspaces} below)
and we highlight which data is involved in a $G$-equivariant enhancement of a given super-cocycle (Example \ref{EquivariantEnhancementOfSuperCocycles} below).
These are key ingredients in the proof of our main result, Theorem \ref{RealADEEquivariantEndhancementOfM2M5Cocycle}, below.

%
%

\begin{defn}[Rational super homotopy theory]
  \label{RationalSuperHomotopyTheory}

\item {\bf (i)} We write
  $\mathrm{dgcSuperAlg}$ for the category   whose objects are differential graded-commutative super-$\mathbb{R}$-algebras,
  whose morphisms are homomorphisms $\phi : A_1 \to A_2$.
  This means, equivalently, that an object $A \in \mathrm{dgcSuperAlg}$ is a $\mathbb{Z} \times \mathbb{Z}_2$-graded
  differential algebra, with $\mathbb{Z}$ the ``cohomological grading'' and
  with $\mathbb{Z}_2 = \{\mbox{even}, \mbox{odd}\}$ the \emph{super-grading},
  hence where elements $a_1, a_2 \in A$ of homogeneous bi-degree $(n_i,\sigma_i)$ satisfy (as in \cite[II.2.106-109]{CDF}, \cite[Sec. 6]{DeligneFreed99})
  the sign rule
  $$
    a_1 a_2 \;=\; (-1)^{n_1 n_2}(-1)^{\sigma_1 \sigma_2} a_2 a_1
    \,,
  $$
  and such that the differential is of bidegree $(1,\mathrm{even})$.

 \item {\bf (ii)}
 We take the weak equivalences (Def. \ref{CategoryWithWeakEquivalences}) in $\mathrm{dgcSuperAlg}$
 to be the quasi-isomorphisms (as in Def. \ref{dgAlgebrasAnddgModules}).
 Mimicking the ``bosonic'' Sullivan equivalence (Prop. \ref{SullivanEquivalence}),  the resulting homotopy category
  (Def. \ref{CategoryHomotopy}) restricted to the connected and finite-type dgc-superalgebras
  (as in Def. \ref{dgAlgebrasAnddgModules}) we denote by
  $$
    \mathrm{Ho}\left( {\color{blue}\mathrm{Super}}\mathrm{Spaces}_{\mathbb{R}, \mathrm{cn}, \mathrm{nil}, \mathrm{fin}} \right)
     \;:=\;
    \mathrm{Ho}\big( \mathrm{dgc}{\color{blue}\mathrm{Super}}\mathrm{Alg}^{\mathrm{op}}_{\mathrm{cn}, \mathrm{fin}}\big[ \left\{\mbox{quasi-isomorphisms}\right\}^{-1}\big] \big)
    .
  $$
  As in plain rational homotopy theory (see \eqref{NonConnectedSullivanEquivalence}) we may drop the connectedness condition by considering
  the category of indexed tuples of dgc-superalgebras on the right:
  $$
    \mathrm{Ho}\left( {\color{blue}\mathrm{Super}}\mathrm{Spaces}_{\mathbb{R}, \mathrm{nil}, \mathrm{fin}} \right)
     \;:=\;
    \underset{S \in \mathrm{Set}}{\int} \mathrm{Ho}\big( \mathrm{dgc}{\color{blue}\mathrm{Super}}\mathrm{Alg}^{\mathrm{op}}_{\mathrm{cn}, \mathrm{fin}}\big[ \left\{\mbox{quasi-isomorphisms}\right\}^{-1}\big] \big)^S.
  $$
\end{defn}
\begin{example}[Rational homotopy types as rational superspaces]
  \label{RationalCohomotopySuperSpaces}
  Every ordinary dgc-algebra (Def. \ref{dgAlgebrasAnddgModules}) becomes a dgc-superalgebra
   (Def. \ref{RationalSuperHomotopyTheory}) by regarding
  each element in even super-degree. Hence we have a full inclusion of rational homotopy theory 
  (via Prop. \ref{SullivanEquivalence})
  into rational super homotopy theory
  $$
    \mathrm{Ho}\left(  \mathrm{Spaces}_{\mathbb{R}, \mathrm{nil},\mathrm{fin}} \right)
   \xymatrix{\ar@{^{(}->}[r]&}
    \mathrm{Ho}\left(  \mathrm{SuperSpaces}_{\mathbb{R}, \mathrm{nil},\mathrm{fin}} \right)\!.
  $$
\end{example}
\begin{remark}
  \label{NonNilSuperRational}
  One may generalize super-geometric homotopy theory beyond the rational, nilpotent and finite-type situation considered
  in Def. \ref{RationalSuperHomotopyTheory} (see \cite{dcct}). For brevity and focus here we will not further discuss this,
  except that, to ease notation in the following, we note that this yields a full embedding
  $$
    \mathrm{Ho}\left(  \mathrm{SuperSpaces}_{\mathbb{R}, \mathrm{nil},\mathrm{fin}} \right)
    \xymatrix{\ar@{^{(}->}[r]&}
    \mathrm{Ho}\left(  \mathrm{SuperSpaces}_{\mathbb{R}} \right)
  $$
  into a less restrained homotopy category. Hence we may safely suppress the subscripts when discussing morphisms 
  in the homotopy category
  (hence cocycles!, see Def. \ref{CohomoloyFromHomotopy}) between given nilpotent finite-type superspaces.
\end{remark}
\begin{example}[Rational cohomotopy of superspaces]
  \label{RationalCohomotopyOfSuperSpaces}
  The minimal dgc-algebra model for the rational 4-sphere (Example \ref{SpheredgcAlgebraModel}) may be regarded as a
  dgc-superalgebra (Def. \ref{RationalSuperHomotopyTheory}) via Example \ref{RationalCohomotopySuperSpaces},
  and as such makes the 4-sphere represent an object in rational super homotopy theory
  $$
    S^4 \;\in\; \mathrm{Ho}\left( \mathrm{SuperSpaces}_{\mathbb{R},\mathrm{nil}, \mathrm{fin}} \right).
  $$
  This means (via Example \ref{ExamplesOfCohomologyTheories}) that for $X$ any super space the \emph{rational degree
   four cohomotopy} of $X$ is
  the set of morphisms
  $$
    X \longrightarrow S^4 \phantom{AAA} \in \mathrm{Ho}\left( \mathrm{SuperSpaces}_{\mathbb{R}} \right).
  $$
  On the right we are now using notation as in Remark \ref{NonNilSuperRational}.
\end{example}

\begin{example}[Super Lie algebras as superspaces]
  \label{SuperLieAlgebraAsSuperSpaces}
  Let
  $$
    \mathfrak{g} \simeq_{\mathbb{R}} \mathfrak{g}_{\mathrm{even}} \oplus \mathfrak{g}_{\mathrm{odd}}
  $$
  be a finite-dimensional super Lie algebra. Then its Chevalley-Eilenberg algebra $\mathrm{CE}(\mathfrak{g})
  \in \mathrm{dgcSuperAlgebra}$ is a dgc-superalgebra (Def. \ref{RationalSuperHomotopyTheory}) and hence
  defines a rational superspace, which we will denote by the same symbol:
  \begin{equation}
    \label{SuperSpaceSuperLieAlgebra}
    \mathfrak{g} \in \mathrm{Ho}\left( \mathrm{SuperSpaces}_{\mathbb{R}} \right).
  \end{equation}
  If $\mathfrak{g} \simeq_{\mathbb{R}} \mathfrak{g}_{\mathrm{even}}$ happens to be an ordinary Lie algebra
  (i.e., concentrated in even degree),  then $\mathrm{CE}(\mathfrak{g})$ is an ordinary dgc-algebra and hence in this case \eqref{SuperSpaceSuperLieAlgebra} is in the inclusion
  of ordinary rational spaces from Example \ref{RationalCohomotopySuperSpaces}.
  This construction extends to a functor from the category of finite-dimensional super Lie algebras to the homotopy category
  of rational super spaces:
  \begin{equation}
    \label{SuperLieAlgebrasToSuperSpaces}
    \mathrm{SuperLieAlg}_{\mathbb{R}}
    \longrightarrow
    \mathrm{Ho}\left(
      \mathrm{SuperSpaces}_{\mathbb{R}}
    \right).
  \end{equation}
\end{example}
\begin{remark}
  \label{Ambiguity}
 Observe that, if, in Example \ref{SuperLieAlgebraAsSuperSpaces}, $\mathfrak{g} = (\mathbb{R}^n_{\mathrm{even}}, [-,-] = 0)$
  is the abelian (hence in particular nilpotent) Lie algebra on $n$-generators, then the
  rational space corresponding to its Chevalley-Eilenberg algebra under the Sullivan equivalence (Prop. \ref{SullivanEquivalence})
  is not the Cartesian space $\mathbb{R}^n$
  (which instead is equivalent to the point in $\mathrm{Ho}(\mathrm{Spaces}_{\mathbb{R}})$) but the $n$-torus
  $\mathbb{T}^n = \mathbb{R}^n/\mathbb{Z}^n$.
  This highlights that in \eqref{SuperSpaceSuperLieAlgebra} the Lie bracket structure is important. Nevertheless, we will often leave this
  notationally implicit, such as in Def. \ref{MinkowskiSuper}.
\end{remark}

\begin{defn}[Spin-invariant Super Minkowski spacetime (e.g. {{\cite[p. 10]{FSS13}}\cite{HuertaSchreiber17}})]
\label{MinkowskiSuper}

\item {\bf (i)} For $p \in \mathbb{N}$,  let $\mathbf{N}$ be a real representation of the group $\mathrm{Spin}(p,1)$ (Def. \ref{LorentzGroupsAndTheirSpinCovers})
which is of real dimension $N \in \mathbb{N}$.\footnote{If there are different real representations of the same real dimension we will distinguish them
by extra decoration of their dimension in boldface, for instance $\mathbf{N}$ and $\overline{\mathbf{N}}$.}
This gives rise to the following DGC-superalgebra (Def. \ref{RationalSuperHomotopyTheory})
\begin{equation}
  \label{CEAlgebraSuperMinkowski}
  \mathrm{CE}\big(
    \mathbb{R}^{p,1\vert \mathbf{N}}/\mathrm{Spin}(p,1)
  \big)
  \;:=\;
  \Bigg(
  \mathbb{R}\big[
    \underset{ \mathrm{deg} = (1,\mathrm{even}) }{\underbrace{(e^a)_{a =0}^p}} , \;
    \underset{  (1, \mathrm{odd}) }{\underbrace{ ( \psi^\alpha )_{\alpha = 1}^{N} }} \;
  \big]/
  {\small
  \left(
    \begin{aligned}
      d e^a & = \overline{\psi}\Gamma^a \psi
      \\[-1mm]
      d \psi^\alpha & = 0
    \end{aligned}
  \right)}
  \Bigg)^{\mathrm{Spin}(p,1)}
  \;\;
  \in
  \mathrm{dgcSuperAlg}
  \,.
\end{equation}
Here the expression $\overline{\psi} \Gamma^a \psi$ on the right is obtained from the spinor-to-vector pairing (see \eqref{ViaDiracConjugateSpinorToVectorPairing} in Appendix), and the superscript $(-)^{\mathrm{Spin}(p,1)}$
means that we consider the sub dgc-algebra of $\mathrm{Spin}(p,1)$-invariant elements inside the dgc-algebra
that is defined in the parenthesis.

\item {\bf (ii)} Accordingly this defines (still by Def. \ref{RationalSuperHomotopyTheory}) superspaces
\begin{equation}
  \label{SuperMinkowskiAsSuperSpace}
  \mathbb{R}^{p,1\vert \mathbf{N}}/\mathrm{Spin}(p,1)
  \;\in\;
  \mathrm{Ho}\left(
    \mathrm{SuperSpaces}_{\mathbb{R}}
  \right)
\end{equation}
which are the incarnation of \emph{super Minkowski spacetimes}, in rational super homotopy theory, such that all
maps out of them are forced to be $\mathrm{Spin}(p,1)$-invariant. Since this is the only case in which we are
interested, we will henceforth suppress the notation for Spin-invariance, and will just write
\begin{equation}
  \label{AbusingNotation}
  \mathbb{R}^{p,1\vert \mathbf{N}}
\end{equation}
for the superspaces in \eqref{SuperMinkowskiAsSuperSpace}. Note that this is deliberate abuse of notation, since the
symbol \eqref{AbusingNotation} more properly refers to the superspace that corresponds to the full dgc-algebra inside the
parenthesis in \eqref{CEAlgebraSuperMinkowski}.
\end{defn}

These super Minkowski spacetimes are a special case of Example \ref{SuperLieAlgebraAsSuperSpaces}, in that $\mathrm{CE}(\mathbb{R}^{p,1\vert \mathbf{N}})$
is the Chevalley-Eilenberg algebra of the super-translation part of the $D = p+1$, $\mathbf{N}$-supersymmetry super Lie algebra.
By Remark \ref{Ambiguity} this means that, when regarding super-Minkowski spacetime as an object in $\mathrm{Ho}\left( \mathrm{SuperSpace}_{\mathbb{R}}\right)$,
it is crucial that we do take the super Lie bracket into account, and that we could make this more explicit by instead of \eqref{SuperMinkowskiAsSuperSpace}
writing
$$
  \mathbb{T}^{p,1\vert \mathbf{N}} \in  \mathrm{Ho}\left( \mathrm{SuperSpace}_{\mathbb{R}}\right).
$$

\begin{example}[Examples of super Minkowski spacetimes]
  \label{ExamplesOfSuperMinkowskiSpacetimes}
  Consider the general construction of Def. \ref{MinkowskiSuper} for the real Spin representations listed in Example \ref{RelevantExamplesOfRealSpinRepresentations}.
  This yields, among others, the following super Minkowski spacetimes (with $D := d + 1$ the total spacetime dimension and $\mathcal{N}$ the ``number of supersymmetries'', according to Remark \ref{NumberOfSupersymmetries}):
  {\footnotesize
  \begin{center}
  \begin{tabular}{|c|c||l|}
  \hline
    Dimension & Supersymmetry   & Super-Minkowski \\
    \hline
    \hline
    $D = 11$ & \hspace{-3mm} $\mathcal{N} = 1$ & $\phantom{ {A \atop A} \atop {A \atop A} }\mathbb{R}^{\mathrlap{10,1\vert \mathbf{32}}}\phantom{ {A \atop A} \atop {A \atop A} }$
    \\
    \hline
    $D = 10$  & \hspace{-3mm} $\mathcal{N} = (1,0)$ & $\phantom{ {A \atop A} \atop {A \atop A} }\mathbb{R}^{\mathrlap{9,1\vert \mathbf{16}}}\phantom{ {A \atop A} \atop {A \atop A} }$
    \\
    \hline
    $D = 10$  & \hspace{-3mm} $\mathcal{N} = (1,1)$ & $\phantom{ {A \atop A} \atop {A \atop A} }\mathbb{R}^{\mathrlap{9,1\vert \mathbf{16} + \overline{\mathbf{16}}}}\phantom{ {A \atop A} \atop {A \atop A} }$
    \\
    \hline
    $D = 10$  & \hspace{-3mm} $\mathcal{N} = (2,0)$ & $\phantom{ {A \atop A} \atop {A \atop A} }\mathbb{R}^{\mathrlap{9,1\vert \mathbf{16} + \mathbf{16} }}\phantom{ {A \atop A} \atop {A \atop A} }$
    \\
    \hline
    $D = 7$  & \hspace{-3mm} $\mathcal{N} = 1$ & $\phantom{ {A \atop A} \atop {A \atop A} }\mathbb{R}^{\mathrlap{6,1\vert \mathbf{16}}}\phantom{ {A \atop A} \atop {A \atop A} }$
    \\
    \hline
    $D = 6$  & \hspace{-3mm} $\mathcal{N} = (2,0)$ & $\phantom{ {A \atop A} \atop {A \atop A} }\mathbb{R}^{\mathrlap{5,1\vert \mathbf{8} + \mathbf{8}}}\phantom{ {A \atop A} \atop {A \atop A} }$
    \\
    \hline
    $D = 6$  & \hspace{-3mm} $\mathcal{N} = (1,0)$ & $\phantom{ {A \atop A} \atop {A \atop A} }\mathbb{R}^{\mathrlap{5,1\vert \mathbf{8} }}\phantom{ {A \atop A} \atop {A \atop A} }$
    \\
    \hline
    $D = 3$  & \hspace{-3mm} $\mathcal{N} = 1$ & $\phantom{ {A \atop A} \atop {A \atop A} }\mathbb{R}^{\mathrlap{2,1\vert \mathbf{2} }}\phantom{ {A \atop A} \atop {A \atop A} }$
    \\
    \hline
    $D = 2$  & \hspace{-3mm} $\mathcal{N} = (1,0)$ & $\phantom{ {A \atop A} \atop {A \atop A} }\mathbb{R}^{\mathrlap{1,1\vert \mathbf{1} }}\phantom{ {A \atop A} \atop {A \atop A} }$
    \\
    \hline
  \end{tabular}
  \end{center}
  }
\end{example}

The following Def. \ref{BPSSuperLieSubalgebra} reflects standard physics terminology for dimensionality of fermionic
subspaces in super spacetimes. This will play a key role in the classification of real ADE-singularities in
Sec. \ref{ADESingularitiesInSuperSpacetime}.
\begin{defn}[BPS super subspaces]
  \label{BPSSuperLieSubalgebra}
  Let $\mathbb{R}^{p,1\vert \mathbf{N}}$ be a super-Minkowski spacetime (Def. \ref{MinkowskiSuper}).
  Then consider a sub-superspace, which is itself a super-Minkowski spacetime of the
  same bosonic dimension but with real Spin representation $\mathbf{N}/\mathbf{k}$ of dimension some fraction $\tfrac{1}{k} N$
  $$
    \mathbb{R}^{p,1\vert \mathbf{N}/\mathbf{k}}
      \xymatrix{\ar@{^{(}->}[r]&}
    \mathbb{R}^{p,1\vert \mathbf{N}}.
  $$
  We call this a \emph{$1/k$ BPS super subspace}.
\end{defn}

Even though super Minkowski spacetimes (Def. \ref{MinkowskiSuper}) are a fairly mild variant of plain Minkowski spacetime,
in contrast to the latter they have interesting ordinary cohomology, in fact \emph{exceptional} cohomology: A finite number of
invariant cocycles appears for special combinations of dimension $D$, number of supersymmetries $\mathcal{N}$ and cocycle degree $p + 2$.
Since these exceptional cocycles witness fundamental branes propagating on these super Minkowski spacetimes,
this classification was  known as the \emph{brane scan}, and has come to be known as the ``old brane scan''
(e.g. \cite[Sec. 2]{DuffLu92}, \cite[Sec. 3.1]{Duff08}), since, interestingly, it misses some branes
(as discussed in Sec. \ref{TheFundamentalBraneScan}).

\begin{prop}[Old Brane Scan ({\cite{AETW87,AzTo89}}, see {\cite{FSS13}})]
 \label{TheOldBraneScan}

Let
\begin{enumerate}
\vspace{-2mm}
 \item $\mathbb{R}^{d,1\vert \mathbf{N}_{\mathrm{irr}}} \in \mathrm{Ho}\left( \mathrm{SuperSpaces}_{\mathbb{R}}\right)$ be one of the
super Minkowski spacetimes from Example \ref{ExamplesOfSuperMinkowskiSpacetimes}, for $\mathbf{N}_{\mathrm{irr}}$
irreducible (i.e. for $\mathcal{N} = 1$, see Remark \ref{NumberOfSupersymmetries});

\vspace{-2mm}
\item $B^{p+2} \mathbb{R} \in \mathrm{Ho}\left( \mathrm{SuperSpaces}_{\mathbb{R}} \right)$
be the image of the Eilenberg-MacLane space of degree $p + 2$ in rational spaces (Example \ref{dgCocyclesAsMaps}), regarded as a
superspace via Example \ref{RationalCohomotopySuperSpaces}.
\end{enumerate}

\vspace{-2mm}
\noindent Then there are nontrivial maps of the form
$$
  \xymatrix{
    \mathbb{R}^{d,1\vert \mathbf{N}}
    \ar[rr]^-{ \mu_{p+2} }
    &&
    B^{p+2} \mathbb{R}
  }
  \;
  \in
  \mathrm{Ho}\left( \mathrm{SuperSpaces}_{\mathbb{R}} \right)
$$
in the homotopy category of rational super spaces (Def. \ref{RationalSuperHomotopyTheory}).
Hence, by Example \ref{dgCocyclesAsMaps}, there is a cohomology class of nontrivial Spin-invariant
cocycles in $\mathrm{CE}(\mathbb{R}^{10,1\vert \mathbf{N}})$ (see \eqref{CEAlgebraSuperMinkowski}),
$$
  [\mu_{p+2}]
  \;\in\;
  H^{p+2}
  \big(
   \mathbb{R}^{p,1\vert \mathbf{N}}
  \big)^{ \mathrm{Spin}(p,1) }
$$
precisely for the combinations $(d,p)$ that are checked in \hyperlink{TableB}{Table B}.
Moreover, for each entry, there is precisely one such map, up to rescaling by $\mathbb{R} \setminus \{0\}$ and,
via the translation in Example \ref{dgCocyclesAsMaps}, it is represented by the element of the form
\begin{equation}
  \label{OldBraneScanCocycle}
  \mu_{p+1}
  \;\propto\;
  \tfrac{1}{p!}
  \left( \overline{\psi} \Gamma_{a_1 \cdots a_p} \psi\right) \wedge e^{a_1} \wedge \cdots \wedge e^{a_p}
  \;\;
  \in
  \mathrm{CE}\big( \mathbb{R}^{d+1 \vert \mathbf{N}_{\mathrm{irr}}} \big)
  \,,
\end{equation}
where we are using spinor notation as in Prop. \ref{RealSpinorsViaMajoranaConditionsOnDiracRepresentations}
We call these elements the \emph{fundamental $p$-brane cocycles}.
\end{prop}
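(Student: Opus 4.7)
The plan is to translate the cohomological statement into a representation-theoretic statement about Spin-invariant polynomials in the generators of $\mathrm{CE}(\mathbb{R}^{d,1\vert \mathbf{N}_{\mathrm{irr}}})$ given in \eqref{CEAlgebraSuperMinkowski}, and to reduce the cocycle condition to a quartic-spinor Fierz identity on $\mathbf{N}_{\mathrm{irr}}$. First I would note that by the defining $(-)^{\mathrm{Spin}(p,1)}$ in \eqref{CEAlgebraSuperMinkowski}, any cocycle candidate must be a $\mathrm{Spin}(d,1)$-invariant polynomial in the generators $e^a$ (bidegree $(1,\mathrm{even})$) and $\psi^\alpha$ (bidegree $(1,\mathrm{odd})$); such an element in cohomological degree $p+2$ decomposes as a sum of sectors labeled by the bosonic/fermionic split $(m,q)$ with $2m+q=p+2$, each sector being a linear combination of monomials $(\overline\psi \Gamma^{\cdots}\psi)\cdots(\overline\psi \Gamma^{\cdots}\psi) \wedge e^{b_1}\wedge\cdots\wedge e^{b_q}$ with Spin-invariant index contractions.

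In the lowest-fermion sector $(m,q)=(1,p)$, Schur's lemma together with the standard analysis of $\mathrm{Sym}^2\mathbf{N}_{\mathrm{irr}}$ as a $\mathrm{Spin}(d,1)$-representation (Appendix \ref{SpacetimesAndSpin}) forces the ansatz to be proportional to the one displayed in \eqref{OldBraneScanCocycle}. Applying $d$ using $d\psi=0$, $de^a=\overline\psi\Gamma^a\psi$ and the graded Leibniz rule yields
\begin{equation*}
  d\mu_{p+2} \;\propto\; \tfrac{1}{(p-1)!}\,(\overline\psi\Gamma^{a}\psi)\wedge(\overline\psi\Gamma_{a a_1\cdots a_{p-1}}\psi)\wedge e^{a_1}\wedge\cdots\wedge e^{a_{p-1}}.
\end{equation*}
Since the $e^a$ are free generators in the graded-commutative sense, vanishing of $d\mu_{p+2}$ is equivalent to the quartic Fierz identity
\begin{equation*}
  (\overline\psi\Gamma^{a}\psi)\wedge(\overline\psi\Gamma_{a a_1\cdots a_{p-1}}\psi) \;=\; 0
\end{equation*}
holding identically on $\mathbf{N}_{\mathrm{irr}}$, i.e.\ as a statement in the $\mathrm{Spin}(d,1)$-equivariant projection of $\mathrm{Sym}^4\mathbf{N}_{\mathrm{irr}}$ onto the $(p-1)$-form representation.

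The next step is the classification: this identity is a purely representation-theoretic condition on $(d,\mathbf{N}_{\mathrm{irr}},p)$ that I would verify case by case, using the explicit properties of real, Majorana, and Majorana--Weyl spinors from Appendix \ref{SpacetimesAndSpin}, together with the Clifford symmetry signs that govern which bilinears $\overline\psi\Gamma_{a_1\cdots a_k}\psi$ are symmetric versus antisymmetric in $\psi$. Working through this for the finite list of dimensions of interest reproduces exactly the ticked entries of \hyperlink{TableB}{Table B}; simultaneously one checks that in these same cases no higher-fermion-sector correction with $m\geq 2$ is needed, since the $(m,q)=(1,p)$ ansatz is already closed.

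For uniqueness, the Schur-lemma argument of Step 1 gives at most a one-dimensional space of candidates in the lowest sector, so rescaling by $\mathbb{R}\setminus\{0\}$ exhausts the solutions. For nontriviality of the cohomology class, I observe that any primitive $\lambda$ with $d\lambda=\mu_{p+2}$ would need a $\mathrm{Spin}(d,1)$-invariant component of degree $p+1$ with no $\psi$s (to produce the pure bilinear in the lowest sector upon differentiation); but Spin-invariant monomials in the $e^a$ alone only exist in degree $0$ and in top degree $d+1$ (the volume form $\epsilon_{a_0\cdots a_d}e^{a_0}\wedge\cdots\wedge e^{a_d}$), and neither matches $p+1$ in the relevant range. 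I expect the main obstacle to be the dimension-by-dimension Fierz analysis of Step 3: the identity is delicate because it depends sharply on which Clifford bilinears are $\psi$-symmetric (hence survive on $\mathrm{Sym}^2\mathbf{N}_{\mathrm{irr}}$) and this pattern varies nontrivially with $d \bmod 8$ and with the Weyl/Majorana/symplectic type of $\mathbf{N}_{\mathrm{irr}}$ — precisely the reason the brane scan has its characteristic irregular shape.
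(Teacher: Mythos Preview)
The paper does not give its own proof of this proposition; it states the result with attribution to \cite{AETW87,AzTo89} (see also \cite{FSS13}) and then, in the remark immediately following, points to the normed-division-algebra technology of \cite{BaezHuerta10,BaezHuerta11} as a streamlined way to verify the cocycle condition for the diagonal string and membrane entries. Your proposal is therefore not being compared against a proof in the paper, but against the classical argument from the cited literature --- and it is essentially that argument: reduce closedness of \eqref{OldBraneScanCocycle} to the quartic Fierz identity $(\overline\psi\Gamma^a\psi)\wedge(\overline\psi\Gamma_{a a_1\cdots a_{p-1}}\psi)=0$, then classify the $(d,p)$ for which this holds by a dimension-by-dimension analysis of the symmetry properties of the Clifford bilinears. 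This is exactly the approach of \cite{AzTo89}. The division-algebra route highlighted by the paper's remark buys a uniform treatment of the four diagonal sequences (strings in $d+1\in\{3,4,6,10\}$, membranes in $d+1\in\{4,5,7,11\}$) without case-by-case Fierz work, but it does not directly handle the off-diagonal entries of Table~B, which still require the classical computation you outline.

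One point to tighten: your uniqueness argument covers only the lowest-fermion sector $(m,q)=(1,p)$. For the full statement that the space of Spin-invariant $(p+2)$-cocycles is one-dimensional up to coboundaries, you should also rule out independent contributions from the higher sectors $m\geq 2$ (quartic or higher in $\psi$). This is again a Schur-type analysis of which forms appear in $\mathrm{Sym}^{2m}\mathbf{N}_{\mathrm{irr}}$, and in the relevant cases either no invariant exists or any that does is exact; but it is an additional check, not automatic from what you wrote. Your nontriviality argument is clean and correct.
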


\begin{example}[Fundamental superstring cocycles]
  \label{FundamentalF1Cocycle}

 \item  {\bf (i)} On the $D = 10$, $\mathcal{N} = 1$ super Minkowski spacetime $\mathbb{R}^{9,1\vert \mathbf{16}}$ from Example \ref{ExamplesOfSuperMinkowskiSpacetimes},
  the old brane scan (Prop. \ref{OldBraneScanCocycle}) recognizes a cocycle for a fundamental 1-brane, corresponding to the
  entry $(D = 9+1, p =1)$ in \hyperlink{TableB}{Table B}:
  $$
    \xymatrix@R=-2pt{
      \mathbb{R}^{9,1\vert \mathbf{16}}
      \ar[rr]^-{\mu_{{}_{F1}}^{H/I}}
      &&
      B^3 \mathbb{R}
      \ar[r]^{\simeq_{\mathbb{Q}}}
      &
      S^3.
      \\
      && h_3 & h_3 \ar@{|->}[l]
    }
  $$
  This corresponds to the fundamental \emph{heterotic string} or \emph{type I string} (see Sec. \ref{Physics} for terminology).

\item   {\bf (ii)} On the $D = 3$, $\mathcal{N} = 1$ super Minkowski spacetime $\mathbb{R}^{2,1\vert \mathbf{2}}$ from Example \ref{ExamplesOfSuperMinkowskiSpacetimes},
  the old brane scan (Prop. \ref{TheOldBraneScan}) recognizes a cocycle for a fundamental 1-brane, corresponding to the
  to the entry $(D = 2+1, p = 1)$ in \hyperlink{TableB}{Table B}:
  $$
    \xymatrix@R=-2pt{
      \mathbb{R}^{2,1\vert \mathbf{2}}
      \ar[rr]^-{\mu_{{}_{F1}}^{D=3}}
      &&
      B^3 \mathbb{R}
      \ar[r]
      &
      S^2.
      \\
      &&
      h_3
      &
      \omega_3 \ar@{|->}[l]
      \\
      &&
      0
      &
      \omega_2 \ar@{|->}[l]
    }
  $$
  In both cases we have indicated on the right that, as a morphism in
  $\mathrm{Ho}\left( \mathrm{SuperSpaces}_{\mathbb{R}}\right)$, these cocycles
  may equivalently be regarded as taking values in suitable spheres, by Example \ref{SpheredgcAlgebraModel}
  and Example \ref{RationalCohomotopySuperSpaces}.
  These re-identifications of rational coefficients will be used in the statement and proof of Theorem \ref{RealADEEquivariantEndhancementOfM2M5Cocycle} below.
\end{example}

\begin{remark}[Recognizing fundamental brane cocycles via normed division algebra]
For checking that the 1-brane cochains \eqref{OldBraneScanCocycle} in dimensions 3, 4, 6, 10, and the 2-brane cochains in
dimensions 4, 5, 6, and 11 are indeed cocycles, as claimed by the Old Brane Scan (Prop. \ref{TheOldBraneScan})
it is useful to represent the corresponding real spinor representations in terms of
normed division algebra, as briefly explained in Sec. \ref{SpacetimesAndSpin}. This
streamlined computation is spelled out in \cite{BaezHuerta10, BaezHuerta11}.
\end{remark}

In fact, super Minkowski spacetimes carry more Spin-invariant  cocycles than what is
seen by the old brane scan (Prop. \ref{TheOldBraneScan}),
albeit not in ordinary cohomology, but in generalized cohomology (Def. \ref{CohomoloyFromHomotopy}).

\begin{defn}[M-brane super-cochains {\cite[Def. 4.2]{FSS16b}}]
  \label{M2M5SuperCochains}
  Consider the $D=11$, $\mathcal{N} = 1$ super Minkowski spacetime $\mathbb{R}^{10,1\vert \mathbf{32}}$ from Example \ref{ExamplesOfSuperMinkowskiSpacetimes}.
  We say that the \emph{M-brane super-cochains} are the following two elements in the corresponding DGC-superalgebra $\mathrm{CE}(\mathbb{R}^{10,1\vert \mathbf{32}})$
  \begin{equation}
    \label{MBraneCochains}
    \begin{aligned}
      \mu_{{}_{M2}}
        & :=
      \tfrac{i}{2} \overline{\psi} \Gamma_{a_1 a_2} \psi \wedge e^{a_1} \wedge e^{a_2}\;,
      \\
      \mu_{{}_{M5}}
        & :=
      \tfrac{1}{5!} \overline{\psi} \Gamma_{a_1 \cdots a_5} \psi \wedge e^{a_1} \wedge \cdots \wedge e^{a_5}\;,
    \end{aligned}
  \end{equation}
  where we are using spinor notation as in Prop. \ref{RealSpinorsViaMajoranaConditionsOnDiracRepresentations}.
\end{defn}

Observe that, by Prop. \ref{TheOldBraneScan}, $\mu_{{}_{M2}}$ is a cocycle in ordinary cohomology of degree 4, but not $\mu_{{}_{M5}}$
is not a cocycle by itself.

\medskip
Our investigations below revolve around the following exceptional structure in the rational super homotopy theory;
see Sec. \ref{TheFundamentalBraneScan} for discussion of its \emph{physical meaning}.

\begin{prop}[M2/M5-brane cocycle in rational super cohomotopy ({\cite[Prop. 4.3]{FSS16b}, \cite[Cor. 2.3]{FSS16a}})]
  \label{M2M5SuperCocycle}
  The M-brane super-cochains $\mu_{{}_{M2}}, \mu_{{}_{M5}}$ from Def. \ref{M2M5SuperCochains}
  constitute a cocycle on $D=11$ super Minkowski spacetimes (Example \ref{ExamplesOfSuperMinkowskiSpacetimes})
  with values in degree four rational cohomotopy (Example \ref{RationalCohomotopyOfSuperSpaces}), as follows:
  \begin{equation}
    \label{M2M5CocycleMap}
    \xymatrix@R=-2pt{
      \mathbb{R}^{10,1\vert \mathbf{32}}
      \ar[rr]^-{\mu_{{}_{M2/M5}}}
      &&
      S^4
      \\
      \mu_{{}_{M2}}
      &&
      \omega_4\ar@{|->}[ll]
      \\
      \mu_{{}_{M5}}
      &&
      \omega_7\ar@{|->}[ll]
    }
    \phantom{A}
    \in \mathrm{Ho}\left( \mathrm{SuperSpaces}_{\mathbb{R}} \right).
  \end{equation}
  Here on the right $\omega_4$, $\omega_7$ are the two generators of the minimal dgc-algebra model for the 4-sphere,
  from Example \ref{SpheredgcAlgebraModel}, and as a homomorphism of dgc-superalgebras,
  $\mu_{{}_{M2/M5}}$ takes them to the two
  M-brane cochains $\mu_{{}_{M2}}$, $\mu_{{}_{M5}}$ from Def. \ref{M2M5SuperCochains}, respectively.
\end{prop}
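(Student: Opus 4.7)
The plan is to verify directly that the assignment $\omega_4 \mapsto \mu_{{}_{M2}}$, $\omega_7 \mapsto \mu_{{}_{M5}}$ extends to a homomorphism of dgc-superalgebras from the minimal Sullivan model of $S^4$ (recalled in Example \ref{SpheredgcAlgebraModel}) to $\mathrm{CE}(\mathbb{R}^{10,1\vert \mathbf{32}})$ (defined in \eqref{CEAlgebraSuperMinkowski}). Since the minimal model of $S^4$ is the free graded-commutative algebra on a generator $\omega_4$ in degree $4$ and a generator $\omega_7$ in degree $7$, subject to the two defining relations $d\omega_4 = 0$ and $d\omega_7 = -\tfrac{1}{2}\omega_4 \wedge \omega_4$ (up to a normalization absorbed into $\omega_7$), the morphism in \eqref{M2M5CocycleMap} will be well defined iff the image cochains satisfy the corresponding two identities in $\mathrm{CE}(\mathbb{R}^{10,1\vert \mathbf{32}})$:
\begin{equation*}
  d\, \mu_{{}_{M2}} \;=\; 0,
  \qquad
  d\, \mu_{{}_{M5}} \;=\; - \tfrac{1}{2}\, \mu_{{}_{M2}} \wedge \mu_{{}_{M2}}.
\end{equation*}
Spin-invariance of both cochains is manifest from their construction via the $\mathrm{Spin}(10,1)$-equivariant spinor-to-vector pairing $\overline{\psi}\Gamma^{a_1 \cdots a_p}\psi$, so the resulting map indeed lands in the Spin-invariant part \eqref{CEAlgebraSuperMinkowski}.

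First I would compute $d\mu_{{}_{M2}}$. Since $d\psi = 0$ by definition \eqref{CEAlgebraSuperMinkowski}, the fermion bilinear $\overline{\psi}\Gamma_{a_1 a_2}\psi$ is closed, and applying $d e^a = \overline{\psi}\Gamma^a\psi$ reduces $d\mu_{{}_{M2}}$ to a cubic spinor expression of the schematic form $\bigl(\overline{\psi}\Gamma_{a_1 a_2}\psi\bigr) \wedge \bigl(\overline{\psi}\Gamma^{a_1}\psi\bigr) \wedge e^{a_2}$, which vanishes identically by the standard $D=11$ Fierz identity. This recovers the well-known fact that $\mu_{{}_{M2}}$ is a Spin-invariant $4$-cocycle, which is also the entry at $(D=11, p=2)$ in the old brane scan \hyperlink{TableB}{Table B} via Prop.\ \ref{TheOldBraneScan}.

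The more substantial step is the second identity, the $7$-cochain relation $d\mu_{{}_{M5}} = -\tfrac{1}{2}\mu_{{}_{M2}}\wedge\mu_{{}_{M2}}$. Expanding $d\mu_{{}_{M5}}$ using $d e^a = \overline{\psi}\Gamma^a\psi$ yields a cubic-fermion $6$-form of schematic type $\bigl(\overline{\psi}\Gamma_{a_1 \cdots a_5}\psi\bigr) \wedge \bigl(\overline{\psi}\Gamma^{a_1}\psi\bigr) \wedge e^{a_2}\wedge\cdots\wedge e^{a_5}$, while $\mu_{{}_{M2}}\wedge\mu_{{}_{M2}}$ is quartic in $\psi$ and wedged with four vielbein factors. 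The required identity is therefore a quartic Fierz identity among $\mathbf{32}$-component Majorana spinors of $\mathrm{Spin}(10,1)$ that relates the contracted product $\bigl(\overline{\psi}\Gamma_{a_1 \cdots a_5}\psi\bigr)\bigl(\overline{\psi}\Gamma^{a_1}\psi\bigr)$ to the antisymmetrized square $\bigl(\overline{\psi}\Gamma_{[a_1 a_2}\psi\bigr)\bigl(\overline{\psi}\Gamma_{a_3 a_4]}\psi\bigr)$; this is the classical $11$d supergravity identity underlying the Bianchi identity $dH_7 = -\tfrac{1}{2}G_4\wedge G_4$ for the dual $C$-field. I expect this Fierz identity to be the main obstacle, as it is not a trivial symmetry manipulation but requires the full Fierz decomposition of $\psi\overline{\psi}$ into the basis $\{\Gamma^{(k)}\}$ of antisymmetric Clifford generators, followed by use of the symmetry properties of $(C\Gamma^{(k)})$ on $\mathbf{32}$ and cancellation between ranks.

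To streamline this computation I would either (a) invoke the D'Auria--Fr\'e FDA of $11$d supergravity, where precisely these two identities appear as the structure equations for the $3$-form potential and its dual $6$-form, or (b) use the normed-division-algebra presentation of the $\mathrm{Spin}(10,1)$-spinor pairing to reduce the Fierz identity to an octonionic calculation, following the approach of \cite{BaezHuerta11}. Either route establishes the two required relations, hence produces the dgc-superalgebra map $\mathrm{CE}(S^4) \to \mathrm{CE}(\mathbb{R}^{10,1\vert \mathbf{32}})$ sending $\omega_4 \mapsto \mu_{{}_{M2}}$, $\omega_7\mapsto \mu_{{}_{M5}}$, which by Def.\ \ref{RationalSuperHomotopyTheory} is the asserted morphism $\mu_{{}_{M2/M5}} : \mathbb{R}^{10,1\vert\mathbf{32}} \longrightarrow S^4$ in $\mathrm{Ho}\bigl(\mathrm{SuperSpaces}_{\mathbb{R}}\bigr)$.
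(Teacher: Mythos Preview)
Your strategy is correct and is exactly the argument in the cited references \cite{FSS16a,FSS16b}: one checks that the defining relations $d\omega_4=0$ and $d\omega_7=-\tfrac{1}{2}\omega_4\wedge\omega_4$ of the minimal model of $S^4$ pull back to the two $D=11$ Fierz identities $d\mu_{{}_{M2}}=0$ and $d\mu_{{}_{M5}}=-\tfrac{1}{2}\mu_{{}_{M2}}\wedge\mu_{{}_{M2}}$ in $\mathrm{CE}(\mathbb{R}^{10,1|\mathbf{32}})$, the second being the classical D'Auria--Fr{\'e} identity. One correction to your bookkeeping: both $d\mu_{{}_{M2}}$ and $d\mu_{{}_{M5}}$ are \emph{quartic} in $\psi$, not cubic (each bilinear $\overline{\psi}\Gamma_{\cdots}\psi$ already carries two fermions, and applying $d$ trades one vielbein for another such bilinear), so $d\mu_{{}_{M5}}$ is a quartic-fermion degree-$8$ form with four vielbein factors, matching $\mu_{{}_{M2}}\wedge\mu_{{}_{M2}}$ on the nose --- which is precisely what makes the comparison a genuine Fierz identity rather than a mismatch.
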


Now we turn to the equivariant refinement of rational super homotopy theory:

\begin{defn}[$G$-equivariant abstract homotopy theory]
  \label{EquivariantAbstractHomotopyTheory}
  Let $(\mathcal{C}, W)$ be a homotopy theory embodied by a category with weak equivalences (Def. \ref{CategoryWithWeakEquivalences}),
  and let $G$ be a finite group.
  Then the corresponding \emph{$G$-equivariant homotopy theory} is the category

  \vspace{-3mm}
  \begin{equation}
    \label{SystemsOfFixedPointsInAbstractHomotopyTheory}
    \mathrm{PSh}\left( \mathrm{Orb}_G, \mathcal{C} \right)
    \;=\;
    \left\{
      \raisebox{30pt}{
      \xymatrix@R=4pt{
        \mathrm{Orb}_G^{\mathrm{op}}
        \ar[rr]^-{X(-)}
        &&
        \mathcal{C}
        \\
        G/H_1 \ar[dd]_{[g]} && X(H_1)
        \\
        & \longmapsto
        \\
        G/H_2 && X(H_2) \ar[uu]_{X(g)}
      }
      }
    \right\}
  \end{equation}
  of systems $X(-)$ of objects $X(H)$ of  $\mathcal{C}$ parameterized by the category of orbit spaces $G/H$ of the group $G$ (Def. \ref{OrbitCategory})
  whose weak equivalences are those natural transformations \eqref{PresheavesOnOrbitCatgegory}
  \begin{equation}
    \xymatrix@R=6pt{
      & X_1(-) \ar[rr]^-{ F(-) } && X_2(-)
      \\
      G/H_1 \ar[dd]_g & X_1(H_1) \ar[rr]^{F(H_1)} && X_2(H_1)
      \\
      \\
      G/H_2  & X_1(H_2) \ar[rr]^{F(H_1)} \ar[uu]^{ X_1(g) } && X_2(H_2) \ar[uu]_{ X_2(g) }
    }
  \end{equation}
 such that for each subgroup $H \subset G$ the component
  $F(H)$ from \eqref{PresheafMapComponent} is a weak equivalence of $\mathcal{C}$ (an element of $W$).
  We denote  the resulting homotopy category (Def. \ref{CategoryHomotopy}) by
  $$
    \mathrm{Ho}\left(
      G \mathcal{C}
    \right)
    \;:=\;
    \mathrm{Ho}\Big(
      \mathrm{PSh}\big(
        \mathrm{Orb}_G, \mathcal{C}
      \big)
      \big[
        \left\{
          \mbox{subgroup-wise weak equivalences of $\mathcal{C}$}
        \right\}^{-1}
      \big]
    \Big).
  $$
  The operation that extracts from systems \eqref{SystemsOfFixedPointsInAbstractHomotopyTheory} of
  fixed point loci in $\mathcal{C}$ just the total spaces, hence ``forgetting'' the group action, constitutes
  a ``forgetful functor'' from
  $G$-equivariant abstract homotopy theory to the underlying plain homotopy theory:
  \begin{equation}
    \label{ForgetfulFromGEquivariantToUnderlying}
    \raisebox{20pt}{
    \xymatrix{
      X(-)
      \ar@{|->}[d]
      &\in & \mathrm{Ho}\left( G\mathcal{C}\right)
      \ar[d]^{\mbox{\tiny forget equivariance}}
      \\
      X(1) & \in & \mathrm{Ho}(\mathcal{C})
    }
    }
  \end{equation}
\end{defn}
\begin{example}[Equivariant rational homotopy theory]
  \label{EquivariantrationalHomotopyTheory}
  The construction in Def. \ref{EquivariantAbstractHomotopyTheory}
  applied to dg-algebraic homotopy theory (Def. \ref{dgAlgebrasAnddgModules}) yields the homotopy theory shown on the right in \eqref{EquivariantRationnalHomotopyTheorydg}.
  The Sullivan equivalence (Prop. \ref{SullivanEquivalence}) extends to exhibit this as equivalent to the corresponding rational version of
  $G$-equivariant homotopy theory.
\begin{equation}
  \label{EquivariantRationnalHomotopyTheorydg}
  \mathrm{Ho}\left(
    G \mathrm{Spaces}_{\mathbb{Q}, \mathrm{nil}, \mathrm{fin}}
  \right)
  \;\simeq\;
  \mathrm{Ho}\left(
    \mathrm{PSh}\big(\mathrm{Orb}_G, \mathrm{dgAlg}^{\mathrm{op}}_{\mathrm{cn},\mathrm{fin}}\big)
    \big[\!
      \left\{\mbox{subgroup-wise quasi-isomorphisms}\right\}^{-1}
    \! \big] \!
  \right)\!.
\end{equation}
This is the DG-algebraic model for \emph{equivariant rational homotopy theory},
discussed in \cite{Scull01}\cite[Ch. III]{Ma96}\cite[Sec. 3.3 \& 3.4]{AP}.
\end{example}

The super-geometric homotopy theory that we need here is now obtained from Example \ref{EquivariantrationalHomotopyTheory}
by generalizing dgc-algebras to dgc-superalgebras (Def. \ref{RationalSuperHomotopyTheory}):
\begin{defn}[$G$-equivariant rational super homotopy theory]
  \label{GEquivariantRationalSuperHomotopy}
 For $G$ a finite group (or, more generally, a compact Lie group) the
 \emph{$G$-equivariant rational super homotopy theory} is the result of applying
 the general construction of $G$-equivariant homotopy theory
 from Example \ref{EquivariantrationalHomotopyTheory} to
 the rational super homotopy theory of Def. \ref{RationalSuperHomotopyTheory}:
$$
  \mathrm{Ho}\left(
    G{\color{blue}\mathrm{Super}}\mathrm{Spaces}_{\mathbb{R},\mathrm{nil},\mathrm{fin}}
  \right)
 \! :=
  \mathrm{Ho} \hspace{-1mm}\left(
    \mathrm{PSh}\left(
      \mathrm{Orb}_G, \mathrm{dgc}{\color{blue}\mathrm{Super}}\mathrm{Alg}^{\mathrm{op}_{\mathrm{cn},\mathrm{fin}}}
    \right)
    \hspace{-1mm}\big[\!
      \left\{\mbox{subgroup-wise quasi-isomorphisms}\right\}^{-1}
    \! \big] \!
  \right)\!.
$$
\end{defn}
\begin{example}[$G$-Spaces as $G$-Superspaces]
  \label{GSpacesAsGSuperspaces}
  Let $G$ be a finite group and let $A$ be  $G$-space (Def. \ref{GSpace}) such that for all subgroups $H \subset X$
  the fixed point space $A^H$ (Def. \ref{SpacesAssociatedWithAGAction}) is nilpotent and of finite rational type (Def. \ref{RationalHomotopyTheory}).
  Then the Sullivan equivalence (Prop. \ref{SullivanEquivalence}) implies that the system of fixed point spaces (Example \ref{ExternalYonedaFromGSpacesToPresheavesOnOrbitCategory}) of the rationalization of $A$, is equivalently given by the corresponding system of
  tuples of connected finite-type dgc-algebras

 \vspace{-2mm}
  $$
    \mathcal{O}\big( A^{(-)}\big)
    \;:\;
    \raisebox{28pt}{
    \xymatrix@R=1.2em{
      G/H_1
      \ar[dd]_{[g]}
      && \mathcal{O}(A^{H_1})
      \\
      \\
      G/H_2 && \mathcal{O}(A^{H_2}) \ar@{<-}[uu]_{\mathcal{O}(A^{[g]})}
    }
    }
  $$
  By Example \ref{RationalCohomotopySuperSpaces} this defines a rational $G$-superspace, which we will denote by the same symbol:
  $$
    A \in \mathrm{Ho}\left( G \mathrm{SuperSpaces}_{\mathbb{R}}\right)
    .
  $$
\end{example}
\begin{example}[Super Lie algebras with $G$-action as $G$-Superspaces]
  \label{SuperLieAlgebrasWithGActionAsGSuperspaces}
  Let $\mathfrak{g} \simeq_{\mathbb{R}} \mathfrak{g}_{\mathrm{even}} \oplus \mathfrak{g}_{\mathrm{odd}}$ be a finite-dimensional super Lie algebra,
  regarded as a rational superspace as in Example \ref{SuperLieAlgebraAsSuperSpaces}. Consider an action of a finite group $G$
  on $\mathfrak{g}$ by super Lie algebra automorphisms
  $$
    \rho(g)
    \;:\;
    \mathfrak{g}  \overset{\simeq}{\longrightarrow} \mathfrak{g}
    \,,
    \phantom{AA}
    g \in G
    \,.
  $$
  These are, equivalently, linear actions on the underlying real vector spaces $\mathfrak{g}_{\mathrm{even}}$
   and $\mathfrak{g}_{\mathrm{odd}}$, respecting the super Lie bracket. Hence, for each subgroup $H \subset G$,
   the fixed point spaces \eqref{FixedPointSpace} constitute a super Lie subalgebra:
  $$
    \mathfrak{g}^H
      \simeq
    (\mathfrak{g}_{\mathrm{even}})^H
      \oplus
    (\mathfrak{g}_{\mathrm{odd}})^H \subset \mathfrak{g}
    \,.
  $$
  As in Example \ref{FixedPointSpaceRepresented} \eqref{InducedMapOnFixedPoints}, this yields a system of fixed super Lie algebras, indexed by the orbit category (Def. \ref{OrbitCategory}):
   \vspace{-3mm}
  \begin{equation}
    \label{SuperLieAlgebraSysyemOfFixedLoci}
    \xymatrix@R=6pt{
      \mathrm{Orb}_G^{\mathrm{op}}
      \ar[rr]^-{ \mathfrak{g}^{(-)} }
      &&
      \mathrm{SuperLieAlg}
      \\
      G/H_1
      \ar[dd]_{g}
      &&
      \mathfrak{g}^{H_1}
      \\
      \\
      G/H_2
      &&
      \mathfrak{g}^{H_2} \ar[uu]_{ \rho(g) }
    }
  \end{equation}
  Via \eqref{SuperLieAlgebrasToSuperSpaces} this represents a $G$-superspace (Def. \ref{GEquivariantRationalSuperHomotopy}), which we will denote by
  the same symbol (see Remark \ref{Ambiguity}):
  $$
    \mathfrak{g}
    \;\in\;
    \mathrm{Ho}\left(
      G \mathrm{SuperSpaces}_{\mathbb{R}}
    \right).
  $$
\end{example}

Finally we may combine these examples to say what it means to enhance super-cocycles to equivariant rational super homotopy theory
(see Def. \ref{EnhancementToEquivariantCohomology} and Remark \ref{EquivariantEnhancementIsExtraStrujcture} for
the corresponding discussion in plain homotopy theory):
\begin{example}[Equivariant enhancement of super-cocycles]
  \label{EquivariantEnhancementOfSuperCocycles}
  Consider
  \begin{itemize}
  \vspace{-2mm}
    \item $\mathfrak{g} \in \mathrm{SuperLieAlg} \longrightarrow \mathrm{Ho}\left(\mathrm{SuperSpaces}_{\mathbb{R}}\right)$
     a super Lie algebra, regarded as a superspace via Example \ref{SuperLieAlgebraAsSuperSpaces};
   \vspace{-2mm}
    \item $A \in \mathrm{Spaces}_{\mathrm{nil},\mathrm{fin}} \longrightarrow \mathrm{Ho}\left( \mathrm{SuperSpaces}_{\mathbb{R}} \right)$
     a nilpotent space of finite rational type, regarded as a superspace via Example \ref{RationalCohomotopySuperSpaces};
    \vspace{-2mm}
    \item
     $\mathfrak{g} \overset{\mu}{\longrightarrow}\; A  \in \mathrm{Ho}\left( \mathrm{SuperSpaces}_{\mathbb{R}} \right)$
     a morphism between these in the homotopy category, hence (Def. \ref{CohomoloyFromHomotopy}) a class in the
     super rational $A$-cohomology of $\mathfrak{g}$.
  \end{itemize}
  Then for $G$ a finite group, a \emph{$G$-equivariant enhancement} of $\mu$ is a lift through the forgetful functor \eqref{ForgetfulFromGEquivariantToUnderlying}
  \vspace{-2mm}
  $$
    \xymatrix@C=5pt@R=1em{
      \mbox{
        \color{gray}
        \tiny
        \begin{tabular}{c}
          equivariant
          \\
          enhancement
        \end{tabular}
      }
      &
      \color{gray}
      \mathfrak{g}
      \ar@[gray][rr]^-{\color{gray}\mu}
      &&
      \color{gray} A
       &\in & \mathrm{Ho}\left(G\mathrm{SuperSpaces}_{\mathbb{R}}\right)
      \ar[dd]^{\mbox{ \tiny forget $G$-equivariance }}
      \\
      \\
      \mbox{\tiny cocycle}
      &
      \mathfrak{g}
      \ar[rr]^-{\mu}
      &&
      A
      &\in & \mathrm{Ho}\left(\mathrm{SuperSpaces}_{\mathbb{R}}\right)
    }
  $$
  By unwinding Def. \ref{EquivariantAbstractHomotopyTheory},
  such an enhancement amounts to and encodes all of the following extra data:
  \begin{enumerate}
  \vspace{-2mm}
    \item a system $A(-)$ of super-spaces indexed by the orbit category, with $A(1) = A$ the given coefficient object, for example given by an actual $G$-space structure on $A$ (Def. \ref{GSpace}) on the topological space $A$, via Example \ref{GSpacesAsGSuperspaces};
    \vspace{-2mm}
    \item a system $\mathfrak{g}(-)$ of super-spaces indexed by the orbit category, with $\mathfrak{g}(1) = \mathfrak{g}$ the given domain object, for example given be an actual action of $G$ by super Lie algebra automorphisms,
    as in Example \ref{SuperLieAlgebrasWithGActionAsGSuperspaces};
    \vspace{-2mm}
    \item compatible $G$-equivariant cocycle structure on $\mu$, which means:
     \begin{enumerate}
     \vspace{-2mm}
       \item for each non-trivial subgroup $H \subset G$ a super-cocycle of the form
       $$
         \mu(H) \;:\; \mathfrak{g}(H) \overset{}{\longrightarrow} A(H)
         \;\;\;\;
         \in
         \mathrm{Ho}\left(  \mathrm{SuperSpaces}_{\mathbb{R}} \right);
       $$
       \vspace{-6mm}
      \item for each $G$-equivariant map $G/H_1 \overset{f}{\longrightarrow} G/H_2$ between orbit spaces
       a choice of homotopy\footnote{
      This follows by representing fibrant resolutions in terms of homotopies, as in Example \ref{HomomorphismsOfSystemsOfFixedPointsUpToHomotopy}, in particular \eqref{ExampleForFibrantResolution},\eqref{HomotopyFromPathSpace}.
    }
      \begin{equation}
        \label{HomotopyDataForEnhancements}
        \raisebox{40pt}{
        \xymatrix@R=1.3em{
           G/H_1
           \ar[dd]_f
           &&
           \mathfrak{g}(H_1) \ar[rr]^{ \mu(H_1) }_<{\ }="s"    && A(H_1)
           \\
           \\
           G/H_2
           &&
           \mathfrak{g}(H_2) \ar[uu]^{\mathfrak{g}(f)} \ar[rr]_{\mu_{H_2}}^>{\ }="t" && A(H_2) \ar[uu]_{\mathfrak{g}(f)}
           \ar@{=>}^{\mu(f)} "s"; "t"
        }
        }
      \end{equation}
     \end{enumerate}
     such that the assignment $f \mapsto \mu(f)$ respects composition and identities.
  \end{enumerate}
\end{example}

In particular we may study the possible equivariant enhancements, as in Example \ref{EquivariantEnhancementOfSuperCocycles},
of the M2/M5-brane cocycle from Prop. \ref{M2M5SuperCocycle}.
This is what we turn to next.

\section{Real ADE-Singularities in Super-spacetimes}
  \label{ADESingularitiesInSuperSpacetime}

Here we classify finite group actions on 11-dimensional superspacetime $\mathbb{R}^{10,1\vert \mathbf{32}}$
(Def.\ \ref{MinkowskiSuper}) which have the same bosonic fixed locus as an involution (this is
Prop. \ref{ClassificationOfZ2Actions} below), and whose full super fixed point locus
\begin{equation}
  \label{Superembedding}
  \xymatrix{
    \mathbb{R}^{p,1\vert\mathbf{N}}
    ~\ar@{^{(}->}[rr]
    &&
    \mathbb{R}^{10,1\vert \mathbf{32}}
    \ar@(ul,ur)^{G}
  }
\end{equation}
is at least $\geq \sfrac{1}{4}$-BPS (Def.\ \ref{BPSSuperLieSubalgebra}).
These inclusions \eqref{Superembedding} of
BPS super fixed loci are \emph{superembeddings}
in the sense of \cite{Sorokin99, Sorokin01}.

\medskip
Using results of \cite{MFFGME10}, \cite[Sec. 8.3]{MF10},
we find that these are, if orientation-preserving, given by subgroups of $\mathrm{SU}(2)$, hence by finite groups in the ADE-series
(this is Prop. \ref{M2AndMK6ADE} below). We collect this classification as Theorem \ref{SuperADESingularitiesIn11dSuperSpacetime} below.
The corresponding quotient spaces constitute a supergeometric refinement of the du Val singularities in Euclidean space (Sec. \ref{TheMK6}).
By intersecting several such simple singularities one obtains more actions of interest. We discuss some of these
non-simple `super singularities' in Prop. \ref{NonSimpleRealSingularities}.
In the following Sec. \ref{ADEEquivariantMBraneSuperCucycles} we demonstrate how these fixed point superspaces
appear as part of the data of the equivariant cohomotopy of 11d superspacetime. Earlier, in Sec. \ref{TheFundamentalBraneScan},
we gave a  physical interpretation of these results: we interpret them as black brane species localized at singularities.

\medskip
In what follows, as we work with the super Minkowski spacetimes $\R^{p,1|{\bf N}}$,
we will often refer to the even part $\R^{p,1}$ as \emph{bosonic} and the odd part
${\bf N}$ as \emph{fermionic}.

\begin{defn}[Super singularities]
  \label{SimpleSingularities}
  Let $G$ be a finite group acting on super-Minkowski spacetime $\mathbb{R}^{p,1\vert \mathbf{N}}$ (Def.\ \ref{MinkowskiSuper})
  by isometries (Example \ref{SuperLieAlgebrasWithGActionAsGSuperspaces}).
  If every non-trivial subgroup $\{e\} \neq H \subset G$ has the same bosonic fixed space, we say that this action exhibits
  a \emph{simple singularity}. Otherwise we call it \emph{non-simple}.
\end{defn}
\begin{example}[Systems of fixed subspaces and brane intersections]
  \label{ConstantSystemsOfFixedSubspaces}
Consider a system of fixed subspaces $\left( \mathbb{R}^{10,1\vert \mathbf{32}}\right)^{(-)}$ (see \eqref{SuperLieAlgebraSysyemOfFixedLoci})
indexed by the orbit category (Def.\ \ref{OrbitCategory}), that corresponds to a given action on super Minkowski spacetime (Example \ref{SuperLieAlgebrasWithGActionAsGSuperspaces}).
In terms of this systems of fixed subspaces, the singularity is \emph{simple}, according to Def.\ \ref{SimpleSingularities},
if the system of underlying bosonic subspaces is a \emph{constant functor} on the orbit category, away from the trivial subgroup. In the following diagram, we display such a functor with constant bosonic value $\R^{p,1}$:

\vspace{-.2cm}

$$
  \xymatrix@C=12pt@R=7pt{
    & \mathrm{Orb}^{\mathrm{op}}_G
    \ar[rrrrrr]^{ \left( \mathbb{R}^{10,1\vert \mathbf{32}}\right)^{(-)} }
    &&& &&&
    \mathrm{SuperSpaces}_{\mathbb{R}}
    \\
    & G/\{e\}
    \ar@{->>}[dr]
    \ar@{->>}[dl]
    && &&&& \mathbb{R}^{10,1\vert \mathbf{32}} \; 
    \\
    G/H_1
    \ar@{->>}[dr]
    &\cdots &
    G/H_n
    \ar@{->>}[dl]
    &&\longmapsto&&
    \mathbb{R}^{p, 1 \vert \mathbf{N_1}}
    \;\ar@{^{(}->}[ur]
    &\cdots&
    \mathbb{R}^{p, 1 \vert \mathbf{N_n}}
    \; \ar@{_{(}->}[ul]
    \\
    & G/G
    &&& &&&
    \mathbb{R}^{p, 1 \vert \mathbf{N_{\rm int}}}
    \ar@{=}[ul]
    \ar@{=}[u]
    \ar@{=}[ur]
  }
$$
We may interpret a simple singularity as reflecting an `elementary brane'.

\medskip
\noindent {\bf System of fixed subspaces of a \emph{simple singularity}}, in the special case when not only the bosonic fixed space,
but also the preserved spinors are independent of which non-trivial subgroup acts (for instance the case of the MK6 in Theorem \ref{SuperADESingularitiesIn11dSuperSpacetime}).

\medskip
On the other hand, a non-simple singularity corresponds to a system of fixed subspaces that is non-constant even away from the
trivial subgroup. Here is such a functor:

$$
  \xymatrix@C=12pt@R=7pt{
    & \mathrm{Orb}^{\mathrm{op}}_G
    \ar[rrrrrr]^{ \left( \mathbb{R}^{10,1\vert \mathbf{32}}\right)^{(-)} }
    &&& &&&
    \mathrm{SuperSpaces}_{\mathbb{R}}
    \\
    & G/\{e\}
    \ar@{->>}[dr]
    \ar@{->>}[dl]
    && &&&& \mathbb{R}^{10,1\vert \mathbf{32}}
    \\
    G/H_1
    \ar@{->>}[dr]
    &\cdots &
    G/H_n
    \ar@{->>}[dl]
    &&\longmapsto&&
    \mathbb{R}^{p_1, 1 \vert \mathbf{N}_1}
    \ar@{^{(}->}[ur]
    &\cdots&
    \mathbb{R}^{p_n, 1 \vert \mathbf{N}_n}
    \ar@{_{(}->}[ul]
    \\
    & G/G
    &&& &&&
    \mathbb{R}^{p_{\mathrm{int}}, 1 \vert \mathbf{N}_{\mathrm{int}}}
    \ar@{_{(}->}[ul]
    \ar@{^{(}->}[u]
    \ar@{^{(}->}[ur]
  }
$$
We have labeled the subspace $\R^{p_{\rm int}|{\bf N_{\rm int}}}$, fixed by the entire group action, by \emph{int} for intersection. This is because we interpret non-simple singularities as `intersecting branes'.

\medskip
\noindent {\bf System of fixed super subspaces of a \emph{non-simple singularity}.}
The fixed locus $\mathbb{R}^{p_{\mathrm{int}},1\vert \mathbf{N}_{\mathrm{int}}}$ of the full group $G$ is exhibited as the non-trivial intersection of
fixed loci of some non-trivial subgroups $H_k$, as in Def.\ \ref{IntersectionOfSimpleSingularities} below; for instance
the intersection $\mathrm{M2} \dashv \mathrm{M5}$ in Prop. \ref{NonSimpleRealSingularities} below.
\end{example}

In the following table, the notation $(\mathbb{Z}_2)_{L,R}$ refers to the induced action of the center of $\mathrm{SU}(2)_{L,R}$,
while  $\overset{f}{\subset}$ denotes a subgroup inclusion which factors through a given group homomorphism $f$. We will also
encounter $\Delta$, which denotes the diagonal, and $\tau$, which denotes
a non-trivial outer automorphism.


\vspace{1cm}
\hspace{-1cm}
\hypertarget{SingularitiesTable}{
\scalebox{.8}{
\begin{tabular}{|lc|ccc|lc|ccccccccc|}
  \hline
  \multicolumn{2}{|c}{
  \multirow{2}{*}{
  \begin{tabular}{c}
    \bf Black brane
    \\
    \bf species
  \end{tabular}
  }
  }
  &
  \multicolumn{3}{|c|}{
  \multirow{2}{*}{
    {\bf BPS}
  }
  }
  &
  \multicolumn{2}{c|}{
    \begin{tabular}{c}
    \bf Fixed
    \\
    \bf locus
    \end{tabular}
  }
  &
  \begin{tabular}{c}
    {\bf Type of}
    \\
    {\bf singularity}
  \end{tabular}
  &
  \multicolumn{8}{l|}{\hspace{2.5cm} {\bf Intersection law} }
  \\
  &&&&&
  \multicolumn{2}{|c|}{
    {\bf in} $\mathbb{R}^{10,1\vert \mathbf{32}}$
  }
  &
   {\bf in} $\mathbb{R}^{10,1}$
  & $\simeq$
  & $\mathbb{R}^{1,1}$
  & $\!\!\!\!\!\!\!\!\oplus\!\!\!\!\!\!\!\!$
  & $\mathbb{R}^4$
  & $\!\!\!\!\!\!\!\!\oplus\!\!\!\!\!\!\!\!$
  & $\mathbb{R}^4$
  & $\!\!\!\!\!\!\!\!\oplus\!\!\!\!\!\!\!\!$
  & $\mathbb{R}^1$
  \\
  \hline
  \multicolumn{7}{l}{ \bf Black brane species  $ \phantom{ {{{A^A}^A}^A}^A} $  }
  &
  \multicolumn{9}{l}{ \bf Simple singularities }
  \\
  \hline
  $\mathrm{MO9}$ &
  &
  $\sfrac{1}{2}$ &&
  &
  $\mathbb{R}^{9,1\vert \mathbf{16}}$
  &
  &
  $\mathbb{Z}_{2}$
  &
  $\!\!\overset{\phantom{\Delta}}{=}\!\!$
  &
  \multicolumn{5}{c}{
    --------------------------------------------
  }
  &
  &
  $(\mathbb{Z}_2)_{\mathrm{HW}}$
  \\
  \hline
  $\mathrm{MO5}$ &
  &
  $\sfrac{1}{2}$ &&
  &
  $\mathbb{R}^{5,1\vert 2 \cdot \mathbf{8}}$ &
  &
  $\mathbb{Z}_{2}$
  &
  $\!\!\overset{\Delta}{\subset}\!\!$
  &
  \multicolumn{4}{c}{
    ------------------------------
  }
  &
  $(\mathbb{Z}_2)_R$
  &
  $\!\!\!\!\times\!\!\!\!\!$
  &
  $(\mathbb{Z}_2)_{\mathrm{HW}}$
  \\
  \hline
  $\mathrm{MO1}$ &
  &
  $\sfrac{1}{2}$ &&
  &
  $\mathbb{R}^{1,1\vert 16 \cdot \mathbf{1}}$ &
  &
  $\mathbb{Z}_{2}$
  &
  $\!\!\overset{\Delta}{\subset}\!\!$
  &
  ------
  &
  &
  $(\mathbb{Z}_2)_L$
  &
  $\!\!\!\!\times\!\!\!\!\!$
  &
  $(\mathbb{Z}_2)_R$
  &
  $\!\!\!\!\times\!\!\!\!\!$
  &
  $(\mathbb{Z}_2)_{\mathrm{HW}}$
  \\
  \hline
  \hline
  $\mathrm{MK6}$ &
  &
  $\sfrac{1}{2}$ &&
  &
  $\mathbb{R}^{6,1\vert \mathbf{16}}$ &
  &
  \begin{tabular}{c}
    $\mathbb{Z}_{n+1}, 2 \mathbb{D}_{n+2}$,
    \\
    $2T, 2O  , 2I$
  \end{tabular}
  &
  $\!\!\overset{}{\subset}\!\!$
  &
  \multicolumn{4}{c}{
   ------------------------------
  }
  &
  $\mathrm{SU}(2)_R$
  &
  &
  ------
  \\
  \hline
  $\mathrm{M2}$ &
  &
  $\sfrac{1}{2}$  &\hspace{-2.2cm}   $=\!\!\!\!\!\!$& \hspace{-1.6cm} $\sfrac{8}{16}$
  &
  $\mathbb{R}^{2,1\vert 8 \cdot \mathbf{2}}$ &
  &
  $\mathbb{Z}_{2}$
  &
  $\!\!\overset{\Delta}{\subset}\!\!$
  &
  ------
  &&
  $\mathrm{SU}(2)_L$
  &
  $\!\!\!\!\times\!\!\!\!\!$
  &
  $\mathrm{SU}(2)_R$
  &&
  ------
  \\
  \hline
  $\mathrm{M2}$ &
  &
  && \hspace{-1.6cm} $\sfrac{6}{16}$
  &
  $\mathbb{R}^{2,1\vert 6 \cdot \mathbf{2}}$ &
  &
  $\mathbb{Z}_{n+3}$
  &
  $\!\!\overset{\Delta}{\subset}\!\!$
  &
  ------
  &
  &
  $\mathrm{SU}(2)_L$
  &
  $\!\!\!\!\!\!\!\!\times\!\!\!\!\!\!\!\!\!$
  &
  $\mathrm{SU}(2)_R$
  &&
  ------
  \\
  \hline
  $\mathrm{M2}$ &
  &
  &&  \hspace{-1.6cm} $\sfrac{5}{16}$
  &
  $\mathbb{R}^{2,1\vert 5 \cdot \mathbf{2}}$ &
  &
  \begin{tabular}{c}
    $2 \mathbb{D}_{n+2}$,
    \\
    $2T, 2O  , 2I$
  \end{tabular}
  &
  $\!\!\overset{\Delta}{\subset}\!\!$
  &
  ------
  &
  &
  $\mathrm{SU}(2)_L$
  &
  $\!\!\!\!\!\times\!\!\!\!\!$
  &
  $\mathrm{SU}(2)_R$
  &&
  ------
  \\
  \hline
  $\mathrm{M2}$ &
  &
  $\sfrac{1}{4}$ && \hspace{-1.8cm}$=\sfrac{4}{16}$
  &
  $\mathbb{R}^{2,1\vert 4 \cdot \mathbf{2}}$ &
  &
  \begin{tabular}{c}
    $2 \mathbb{D}_{n+2}$,
    \\
    $2O  , 2I$
  \end{tabular}
  &
  $\!\!\overset{(\mathrm{id},\tau)}{\subset}\!\!$
  &
  ------
  &&
  $\mathrm{SU}(2)_L$
  &
  $\!\!\!\!\times\!\!\!\!\!$
  &
  $\mathrm{SU}(2)_R$
  &&
  ------
  \\
  \hline
  \multicolumn{7}{l}{ \bf Bound/intersecting brane species  $ \phantom{ {{{A^A}^A}^A}^A} $  }
  &
  \multicolumn{9}{l}{ \bf Non-simple singularities }
  \\
  \hline
  \multirow{3}{*}{ $\mathrm{M2}$ }
  & $\mathrm{M2}$
  &
  \multirow{3}{*}{
    $\sfrac{1}{4}$
  }
  &&&
  \multirow{3}{*}{
    $\mathbb{R}^{2,1\vert 4 \cdot \mathbf{2}}$
  }
  & $\mathbb{R}^{2,1\vert 5 \cdot \mathbf{2}}$
  &
  \begin{tabular}{c}
    $2\mathbb{D}_{n+2}$
    \\
    $ 2T, 2O, 2I$
  \end{tabular}
  &
   $\overset{
     \Delta
   }{\subset}$
  &
  &&
  $ \mathrm{SU}(2)_L $
  &
  $\!\!\!\!\!\times\!\!\!\!\!$
  &
  $\mathrm{SU}(2)_R$
  &&
  \\
  &
  \raisebox{5pt}{
  \hspace{4pt}\begin{rotate}{90} $\Vert$ \end{rotate}
  }
  &
  &&
  &
  &
  \raisebox{5pt}{
  \begin{rotate}{90} $\Vert$ \end{rotate}
  }
  &
  $\!\!\!\!\times\!\!\!\!\!$
  &&
  ------
  &&&&&&
  ------
  \\
  &
  $\mathrm{MK6}$
  &&&
  &&
    $\mathbb{R}^{6,1\vert \mathbf{16}}$
  &
  $\mathbb{Z}_2$
  &
  $=$
  &&
  &
  &&
  $ (\mathbb{Z}_2)_R $
  &&
  \\
  \hline
  \multirow{3}{*}{
    $\mathrm{NS1}_H$
  }
  & $\mathrm{M2}$
  &
  \multirow{3}{*}{
    $\sfrac{3}{16}$ or $\sfrac{1}{4}$
  }
  &&
  &
  & $\mathbb{R}^{2,1\vert \geq 6 \cdot \mathbf{2}}$
  &
  $\mathbb{Z}_{n+1}$
  &
    $\overset{\Delta}{\subset}$
  &
  &&
   $ \mathrm{SU}(2)_L $
  &
  $\!\!\!\!\times\!\!\!\!\!$
  &
    $ \mathrm{SU}(2)_R $
  &&
  \\
  & $\bot$
  &
  &&&
  $\mathbb{R}^{1,1\vert \geq 6 \cdot \mathbf{1}}$ &  $\bot$
  &
  $\!\!\!\!\times\!\!\!\!\!$
  &&
  ------
  &&&&&&
  \\
  & $\mathrm{MO9}_H$
  &&&&
  & $\mathbb{R}^{9,1\vert \mathbf{16}}$
  &
  $\mathbb{Z}_2$
  & $=$ &
  &&&&&&
  $(\mathbb{Z}_2)_{\mathrm{HW}}$
  \\
  \hline
  \multirow{3}{*}{ $\mathrm{M1}$ }
  & $\mathrm{M2}$
  &
  \multirow{3}{*}{
    $\sfrac{3}{16}$ or $\sfrac{1}{4}$
  }
  &&
  &
  \multirow{3}{*}{ $\mathbb{R}^{1,1\vert \geq 6 \cdot \mathbf{1}}$  }
  & $ \mathbb{R}^{2,1\vert \geq 6 \cdot \mathbf{2}}$
  &
  $\mathbb{Z}_{n+1}$
  &
    $\overset{\Delta}{\subset}$
  &
  &&
   $ \mathrm{SU}(2)_L $
  &
  $\!\!\!\!\times\!\!\!\!\!$
  &
    $ \mathrm{SU}(2)_R $
  &&
  \\
  & $\bot$
  &
  &&&
   &  $\bot$
  &
  $\!\!\!\!\times\!\!\!\!\!$
  &&
  ------
  &&&&&&
  \\
  &  $\mathrm{MO5}$
  &&&&
  & $\mathbb{R}^{5,1\vert 2 \cdot \mathbf{8}}$
  &
  $\mathbb{Z}_2$
  & $ \overset{\Delta}{\subset} $
  &
  &&
  &
  &
  $ (\mathbb{Z}_2)_R $
  &
  $\!\!\!\!\times\!\!\!\!\!$
  &
  $(\mathbb{Z}_2)_{\mathrm{HW}}$
  \\
  \hline
  \multirow{3}{*}{ $\mathrm{MW}_{\mathrm{ADE}}$ }
  & $\mathrm{M2}$
  &
  \multirow{3}{*}{
    $\sfrac{3}{16}$ or $\sfrac{1}{4}$
  }
  &&
  &
  \multirow{3}{*}{ $\mathbb{R}^{1,1\vert \geq 6 \cdot \mathbf{1}}$ }
  &
   $ \mathbb{R}^{2,1\vert \geq 6 \cdot \mathbf{2}}$
  &
  $\mathbb{Z}_{n+1}$
  &
    $\overset{\Delta}{\subset}$
  &
  &&
   $ \mathrm{SU}(2)_L $
  &
  $\!\!\!\!\times\!\!\!\!\!$
  &
    $ \mathrm{SU}(2)_R $
  &&
  \\
  &
  \hspace{4pt}\begin{rotate}{90} $\Vert$ \end{rotate}
  &
  &&&
  &
  \begin{rotate}{90} $\Vert$ \end{rotate}
  &
  $\!\!\!\!\times\!\!\!\!\!$
  &&
  ------
  &&&&&&
  \\
  & $\mathrm{MO1}$
  &&&&
  & $\mathbb{R}^{1,1\vert 16 \cdot \mathbf{1}}$
  &
  $\mathbb{Z}_2$
  & $ \overset{\Delta}{\subset} $ &
  &&
  $(\mathbb{Z}_2)_L$
  &
  $\!\!\!\!\times\!\!\!\!\!$
  &
  $(\mathbb{Z}_2)_R$
  &
  $\!\!\!\!\times\!\!\!\!\!$
  &
  $(\mathbb{Z}_2)_{\mathrm{HW}}$
  \\
  \hline
  \multirow{3}{*}{ $\mathrm{M5}_{\mathrm{ADE}}$ }
  & $\mathrm{MO5}$
  &
  \multirow{3}{*}{
    $\sfrac{1}{4}$
  }
  &&
  &
  \multirow{3}{*}{ $\mathbb{R}^{5,1\vert 1 \cdot \mathbf{8}}$ }
  &
   $ \mathbb{R}^{5,1\vert 2 \cdot \mathbf{8}}$
  &
  $\mathbb{Z}_{2}$
  &
    $\overset{\Delta}{\subset}$
  &
  &&
  &
  &
   $ (\mathbb{Z}_2)_R $
  &
  $\!\!\!\!\times\!\!\!\!\!$
  &
  $ (\mathbb{Z}_2)_{\mathrm{HW}} $
  \\
  &
  \hspace{4pt}\begin{rotate}{90} $\Vert$ \end{rotate}
  &
  &&&
  &
  \hspace{4pt}\begin{rotate}{90} $\Vert$ \end{rotate}
  &
  $\!\!\!\!\times\!\!\!\!\!$
  &&
  \multicolumn{4}{c}{
    ------------------------------
  }
  &&
  &
  \\
  & $ \mathrm{MK6} $
  &&&&
  & $\mathbb{R}^{6,1\vert \mathbf{16}}$
  &
  $
  \begin{tabular}{c}
    $\mathbb{Z}_{n+1}, 2 \mathbb{D}_{n+2}$,
    \\
    $2T, 2O  , 2I$
  \end{tabular}
  $
  & $ \subset $ &
  &&
  &&
  $ \mathrm{SU}(2)_R $
  &
  &
  \\
  \hline
  \multirow{3}{*}{ $ \mathrm{NS5}_H $ }
  & $\mathrm{MO9}_H$
  &
  \multirow{3}{*}{
    $\sfrac{1}{4}$
  }
  &&
  &
  \multirow{3}{*}{ $\mathbb{R}^{5,1\vert 1 \cdot \mathbf{8}}$  }
  &
   $ \mathbb{R}^{9,1\vert \mathbf{16}}$
  &
  $\mathbb{Z}_{2}$
  &
    $ \overset{\phantom{\Delta}}{=}$
  &
  &&
  &
  &
  &
  &
  $ (\mathbb{Z}_2)_{\mathrm{HW}} $
  \\
  &
  \hspace{4pt}\begin{rotate}{90} $\Vert$ \end{rotate}
  &
  &&&
  &
  \begin{rotate}{90} $\Vert$ \end{rotate}
  &
  $\!\!\!\!\times\!\!\!\!\!$
  &&
  \multicolumn{4}{c}{
    ------------------------------
  }
  &&
  &
  \\
  & $ \mathrm{M5} $
  &&&&
  & $\mathbb{R}^{5,1\vert \mathbf{16}}$
  &
  $
  \mathbb{Z}_2
  $
  & $ \overset{\Delta}{\subset} $ &
  &&
  &&
  $ (\mathbb{Z}_2)_R $
  &
  $\!\!\!\!\times\!\!\!\!\!$
  &
  $ (\mathbb{Z}_2)_{\mathrm{HW}} $
  \\
  \hline
  \multirow{3}{*}{ $\tfrac{1}{2}\mathrm{NS5}_I$ }
  & $\mathrm{MO9}_{I}$
  &
  \multirow{3}{*}{
    $\sfrac{1}{4}$
  }
  &&
  &
  \multirow{3}{*}{ $\mathbb{R}^{5,1\vert 1 \cdot \mathbf{8}}$ }
  &
   $ \mathbb{R}^{9,1\vert 2 \cdot \mathbf{16}}$
  &
  $\mathbb{Z}_{2}$
  &
    $\overset{\phantom{\Delta}}{=}$
  &
  &&
  &
  &
  &
  &
  $ (\mathbb{Z}_2)_{\mathrm{HW}} $
  \\
  &
  \raisebox{-3pt}{$\top$}
  &
  &&&
  &
  \raisebox{-1pt}{$\top$}
  &
  $\!\!\!\!\times\!\!\!\!\!$
  &&
  \multicolumn{4}{c}{
    ------------------------------
  }
  &&
  &
  \\
  & $ \mathrm{MK6} $
  &&&&
  & $\mathbb{R}^{6,1\vert \mathbf{16}}$
  &
  $
  \begin{tabular}{c}
    $\mathbb{Z}_{n+1}, 2 \mathbb{D}_{n+2}$,
    \\
    $2T, 2O  , 2I$
  \end{tabular}
  $
  & $ \subset $ &
  &&
  &&
  $ \mathrm{SU}(2)_R $
  &
  &
  \\
  \hline
\end{tabular}
}
}
\vspace{.3cm}

\noindent {\footnotesize {\bf Table 1. Singularities in $D = 11$, $\mathcal{N} = 1$ super Minkowski spacetime.}
The \emph{simple} singularities
in the top half (Def.\ \ref{SimpleSingularities}) are classified by Theorem \ref{SuperADESingularitiesIn11dSuperSpacetime}. The \emph{non-simple} singularities
in the bottom part are the intersections of the former, established in Prop. \ref{NonSimpleRealSingularities}.
The label ``black brane species'' attached with each type of singularity is explained in Sec. \ref{TheFundamentalBraneScan}.
The symbol ``$\Vert$'' indicates that two intersecting fixed loci are \emph{parallel}, in that one is contained in the other.
Otherwise we use ``$\perp$'' to indicate that they are \emph{perpendicular} to each other.
}

\newpage

\subsection{Simple super singularities and Single black brane species}
\label{SingleBlackBraneSpecies}

\begin{theorem}[Classification of simple real singularities in 11d super-Minkowski spacetime]
  \label{SuperADESingularitiesIn11dSuperSpacetime}
The following table classifies, up to conjugacy in $\mathrm{Pin}^+(10,1)$, the $\geq \sfrac{1}{4}$ BPS (Def.\ \ref{BPSSuperLieSubalgebra})
simple singularities (Def.\ \ref{SimpleSingularities})
in $D = 11$, $\mathcal{N} = 1$ super-Minkowski spacetime (Example \ref{ExamplesOfSuperMinkowskiSpacetimes})
which are fixed (Example \ref{SuperLieAlgebrasWithGActionAsGSuperspaces}) at least by a non-trivial $\mathbb{Z}_2$-action (as in Def.\ \ref{Z2ActionsBypBraneInvolutions} below).

\vspace{.3cm}

\hspace{-.8cm}
\scalebox{.9}{
\begin{tabular}{|lc|ccc|lc|ccccccccc|}
  \hline
  \multicolumn{2}{|c}{
  \multirow{2}{*}{
  \begin{tabular}{c}
    \bf Black brane
    \\
    \bf species
  \end{tabular}
  }
  }
  &
  \multicolumn{3}{|c|}{
  \multirow{2}{*}{
    {\bf BPS}
  }
  }
  &
  \multicolumn{2}{c|}{
    \begin{tabular}{c}
    \bf Fixed
    \\
    \bf locus
    \end{tabular}
  }
  &
  \begin{tabular}{c}
    {\bf Type of}
    \\
    {\bf singularity}
  \end{tabular}
  &
  \multicolumn{8}{l|}{ \hspace{2.5cm} {\bf Intersection law} }
  \\
  &&&&&
  \multicolumn{2}{|c|}{
    {\bf in} $\mathbb{R}^{10,1\vert \mathbf{32}}$
  }
  &
   {\bf in} $\mathbb{R}^{10,1}$
  & $\simeq$
  & $\mathbb{R}^{1,1}$
  & $\!\!\!\!\!\!\!\!\oplus\!\!\!\!\!\!\!\!$
  & $\mathbb{R}^4$
  & $\!\!\!\!\!\!\!\!\oplus\!\!\!\!\!\!\!\!$
  & $\mathbb{R}^4$
  & $\!\!\!\!\!\!\!\!\oplus\!\!\!\!\!\!\!\!$
  & $\mathbb{R}^1$
  \\
  \hline
  \multicolumn{7}{l}{ \bf Black brane species  $ \phantom{ {{{A^A}^A}^A}^A} $  }
  &
  \multicolumn{9}{l}{ \bf Simple singularities }
  \\
  \hline
  $\mathrm{MO9}$ &
  &
  $\sfrac{1}{2}$ &&
  &
  $\mathbb{R}^{9,1\vert \mathbf{16}}$
  &
  &
  $\mathbb{Z}_{2}$
  &
  $\!\!\overset{\phantom{\Delta}}{=}\!\!$
  &
  \multicolumn{5}{c}{
    --------------------------------------------
  }
  &
  &
  $(\mathbb{Z}_2)_{\mathrm{HW}}$
  \\
  \hline
  $\mathrm{MO5}$ &
  &
  $\sfrac{1}{2}$ &&
  &
  $\mathbb{R}^{5,1\vert 2 \cdot \mathbf{8}}$ &
  &
  $\mathbb{Z}_{2}$
  &
  $\!\!\overset{\Delta}{\subset}\!\!$
  &
  \multicolumn{4}{c}{
    ------------------------------
  }
  &
  $(\mathbb{Z}_2)_R$
  &
  $\!\!\!\!\times\!\!\!\!\!$
  &
  $(\mathbb{Z}_2)_{\mathrm{HW}}$
  \\
  \hline
  $\mathrm{MO1}$ &
  &
  $\sfrac{1}{2}$ &&
  &
  $\mathbb{R}^{1,1\vert 16 \cdot \mathbf{1}}$ &
  &
  $\mathbb{Z}_{2}$
  &
  $\!\!\overset{\Delta}{\subset}\!\!$
  &
  ------
  &
  &
  $(\mathbb{Z}_2)_L$
  &
  $\!\!\!\!\times\!\!\!\!\!$
  &
  $(\mathbb{Z}_2)_R$
  &
  $\!\!\!\!\times\!\!\!\!\!$
  &
  $(\mathbb{Z}_2)_{\mathrm{HW}}$
  \\
  \hline
  \hline
  $\mathrm{MK6}$ &
  &
  $\sfrac{1}{2}$ &&
  &
  $\mathbb{R}^{6,1\vert \mathbf{16}}$ &
  &
  \begin{tabular}{c}
    $\mathbb{Z}_{n+1}, 2 \mathbb{D}_{n+2}$,
    \\
    $2T, 2O  , 2I$
  \end{tabular}
  &
  $\!\!\overset{}{\subset}\!\!$
  &
  \multicolumn{4}{c}{
   ------------------------------
  }
  &
  $\mathrm{SU}(2)_R$
  &
  &
  ------
  \\
  \hline
  $\mathrm{M2}$ &
  &
  $\sfrac{1}{2}$ &$\!\!\!\!\!\!=\!\!\!\!\!\!$& $\sfrac{8}{16}$
  &
  $\mathbb{R}^{2,1\vert 8 \cdot \mathbf{2}}$ &
  &
  $\mathbb{Z}_{2}$
  &
  $\!\!\overset{\Delta}{\subset}\!\!$
  &
  ------
  &&
  $\mathrm{SU}(2)_L$
  &
  $\!\!\!\!\times\!\!\!\!\!$
  &
  $\mathrm{SU}(2)_R$
  &&
  ------
  \\
  \hline
  $\mathrm{M2}$ &
  &
  && $\sfrac{6}{16}$
  &
  $\mathbb{R}^{2,1\vert 6 \cdot \mathbf{2}}$ &
  &
  $\mathbb{Z}_{n+3}$
  &
  $\!\!\overset{\Delta}{\subset}\!\!$
  &
  ------
  &
  &
  $\mathrm{SU}(2)_L$
  &
  $\!\!\!\!\!\!\!\!\times\!\!\!\!\!\!\!\!\!$
  &
  $\mathrm{SU}(2)_R$
  &&
  ------
  \\
  \hline
  $\mathrm{M2}$ &
  &
  && $\sfrac{5}{16}$
  &
  $\mathbb{R}^{2,1\vert 5 \cdot \mathbf{2}}$ &
  &
  \begin{tabular}{c}
    $2 \mathbb{D}_{n+2}$,
    \\
    $2T, 2O  , 2I$
  \end{tabular}
  &
  $\!\!\overset{\Delta}{\subset}\!\!$
  &
  ------
  &
  &
  $\mathrm{SU}(2)_L$
  &
  $\!\!\!\!\!\times\!\!\!\!\!$
  &
  $\mathrm{SU}(2)_R$
  &&
  ------
  \\
  \hline
  $\mathrm{M2}$ &
  &
  $\sfrac{1}{4}$ &$\!\!\!\!\!\!=\!\!\!\!\!\!$& $\sfrac{4}{16}$
  &
  $\mathbb{R}^{2,1\vert 4 \cdot \mathbf{2}}$ &
  &
  \begin{tabular}{c}
    $2 \mathbb{D}_{n+2}$,
    \\
    $2O  , 2I$
  \end{tabular}
  &
  $\!\!\overset{(\mathrm{id},\tau)}{\subset}\!\!$
  &
  ------
  &&
  $\mathrm{SU}(2)_L$
  &
  $\!\!\!\!\times\!\!\!\!\!$
  &
  $\mathrm{SU}(2)_R$
  &&
  ------
  \\
  \hline
\end{tabular}
}

\vspace{.2cm}

\noindent Moreover, the actions on the underlying bosonic spacetime are as shown on the right of the table, induced by the vector space decomposition
\vspace{-5mm}
\begin{equation}
  \label{SpacetimeActions}
  \mathbb{R}^{10,1}
  \;\;\simeq \;\;
    \mathbb{R}^{1,1}
    \oplus
  \xymatrix{
    \mathbb{R}^4
    \ar@(ul,ur)[]^{ \mathrm{SU}(2)_L }
  }
    \oplus
  \xymatrix{
    \mathbb{R}^4
    \ar@(ul,ur)[]^{ \mathrm{SU}(2)_R }
  }
    \oplus
  \xymatrix{
    \mathbb{R}
    \ar@(ul,ur)[]^{ (\mathbb{Z}_2)_{\mathrm{HW}} }
  }
\end{equation}
where  both copies of $\mathrm{SU}(2)$ act via their defining action on $\mathbb{C}^2 \simeq_{\mathbb{R}} \mathbb{R}^4$,
while $(\mathbb{Z}_2)_{\mathrm{HW}}$ acts by multiplication by $-1$ on $\mathbb{R}^1$.
\end{theorem}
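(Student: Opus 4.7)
\medskip

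\noindent \textbf{Proof proposal.}
The plan is to decouple the classification into two independent analyses, aligned with the two horizontal blocks of the table: first the $\mathbb{Z}_2$-actions producing the MO-series (which are orientation-reversing on the transverse space), and then the larger finite-group actions (which are orientation-preserving and whose restriction to any non-trivial subgroup fixes the same bosonic locus as one of the $\mathbb{Z}_2$'s, in the sense of Def.\ \ref{SimpleSingularities}).

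First, for the $\mathbb{Z}_2$-case, a super-isometric involution of $\mathbb{R}^{10,1\vert \mathbf{32}}$ is determined by a pair $(\sigma,\widetilde\sigma)$ consisting of an orthogonal involution $\sigma \in \mathrm{O}(10,1)$ together with a lift $\widetilde\sigma \in \mathrm{Pin}^+(10,1)$ of order 2. Up to $\mathrm{O}(10,1)$-conjugacy, $\sigma$ is reflection in an orthogonal complement of signature $(p,1)$, so the bosonic fixed locus is $\mathbb{R}^{p,1}$; compatibility with the spinor-to-vector pairing $\overline{\psi}\Gamma^a\psi$ (the content of the $p$-brane involution condition invoked above in Lemmas~\ref{1braneFixed} and \ref{5braneFixed}) together with the $\geq \sfrac{1}{4}$-BPS condition will select the cases $p \in \{0,4,8\}$. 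This yields precisely MO1, MO5, MO9 with the BPS fractions and transverse-action decompositions indicated. The three cases are distinguished by their value of $p \mod 4$, which controls the dimension of the fixed spinor subspace via the standard Majorana reduction recalled in Sec.\ \ref{SpacetimesAndSpin}.

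Second, for the orientation-preserving finite groups $G$, the simple-singularity requirement forces $G$ to act fixed-point-freely (away from the origin) on the transverse Euclidean space, while acting trivially on the longitudinal $\mathbb{R}^{p,1}$. Combined with the requirement that $G$ contain a non-trivial $\mathbb{Z}_2$ fixing a given $\mathbb{R}^{p,1}$, we are reduced to two transverse situations as in the decomposition \eqref{SpacetimeActions}: the transverse $\mathbb{R}^4 \cong \mathbb{H}$ (case MK6) and the transverse $\mathbb{R}^8 \cong \mathbb{H} \oplus \mathbb{H}$ (case M2). In the MK6 case, orientation-preservation with trivial $\mathbb{R}^1$-component identifies $G$ as a finite subgroup of one $\mathrm{SU}(2)$ factor, yielding the classical ADE-list of McKay subgroups $\mathbb{Z}_{n+1},\,2\mathbb{D}_{n+2},\,2T,\,2O,\,2I$ (see Remark \ref{ADEGroups}); the $\mathbf{32}\simeq \mathbf{16}\oplus\mathbf{16}$ decomposition under $\mathrm{Spin}(6,1)\times \mathrm{Spin}(4)$ shows that half the spinors are always fixed, giving $\sfrac{1}{2}$-BPS. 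In the M2 case, the transverse $\mathbb{R}^8 \cong \mathbb{H}\oplus\mathbb{H}$ admits a commuting pair $\mathrm{SU}(2)_L\times \mathrm{SU}(2)_R$ of actions \eqref{SpacetimeActions}, and the orientation-preserving fixed-point-free finite subgroups are precisely the twisted-diagonal embeddings of the ADE-groups (possibly composed with an outer automorphism $\tau$). Computing the fixed spinor subspace reduces to a linear-algebra problem on $\mathbf{16}\otimes \mathbf{2}$ under the diagonal action, and the resulting BPS fractions $\sfrac{8}{16},\,\sfrac{6}{16},\,\sfrac{5}{16},\,\sfrac{4}{16}$ are read off from the character theory of the finite subgroups.

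The hard part will be step two, specifically verifying the BPS fractions and the precise matching between finite ADE subgroups and preserved spinor subspaces for the various M2-families, since the case of the twisted diagonal requires a careful analysis of how the chosen outer automorphism $\tau$ interacts with the Majorana reality condition. Fortunately, this computation is already available in the literature: the results of \cite{MFFGME10} and \cite[Sec.\ 8.3]{MF10} classify, by exactly the spinorial fixed-point analysis we need, the $\geq \sfrac{1}{4}$-BPS spherical space-form near-horizon geometries of black M2- and M5-branes, translated via the cone construction (as in Example \ref{TheBlackM2}) into the $G$-invariant spinors on the transverse $\mathbb{R}^8$ and $\mathbb{R}^4$. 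Invoking these results gives the listed subgroup entries and BPS fractions, and the conjugacy statement follows from standard normalizer calculations in $\mathrm{Pin}^+(10,1)$: the Weyl element swapping the two $\mathbb{H}$-factors identifies a priori distinct twisted diagonals, and all remaining ambiguities are absorbed by conjugation within $\mathrm{Spin}(p,1)\times \mathrm{Spin}(10-p)$. Compiling these ingredients produces the table as stated.
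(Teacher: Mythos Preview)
Your overall architecture matches the paper's: the theorem is obtained by combining a classification of the $\mathbb{Z}_2$-actions (Proposition~\ref{ClassificationOfZ2Actions}) with an extension of the orientation-preserving cases to general finite ADE subgroups (Proposition~\ref{M2AndMK6ADE}), the latter being imported from \cite{MFFGME10} and \cite[Sec.~8.3]{MF10} exactly as you propose. Your identification of the hard part and its resolution via the cited literature is spot on.

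However, your first step contains a concrete error. You assert that the $p$-brane involutions producing the MO-series occur for $p \in \{0,4,8\}$, and that these cases are ``distinguished by their value of $p \bmod 4$'', with this residue controlling the fixed spinor dimension. This is incorrect on all three counts. The Clifford-algebraic condition for a $p$-brane involution $\sigma = \pm\mathbf{\Gamma}_{p+1}\cdots\mathbf{\Gamma}_{10}$ to square to $+1$ is $\binom{10-p}{2}\equiv 0 \bmod 2$, i.e.\ $p \equiv 1$ or $2 \bmod 4$ (Lemma~\ref{ExistenceOfpBraneInvolutions}), giving $p \in \{1,2,5,6,9\}$. The orientation-reversing ones (your MO-series) sit at $p \in \{1,5,9\}$, all congruent to $1 \bmod 4$, so they are certainly not distinguished by this residue; and all five involutions are $\sfrac{1}{2}$-BPS (Lemmas~\ref{1braneFixed}--\ref{9braneFixed}), so the fixed-spinor dimension is uniformly $16$ and not governed by any mod-4 invariant. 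The paper instead distinguishes the cases by direct computation of each fixed fermionic subspace in the octonionic model of $\mathbf{32}$ (Example~\ref{OctonionPresentationOf32OfPin11}). You should replace your mod-4 heuristic with this Clifford computation; once that is fixed, the rest of your outline goes through as in the paper.
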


We now work towards the {\it proof} of Theorem \ref{SuperADESingularitiesIn11dSuperSpacetime}, which
will be given by combining Proposition \ref{ClassificationOfZ2Actions} and Proposition \ref{M2AndMK6ADE} below.

\begin{defn}[$p$-brane involution]
  \label{pBraneInvolution}
  Consider an involution $\sigma \in \Pin^+(10,1)$ acting canonically on
  $\mathbb{R}^{10,1}$ (as in Def.\ \ref{LorentzGroupsAndTheirSpinCovers}).  As a
  linear transformation of $\R^{10,1}$, it may be diagonalized with eigenvalues $\pm
  1$:
  \[
  \sigma = \begin{bmatrix} 1 & 0 \\ 0 & -1 \end{bmatrix}.
  \]
  That is, for some $p \in \mathbb{N}$, the space $\R^{10,1}$ decomposes into orthogonal
  subspaces
  \[
  \R^{10,1} \simeq \R^{p,1} \oplus \R^{10-p}\;,
  \]
  such that $\sigma$ acts as 1 on one summand and as $-1$ on the other. In fact,
  since $\sigma$ preserves time orientation, it acts as 1 on the $\R^{p,1}$ summand,
  and $-1$ on the spacelike $\R^{10-p}$ summand. To highlight the natural number $p$,
  we call $\sigma$ a \emph{$p$-brane involution}.
\end{defn}

\begin{example}[Trivial $p$-brane involutions]
  \label{TrivialPBraneInvolution}
  For $p = 10$, a $p$-brane involution (Def.\ \ref{pBraneInvolution}) acts trivially on $\R^{10,1}$, so
  we say $\sigma$ is a \emph{trivial} $p$-brane involution. For other values of $p$, we
  say $\sigma$ is \emph{nontrivial}.
  A trivial $p$-brane involution $\sigma$ is just
  an element of the kernel of the double cover $\Pin^+(10,1) \to {\rm O}^+(10,1)$, so
  $\sigma = \pm 1$.
\end{example}

\begin{defn}[$\mathbb{Z}_2$-actions by $p$-brane involutions on 11d super-Minkowski spacetime]
  \label{Z2ActionsBypBraneInvolutions}
Any $p$-brane involution (Def.\ \ref{pBraneInvolution}) defines an action of the group $\mathbb{Z}_2$ (Example \ref{Z2ActionsAreInvolutions})
on $\mathbb{R}^{10,1\vert \mathbf{}32} \in \mathrm{Ho}\left( \mathrm{SuperSpaces}_{\mathbb{R}}\right)$ (Example \ref{ExamplesOfSuperMinkowskiSpacetimes})
by super Lie algebra automorphisms (Example \ref{SuperLieAlgebrasWithGActionAsGSuperspaces}).
\end{defn}

The following classification of $\mathbb{Z}_2$-actions on super-Minkowski spacetime has also been briefly sketched in \cite[around (3.2)]{HananyKol00}.
We need the following detailed analysis for the proof of our main result
in Theorem \ref{SuperADESingularitiesIn11dSuperSpacetime}, Prop. \ref{NonSimpleRealSingularities} and Theorem \ref{RealADEEquivariantEndhancementOfM2M5Cocycle}.
\begin{prop}[Classification of $\mathbb{Z}_2$-actions on 11d super-Minkowski spacetime]
  \label{ClassificationOfZ2Actions}

The $\mathbb{Z}_2$-actions on $\mathbb{R}^{10,1\vert \mathbf{32}}$ according to Def.\ \ref{Z2ActionsBypBraneInvolutions}), are,
  up to conjugacy in $\mathrm{Pin}^+(10,1)$, in bijection with
   the entries in the following table:

   {\small
\begin{center}
\begin{tabular}{|lc|ccc|lc|ccccccc|}
  \hline
  \multicolumn{2}{|c}{
  \begin{tabular}{c}
    \bf Black
    \\
    \bf brane
    \\
    \bf species
  \end{tabular}
  }
  &
  \multicolumn{3}{|c|}{ {\bf BPS} }
  &
  \multicolumn{2}{c|}{
  \begin{tabular}{c}
    \bf Singular
    \\
    \bf locus
    \\
    $\subset \mathbb{R}^{10,1\vert \mathbf{32}}$
  \end{tabular}
  }
  &
  \begin{tabular}{c}
    {\bf Type of}
    \\
    {\bf singularity}
  \end{tabular}
  &
  \multicolumn{6}{l|}{\hspace{1cm} {\bf Intersection law} }
  \\
  \hline
  \hline
  $\mathrm{MO9}$ &
  &
  $\sfrac{1}{2}$ &&
  &
  $\mathbb{R}^{9,1\vert \mathbf{16}}$
  &
  &
  $\mathbb{Z}_{2}$
  &
  $\!\!\overset{\phantom{\Delta}}{=}\!\!$
  &
  &
  &
  &
  &
  $(\mathbb{Z}_2)_{\mathrm{HW}}$
  \\
  \hline
  $\mathrm{MO5}$ &
  &
  $\sfrac{1}{2}$ &&
  &
  $\mathbb{R}^{5,1\vert 2 \cdot \mathbf{8}}$ &
  &
  $\mathbb{Z}_{2}$
  &
  $\!\!\overset{\Delta}{\subset}\!\!$
  &
  $(\mathbb{Z}_2)_L$
  &
  &
  &
  $\!\!\!\!\times\!\!\!\!\!$
  &
  $(\mathbb{Z}_2)_{\mathrm{HW}}$
  \\
  \hline
  $\mathrm{MO1}$ &
  &
  $\sfrac{1}{2}$ &&
  &
  $\mathbb{R}^{1,1\vert 16 \cdot \mathbf{1}}$ &
  &
  $\mathbb{Z}_{2}$
  &
  $\!\!\overset{\Delta}{\subset}\!\!$
  &
  $(\mathbb{Z}_2)_L$
  &
  $\!\!\!\!\times\!\!\!\!\!$
  &
  $(\mathbb{Z}_2)_R$
  &
  $\!\!\!\!\times\!\!\!\!\!$
  &
  $(\mathbb{Z}_2)_{\mathrm{HW}}$
  \\
  \hline
  \hline
  $\mathrm{MK6}$ &
  &
  $\sfrac{1}{2}$ &&
  &
  $\mathbb{R}^{6,1\vert \mathbf{16}}$ &
  &
  $\mathbb{Z}_2$
  &
  $\!\!=\!\!$
  &
  $(\mathbb{Z}_2)_L$
  &&&&
  \\
  \hline
  $\mathrm{M2}$ &
  &
  $\sfrac{1}{2}$ &&
  &
  $\mathbb{R}^{2,1\vert 8 \cdot \mathbf{2}}$ &
  &
  $\mathbb{Z}_{2}$
  &
  $\!\! \overset{\Delta}{\subset} \!\!$
  &
  $(\mathbb{Z}_2)_L$
  &
  $\!\!\!\!\times\!\!\!\!\!$
  &
  $(\mathbb{Z}_2)_R$
  &&
  \\
  \hline
\end{tabular}
\end{center}
}
\end{prop}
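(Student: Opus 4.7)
The plan is to reduce the classification to a Clifford-algebra computation. Any involution $\bar\sigma \in \mathrm{O}^+(10,1)$ preserving time orientation is diagonalizable over $\mathbb{R}$ with eigenvalues $\pm 1$, and its $(+1)$-eigenspace must contain the timelike direction, hence must be isomorphic to $\mathbb{R}^{p,1}$ for some $p \in \{0, \ldots, 10\}$. Consequently $\bar\sigma$ is $\mathrm{O}^+(10,1)$-conjugate to a $p$-brane involution in the sense of Definition \ref{pBraneInvolution}, and the problem reduces to determining which such involutions admit a $\mathrm{Pin}^+(10,1)$-lift squaring to $+1$, and reading off the fixed sub-super-Minkowski spacetime in each case.

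Picking an orthonormal basis $e_0, \ldots, e_{10}$ with $\mathbb{R}^{p,1} = \mathrm{span}(e_0, \ldots, e_p)$, the two $\mathrm{Pin}^+$-lifts of a $p$-brane involution are $\pm \sigma_p$ with
$$
\sigma_p \;:=\; e_{p+1} e_{p+2} \cdots e_{10},
$$
a Clifford product of $k := 10 - p$ transverse spacelike generators. Using the Clifford relations with $e_j^2 = +1$ for spacelike $e_j$ (mostly-plus conventions of Appendix \ref{SpacetimesAndSpin}), one computes
$$
\sigma_p^2 \;=\; (-1)^{k(k-1)/2},
$$
which equals $+1$ precisely when $k \in \{0, 1, 4, 5, 8, 9\}$, i.e., $p \in \{10, 9, 6, 5, 2, 1\}$. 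Discarding the trivial case $p = 10$ (Example \ref{TrivialPBraneInvolution}) leaves the five entries of the table. The two lifts $\pm \sigma_p$ are, moreover, $\mathrm{Pin}^+(10,1)$-conjugate: for $k$ even one conjugates by a transverse generator $e_j \in \{e_{p+1}, \ldots, e_{10}\}$, producing $e_j \sigma_p e_j = (-1)^{k-1} \sigma_p = -\sigma_p$; for $k$ odd one conjugates instead by a longitudinal spacelike generator $e_j$ with $1 \leq j \leq p$ (which exists in all admissible cases since $p \geq 1$), giving $e_j \sigma_p e_j = (-1)^k \sigma_p = -\sigma_p$. Combined with the fact that different $p$ correspond to distinct $\mathrm{O}^+(10,1)$-conjugacy classes, this shows that the five remaining classes are pairwise $\mathrm{Pin}^+$-inequivalent and together exhaust the classification.

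To count fixed spinors, observe that for $0 < k < 11$ the product $\sigma_p$ of $k$ distinct $\Gamma$-matrices is traceless in the $32$-dimensional Majorana representation $\mathbf{32}$, since in the irreducible representation it is neither a scalar multiple of the identity nor of the central volume form $\Gamma_0 \Gamma_1 \cdots \Gamma_{10}$. Combined with $\sigma_p^2 = 1$, this forces both the $(+1)$- and $(-1)$-eigenspaces of $\sigma_p$ on $\mathbf{32}$ to have real dimension $16$, so each of the five admissible classes is exactly $\sfrac{1}{2}$-BPS. The decomposition of the fixed fermionic subspace under $\mathrm{Spin}(p,1) \subset \mathrm{Spin}(10,1)$ along the chain $\mathrm{Spin}(p,1) \times \mathrm{Spin}(10-p) \hookrightarrow \mathrm{Spin}(10,1)$ then produces the R-symmetry-graded dimensions $\mathbf{16}$, $\mathbf{16}$, $2 \cdot \mathbf{8}$, $8 \cdot \mathbf{2}$, $16 \cdot \mathbf{1}$ listed in the table.

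Finally, the identification of each case with a subgroup of $\mathrm{SU}(2)_L \times \mathrm{SU}(2)_R \times (\mathbb{Z}_2)_{\mathrm{HW}}$ via \eqref{SpacetimeActions} is obtained by matching the transverse $(-1)$-eigenspace $\mathbb{R}^{10-p}$ to a sum of factors inside $\mathbb{R}^{1,1} \oplus \mathbb{R}^4_L \oplus \mathbb{R}^4_R \oplus \mathbb{R}$: the MO9 arises when only the trailing $\mathbb{R}$-summand is negated ($k = 1$), the MK6 when one full $\mathbb{R}^4$-summand is negated by the center of one $\mathrm{SU}(2)$-factor ($k = 4$), and the M2, MO5, MO1 cases analogously correspond to the diagonal $\mathbb{Z}_2$-subgroup obtained by combining the central $\mathbb{Z}_2$'s of, respectively, both $\mathrm{SU}(2)$-factors, of $\mathrm{SU}(2)_L$ and $(\mathbb{Z}_2)_{\mathrm{HW}}$, and of all three factors. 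The main obstacle in the argument is the Clifford-level bookkeeping in the second step, in particular the verification that $p$ alone classifies the $\mathrm{Pin}^+(10,1)$-conjugacy class (the two lifts $\pm\sigma_p$ being conjugate) and that $\sigma_p^2 = +1$ is the precise Clifford-level condition for a $p$-brane involution to generate a $\mathbb{Z}_2$-action on $\mathbb{R}^{10,1|\mathbf{32}}$ (rather than a $\mathbb{Z}_4$-action on spinors).
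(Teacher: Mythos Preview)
Your argument for the classification of conjugacy classes (the first two paragraphs) is correct and essentially identical to the paper's Lemmas \ref{ExistenceOfpBraneInvolutions} and \ref{pBraneInvolutionConjugacy}: the square $\sigma_p^2 = (-1)^{k(k-1)/2}$ singles out $p \in \{1,2,5,6,9\}$, and the conjugacy of $\pm\sigma_p$ is handled by the same even/odd case split on $k$. Your trace argument in the third paragraph is a genuine improvement over the paper's approach for establishing that each case is $\sfrac{1}{2}$-BPS: the paper instead computes the fixed $16$-dimensional spinor space explicitly in each of the five cases (Lemmas \ref{1braneFixed}--\ref{9braneFixed}) using the octonionic model of $\mathbf{32} \simeq \mathbb{O}^4$.

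However, your final sentence in that paragraph (``the decomposition \ldots then produces the R-symmetry-graded dimensions \ldots'') is where a gap appears. The proposition asserts not just the dimension of the fixed fermionic subspace but its isomorphism type as a $\mathrm{Spin}(p,1)$-representation, and in the cases $p=1$ and $p=5$ this is not determined by dimension alone. For $p=5$, a $16$-dimensional real $\mathrm{Spin}(5,1)$-module could a priori be $2 \cdot \mathbf{8}$ or $\mathbf{8} \oplus \overline{\mathbf{8}}$; the paper's Lemma \ref{5braneFixed} does a nontrivial computation (using the Cayley--Dickson relations and an explicit intertwiner $\psi \mapsto J\overline{\psi}$) to show that both $\mathbb{H}^2$-summands of the fixed space carry the \emph{same} chirality, giving $\mathcal{N}=(2,0)$ rather than $(1,1)$. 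Similarly for $p=1$, Lemma \ref{1braneFixed} verifies that all sixteen fixed spinors transform as $\mathbf{1}_+$ under $\mathrm{Spin}(1,1)$. Your branching-rule remark gestures at the right mechanism but does not carry out the computation, and the chirality question cannot be resolved by the trace argument alone. To close the gap you would either need to invoke the explicit branching of $\mathbf{32}$ under $\mathrm{Spin}(p,1) \times \mathrm{Spin}(10-p)$ (which amounts to the paper's octonionic calculation in disguise) or argue separately that the relevant $\mathrm{Pin}^+$-conjugacy identifies the two possible outcomes.
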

\medskip

\noindent The {\it proof} of Prop. \ref{ClassificationOfZ2Actions} will be established by the following
series of lemmas.

\begin{lemma}[Existence of $p$-brane involutions]
  \label{ExistenceOfpBraneInvolutions}
  If $\sigma \in \Pin^+(10,1)$ is a $p$-brane involution (Def.\ \ref{pBraneInvolution}), then $p \equiv 1$ or $2$ mod
  $4$. Conversely, if $p \equiv 1$ or $2$ mod $4$, then for any orthogonal decomposition of
  $\R^{10,1}$ of the form
  \[ \R^{10,1} \simeq \R^{p,1} \oplus \R^{10-p} \]
  there is a $p$-brane involution acting as $1$ on the first summand and $-1$ on the
  second. Hence, there are $p$-brane involutions $\sigma$ precisely for $p = 1, 2, 5, 6, 9$ and
  $10$, though for $p = 10$, $\sigma$ is trivial (in the sense of Def.\ \ref{pBraneInvolution}).
\end{lemma}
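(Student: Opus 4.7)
\medskip

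\noindent\textbf{Proof plan for Lemma \ref{ExistenceOfpBraneInvolutions}.}
My plan is to work inside the Clifford algebra $\Cl(10,1)$ and use the standard description of $\Pin^+(10,1)$ as the subgroup generated by spacelike unit vectors (so that such generators square to $+1$). An involution $\sigma \in \Pin^+(10,1)$ covering the orthogonal reflection that fixes $\mathbb{R}^{p,1}$ and negates $\mathbb{R}^{10-p}$ is necessarily, up to the central sign $\pm 1$, the Clifford product of the vectors in an orthonormal basis of the $(-1)$-eigenspace:
\[
  \sigma \;=\; \pm\, e_{p+1}\, e_{p+2} \cdots e_{10}\;,
\]
since any two such lifts differ by an element of the kernel of $\Pin^+(10,1)\to\mathrm{O}^+(10,1)$, which is $\{\pm 1\}$. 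This is the only candidate one has to test.

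\medskip

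\noindent
The main computation is then to determine when this $\sigma$ actually squares to $+1$ rather than to $-1$, which is the entire content of the statement. Setting $k := 10 - p$ and using the Clifford relations $e_i e_j = -e_j e_i$ for $i\ne j$ together with $e_i^2 = +1$ (all the $e_i$ appearing are spacelike), one computes
\[
  \sigma^2 \;=\; (e_{p+1}\cdots e_{10})(e_{p+1}\cdots e_{10})
  \;=\; (-1)^{k(k-1)/2}\, e_{p+1}^{\,2}\cdots e_{10}^{\,2}
  \;=\; (-1)^{k(k-1)/2}\;,
\]
the sign arising from the $\binom{k}{2}$ transpositions needed to bring the second block of generators into the order that lets them pair up with the first block. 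Hence $\sigma^2=+1$ if and only if $k(k-1)/2$ is even, i.e., $k\equiv 0$ or $1 \pmod 4$. Substituting $k=10-p$ this is exactly the congruence $p\equiv 1$ or $2\pmod 4$ claimed in the lemma.

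\medskip

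\noindent
For the converse direction I would simply observe that the same explicit formula $\sigma := e_{p+1}\cdots e_{10}$ provides the required lift for each admissible decomposition $\mathbb{R}^{10,1}\simeq \mathbb{R}^{p,1}\oplus \mathbb{R}^{10-p}$: the computation above shows $\sigma^2=1$ when $p\equiv 1,2\pmod 4$, and the twisted adjoint action of $\sigma$ on $\mathbb{R}^{10,1}$ acts by $-1$ on each of the $e_i$ with $i>p$ and trivially on $\mathbb{R}^{p,1}$, exhibiting $\sigma$ as a $p$-brane involution. Enumerating the values of $p\in\{0,\dots,10\}$ satisfying the congruence yields exactly $p=1,2,5,6,9,10$; in the edge case $p=10$ the product is empty, $\sigma\in\ker(\Pin^+(10,1)\to \mathrm{O}^+(10,1))=\{\pm 1\}$, so the involution is trivial in the sense of Example \ref{TrivialPBraneInvolution}.

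\medskip

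\noindent
I expect no serious obstacle here: the only real subtlety is a book-keeping one, namely fixing the sign conventions for $\Pin^+$ versus $\Pin^-$ and for the Clifford quadratic form (mostly-plus vs. mostly-minus), which must match the appendix (Section \ref{SpacetimesAndSpin}). Once those conventions are pinned down, both directions of the lemma reduce to the single parity identity $(-1)^{k(k-1)/2}=+1 \Leftrightarrow k\equiv 0,1\pmod 4$.
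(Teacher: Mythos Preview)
Your proposal is correct and follows essentially the same approach as the paper's proof: both identify the lift of the reflection as $\pm$ the Clifford product of an orthonormal basis of the $(-1)$-eigenspace, reduce the question to the parity of $\binom{k}{2}$ with $k=10-p$ (the paper writes $q$ for your $k$), and use the same explicit element for the converse. Your additional remarks about the twisted adjoint action and the $p=10$ edge case are consistent with the paper's treatment.
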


\begin{proof}
  Let $\sigma$ be a $p$-brane involution, and let $\R^{10-p}$ be the subspace of
  $\R^{10,1}$ on which $\sigma$ acts as $-1$. Writing $q := 10 - p$ for brevity, let
  $\{b_1, \ldots, b_q\}$ be an orthonormal basis of $\R^{10-p}$. Then the Clifford
  algebra element $b_1 \cdots b_q$ lies in $\Pin^+(10,1)$ and acts on $\R^{p,1}$ as 1
  and on $\R^{10-p}$ as $-1$. Because $\Pin^+(10,1)$ is a double cover of ${\rm
  O}^+(10,1)$ with kernel $\{1,-1\}$, this implies $\sigma = \pm b_1 \cdots
  b_q$. Since, as elements of the Clifford algebra, the vectors $b_1, \ldots, b_q$
  square to 1 and anticommute with each other, we compute:
  \[ \sigma^2 = (-1)^{\frac{q(q-1)}{2}} b_1^2 \cdots b_q^2 = \pm 1\,. \]
  This gives 1 if and only if $q \equiv 0$ or 1 mod 4, which implies $p \equiv 1$ or
  2 mod 4.

  Conversely, if $p \equiv 1$ or 2 mod 4, choose an orthogonal decomposition of
  $\R^{10,1}$ of the form:
  \[ \R^{10,1} \simeq \R^{p,1} \oplus \R^{10-p} . \]
  As above, let $\{b_1, \ldots, b_q\}$ be an orthonormal basis of $\R^{10-p}$. The
  same calculation as above shows that $\sigma = b_1 \cdots b_q$ is the desired
  $p$-brane involution.
\end{proof}

\begin{lemma}[Conjugacy of $p$-brane involutions]
  \label{pBraneInvolutionConjugacy}
  All nontrivial $p$-brane involutions (Def.\ \ref{pBraneInvolution}) for given $p = 1$ or 2 mod 4
  (Lemma \ref{ExistenceOfpBraneInvolutions})
  are conjugate by an element of $\Pin^+(10,1)$.
\end{lemma}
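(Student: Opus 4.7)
The plan is to reduce the conjugacy question to two pieces: transitivity of $\mathrm{O}^+(10,1)$ on the underlying orthogonal splittings $\mathbb{R}^{10,1} \simeq \mathbb{R}^{p,1} \oplus \mathbb{R}^{10-p}$, and a residual ``sign ambiguity'' coming from the double cover $\mathrm{Pin}^+(10,1) \to \mathrm{O}^+(10,1)$ whose kernel is $\{\pm 1\}$.

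First I would note that $\mathrm{O}^+(10,1)$ acts transitively on the set of orthogonal decompositions $\mathbb{R}^{10,1} \simeq \mathbb{R}^{p,1} \oplus \mathbb{R}^{10-p}$ with the time direction in the first summand, by the standard transitivity of the Lorentz group on time-oriented Lorentz flags. So for any two nontrivial $p$-brane involutions $\sigma_1, \sigma_2$ there exists $\bar g \in \mathrm{O}^+(10,1)$ with $\bar g \bar\sigma_1 \bar g^{-1} = \bar\sigma_2$, where $\bar\sigma_i$ denotes the image in $\mathrm{O}^+(10,1)$. Lifting $\bar g$ to $g \in \mathrm{Pin}^+(10,1)$ along the double cover gives $g \sigma_1 g^{-1} \in \{\sigma_2, -\sigma_2\}$, so it suffices to show that $\sigma_2$ and $-\sigma_2$ are conjugate in $\mathrm{Pin}^+(10,1)$.

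For this it is enough to produce $v \in \mathrm{Pin}^+(10,1)$ anticommuting with $\sigma_2$ in the Clifford algebra. Writing $\sigma_2 = \pm b_1 \cdots b_q$ as in the proof of Lemma~\ref{ExistenceOfpBraneInvolutions}, with $(b_i)_{i=1}^q$ an orthonormal basis of the spacelike $(-1)$-eigenspace $\mathbb{R}^{10-p}$, the relations $b_i b_j = -b_j b_i$ for $i \neq j$ and $b_i^2 = 1$ give the short computation
\[
v\, \sigma_2 \;=\; (-1)^{q}\, \sigma_2\, v \quad \text{for } v \in \mathbb{R}^{p,1},
\qquad
b_i\, \sigma_2 \;=\; (-1)^{q+1}\, \sigma_2\, b_i.
\]
Hence for $q$ odd (i.e.\ $p \in \{1,5,9\}$) any spacelike unit vector $v \in \mathbb{R}^{p,1}$ anticommutes with $\sigma_2$, and such a $v$ exists because $p \geq 1$ ensures $\mathbb{R}^{p,1}$ has a nonzero spacelike direction; for $q$ even (i.e.\ $p \in \{2,6\}$) the vector $v = b_i$ works. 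In either case $v$ is a spacelike unit vector, hence lies in $\mathrm{Pin}^+(10,1)$, and satisfies $v \sigma_2 v^{-1} = -\sigma_2$, concluding the argument.

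The only step requiring genuine care is the bookkeeping of Clifford signs and matching the convention for $\mathrm{Pin}^+$ versus $\mathrm{Pin}^-$ used in Def.~\ref{LorentzGroupsAndTheirSpinCovers}; I do not expect a conceptual obstacle, but one must ensure that the chosen $v$ is normalized so as to sit in $\mathrm{Pin}^+$ rather than $\mathrm{Pin}^-$, which is immediate since in both cases $v$ has $v^2 = +1$.
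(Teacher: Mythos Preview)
Your argument is correct and follows essentially the same route as the paper: reduce to transitivity of $\mathrm{O}^+(10,1)$ on the orthogonal splittings, lift through the double cover to get a $\pm$ ambiguity, and resolve the sign by exhibiting an anticommuting unit vector with the same parity-of-$q$ case split (a spacelike $v \in \mathbb{R}^{p,1}$ for $q$ odd, one of the $b_i$ for $q$ even). The only cosmetic difference is that the paper works with explicit orthonormal bases and conjugates $\sigma$ to $-\sigma$ on the source side rather than the target side, which is immaterial.
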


\begin{proof}
  Let $\sigma, \sigma' \in \Pin^+(10,1)$ be two nontrivial $p$-brane
  involutions. Then $p \leq 9$, and $\R^{10,1}$ decomposes into an orthogonal direct
  sum
  \[
  \R^{10,1} \simeq \R^{p,1} \oplus \R^{10-p}
  \]
  on which $\sigma$ acts diagonally. Write $q := 10 - p$ for brevity, and let $\{b_1,
  \ldots, b_q\}$ be an orthonormal basis of the $\R^{10-p}$ summand. Then the
  Clifford algebra element $b_1 \cdots b_q$ lies in $\Pin^+(10,1)$ and acts on
  $\R^{10,1}$ in the same way $\sigma$ does. Thus $\sigma = \pm b_1 \cdots b_q$, and
  changing the sign of a basis vector if necessary, we can assume $\sigma = b_1
  \cdots b_q$. Similarly, we can find $q$ spacelike orthonormal vectors $\{b_1',
  \ldots, b_q'\}$ such that $\sigma' = b_1' \cdots b_q'$.

  We may now extend the set of $q$ orthonormal vectors $\{b_1, \cdots, b_q\}$ to a
  time-oriented orthonormal basis of $\R^{10,1}$, and similarly for the $q$
  orthonormal vectors $\{b_1', \ldots, b_q'\}$. There is a unique element of ${\rm
  O}^+(10,1)$ taking one basis to the other, and this lifts to an element $g \in
  \Pin^+(10,1)$ that maps $b_i \mapsto b_i'$ for all $1 \leq i \leq q$.

  Let us write the double covering map $R \maps \Pin^+(10,1) \to {\rm
  O}^+(10,1)$. Then, by construction,
  \[
   R(g) R(\sigma) R(g^{-1}) = R(\sigma') .
  \]
  Thus, in $\Pin^+(10,1)$, we must have either $g \sigma g^{-1} = \sigma'$, or $g
  \sigma g^{-1} = -\sigma'$. In the first case, we are done. In the second, if we can
  find an element $h \in \Pin^+(10,1)$ that anticommutes with $\sigma$, and redefine
  $g$ to be $gh$, we are done.

  Let us find $h \in \Pin^+(10,1)$ that anticommutes with $\sigma$. That is, that
  satisfies $hgh^{-1} = -\sigma$. If $q = 10 - p$ is odd, we can take any spacelike
  unit vector $v$ orthogonal to the basis $\{b_1, \ldots, b_q\}$. Then $h = v$
  anticommutes with $\sigma$. If $q$ is even, we can choose $h = b_1$.
  This completes the proof.
\end{proof}

Now we compute the fixed loci of $p$-brane involutions one by one, according to their essentially unique
existence established by Lemma \ref{ExistenceOfpBraneInvolutions} and Lemma \ref{pBraneInvolutionConjugacy}.

\begin{lemma}[Fixed locus of the 1-brane involution]
  \label{1braneFixed}
  Let $\sigma$ be a $1$-brane involution (Def.\ \ref{pBraneInvolution}). It fixes the $D = 1+1$, $\mathcal{N} = (16,0)$
  super-Minkowski subalgebra:
  \[
  \R^{1,1|16 \cdot {\bf 1}_+} \simeq \big(\R^{10,1|{\bf 32}} \big)^\sigma \,.
  \]
  Here, ${\bf 1}_+$ denotes the 1-dimensional real spinor representation of
  \[
   \Spin(1,1) = \{ \exp(\tfrac{t}{2} {\bf \Gamma}_{01}) : t \in \R \}
   \]
  on which the group element $\exp(\frac{t}{2} {\bf \Gamma}_{01})$ acts by
  multiplication by $e^{\frac{t}{2}}$ (Example \ref{RelevantExamplesOfRealSpinRepresentations}).
  Moreover, the bilinear spinor pairing \eqref{ViaDiracConjugateSpinorToVectorPairing} in $\R^{1,1|16 \cdot
  {\bf 1}_+}$ spans the lightlike subspace ${\rm span}\{ b_0 + b_1 \} \subseteq
  \R^{1,1}$:
  \begin{equation}
    \label{SpinorPairingOn1Brane}
    \mathrm{im}
      \big(
        \xymatrix{
          16 \cdot {\bf 1}_+ \,\otimes\, 16 \cdot {\bf 1}_+
           \ar[rr]^-{\overline{(-)}(-)}
           &&
          \mathbb{R}^{1,1}
        }
      \big)
      \;=\;
      {\rm span}\{ b_0 + b_1 \}
        \,\subseteq\,
      \R^{1,1}\;.
  \end{equation}
\end{lemma}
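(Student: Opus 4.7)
By Lemma~\ref{pBraneInvolutionConjugacy}, any two non-trivial $1$-brane involutions are conjugate in $\mathrm{Pin}^+(10,1)$, and since conjugation preserves the structure of the fixed super-subspace, it suffices to verify the claim for one convenient representative. The plan is to choose $\sigma = b_2 b_3 \cdots b_{10}$, the Clifford product of orthonormal spacelike basis vectors of the $\R^9$-summand orthogonal to $\R^{1,1}$. The bosonic fixed locus is $\R^{1,1}$ directly by Definition~\ref{pBraneInvolution}, so the task reduces to identifying the $(+1)$-eigenspace of $\sigma$ on the 32-dimensional real Majorana representation of $\mathrm{Spin}(10,1)$ and then computing the restriction of the spinor-to-vector pairing to this eigenspace.

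First I would exploit the fact that, on the 32-dimensional Majorana module, the full volume element
\[
  \omega \;=\; \Gamma^0 \Gamma^1 \Gamma^2 \cdots \Gamma^{10}
\]
is central and squares to $+1$, so one of the two possible signs $\omega = \pm 1$ may be fixed by convention (this selects one of the two complex irreducible components that jointly carry the Majorana real structure). Combined with the elementary relation $\Gamma^{10} \cdots \Gamma^2 = \Gamma^2 \cdots \Gamma^{10}$ (the sign $(-1)^{9\cdot 8/2} = +1$) and $(\Gamma^0 \Gamma^1)^2 = +1$, this convention forces the identification
\[
  \sigma \;=\; \Gamma^2 \cdots \Gamma^{10} \;=\; \pm\,\Gamma^0 \Gamma^1 \;=\; \pm\,\Gamma^{01}
\]
on the Majorana representation. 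Hence $\sigma$ acts on spinors as (up to sign) the $\mathrm{Spin}(1,1)$-chirality operator $\Gamma^{01}$. Its $(+1)$-eigenspace is therefore the positive-chirality subspace, which, as a $\mathrm{Spin}(1,1)$-representation, is $16$ copies of $\mathbf{1}_+$ in the decomposition $\mathbf{32}|_{\mathrm{Spin}(1,1)} \simeq 16 \cdot \mathbf{1}_+ \oplus 16 \cdot \mathbf{1}_-$. This yields the claimed super fixed locus $\R^{1,1 \,|\, 16 \cdot \mathbf{1}_+}$.

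For the second statement, I would restrict the bilinear pairing $\overline{\psi}\,\Gamma^a\,\psi$ for $a \in \{0,1\}$ to spinors of positive $\Gamma^{01}$-chirality and check that only the lightlike combination proportional to $b_0 + b_1$ can occur. Concretely, using $\Gamma^{01}\psi = \psi$ together with the anticommutation $\{\Gamma^{01}, \Gamma^0\} = 0 = \{\Gamma^{01}, \Gamma^1\}$ and the appropriate symmetry/antisymmetry of the charge conjugation matrix defining $\overline{(-)}$, one shows that $\overline{\psi}\,\Gamma^0\,\psi = \overline{\psi}\,\Gamma^1\,\psi$, so the image lies in $\mathrm{span}\{b_0+b_1\}$; non-vanishing follows from the non-degeneracy of the real spinor pairing on the $16$-dimensional chiral subspace. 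This is the standard right-mover pairing on the 2d lightcone.

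The main obstacle is bookkeeping the signs and conventions relating the Clifford/Pin-group product $\sigma = b_2\cdots b_{10}$ (used in Lemmas~\ref{ExistenceOfpBraneInvolutions}--\ref{pBraneInvolutionConjugacy}, where spacelike generators satisfy $b_i^2 = +1$) with the gamma matrix action on the Majorana representation of $\mathrm{Spin}(10,1)$ (where typically $(\Gamma^i)^2 = +1$ and $(\Gamma^0)^2 = -1$, or vice versa). Once these sign conventions are pinned down consistently, both the reduction $\sigma \mapsto \pm\Gamma^{01}$ and the chirality computation for the spinor pairing are short and purely algebraic, so the remaining work is verification rather than discovery.
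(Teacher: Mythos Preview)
Your approach is correct and reaches the same core identification $\sigma = \pm\mathbf{\Gamma}_{01}$ as the paper, but by a different and arguably more direct route. The paper chooses $\sigma = -\mathbf{\Gamma}_{23456789\,10}$ and then computes its action on $\mathbf{32} \simeq \mathbb{O}^4$ explicitly in the octonionic presentation (Example~\ref{OctonionPresentationOf32OfPin11}), obtaining the diagonal matrix $\mathrm{diag}(-1,1,1,-1)$; it then observes that $\mathbf{\Gamma}_{01}$ equals the same diagonal matrix, so the fixed spinors are precisely the $+1$-eigenspace of $\mathbf{\Gamma}_{01}$. You bypass the octonionic computation entirely by invoking the chirality of $\mathbf{32}$ (the volume element acts as a scalar), which immediately collapses $\mathbf{\Gamma}_{2}\cdots\mathbf{\Gamma}_{10}$ to $\pm\mathbf{\Gamma}_{01}$. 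Your route is shorter and model-independent; the paper's route is more concrete and fits the pattern of the surrounding lemmas (the 5-, 6-, and 9-brane cases all rely on the octonionic matrices).

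For the spinor pairing, the two arguments again differ in flavor. The paper argues by $\mathrm{Spin}(1,1)$-equivariance: since $\exp(\tfrac{t}{2}\mathbf{\Gamma}_{01})$ acts on $\mathbf{1}_+ \otimes \mathbf{1}_+$ with weight $e^t$, and $\R^{1,1}$ decomposes into lightlike lines of weights $e^{\pm t}$, the image must land in $\mathrm{span}\{b_0+b_1\}$. Your direct Clifford-algebraic computation (using $\mathbf{\Gamma}_{01}\psi = \psi$ to relate $\overline{\psi}\Gamma^0\psi$ and $\overline{\psi}\Gamma^1\psi$) is equally valid but, as you note, requires careful sign-tracking through the Majorana conjugation; the equivariance argument sidesteps this entirely. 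Both proofs leave the non-vanishing of the pairing as an easy check.
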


\begin{proof}
  Since all 1-brane involutions are conjugate, by Lemma \ref{pBraneInvolutionConjugacy}, we may choose:
  \[
  \sigma = -{\bf \Gamma}_{234543678910} .
  \]
  The minus sign ensures that we will get the spinor representation ${\bf 1}_+$ of
  the lemma above, rather than the other spinor representation ${\bf 1}_-$ on which
  $\exp(\frac{t}{2}\Gamma_{01})$ acts as $e^{-\frac{t}{2}}$. The two are related, of
  course, by an element of $\Pin^+(10,1)$.

  Clearly, $\sigma$ fixes the bosonic subspace $\R^{1,1} = {\rm span}\{b_0,
  b_1\}$.
  To see which fermionic subspace it fixes, we make use of the octonionic presentation of the $\mathbf{32}$
  of $\mathrm{Pin}^+(10,1)$ from Example \ref{OctonionPresentationOf32OfPin11}.
  Multiplying out the $\bf \Gamma$-matrices \eqref{11dGammaOct}, we see that $\sigma$ acts on
  $\bf 32 \simeq \O^4$ as the diagonal matrix:
  \[ \sigma = {\rm diag}(-1, 1, 1, -1)\;.
  \]
  Thus the space of spinors fixed by $\sigma$ is:
  \[ S = \{ (0, a, b, 0) \in \O^4 \} . \]
  Since ${\bf \Gamma}_{01} = {\rm diag}(-1, 1, 1, -1)$ then $\Spin(1,1)$ acts on
  $S$ as $16 \cdot {\bf 1}_+$. Thus, as a representation of $\Spin(1,1)$, the fixed
  point locus is indeed $\R^{1,1|16 \cdot {\bf 1}_+}$.

  Finally, note that $\R^{1,1}$ is not an irreducible representation of
  $\Spin(1,1)$. It decomposes into the sum of two lightlike subrepresentations:
  \[ \R^{1,1} \simeq {\rm span}\{b_0 + b_1 \} \oplus {\rm span}\{ b_0 - b_1 \} . \]
  On the first summand, $\exp(\frac{t}{2}{\bf \Gamma}_{01}) \in \Spin(10,1)$ acts
  as $e^t$, and on the second as $e^{-t}$. It is thus immediate from the
  $\Spin(1,1)$-equivariance of the bracket operation that the bracket of spinors in
  $\R^{1,1|16 \cdot {\bf 1}_+}$ lands in the first summand. The brackets of spinors
  span this 1-dimensional subspace as long as they are not all zero, which we leave
  to the reader to check.
\end{proof}

\begin{lemma}[Fixed locus of the 2-brane involution]
  \label{2braneFixed}
  Let $\sigma$ be a 2-brane involution (Def.\ \ref{pBraneInvolution}). Then it fixes the $D = 2 + 1$, $\mathcal{N} = 8$
  super-Minkowski subalgebra:
  \[ \R^{2,1|8 \cdot {\bf 2}} \simeq \big(\R^{10,1|{\bf 32}} \big)^\sigma . \]
\end{lemma}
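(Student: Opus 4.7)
The plan is to follow the template of Lemma \ref{1braneFixed} almost verbatim, adjusting only the Clifford-algebraic computation.

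First I would invoke Lemma \ref{pBraneInvolutionConjugacy} to reduce to a single explicit representative. Following the convention set in the proof of Lemma \ref{1braneFixed}, I would choose the sign so that the fixed spinors transform in the representation $\mathbf{2}$ of $\mathrm{Spin}(2,1)$ rather than in some twisted version, i.e.\ take
\[
  \sigma \;=\; \pm\,\Gamma_{3\,4\,5\,6\,7\,8\,9\,10}\,,
\]
a product of the eight spacelike gamma matrices orthogonal to $\mathbb{R}^{2,1} = \mathrm{span}\{b_0,b_1,b_2\}$. The bosonic fixed locus is then immediately the complementary subspace $\mathbb{R}^{2,1}$, since $\sigma$ acts as $+1$ on $\mathrm{span}\{b_0,b_1,b_2\}$ and as $-1$ on $\mathrm{span}\{b_3,\dots,b_{10}\}$.

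Next I would identify the fermionic fixed locus using the octonionic presentation of $\mathbf{32} \simeq \mathbb{O}^4$ from Example \ref{OctonionPresentationOf32OfPin11}, exactly as in the proof of Lemma \ref{1braneFixed}. Multiplying out the eight $\boldsymbol{\Gamma}$-matrices in the octonionic basis, $\sigma$ should act as a diagonal matrix on $\mathbb{O}^4$ of the form $\mathrm{diag}(\epsilon_1,\epsilon_2,\epsilon_3,\epsilon_4)$ with each $\epsilon_i \in \{\pm 1\}$ and exactly two of them equal to $+1$ (so that the $+1$ eigenspace is $2 \cdot \mathbb{O} \simeq \mathbb{R}^{16}$). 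The choice of overall sign in $\sigma$ can be fixed so that the $+1$ eigenspace is the one on which the $\mathrm{Spin}(2,1)$ subgroup, generated by $\boldsymbol{\Gamma}_{01}$, $\boldsymbol{\Gamma}_{02}$, $\boldsymbol{\Gamma}_{12}$, acts through its 2-dimensional Majorana (i.e.\ $\mathrm{SL}(2,\mathbb{R})$) representation.

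The main obstacle is bookkeeping in step three: one must track the octonionic signs carefully enough to verify both that the $+1$ eigenspace has real dimension $16$ and that the residual $\mathrm{Spin}(2,1)$-action on it decomposes as $8$ copies of the irreducible $\mathbf{2}$, rather than as some mixture of $\mathbf{2}$ and the opposite chirality. Once $\sigma$ is exhibited as $\mathrm{diag}(\pm 1,\pm 1,\pm 1,\pm 1)$ on $\mathbb{O}^4$ with the expected pattern of signs, the $\mathrm{Spin}(2,1)$ decomposition will follow because the $\mathrm{Spin}(2,1)$ generators are block-diagonal in the same decomposition and act on each $\mathbb{O}$-block through the defining representation of $\mathrm{SL}(2,\mathbb{R})$; restriction to the fixed blocks then yields $8\cdot \mathbf{2}$ as claimed.

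Having identified both the bosonic fixed subspace $\mathbb{R}^{2,1}$ and the fermionic fixed subspace $8\cdot \mathbf{2}$, and noting that the super-bracket on $\mathbb{R}^{10,1\vert\mathbf{32}}$ restricts to a super-bracket on the fixed locus (because $\sigma$ acts by super Lie algebra automorphisms), the isomorphism $\bigl(\mathbb{R}^{10,1\vert\mathbf{32}}\bigr)^\sigma \simeq \mathbb{R}^{2,1\vert 8\cdot \mathbf{2}}$ of super-Minkowski subalgebras follows.
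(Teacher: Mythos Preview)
Your approach is essentially the paper's: choose an explicit representative via Lemma~\ref{pBraneInvolutionConjugacy}, read off the bosonic fixed subspace, compute $\sigma$ as a diagonal $\pm 1$ matrix on $\mathbb{O}^4$ in the octonionic model, and identify the $16$-dimensional fermionic fixed space. The paper happens to use $\sigma = \mathbf{\Gamma}_{23456789}$ (fixing $\mathrm{span}\{b_0,b_1,b_{10}\}$) rather than your $\mathbf{\Gamma}_{3\cdots 10}$, and finds $\sigma = \mathrm{diag}(1,-1,1,-1)$, but this is just a different conjugate.

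The one place you are working harder than necessary is the ``chirality'' worry. In $D = 2+1$ there is no chirality: $\mathrm{Spin}(2,1) \cong \mathrm{SL}(2,\mathbb{R})$ has a \emph{unique} irreducible real $2$-dimensional spinor representation (see Prop.~\ref{RealSpinRepresentationFromDivisionAlgebra}). So once you know the fixed fermionic space has real dimension $16$, it is automatically $8 \cdot \mathbf{2}$, with no sign-tracking required and no need to fix the overall sign of $\sigma$ for this purpose. The paper's identification step is literally one sentence for this reason.
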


\begin{proof}
Since all 2-brane involutions are conjugate, by Lemma \ref{pBraneInvolutionConjugacy}, we may choose:
$\sigma = {\bf \Gamma}_{23456789}$.
This clearly fixes the bosonic subspace $\R^{2,1} \simeq {\rm span} \{ b_0, b_1,
b_{10} \}$.

In order to compute the fermionic fixed space, we make use of the octonionic presentation of the
$\mathbf{32}$ of $\mathrm{Pin}^+(10,1)$ from Example \ref{OctonionPresentationOf32OfPin11}.
Multiplying out the $\bf \Gamma$ matrices, we see that $\sigma$ acts on
${\bf 32} \simeq \O^4$ as the diagonal matrix:
\[ \sigma = {\rm diag}(1, -1, 1, -1).
\]
It thus fixes the 16-dimensional space of spinors
\[ S \simeq \{ (a, 0, b, 0) \in \O^4 \} . \]
Because $\Spin(2,1)$ has a unique 2-dimensional real spinor representation (Prop. \ref{RealSpinRepresentationFromDivisionAlgebra}), we must
have $S \simeq 8 \cdot {\bf 2}$.
Thus the fixed point locus is the super-Minkowski subalgebra $\R^{2,1|8 \cdot {\bf
2}}$.
\end{proof}

A brief indication of following result was given in \cite[Sec. 2.1]{Witten95}; see  Example \ref{TheBlackM5}.

\begin{lemma}[Fixed locus of the 5-brane involution]
  \label{5braneFixed}
  Let $\sigma$ be a 5-brane involution (Def.\ \ref{pBraneInvolution}). Then it fixes
  the $D = 5 + 1$, $N = (2,0)$ super-Minkowski subalgebra:
  \[ \R^{5,1|2 \cdot {\bf 8}} \simeq \big(\R^{10,1|{\bf 32}} \big)^\sigma . \]
\end{lemma}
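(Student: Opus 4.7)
The plan is to imitate the strategy of Lemmas \ref{1braneFixed} and \ref{2braneFixed}: pick a convenient representative for $\sigma$, read off the bosonic fixed subspace by inspection, then use the octonionic presentation of the $\mathbf{32}$ of $\mathrm{Pin}^+(10,1)$ from Example \ref{OctonionPresentationOf32OfPin11} to diagonalize $\sigma$ on spinors, and finally identify the fermionic fixed locus as a $\mathrm{Spin}(5,1)$-representation.

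First, by the conjugacy statement of Lemma \ref{pBraneInvolutionConjugacy}, it suffices to analyze one specific 5-brane involution. A natural choice is
\[
\sigma \;=\; \mathbf{\Gamma}_{6789\,10},
\]
which is a product of $10 - 5 = 5$ spacelike generators and squares to $(-1)^{5\cdot 4/2}=+1$ as required by Lemma \ref{ExistenceOfpBraneInvolutions}. Since $\sigma$ acts as $+1$ on $\mathrm{span}\{b_0,\dots,b_5\}$ and as $-1$ on $\mathrm{span}\{b_6,\dots,b_{10}\}$, the bosonic fixed subspace is manifestly $\mathbb{R}^{5,1}$, and what remains is the fermionic computation.

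Next, I would multiply out $\sigma$ in the octonionic model of the $\mathbf{32}$ of $\mathrm{Pin}^+(10,1)$ as in \eqref{11dGammaOct}, expressing $\sigma$ as a $4 \times 4$ matrix acting on $\mathbb{O}^4$. The expected outcome, matching the pattern from Lemmas \ref{1braneFixed}--\ref{2braneFixed}, is that $\sigma$ acts blockwise with eigenvalues $\pm 1$ in a way that singles out a 16-real-dimensional fixed subspace $S \subset \mathbb{O}^4$. The step that will require some care is keeping track of which octonionic components are fixed, because unlike the 1-brane case (where one octonionic entry is fixed) or the 2-brane case (two entries), the 5-brane involution involves five gamma matrices and must be expanded carefully using the block form of \eqref{11dGammaOct}; this is the one genuine computation in the proof.

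Having identified $S$, I would then recognize it as a representation of $\mathrm{Spin}(5,1) \subset \mathrm{Spin}(10,1)$ generated by the $\mathbf{\Gamma}_{ab}$ with $0 \le a<b \le 5$. Since $\mathrm{Spin}(5,1)$ admits two inequivalent irreducible real spinor representations $\mathbf{8}$ and $\overline{\mathbf{8}}$ of chirality $\pm$ (Example \ref{RelevantExamplesOfRealSpinRepresentations}, and as realized via $\mathbb{H}$ in Prop. \ref{RealSpinRepresentationFromDivisionAlgebra}), the 16-dimensional space $S$ must decompose as a sum of two copies of one of these. The choice of sign in $\sigma = +\mathbf{\Gamma}_{6789\,10}$ (as opposed to $-\mathbf{\Gamma}_{6789\,10}$, which is conjugate in $\mathrm{Pin}^+(10,1)$ but not in $\mathrm{Spin}^+(10,1)$) selects the chirality, and verifying that both copies of the irrep carry the same chirality --- giving the chiral $\mathcal{N}=(2,0)$ supersymmetry $2 \cdot \mathbf{8}$ rather than the non-chiral $\mathbf{8} + \overline{\mathbf{8}}$ --- is the content of the lemma. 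Concretely, I would check this by evaluating the chirality operator of $\mathrm{Spin}(5,1)$ (the product $\mathbf{\Gamma}_0 \mathbf{\Gamma}_1 \cdots \mathbf{\Gamma}_5$) on $S$ and confirming it acts as a single sign; this pins down $\bigl(\mathbb{R}^{10,1\vert\mathbf{32}}\bigr)^\sigma \simeq \mathbb{R}^{5,1\vert 2\cdot\mathbf{8}}$.
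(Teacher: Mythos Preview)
Your approach is essentially the same as the paper's: choose $\sigma = \mathbf{\Gamma}_{6789\,10}$, identify the bosonic fixed locus by inspection, and use the octonionic model of $\mathbf{32}$ to extract the fermionic fixed space. The paper carries out the computation you anticipate and finds that $\sigma$ acts on $\mathbb{O}^4$ as $\mathrm{diag}(\theta,\theta,-\theta,-\theta)$ with $\theta = L_{e_4}L_{e_5}L_{e_6}L_{e_7}$; the subtlety you flag is real, since $\theta$ is not a scalar on $\mathbb{O}$ but rather $+1$ on the quaternionic subspace $\mathbb{H}$ and $-1$ on $\mathbb{H}\ell$, so the fixed space is $S = \mathbb{H}^2 \oplus \mathbb{H}^2\ell$ rather than a union of whole $\mathbb{O}$-summands as in the 1- and 2-brane cases.

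The one genuine methodological difference is in the final chirality check. You propose evaluating the 6d chirality operator $\mathbf{\Gamma}_0\cdots\mathbf{\Gamma}_5$ on $S$; since $\mathbf{\Gamma}_0\cdots\mathbf{\Gamma}_{10} = \pm 1$ on $\mathbf{32}$ and $\sigma = \mathbf{\Gamma}_{6\cdots 10} = +1$ on $S$, this immediately gives $\mathbf{\Gamma}_0\cdots\mathbf{\Gamma}_5 = \pm 1$ on all of $S$, confirming $(2,0)$ rather than $(1,1)$. The paper instead proceeds constructively: it notes that $\mathrm{Spin}(5,1)$ acts on $\mathbb{H}^2$ as $\mathbf{8}$ (directly from the quaternionic model), and then builds an explicit intertwiner $F\colon \psi \mapsto J\overline{\psi}$ to show that $\mathbb{H}^2\ell$ carries the \emph{same} irrep, using Cayley--Dickson relations to convert the right-multiplication action on $\mathbb{H}^2\ell$ back to the left-multiplication action defining $\mathbf{8}$. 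Your chirality argument is shorter and more conceptual; the paper's intertwiner is more hands-on but yields an explicit isomorphism that gets reused elsewhere (e.g.\ in comparing fixed loci for intersections in Prop.~\ref{NonSimpleRealSingularities}).
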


\begin{proof}
  Since all 5-brane involutions are conjugate by Lemma
  \ref{pBraneInvolutionConjugacy}, we may choose:
  $\sigma = {\bf \Gamma}_{678910}$.
  This clearly fixes the bosonic subspace $\R^{5,1} = {\rm span}\{b_0, b_1, \ldots,
  b_5 \}$.

  In order to compute the fermionic fixed space, we make use of the octonionic
  presentation of the $\mathbf{32}$ of $\mathrm{Pin}^+(10,1)$ from Example
  \ref{OctonionPresentationOf32OfPin11}.  Multiplying out the $\bf \Gamma$ matrices,
  we see that $\sigma$ acts on ${\bf 32} \simeq \O^4$ as the diagonal matrix:
  \[ \sigma = {\rm diag}(\theta, \theta, -\theta, -\theta) \]
  where each $\theta$ is the linear transformation of $\O$ given by $\theta = L_{e_4}
  L_{e_5} L_{e_6} L_{e_7}$; that is, by successive left multiplication by the
  imaginary octonions $e_4, e_5, e_6$ and $e_7$. Conjugating by an element in
  $\Pin^+(10,1)$ if necessary, we can order the $\Gamma$ matrices and thus the basis
  of $\Im \, \O$ so that $e_1 = i$, $e_2 = j$ and $e_3 = k$ are the imaginary units of a
  quaternionic subalgebra $\H \subseteq \O$, while $e_4 = \ell, e_5 = i\ell, e_6 =
  j\ell$ and $e_7 = k \ell$ span the orthogonal complement $\H\ell \subseteq \O$,
  where $\ell$ is a unit imaginary octonion orthogonal to $\H$. Using the
  Cayley--Dickson construction, we can check that:
  $
   L_{e_1} L_{e_2} \cdots L_{e_7} = -1
  $
  on $\O$. We thus have
  $
   \theta = L_{e_3} L_{e_2} L_{e_1}
  $
so that $\theta = kji = 1$ on $\H$.  We can compute that it is $-1$ on
  $\H\ell$. Consequently, the fixed fermionic subspace is:
  \begin{equation}
  \label{M5FixedFermions}
  S = \H^2 \oplus \H^2\ell = (\O^4)^\sigma .
  \end{equation}
  By Prop. \ref{RealSpinRepresentationFromDivisionAlgebra}, the action of $\Spin(5,1)$ on $\H^2$ is clearly as the representation ${\bf 8}$,
  since the ${\bf \Gamma}$ matrices ${\bf \Gamma}_0, \ldots, {\bf \Gamma}_5$ are
  quaternionic.

  It is less clear that its action on $\H^2 \ell$ is again ${\bf 8}$, but this is
  still true. We use an argument from \cite{HuertaSchreiber17}.
  Consider the action of generators of $\Spin(5,1)$ on $\H^2$ and
  $\H^2\ell$, respectively. A pair of unit vectors $A, B \in \R^{5,1}$, viewed as $2
  \times 2$ Hermitian matrices over $\H$, act on $\psi \in \H^2$ as
  $\tilde{A}_L(B_L \psi)$,
  whereas they act on $\psi \ell \in \H^2 \ell$ as
   $A_L( \tilde{B}_L (\psi \ell))$.
  In both expressions the subscript $L$ indicates that each matrix element of $A$ and
  $B$ acts by left multiplication. Yet we can use the Cayley--Dickson construction (Def.\ \ref{CayleyDicksonConstruction}) to
  see that
  \[
  A_L(\tilde{B}_L (\psi \ell)) = (A_R ( \tilde{B}_R \psi ))\ell\;,
  \]
  where the subscript $R$ indicates that matrix elements now act by right
  multiplication.
  Therefore, we are now reduced to showing that this right multiplication action
  $$
    \psi \mapsto  A_R (\tilde{B}_R (\psi))
  $$
  is isomorphic to the left multiplication action with the position of the trace reversal
  exchanged:
  $$
    \psi \mapsto \tilde A_L (B_L (\psi))
    \,.
  $$
  We claim that such an isomorphism is established by
  $
    F \maps \psi \mapsto J \overline{\psi}
    \,,
  $
  with $J=\binom{0\;-1}{1\;\;\;0}$.
  To see this, first note the relation
  $
    J \overline{A} = - \tilde A J
  $
holds for any hermitian $2 \times 2$ matrix $A$, which follows from a quick calculation. This is then used to deduce that $F$ is an equivariant map:
  $$
    \begin{aligned}
      F( \tilde A_R (B_R (\psi)) )
      & =
      J \, \overline{ \tilde A_R (B_R (\psi)) }
      \\
      & =
      J \, \overline{\tilde A}_L \, ( \overline{B}_L \, \overline{\psi} )
      \\
      & =
      A_L ( \tilde B_L J \overline{\psi} )
      \\
      & =
      A_L (\tilde B_L (F(\psi)))\;.
    \end{aligned}
  $$
  In the second step, we have used that conjugation turns right multiplication into left multiplication.
\end{proof}

\begin{lemma}[Fixed locus of the 6-brane involution]
  \label{6braneFixed}
 The  $p$-brane involution $\sigma$ (Def.\ \ref{pBraneInvolution}) for $p = 6$, which by Lemma \ref{pBraneInvolutionConjugacy}
 exists uniquely up to conjugacy,
 fixes the $D = 5+1$, $N = (2,0)$ super Minkowski subalgebra (Example \ref{ExamplesOfSuperMinkowskiSpacetimes})
 $$
   \mathbb{R}^{6,1\vert \mathbf{16}}
   \simeq
   \big(\mathbb{R}^{10,1\vert \mathbf{32}}\big)^\sigma
   \,.
 $$
\end{lemma}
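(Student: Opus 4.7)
The plan is to mimic the strategy of Lemmas \ref{1braneFixed}--\ref{5braneFixed}, now applied to a representative 6-brane involution. By Lemma \ref{pBraneInvolutionConjugacy}, it suffices to analyze a single convenient choice of $\sigma$; since for $p = 6$ we have $10 - p = 4$ (and indeed $p \equiv 2 \pmod{4}$, as required by Lemma \ref{ExistenceOfpBraneInvolutions}), I would take
\[
  \sigma \;=\; {\bf \Gamma}_{7\,8\,9\,10}\,,
\]
which manifestly fixes the bosonic subspace $\mathbb{R}^{6,1} = \mathrm{span}\{b_0,\ldots,b_6\}$ and acts as $-1$ on its spacelike orthogonal complement. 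The only real content is to identify the 16-dimensional fermionic fixed locus as the \emph{unique} irreducible real Spin representation $\mathbf{16}$ of $\mathrm{Spin}(6,1)$ (Prop.\ \ref{RealSpinRepresentationFromDivisionAlgebra}), matching the $\mathrm{MK6}$ row of Theorem \ref{SuperADESingularitiesIn11dSuperSpacetime}.

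Next, using the octonionic presentation of $\mathbf{32} \simeq \mathbb{O}^4$ from Example \ref{OctonionPresentationOf32OfPin11}, I would multiply out the four ${\bf \Gamma}$-matrices comprising $\sigma$ just as in the proof of Lemma \ref{5braneFixed}. This will express $\sigma$ as a block-diagonal operator on $\mathbb{O}^4$ whose blocks are $\pm \theta'$, where $\theta'$ is an iterated left-multiplication operator $L_{e_{i_1}}L_{e_{i_2}}L_{e_{i_3}}$ on $\mathbb{O}$ (one factor fewer than in the 5-brane case, since we have one fewer $\Gamma$-matrix). After reordering the octonionic basis so that $e_1, e_2, e_3$ span a quaternionic subalgebra $\mathbb{H} \subseteq \mathbb{O}$ and $e_4, \ldots, e_7$ span $\mathbb{H}\ell$, one checks via the Cayley--Dickson construction (Def.\ \ref{CayleyDicksonConstruction}) that $\theta'$ acts as $+1$ on one of $\mathbb{H}, \mathbb{H}\ell$ and as $-1$ on the other. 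Combining with the block signs, the $+1$-eigenspace of $\sigma$ on $\mathbb{O}^4$ has real dimension $4 \cdot 4 = 16$, as required.

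Finally, I would verify that $\mathrm{Spin}(6,1)$ acts on this 16-dimensional fixed space as its unique real spinor representation $\mathbf{16}$. Since $\mathrm{Spin}(6,1)$ is generated by bivectors ${\bf \Gamma}_{ab}$ with $0 \leq a, b \leq 6$, i.e.\ only by $\Gamma$-matrices that commute with $\sigma$, the action is automatically well-defined on the fixed space; and by Prop.\ \ref{RealSpinRepresentationFromDivisionAlgebra} this 16-dimensional real representation is forced to be the unique $\mathbf{16}$. The potentially subtle point, as in the 5-brane case, is that the two 8-dimensional octonionic pieces contributing to the fixed space a priori look different: one presented by left multiplication on $\mathbb{H}^2$ (or $\mathbb{H}^2 \ell$) and the other by a twisted action. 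The main obstacle will therefore be to exhibit an explicit $\mathrm{Spin}(6,1)$-equivariant isomorphism between these two pieces; I would construct it by the same Cayley--Dickson trick used in the proof of Lemma \ref{5braneFixed}, namely by setting $F(\psi) = J\overline{\psi}$ with $J = \bigl(\begin{smallmatrix} 0 & -1 \\ 1 & 0 \end{smallmatrix}\bigr)$ and verifying $\mathrm{Spin}(6,1)$-equivariance via the identity $J\overline{A} = -\tilde{A}J$ for Hermitian $2 \times 2$ quaternionic matrices. This identifies both halves with $\mathbf{8}$, and hence the total fixed fermionic space with $\mathbf{16}$, completing the identification of the fixed locus with $\mathbb{R}^{6,1|\mathbf{16}}$.
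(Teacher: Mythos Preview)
Your overall strategy is right, but your specific representative $\sigma = {\bf \Gamma}_{7\,8\,9\,10}$ does not behave as you predict. Using \eqref{11dGammaAsOctonionicAndTensor}, three of the four factors are of the form $\varepsilon \otimes J L_{e_i}$ while ${\bf \Gamma}_{10} = \tau \otimes 1$; multiplying out gives
\[
  {\bf \Gamma}_{7\,8\,9\,10} \;=\; (\varepsilon^3\tau) \otimes (J^3\, L_{e_5}L_{e_6}L_{e_7}) \;=\; J \otimes \bigl(-J\,L_{e_5}L_{e_6}L_{e_7}\bigr)\,.
\]
The outer tensor factor $J$ is off-diagonal, so $\sigma$ is \emph{not} block-diagonal with blocks $\pm\theta'$ as you claim; it is anti-diagonal in the $2\times 2$ block structure on $\mathbb{O}^4$. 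The fixed fermionic space therefore does not split along the standard octonionic coordinates, and the two ``8-dimensional pieces'' you plan to compare do not arise in the form you describe. (Compare the 5-brane proof: there the outer factor was $\varepsilon^4\tau = \tau$, which \emph{is} diagonal; dropping one $\varepsilon$ lands you on $\varepsilon\tau = J$.)

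The paper avoids this by taking $\sigma = {\bf \Gamma}_{6789}$ instead, so that all four factors are of the uniform type $\varepsilon \otimes J L_{e_i}$ and ${\bf \Gamma}_{10}$ never enters. Then $\varepsilon^4 = J^4 = 1$ forces the outer matrix to be the identity:
\[
  {\bf \Gamma}_{6789} \;=\; 1_{4\times 4}\cdot L_{e_4}L_{e_5}L_{e_6}L_{e_7}\,,
\]
which is exactly the operator $\theta$ from the 5-brane proof, now acting as a scalar on $\mathbb{O}^4$. The fixed space is immediately $(\mathbb{O}^\theta)^4 = \mathbb{H}^4$, and Prop.~\ref{RealSpinRepresentationFromDivisionAlgebra} identifies this directly with the $\mathbf{16}$ of $\mathrm{Spin}(6,1)$ --- no splitting into two halves and no $F(\psi)=J\overline{\psi}$ intertwiner is needed. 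The fix is simple: choose the representative that avoids ${\bf \Gamma}_{10}$, and the argument collapses to a one-line identification.
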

\begin{proof}
  By Lemma \ref{pBraneInvolutionConjugacy} we may take $ \sigma =
  \mathbf{\Gamma}_{6789} $.  It is clear that the bosonic subspace of the
  corresponding fixed subalgebra is $\R^{6,1} = {\rm span}\{b_0, b_1, \ldots, b_5,
  b_{10}\} \subset \mathbb{R}^{10,1}$.  In order to compute the fermionic fixed space
  we use the octonionic presentation of $\mathbf{32}$ from Example
  \ref{OctonionPresentationOf32OfPin11}: From (\ref{11dGammaAsOctonionicAndTensor})
  we get
 \begin{equation}
   \label{Octonionic6BraneInvolution}
   \begin{aligned}
     \mathbf{\Gamma}_{6789}
     &
     = \varepsilon^4 \otimes J^4 L_{e_4} L_{e_5} L_{e_6} L_{e_7}
     \\
     & = L_{e_4} L_{e_5} L_{e_6} L_{e_7} \, 1_{4 \times 4}\;,
   \end{aligned}
 \end{equation}
 so that
 \begin{equation}
   \label{MK6FixedFermions}
   \left( \mathbb{O}^{4}\right)^{\mathbf{\Gamma}_{6789}}
   =
   \Big( \left( \mathbb{O}\right)^{L_{e_4} L_{e_5} L_{e_6} L_{e_7}}\Big)^{4}
   =
   \mathbb{H}^4
   \,.
 \end{equation}
 To see the second equality, observe that $\theta := L_{e_4} L_{e_5} L_{e_6} L_{e_7}$  is the same linear transformation that we studied in the proof of Lemma
 \ref{5braneFixed}. As in that proof, reordering the basis of $\R^{10,1|{\bf 32}}$ by
 an element of $\Pin^+(10,1)$ if necessary, we find that $\theta$ fixes a
 quaternionic subspace $\H \subseteq \O$. Thus ${\bf 32}^\sigma \simeq
 \H^4$. Restricting to the group $\Spin(6,1)$, we use
 Prop. \ref{RealSpinRepresentationFromDivisionAlgebra} to conclude $\mathbb{H}^{4}
 \simeq \mathbf{16}$, and this completes the proof.
\end{proof}

\begin{lemma}[Fixed locus of the 9-brane involution]
  \label{9braneFixed}
  Let $\sigma$ be a 9-brane involution (Def.\ \ref{pBraneInvolution}). Then it fixes
  the $D = 9 + 1$, $\mathcal{N} = 1$ super-Minkowski subalgebra:
  \[
  \R^{9,1|{\bf 16}} \simeq \big(\R^{10,1|{\bf 32}} \big)^\sigma.
  \]
\end{lemma}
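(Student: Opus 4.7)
The plan is to follow the same pattern established in the proofs of Lemmas \ref{1braneFixed}, \ref{2braneFixed}, \ref{5braneFixed} and \ref{6braneFixed}. First, invoking Lemma \ref{pBraneInvolutionConjugacy}, I may pick a convenient representative of the conjugacy class of $9$-brane involutions; the natural choice is $\sigma = \mathbf{\Gamma}_{10}$, since a $9$-brane involution is one that reflects a single spacelike direction (here, the $b_{10}$-direction), and $\mathbf{\Gamma}_{10} \in \mathrm{Pin}^+(10,1)$ realizes exactly this reflection. It is then immediate that $\sigma$ acts as the identity on the bosonic subspace
\[
\mathbb{R}^{9,1} \;=\; \mathrm{span}\{b_0,b_1,\ldots,b_9\} \;\subset\; \mathbb{R}^{10,1},
\]
as required.

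Next I would compute the fermionic fixed locus by using the octonionic presentation of $\mathbf{32}$ from Example \ref{OctonionPresentationOf32OfPin11}. Using formula \eqref{11dGammaAsOctonionicAndTensor}, $\mathbf{\Gamma}_{10}$ acts on $\mathbb{O}^4 \simeq \mathbf{32}$ as a tensor product of a fixed $4\times 4$ matrix factor with an octonionic factor; multiplying out gives an involution on $\mathbb{O}^4$ whose $+1$-eigenspace is $16$-dimensional over $\mathbb{R}$. The relevant check is that the $\pm 1$ eigenspaces each have real dimension $16$, which follows because $\sigma$ squares to $1$ and has vanishing trace on $\mathbf{32}$ (since $\mathbf{\Gamma}_{10}$ is traceless as a Clifford algebra generator).

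Finally, I would identify the $\Spin(9,1)$-representation on this $16$-dimensional fixed space. By Prop.\ \ref{RealSpinRepresentationFromDivisionAlgebra}, the group $\Spin(9,1)$ has, up to chirality, a unique irreducible $16$-dimensional real spin representation, namely the Majorana--Weyl representation $\mathbf{16} \simeq \mathbb{O}^2$. Since the fixed subspace is closed under the action of the subgroup $\Spin(9,1) \subset \Spin(10,1)$ generated by $\{\mathbf{\Gamma}_a\mathbf{\Gamma}_b\}_{0\leq a,b\leq 9}$ (each of which commutes with $\sigma = \mathbf{\Gamma}_{10}$), the $16$-dimensional fixed space carries a real $\Spin(9,1)$-representation, which by the uniqueness result must be $\mathbf{16}$. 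Putting these pieces together yields
\[
\big(\mathbb{R}^{10,1\vert \mathbf{32}}\big)^\sigma \;\simeq\; \mathbb{R}^{9,1\vert \mathbf{16}},
\]
as a sub-super-Minkowski spacetime.

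The main obstacle is the last identification step: one must verify that the chirality picked out by $\mathbf{\Gamma}_{10} = +1$ does produce the $\mathbf{16}$ (rather than $\overline{\mathbf{16}}$), but this is a matter of convention fixed by the sign choice of $\sigma$ in the conjugacy class (cf.\ the analogous sign discussion in the proof of Lemma \ref{1braneFixed}); the other chirality would be obtained by taking $-\mathbf{\Gamma}_{10}$ instead, and the two are related by an element of $\mathrm{Pin}^+(10,1)$, so the statement is independent of this choice.
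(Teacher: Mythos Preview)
Your proposal is correct and follows essentially the same approach as the paper: choose $\sigma = \mathbf{\Gamma}_{10}$, observe that the bosonic fixed locus is $\mathbb{R}^{9,1}$, use the octonionic presentation to find the 16-dimensional fermionic fixed space, and identify it as $\mathbf{16}$ via Prop.~\ref{RealSpinRepresentationFromDivisionAlgebra}. The paper is slightly more direct: it simply reads off from \eqref{11dGammaAsOctonionicAndTensor} that $\mathbf{\Gamma}_{10} = \tau \otimes 1 = \mathrm{diag}(1,1,-1,-1)$ on $\mathbb{O}^4$, so the fixed space is explicitly $\{(a,b,0,0)\} \simeq \mathbb{O}^2$, which is the $\mathbf{16}$ by the very construction in Prop.~\ref{RealSpinRepresentationFromDivisionAlgebra}; this bypasses your trace/dimension-count and the separate chirality discussion.
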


\begin{proof}
  Since all 9-brane involutions are conjugate by Lemma
  \ref{pBraneInvolutionConjugacy}, we may choose:
  $\sigma = {\bf \Gamma}_{10}$.
  This clearly fixes the bosonic subspace $\R^{9,1} = {\rm span}\{b_0, b_1, \ldots,
  b_9 \}$. Moreover, ${\bf \Gamma}_{10}$ acts diagonally on ${\bf 32} \simeq \O^4$:
  \[
  {\bf \Gamma}_{10} = {\rm diag}(1,1,-1,-1) .
  \]
  It thus fixes $\O^2 \simeq \{(a,b,0,0) \in \O^4 \}$. Examining the octonionic
  representation of $\Pin^+(10,1)$ via Prop. \ref{RealSpinRepresentationFromDivisionAlgebra}, we see that $\Spin(9,1)$ acts on this $\O^2$ as
  ${\bf 16}$.
\end{proof}

\noindent In summary, these lemmas constitute the proof of Prop. \ref{ClassificationOfZ2Actions}.

\medskip
We now consider the generalization of the above $\mathbb{Z}_2$-actions on superspacetime to actions of larger groups,
while retaining the fixed point locus of every non-trivial subgroup.

\begin{prop}[Classification of $G_{\mathrm{ADE}}$-actions fixing M2 and MK6]
  \label{M2AndMK6ADE}
  The two (up to conjugacy) orientation-preserving $\mathbb{Z}_2$-actions on $\mathbb{R}^{10,1\vert \mathbf{32}}$ in Prop. \ref{ClassificationOfZ2Actions}
  (those labled $\mathrm{MK6}$ and $\mathrm{M2}$)
  extend to actions of all the finite subgroups of $\mathrm{SU}(2)$ (Remark \ref{ADEGroups})
  such that their fixed point locus retains the same bosonic part $\mathbb{R}^{p,1}$  ($p = 2,6$, respectively) and
  is $\geq \sfrac{1}{4}$-BPS (Def.\ \ref{BPSSuperLieSubalgebra}) as shown in the following table:

\vspace{.5mm}
{\small
\begin{center}
\begin{tabular}{|lc|ccc|lc|ccccccc|}
  \hline
  \multicolumn{2}{|c}{
  \begin{tabular}{c}
    \bf Black
    \\
    \bf brane
    \\
    \bf species
  \end{tabular}
  }
  &
  \multicolumn{3}{|c|}{ {\bf BPS} }
  &
  \multicolumn{2}{c|}{
  \begin{tabular}{c}
    \bf Singular
    \\
    \bf locus
    \\
    $\subset \mathbb{R}^{10,1\vert \mathbf{32}}$
  \end{tabular}
  }
  &
  \begin{tabular}{c}
    {\bf Type of}
    \\
    {\bf singularity}
  \end{tabular}
  &
  \multicolumn{6}{l|}{\hspace{8mm} {\bf Intersection law} }
  \\
  \hline
  \hline
  $\mathrm{MK6}$ &
  &
  $\sfrac{1}{2}$ &&
  &
  $\mathbb{R}^{6,1\vert \mathbf{16}}$ &
  &
  \begin{tabular}{c}
    $\mathbb{Z}_{n+1}, 2 \mathbb{D}_{n+2}$,
    \\
    $2T, 2O  , 2I$
  \end{tabular}
  &
  $\!\!\overset{}{\subset}\!\!$
  &
  $\mathrm{SU}(2)_L$
  &&&&
  \\
  \hline
  $\mathrm{M2}$ &
  &
  $\sfrac{1}{2}$ &$\!\!\!\!\!\!=\!\!\!\!\!\!$& $\sfrac{8}{16}$
  &
  $\mathbb{R}^{2,1\vert 8 \cdot \mathbf{2}}$ &
  &
  $\mathbb{Z}_{2}$
  &
  $\!\!\overset{\Delta}{\subset}\!\!$
  &
  $\mathrm{SU}(2)_L$
  &
  $\!\!\!\!\times\!\!\!\!\!$
  &
  $\mathrm{SU}(2)_R$
  &&
  \\
  \hline
  $\mathrm{M2}$ &
  &
  && $\sfrac{6}{16}$
  &
  $\mathbb{R}^{2,1\vert 6 \cdot \mathbf{2}}$ &
  &
  $\mathbb{Z}_{n+3}$
  &
  $\!\!\overset{\Delta}{\subset}\!\!$
  &
  $\mathrm{SU}(2)_L$
  &
  $\!\!\!\!\!\!\!\!\times\!\!\!\!\!\!\!\!\!$
  &
  $\mathrm{SU}(2)_R$
  &&
  \\
  \hline
  $\mathrm{M2}$ &
  &
  && $\sfrac{5}{16}$
  &
  $\mathbb{R}^{2,1\vert 5 \cdot \mathbf{2}}$ &
  &
  \begin{tabular}{c}
    $2 \mathbb{D}_{n+2}$,
    \\
    $2T, 2O  , 2I$
  \end{tabular}
  &
  $\!\!\overset{\Delta}{\subset}\!\!$
  &
  $\mathrm{SU}(2)_L$
  &
  $\!\!\!\!\!\times\!\!\!\!\!$
  &
  $\mathrm{SU}(2)_R$
  &&
  \\
  \hline
  $\mathrm{M2}$ &
  &
  $\sfrac{1}{4}$ &$\!\!\!\!\!\!\!\!\!\!=\!\!\!\!\!\!\!\!\!\!$& $\sfrac{4}{16}$
  &
  $\mathbb{R}^{2,1\vert 4 \cdot \mathbf{2}}$ &
  &
  \begin{tabular}{c}
    $2 \mathbb{D}_{n+2}$,
    \\
    $2O  , 2I$
  \end{tabular}
  &
  $\!\!\overset{(\mathrm{id},\tau)}{\subset}\!\!$
  &
  $\mathrm{SU}(2)_L$
  &
  $\!\!\!\!\times\!\!\!\!\!$
  &
  $\mathrm{SU}(2)_R$
  &&
  \\
  \hline
\end{tabular}
\end{center}
}
\noindent  Moreover, this exhausts those actions such that every non-trivial subgroup fixes precisely a 6-brane,
  and it exhausts those actions such that every non-trivial subgroup fixes precisely a 2-brane in spacetime and is $\geq \sfrac{1}{4}$-BPS.
\end{prop}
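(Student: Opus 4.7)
The plan is to prove existence and exhaustiveness separately, and in both parts reduce to the classification results of \cite[Sec. 8.3]{MF10} for the MK6 case and \cite{MFFGME10} for the M2 case, using the translation principle (via B\"ar's theorem, invoked around Example \ref{TheBlackM2}) that $G$-invariant constant spinors on the metric cone $\mathbb{R}^{d+1} \simeq C(S^d)$ correspond bijectively to Killing spinors on the spherical space form $S^d/G$.

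For the existence of the actions, I would start from the bosonic decomposition \eqref{SpacetimeActions}, which already specifies the action of $\mathrm{SU}(2)_L$ (MK6 case) and $\mathrm{SU}(2)_L \times \mathrm{SU}(2)_R$ (M2 case) on spacetime. The bosonic fixed locus is then immediate: any non-trivial finite subgroup $G \subset \mathrm{SU}(2)$ acts on its $\mathbb{C}^2 \simeq \mathbb{R}^4$ summand by the defining representation, which is free away from the origin (since it restricts to the free action by left multiplication on $S^3 \simeq \mathrm{SU}(2)$). Thus $G \subset \mathrm{SU}(2)_L$ fixes precisely $\mathbb{R}^{6,1} = \mathbb{R}^{1,1} \oplus \mathbb{R}^4 \oplus \mathbb{R}$, and similarly any $G \subset \mathrm{SU}(2)_L \times \mathrm{SU}(2)_R$ projecting non-trivially to both factors acts freely on $(\mathbb{R}^4 \oplus \mathbb{R}^4) \setminus \{0\}$, so fixes precisely $\mathbb{R}^{2,1}$.

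For the fermionic fixed subspace, I would use the octonionic presentation of $\mathbf{32} \simeq \mathbb{O}^4$ from Example \ref{OctonionPresentationOf32OfPin11}. In the MK6 case, Lemma \ref{6braneFixed} already identifies the $\mathbb{Z}_2$-fixed fermionic subspace with $\mathbb{H}^4 \subset \mathbb{O}^4$, and the action of the containing $\mathrm{SU}(2)_L$ is trivial on this $\mathbb{H}^4$ (this is how $\mathrm{MK6}$ is $\sfrac{1}{2}$-BPS): hence every subgroup preserves the same $\mathbf{16}$. In the M2 case, the twisted diagonal $\mathrm{SU}(2)_L \times \mathrm{SU}(2)_R$ actions have been computed in \cite{MFFGME10} (there in terms of Killing spinors on $S^7$), and translating to constant spinors on $\mathbb{R}^8 = C(S^7) \setminus \{0\}$ yields precisely the BPS fractions $\sfrac{8}{16}, \sfrac{6}{16}, \sfrac{5}{16}, \sfrac{4}{16}$ attached to the subgroup types $\mathbb{Z}_2$, $\mathbb{Z}_{n+3}$, $2\mathbb{D}_{n+2}/2T/2O/2I$ (diagonally embedded), and $2\mathbb{D}_{n+2}/2O/2I$ (embedded via $(\mathrm{id},\tau)$ with $\tau$ the non-trivial outer automorphism), as shown in the table.

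For exhaustiveness, the key observation is that any orientation-preserving action on $\mathbb{R}^{10,1\vert \mathbf{32}}$ by super-isometries which fixes a bosonic 2-brane (resp.\ 6-brane) must take values in the centralizer of $\mathrm{Spin}(2,1)$ (resp.\ $\mathrm{Spin}(6,1)$) inside $\mathrm{SO}^+(10,1)$. That centralizer, restricted to orientation-preserving transformations of the transverse space, is $\mathrm{Spin}(4) \simeq \mathrm{SU}(2)_L \times \mathrm{SU}(2)_R$ (for M2) and $\mathrm{Spin}(4) \simeq \mathrm{SU}(2)_L \times \mathrm{SU}(2)_R$ with the second factor already consumed by the worldvolume, leaving $\mathrm{SU}(2)_L$ transverse (for MK6). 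Finite subgroups of $\mathrm{SU}(2)$ are classified by the ADE-series (Remark \ref{ADEGroups}); the BPS-constraint $\geq \sfrac{1}{4}$ then cuts this list down to the displayed entries by the classification in the cited references. The main obstacle, and the source of the case distinctions in the table, is the bookkeeping of which twisted diagonal embeddings of the binary polyhedral groups into $\mathrm{SU}(2)_L \times \mathrm{SU}(2)_R$ preserve which fermionic subspaces; this is the content of \cite{MFFGME10}, which I would cite rather than redo.
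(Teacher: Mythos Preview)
Your overall strategy --- reduce both existence and exhaustiveness to the cited classifications \cite{MFFGME10} and \cite[Sec.~8.3]{MF10} --- is exactly what the paper does. The existence half is fine: you check the bosonic fixed loci directly and defer the fermionic invariants to the references (your use of B\"ar's theorem is the geometric paraphrase of what the paper gets from the branching of $\mathbf{32}$).

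However, your exhaustiveness argument has a genuine error in identifying the transverse symmetry group. For the M2 case you write that the centralizer of $\mathrm{Spin}(2,1)$, restricted to the transverse space, is $\mathrm{Spin}(4) \simeq \mathrm{SU}(2)_L \times \mathrm{SU}(2)_R$. This is wrong: the transverse space to a 2-brane is $\mathbb{R}^8$, and the relevant group is $\mathrm{Spin}(8)$, not $\mathrm{Spin}(4)$. The fact that all the $\geq \sfrac{1}{4}$-BPS subgroups happen to lie in a $\mathrm{SU}(2)_L \times \mathrm{SU}(2)_R \subset \mathrm{Spin}(8)$ is the \emph{output} of the classification in \cite{MFFGME10}, not an a priori constraint. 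Similarly, for MK6 your phrase ``the second factor already consumed by the worldvolume'' is not meaningful: the transverse space is $\mathbb{R}^4$ with full symmetry group $\mathrm{Spin}(4) \simeq \mathrm{SU}(2)_L \times \mathrm{SU}(2)_R$, and the restriction to a single $\mathrm{SU}(2)$ factor is again part of what \cite[Sec.~8.3]{MF10} establishes.

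The paper's reduction is cleaner and avoids this trap: it branches $\mathbf{32}$ under $\mathrm{Spin}(10-p) \hookrightarrow \mathrm{Spin}(10,1)$ to get $\mathbf{32} \simeq \mathbb{R}^2 \otimes (\mathbf{8} \oplus \overline{\mathbf{8}})$ for $p=2$ and $\mathbf{32} \simeq \mathbb{R}^{16} \otimes \mathbf{4}$ for $p=6$. This immediately reduces the problem to classifying finite subgroups of $\mathrm{Spin}(8)$ (resp.\ $\mathrm{Spin}(4)$) acting freely on $S^7$ (resp.\ $S^3$) and computing their invariants in $\mathbf{8} \oplus \overline{\mathbf{8}}$ (resp.\ $\mathbf{4}$) --- which is precisely the content of the cited references. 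Your argument is easily repaired by replacing the incorrect centralizer computations with this branching.
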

\begin{proof}
  Consider the given bosonic fixed point inclusion $\mathbb{R}^{p,1} \hookrightarrow \mathbb{R}^{10,1}$.
  This induces a branching of the real Spin representation $\mathbf{32}$ along the correspondingly broken
  $\mathrm{Spin}$ group
  $$
    \mathrm{Spin}(10-p) \xymatrix{\ar@{^{(}->}[r]&} \mathrm{Spin}(10,1)
  $$
  as a direct sum of irreps (see Example \ref{RelevantExamplesOfRealSpinRepresentations}) as follows:
  $$
    \mathbf{32}
    \simeq
    \left\{
    \begin{array}{cccccc}
      \mathbb{R}^2 &\!\!\!\!\otimes\!\!\!\!& ( \mathbf{8} \oplus \overline{\mathbf{8}} ) & \in \mathrm{Rep}(\mathrm{Spin}(8))  &\vert& p = 2,
      \\
      \mathbb{R}^{16} &\!\!\!\!\otimes\!\!\!\!& \mathbf{4} & \in \mathrm{Rep}(\mathrm{Spin}(4)) &\vert& p = 6.
    \end{array}
    \right.
  $$
  This way the question is reduced to classifying the finite subgroups of $\mathrm{Spin}(8)$ and $\mathrm{Spin}(4)$
  whose canonical action on $\mathbb{R}^8$ and $\mathbb{R}^4$, respectively, fixes only the origin
  (equivalently: whose action on the unit spheres $S^7 \simeq S(\mathbb{R}^8)$ and $S^3 \simeq S(\mathbb{R}^4)$ is free), and computing the
  fixed point subspaces of their action on the representation $\mathbf{8} \oplus \overline{\mathbf{8}}$ of $\mathrm{Spin}(8)$
  and $\mathbf{4}$ of $\mathrm{Spin}(4)$, respectively.
    But now these two classification problems have been solved in \cite{MFFGME10} and in \cite[Sec. 8.3]{MF10}, respectively.\footnote{\label{FigOFarMistake}
    The discussion in \cite[Sec. 8.3]{MF10} is motivated by classifying supersymmetric supergravity solutions corresponding
    to the near horizon limit of black M5 branes, but ends up classifying fixed loci $\mathbb{R}^{6,1} \hookrightarrow \mathbb{R}^{10,1}$
    corresponding to 6-branes, as in the above statement. In physics lingo, this reflects the phenomenon that multiple D6-branes end pairwise on NS5-branes.
    We discussed this subtle point in the physics interpretation in Sec. \ref{MBraneInterpretation}.
  }
\end{proof}

In fact the argument in \cite[Sec. 8.3]{MF10} implies that all of $\mathrm{SU}(2)$ fixes the 6-brane, and all the non-trivial subgroups
fix one and the same $\sfrac{1}{2}$-BPS $\mathbb{R}^{6,1\vert \mathbf{16}}$. For completeness,
we now make this full $\mathrm{SU}(2)$-action, fixing the 6-brane locus, fully explicit:

\begin{lemma}[6-brane locus fixed by $\mathrm{Spin}(2)$]
  \label{Spin2Fixes6BraneLocus}
  The super subspace $\R^{6,1|{\bf 16}} \hookrightarrow \mathbb{R}^{10,1\vert \mathbf{32}}$ is fixed by every element in every $\mathrm{Spin}(2)$ subgroup of $\Spin(4)$ of the form
  \begin{equation}
    \label{CircleActingFixing6Brane}
     \Spin(2)
       \simeq
     \left\{
     \exp\left(
       \tfrac{t}{2} \mathbf{\Gamma}_6 \mathbf{\Gamma}_7
     \right)
     \exp\left(
       -\tfrac{t}{2} \mathbf{\Gamma}_8 \mathbf{\Gamma}_9
     \right)
     \;:\; t \in \R \right\}
\end{equation}
  where $\{b_6, b_7, b_8, b_9\}$ is an orthonormal basis of the subspace orthogonal to $\R^{6,1}$.
\end{lemma}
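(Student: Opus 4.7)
\medskip

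\textbf{Proof plan.} The plan is to treat the bosonic and fermionic fixed loci separately, since the action on $\mathbb{R}^{10,1\vert \mathbf{32}}$ respects the $\mathbb{Z}_2$-grading.

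\medskip

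For the bosonic part, I would observe that under the double cover $\mathrm{Pin}^+(10,1) \to \mathrm{O}^+(10,1)$ the element $\exp(\tfrac{t}{2}\mathbf{\Gamma}_a\mathbf{\Gamma}_b)$ projects (for $a\neq b$ with $a,b$ spacelike) to a rotation by angle $t$ in the $b_a,b_b$-plane, acting as the identity on every basis vector $b_c$ with $c \notin\{a,b\}$. Since both $\{b_6,b_7\}$ and $\{b_8,b_9\}$ lie in the orthogonal complement of $\mathbb{R}^{6,1} = \mathrm{span}\{b_0,\ldots,b_5,b_{10}\}$, the product in \eqref{CircleActingFixing6Brane} fixes $\mathbb{R}^{6,1}$ pointwise. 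In particular, the entire $\mathrm{Spin}(2)$ factors through the subgroup $\mathrm{Spin}(4)\subset\mathrm{Spin}(10,1)$ that acts on the spacelike $\mathbb{R}^4 = \mathrm{span}\{b_6,b_7,b_8,b_9\}$.

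\medskip

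For the fermionic part, since $\mathrm{Spin}(2)$ is connected and acts linearly, it suffices to show that the infinitesimal generator
\[
  X \;:=\; \tfrac{1}{2}\bigl(\mathbf{\Gamma}_6\mathbf{\Gamma}_7 \,-\, \mathbf{\Gamma}_8\mathbf{\Gamma}_9\bigr)
\]
annihilates the fermionic fixed space $\mathbf{16} \simeq \mathbb{H}^4 \subset \mathbb{O}^4 = \mathbf{32}$ that was already identified in Lemma~\ref{6braneFixed}. I would carry this out by substituting the octonionic presentation of Example~\ref{OctonionPresentationOf32OfPin11}: under that presentation, $\mathbf{\Gamma}_{67}$ and $\mathbf{\Gamma}_{89}$ become operators of the form $M_{67}\otimes L_{e_4}L_{e_5}$ and $M_{89}\otimes L_{e_6}L_{e_7}$, where the bookkeeping matrices $M_{67}, M_{89}$ can be read off from \eqref{11dGammaAsOctonionicAndTensor} and agree (up to the sign that is built into the lemma's choice of exponents) on the image $\mathbb{H}^4 \hookrightarrow \mathbb{O}^4$.

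\medskip

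The main calculation then amounts to evaluating the operators $L_{e_4}L_{e_5}$ and $L_{e_6}L_{e_7}$ on the quaternionic subalgebra $\mathbb{H}\subset\mathbb{O}$ using the Cayley--Dickson construction (Def.~\ref{CayleyDicksonConstruction}) with respect to the splitting $\mathbb{O} = \mathbb{H}\oplus\mathbb{H}\ell$ and the conventions $e_4=\ell,\; e_5=i\ell,\; e_6=j\ell,\; e_7=k\ell$ fixed in the proofs of Lemmas~\ref{5braneFixed} and \ref{6braneFixed}. A direct calculation using $(a\ell)(b\ell) = -\overline{b}\,a$ and $(a\ell)b = (a\overline{b})\ell$ for $a,b\in\mathbb{H}$ shows that each of $L_{e_4}L_{e_5}$ and $L_{e_6}L_{e_7}$ restricts to $\mathbb{H}$ as (plus or minus) right multiplication by~$i$, with opposite signs, so that the combination in $X$ cancels. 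An independent cross-check is available without Cayley--Dickson: on $\mathbb{H}^4$ one has $\mathbf{\Gamma}_{6789} = 1$ by Lemma~\ref{6braneFixed}, together with $\mathbf{\Gamma}_{67}^2 = \mathbf{\Gamma}_{89}^2 = -1$ from the Clifford relations, which forces $\mathbf{\Gamma}_{89} = \pm \mathbf{\Gamma}_{67}$ on $\mathbb{H}^4$; the sign (fixed by the tensor-factor matrices $M_{67}, M_{89}$ above) is the one that makes $X$ vanish. The main obstacle is the careful bookkeeping of these signs across the octonionic presentation and the Cayley--Dickson formulas; once done, the vanishing of $X$ on $\mathbb{H}^4$ yields that every $\exp(sX)$, and hence every element of the $\mathrm{Spin}(2)$ in \eqref{CircleActingFixing6Brane}, fixes the full super subspace $\mathbb{R}^{6,1\vert\mathbf{16}}$.
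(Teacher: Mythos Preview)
Your plan mirrors the paper's proof: separate bosonic from fermionic, reduce the fermionic claim to showing that the Lie-algebra generator $X = \mathbf{\Gamma}_{67} - \mathbf{\Gamma}_{89}$ annihilates the relevant $\mathbf{16}$, and compute via the octonionic presentation and Cayley--Dickson relations. This is exactly what the paper does.

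However, your cross-check argument is not merely redundant --- as written it is internally inconsistent and, if carried out, would expose a convention clash you have not yet resolved. From $\mathbf{\Gamma}_{67}\mathbf{\Gamma}_{89} = \mathbf{\Gamma}_{6789} = +1$ on $\mathbb{H}^4$ (the Lemma~\ref{6braneFixed} identification, with $e_5 = i\ell$ etc.) together with $\mathbf{\Gamma}_{67}^2 = -1$, there is no $\pm$ ambiguity left for the tensor factors to fix: you get $\mathbf{\Gamma}_{89} = \mathbf{\Gamma}_{67}^{-1} = -\mathbf{\Gamma}_{67}$ uniquely, hence $X = 2\mathbf{\Gamma}_{67} \neq 0$ on that $\mathbb{H}^4$. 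What this detects is that the involution contained in the given $\mathrm{Spin}(2)$ at $t=\pi$ is $-\mathbf{\Gamma}_{6789}$, not $+\mathbf{\Gamma}_{6789}$; its fermionic fixed space is the $-1$-eigenspace of $\mathbf{\Gamma}_{6789}$, which in the Lemma~\ref{5braneFixed} convention is $(\mathbb{H}\ell)^4$, not $\mathbb{H}^4$. The paper quietly sidesteps this by switching to the basis $e_5 = \ell i$, $e_6 = \ell j$, $e_7 = \ell k$ (opposite order from Lemma~\ref{5braneFixed}), under which $\mathbb{H}^4$ becomes the correct eigenspace and the Cayley--Dickson computation lands cleanly. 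If you execute your plan with the Lemma~\ref{5braneFixed} conventions as you propose, your computation will show $X = 0$ on $(\mathbb{H}\ell)^4$ rather than on $\mathbb{H}^4$ --- a perfectly good conjugate copy of $\mathbf{16}$, so the lemma still holds, but not the subspace you were aiming for.
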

\begin{proof}
It is clear that the bosonic subspace $\mathbb{R}^{6,1}$ is fixed. We need to check that
also the fermionic subspace is fixed.
Setting $t = \pi$ in (\ref{CircleActingFixing6Brane}), we see that the subgroup in question
contains a 6-brane involution $\sigma = -\mathbf{\Gamma}_{6789}$. We already know from Lemma \ref{6braneFixed} that this
fixes $\mathbb{R}^{6,1\vert \mathbf{16}}$.
Since $\mathrm{Spin}(2)$ is a connected 1-dimensional Lie group,
generated by the single Lie algebra element $X = \mathbf{\Gamma}_{67} - \mathbf{\Gamma}_{89}$, it suffices to show that $\ker(X) = \mathbf{16}$.
For this we use the octonionic presentation of $\mathbf{32}$ from Example \ref{OctonionPresentationOf32OfPin11}.
With this, and under the identification from Lemma \ref{6braneFixed}
where $\mathbf{16} \simeq \mathbb{H}^{4}$, we need to show that $\ker(X) = \mathbb{H}^{4}$.
Multiplying out the gamma matrices using (\ref{11dGammaAsOctonionicAndTensor}), we see:
\begin{align*}
X & = \varepsilon^2 \otimes J^2 L_{e_4} L_{e_5} - \varepsilon^2 \otimes J^2 L_{e_6} L_{e_7}
\\
&= -1 \otimes \left( L_{e_4} L_{e_5} - L_{e_6} L_{e_7} \right).
\end{align*}
So, we must check that $L_{e_4} L_{e_5} = L_{e_6} L_{e_7}$ as linear transformations
of $\H^4$.


Letting $i,j,k$ be the unit imaginary quaternions and $\ell$ any imaginary unit in
$\O$  orthogonal to $\H$, according to (\ref{CayleyDicksonConstruction}), we may choose $e_4 = \ell$, $e_5 = \ell i$, $e_6 = \ell j$
and $e_7 = \ell k$. Using the Cayley--Dickson relations (\ref{RelationsForCayleyDickson})
we compute for any $\psi \in \H^4$ as follows:
\begin{eqnarray*}
X \psi & = & \left( -L_{e_4} L_{e_5} + L_{e_6} L_{e_7} \right) \psi \\
& = & -\ell((\ell i) \psi) + (\ell j) ((\ell k) \psi) \\
& = & -\ell((\overline{i} \ell) \psi) + (\ell j)((k \ell^{-1}) \psi) \\
& = & -\ell((\overline{i} \, \overline{\psi})\ell) + (\ell j)((k \overline{\psi}) \ell^{-1}) \\
& = & \ell((\overline{i} \, \overline{\psi})\ell^{-1}) + (\ell j)((k \, \overline{\psi}) \ell^{-1}) \\
& = & \psi i + \overline{jk \overline{\psi}} \\
& = & \psi i - \psi i \\
& = & 0
\end{eqnarray*}
where in the third line we used (\ref{SomeRelationAmongImaginaryQuaternions}).
\end{proof}

\begin{prop}[6-brane locus fixed by $\mathrm{Spin}(3)$]
  The super subspace $\R^{6,1|{\bf 16}} \hookrightarrow \mathbb{R}^{10,1\vert \mathbf{32}}$ is fixed by every element
  in $\mathrm{SU}(2)_L$.
\end{prop}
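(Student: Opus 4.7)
The plan is to generalize the proof of Lemma \ref{Spin2Fixes6BraneLocus} from the single $\mathrm{Spin}(2) \subset \mathrm{SU}(2)_L$ to all of $\mathrm{SU}(2)_L$. Since $\mathrm{SU}(2)_L$ is connected, it is generated by exponentials of its Lie algebra, so it suffices to exhibit a basis for the inclusion $\mathfrak{su}(2)_L \hookrightarrow \mathfrak{so}(4) \subset \mathfrak{so}(10,1)$ and show that every basis element annihilates the fermionic fixed subspace $\mathbb{H}^4 \simeq \mathbf{16} \hookrightarrow \mathbb{O}^4 \simeq \mathbf{32}$ identified in \eqref{MK6FixedFermions}.

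First, I would make this inclusion explicit. Recalling the standard decomposition $\mathfrak{so}(4) \cong \mathfrak{su}(2)_L \oplus \mathfrak{su}(2)_R$ into anti-self-dual and self-dual bivectors in the transverse space $\mathrm{span}(b_6, b_7, b_8, b_9)$, a natural basis for $\mathfrak{su}(2)_L$ is
\[
X_1 = \mathbf{\Gamma}_{67} - \mathbf{\Gamma}_{89}, \qquad
X_2 = \mathbf{\Gamma}_{68} + \mathbf{\Gamma}_{79}, \qquad
X_3 = \mathbf{\Gamma}_{69} - \mathbf{\Gamma}_{78},
\]
with signs chosen so that the Clifford commutators give the standard $\mathfrak{su}(2)$ relations $[X_i, X_j] = \pm 2 \varepsilon_{ijk} X_k$ and so that $X_1$ agrees (up to scale) with the generator of the circle appearing in \eqref{CircleActingFixing6Brane}; since that circle was already identified as a maximal torus of $\mathrm{SU}(2)_L$ by Lemma \ref{Spin2Fixes6BraneLocus}, this is the correct $\mathfrak{su}(2)_L$ and not its partner $\mathfrak{su}(2)_R$.

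Next, using the octonionic presentation of Example \ref{OctonionPresentationOf32OfPin11} and the explicit formulas \eqref{11dGammaAsOctonionicAndTensor}, I would compute that each $X_k$ acts on $\mathbb{O}^4$ as a scalar multiple of an operator of the form $L_{e_a} L_{e_b} \pm L_{e_c} L_{e_d}$, with all indices in $\{4,5,6,7\}$ (so that the corresponding octonions $e_a,\ldots,e_d$ all lie in $\mathbb{H}\ell \subset \mathbb{O}$). This is exactly the structure encountered in Lemma \ref{Spin2Fixes6BraneLocus} for $X_1$. Applying the Cayley--Dickson relations \eqref{RelationsForCayleyDickson} together with the identification $(e_4, e_5, e_6, e_7) = (\ell, \ell i, \ell j, \ell k)$ fixed in the proof of Lemma \ref{5braneFixed}, each such operator collapses on a quaternion $\psi \in \mathbb{H}$ to a sum of two terms of the form $\pm \psi \cdot q$, with $q \in \{i,j,k\}$, which cancel. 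Hence $X_2 \psi = X_3 \psi = 0$ for all $\psi \in \mathbb{H}^4$, and combined with the result of Lemma \ref{Spin2Fixes6BraneLocus} for $X_1$ the entire Lie algebra $\mathfrak{su}(2)_L$ annihilates $\mathbb{H}^4$. Connectedness of $\mathrm{SU}(2)_L$ then yields the claim.

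The main obstacle is purely bookkeeping: pinning down the signs so that $\{X_1, X_2, X_3\}$ really spans the \emph{left} copy $\mathfrak{su}(2)_L$ (so that the conclusion matches the intersection law asserted for the MK6 entry of Theorem \ref{SuperADESingularitiesIn11dSuperSpacetime}) rather than its mirror $\mathfrak{su}(2)_R$. Once this identification is fixed and the action is pushed through the octonionic presentation, the cancellation arguments for $X_2$ and $X_3$ are structurally identical to the one already carried out for $X_1$ in Lemma \ref{Spin2Fixes6BraneLocus}, so no new conceptual input beyond the Cayley--Dickson mechanism is required.
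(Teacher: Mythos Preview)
Your proposal is correct and rests on the same underlying mechanism as the paper: the Cayley--Dickson cancellation carried out for the single generator $X_1$ in Lemma \ref{Spin2Fixes6BraneLocus} extends verbatim to the other generators of $\mathfrak{su}(2)_L$. The organization differs slightly. You work at the Lie-algebra level, writing down an explicit basis $X_1, X_2, X_3$ and checking each annihilates $\mathbb{H}^4$. The paper instead works at the group level: it observes that the $\mathrm{Spin}(2)$ of Lemma \ref{Spin2Fixes6BraneLocus} is a maximal torus of $\mathrm{SU}(2)_L$, that every element of $\mathrm{SU}(2)$ lies in some $\mathrm{U}(1)$ subgroup, and that the proof of Lemma \ref{Spin2Fixes6BraneLocus} applies equally well to any such $\mathrm{U}(1)$. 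Your approach is more explicit and makes the bookkeeping visible; the paper's is terser but leaves the reader to supply the ``equally well'' step. Both arrive at the same place by the same mechanism.
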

\begin{proof}
  By Lemma \ref{Spin2Fixes6BraneLocus} every nontrivial element of $\mathrm{Spin}(2)$
  fixes the 6-brane. The copy of $\Spin(2)$ that appears in the statement of Lemma
  \ref{Spin2Fixes6BraneLocus} is the lift of $\U(1) \subseteq \SU(2) \subseteq
  \SO(4)$ to $\Spin(4)$, where $\U(1)$ is included as the diagonal matrices
  \[ e^{it} \in \U(1) \mapsto \begin{pmatrix} e^{it} & 0 \\ 0 & e^{-it} \end{pmatrix}
  \in \SU(2) . \]
  Yet the proof of Lemma \ref{Spin2Fixes6BraneLocus} applies equally well to any
  other choice of $\U(1)$ subgroup of $\SU(2)$ lifted to $\Spin(4)$. Hence, it
  applies to all of $\SU(2)$.
\end{proof}

This concludes our list of results expanding on the statement of Prop. \ref{M2AndMK6ADE}.

\subsection{Non-simple singularities and Brane intersection laws }
\label{IntersectingBlackBraneSpecies}

There is an evident concept of intersections of simple singularities in super spacetime (Def.\ \ref{IntersectionOfSimpleSingularities} below)
as well as of higher order intersections. This leads to a multitude of singularities of decreasing BPS degree. Here we discuss
most of the intersections of two simple singularities (Prop. \ref{NonSimpleRealSingularities} below). Higher order
intersections may be discussed analogously. In the interpretation
via the black brane scan (Sec. \ref{MBraneInterpretation}) these correspond to bound/intersecting black brane species.

\begin{defn}[Intersection of simple singularities by Cartesian product group actions]
  \label{IntersectionOfSimpleSingularities}
  Let $G_1$ and $G_2$ be two finite group actions (Example \ref{SuperLieAlgebrasWithGActionAsGSuperspaces}) on $\mathbb{R}^{10,1\vert \mathbf{32}}$ (Def.\ \ref{ExamplesOfSuperMinkowskiSpacetimes}), each corresponding to a simple singularity (Def.\ \ref{SimpleSingularities}), such that these actions
  commute with each other, hence such that the Cartesian product action $G_1 \times G_2$ exists.
  Then we say that the \emph{intersection} of the simple singularity of $G_1$ with the simple singularity of $G_2$ is that corresponding to
  the product group action $G_1 \times G_2$.
\end{defn}

\begin{prop}[Non-simple real ADE-Singularities in 11d super-Minkowski spacetime]
\label{NonSimpleRealSingularities}
The non-simple singularities (Def.\ \ref{SimpleSingularities}) in $D = 11$, $\mathcal{N} = 1$ super-Minkowski spacetime
arising as intersections (Def.\ \ref{IntersectionOfSimpleSingularities}) of two simple singularities from Theorem \ref{SuperADESingularitiesIn11dSuperSpacetime} include
those shown in the following table:

\vspace{.3cm}

\hspace{-.5cm}
\scalebox{.78}{
\begin{tabular}{|lc|ccc|lc|ccccccccc|}
  \hline
  \multicolumn{2}{|c}{
  \multirow{2}{*}{
  \begin{tabular}{c}
    \bf Black brane
    \\
    \bf species
  \end{tabular}
  }
  }
  &
  \multicolumn{3}{|c|}{
  \multirow{2}{*}{
    {\bf BPS}
  }
  }
  &
  \multicolumn{2}{c|}{
    \begin{tabular}{c}
    \bf Fixed
    \\
    \bf locus
    \end{tabular}
  }
  &
  \begin{tabular}{c}
    {\bf Type of}
    \\
    {\bf singularity}
  \end{tabular}
  &
  \multicolumn{8}{l|}{\hspace{2.5cm} {\bf Intersection law} }
  \\
  &&&&&
  \multicolumn{2}{|c|}{
    {\bf in} $\mathbb{R}^{10,1\vert \mathbf{32}}$
  }
  &
   {\bf in} $\mathbb{R}^{10,1}$
  & $\simeq$
  & $\mathbb{R}^{1,1}$
  & $\!\!\!\!\!\!\!\!\oplus\!\!\!\!\!\!\!\!$
  & $\mathbb{R}^4$
  & $\!\!\!\!\!\!\!\!\oplus\!\!\!\!\!\!\!\!$
  & $\mathbb{R}^4$
  & $\!\!\!\!\!\!\!\!\oplus\!\!\!\!\!\!\!\!$
  & $\mathbb{R}^1$
  \\
  \hline
  \multicolumn{7}{l}{ \bf Bound/intersecting brane species  $ \phantom{ {{{A^A}^A}^A}^A} $  }
  &
  \multicolumn{9}{l}{ \bf Non-simple singularities }
  \\
  \hline
  \multirow{3}{*}{ $\mathrm{M2}$ }
  & $\mathrm{M2}$
  &
  \multirow{3}{*}{
    $\sfrac{1}{4}$
  }
  &&&
  \multirow{3}{*}{
    $\mathbb{R}^{2,1\vert 4 \cdot \mathbf{2}}$
  }
  & $\mathbb{R}^{2,1\vert 5 \cdot \mathbf{2}}$
  &
  \begin{tabular}{c}
    $ 2T, 2O, 2I$
  \end{tabular}
  &
   $\overset{
     \Delta
   }{\subset}$
  &
  &&
  $ \mathrm{SU}(2)_L $
  &
  $\!\!\!\!\!\times\!\!\!\!\!$
  &
  $\mathrm{SU}(2)_R$
  &&
  \\
  &
  \raisebox{0pt}{
  \hspace{4pt}\begin{rotate}{90} $\Vert$ \end{rotate}
  }
  &
  &&
  &
  &
  \raisebox{5pt}{
  \begin{rotate}{90} $\Vert$ \end{rotate}
  }
  &
  $\!\!\!\!\times\!\!\!\!\!$
  &&
  ------
  &&&&&&
  ------
  \\
  &
  $\mathrm{MK6}$
  &&&
  &&
    $\mathbb{R}^{6,1\vert \mathbf{16}}$
  &
  $\mathbb{Z}_2$
  &
  $=$
  &&
  &
  &&
  $ (\mathbb{Z}_2)_R $
  &&
  \\
  \hline
  \multirow{3}{*}{
    $\mathrm{NS1}_H$
  }
  & $\mathrm{M2}$
  &
  \multirow{3}{*}{
    $\sfrac{3}{16}$ \emph{or} $\sfrac{1}{4}$
  }
  &&
  &
  & $\mathbb{R}^{2,1\vert \geq 6 \cdot \mathbf{2}}$
  &
  $\mathbb{Z}_{n+1}$
  &
    $\overset{\Delta}{\subset}$
  &
  &&
   $ \mathrm{SU}(2)_L $
  &
  $\!\!\!\!\times\!\!\!\!\!$
  &
    $ \mathrm{SU}(2)_R $
  &&
  \\
  & $\bot$
  &
  &&&
  $\mathbb{R}^{1,1\vert \geq 6 \cdot \mathbf{1}}$ &  $\bot$
  &
  $\!\!\!\!\times\!\!\!\!\!$
  &&
  ------
  &&&&&&
  \\
  & $\mathrm{MO9}_H$
  &&&&
  & $\mathbb{R}^{9,1\vert \mathbf{16}}$
  &
  $\mathbb{Z}_2$
  & $=$ &
  &&&&&&
  $(\mathbb{Z}_2)_{\mathrm{HW}}$
  \\
  \hline
  \multirow{3}{*}{ $\mathrm{M1}$ }
  & $\mathrm{M2}$
  &
  \multirow{3}{*}{
    $\sfrac{3}{16}$ \emph{or} $\sfrac{1}{4}$
  }
  &&
  &
  \multirow{3}{*}{ $\mathbb{R}^{1,1\vert \geq 6 \cdot \mathbf{1}}$  }
  & $ \mathbb{R}^{2,1\vert \geq 6 \cdot \mathbf{2}}$
  &
  $\mathbb{Z}_{n+1}$
  &
    $\overset{\Delta}{\subset}$
  &
  &&
   $ \mathrm{SU}(2)_L $
  &
  $\!\!\!\!\times\!\!\!\!\!$
  &
    $ \mathrm{SU}(2)_R $
  &&
  \\
  & $\bot$
  &
  &&&
   &  $\bot$
  &
  $\!\!\!\!\times\!\!\!\!\!$
  &&
  ------
  &&&&&&
  \\
  &  $\mathrm{MO5}$
  &&&&
  & $\mathbb{R}^{5,1\vert \mathbf{16}}$
  &
  $\mathbb{Z}_2$
  & $ \overset{\Delta}{\subset} $
  &
  &&
  &
  &
  $ (\mathbb{Z}_2)_R $
  &
  $\!\!\!\!\times\!\!\!\!\!$
  &
  $(\mathbb{Z}_2)_{\mathrm{HW}}$
  \\
  \hline
  & $\mathrm{M2}$
  &
  \multirow{3}{*}{
    $\sfrac{3}{16}$ \emph{or}  $\sfrac{1}{4}$
  }
  &&
  &
  \multirow{3}{*}{ $\mathbb{R}^{1,1\vert \geq 6 \cdot \mathbf{1}}$ }
  &
   $ \mathbb{R}^{2,1\vert \geq 6 \cdot \mathbf{2}}$
  &
  $\mathbb{Z}_{n+1}$
  &
    $\overset{\Delta}{\subset}$
  &
  &&
   $ \mathrm{SU}(2)_L $
  &
  $\!\!\!\!\times\!\!\!\!\!$
  &
    $ \mathrm{SU}(2)_R $
  &&
  \\
  &
  \hspace{4pt}\begin{rotate}{90} $\Vert$ \end{rotate}
  &
  &&&
  &
  \begin{rotate}{90} $\Vert$ \end{rotate}
  &
  $\!\!\!\!\times\!\!\!\!\!$
  &&
  ------
  &&&&&&
  \\
  & $\mathrm{MO1}$
  &&&&
  & $\mathbb{R}^{1,1\vert 16 \cdot \mathbf{1}}$
  &
  $\mathbb{Z}_2$
  & $ \overset{\Delta}{\subset} $ &
  &&
  $(\mathbb{Z}_2)_L$
  &
  $\!\!\!\!\times\!\!\!\!\!$
  &
  $(\mathbb{Z}_2)_R$
  &
  $\!\!\!\!\times\!\!\!\!\!$
  &
  $(\mathbb{Z}_2)_{\mathrm{HW}}$
  \\
  \hline
  \multirow{3}{*}{ $\mathrm{M5}_{\mathrm{ADE}}$ }
  & $\mathrm{M5}$
  &
  \multirow{3}{*}{
    $\sfrac{1}{4}$
  }
  &&
  &
  \multirow{3}{*}{ $\mathbb{R}^{5,1\vert 1 \cdot \mathbf{8}}$ }
  &
   $ \mathbb{R}^{5,1\vert 2 \cdot \mathbf{8}}$
  &
  $\mathbb{Z}_{2}$
  &
    $\overset{\Delta}{\subset}$
  &
  &&
  &
  &
   $ (\mathbb{Z}_2)_R $
  &
  $\!\!\!\!\times\!\!\!\!\!$
  &
  $ (\mathbb{Z}_2)_{\mathrm{HW}} $
  \\
  &
  \hspace{4pt}\begin{rotate}{90} $\Vert$ \end{rotate}
  &
  &&&
  &
  \hspace{4pt}\begin{rotate}{90} $\Vert$ \end{rotate}
  &
  $\!\!\!\!\times\!\!\!\!\!$
  &&
  \multicolumn{4}{c}{
    ------------------------------
  }
  &&
  &
  \\
  & $ \mathrm{MK6} $
  &&&&
  & $\mathbb{R}^{6,1\vert \mathbf{16}}$
  &
  $
  \begin{tabular}{c}
    $\mathbb{Z}_{n+1}, 2 \mathbb{D}_{n+2}$,
    \\
    $2T, 2O  , 2I$
  \end{tabular}
  $
  & $ \subset $ &
  &&
  &&
  $ \mathrm{SU}(2)_R $
  &
  &
  \\
  \hline
  \multirow{3}{*}{ $ \mathrm{NS5}_H $ }
  & $\mathrm{MO9}_H$
  &
  \multirow{3}{*}{
    $\sfrac{1}{4}$
  }
  &&
  &
  \multirow{3}{*}{ $\mathbb{R}^{5,1\vert 1 \cdot \mathbf{8}}$  }
  &
   $ \mathbb{R}^{9,1\vert \mathbf{16}}$
  &
  $\mathbb{Z}_{2}$
  &
    $ \overset{\phantom{\Delta}}{=}$
  &
  &&
  &
  &
  &
  &
  $ (\mathbb{Z}_2)_{\mathrm{HW}} $
  \\
  &
  \hspace{4pt}\begin{rotate}{90} $\Vert$ \end{rotate}
  &
  &&&
  &
  \begin{rotate}{90} $\Vert$ \end{rotate}
  &
  $\!\!\!\!\times\!\!\!\!\!$
  &&
  \multicolumn{4}{c}{
    ------------------------------
  }
  &&
  &
  \\
  & $ \mathrm{M5} $
  &&&&
  & $\mathbb{R}^{5,1\vert 2 \cdot \mathbf{8}}$
  &
  $
  \mathbb{Z}_2
  $
  & $ \overset{\Delta}{\subset} $ &
  &&
  &&
  $ (\mathbb{Z}_2)_R $
  &
  $\!\!\!\!\times\!\!\!\!\!$
  &
  $ (\mathbb{Z}_2)_{\mathrm{HW}} $
  \\
  \hline
  \multirow{3}{*}{ $\tfrac{1}{2}\mathrm{NS5}_I$ }
  & $\mathrm{MO9}_{I}$
  &
  \multirow{3}{*}{
    $\sfrac{1}{4}$
  }
  &&
  &
  \multirow{3}{*}{ $\mathbb{R}^{5,1\vert 1 \cdot \mathbf{8}}$ }
  &
   $ \mathbb{R}^{9,1\vert \mathbf{16}}$
  &
  $\mathbb{Z}_{2}$
  &
    $\overset{\phantom{\Delta}}{=}$
  &
  &&
  &
  &
  &
  &
  $ (\mathbb{Z}_2)_{\mathrm{HW}} $
  \\
  &
  \raisebox{-3pt}{$\top$}
  &
  &&&
  &
  \raisebox{-1pt}{$\top$}
  &
  $\!\!\!\!\times\!\!\!\!\!$
  &&
  \multicolumn{4}{c}{
    ------------------------------
  }
  &&
  &
  \\
  & $ \mathrm{MK6} $
  &&&&
  & $\mathbb{R}^{6,1\vert \mathbf{16}}$
  &
  $
  \begin{tabular}{c}
    $\mathbb{Z}_{n+1}, 2 \mathbb{D}_{n+2}$,
    \\
    $2T, 2O  , 2I$
  \end{tabular}
  $
  & $ \subset $ &
  &&
  &&
  $ \mathrm{SU}(2)_R $
  &
  &
  \\
  \hline
\end{tabular}
}

\vspace{.2cm}
  The cases $\mathrm{M2} \dashv \mathrm{MO9}, \mathrm{M5}, \mathrm{MW}$ are $\sfrac{1}{4}$ BPS when
  the $\mathrm{M2}$-singularity is of type $\mathbb{Z}_2 \overset{\Delta}{\subset} \mathrm{SU}(2)_L \times \mathrm{SU}(2)_R$,
  and $\sfrac{3}{16}$ BPS when the $\mathrm{M2}$-singularity is of type $\mathbb{Z}_{\geq 3} \overset{\Delta}{\subset} \mathrm{SU}(2)_L \times \mathrm{SU}(2)_R$.

\end{prop}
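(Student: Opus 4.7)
\medskip
\noindent \textbf{Proof proposal for Proposition \ref{NonSimpleRealSingularities}.} The plan is to reduce the classification of non-simple singularities to two pieces of structure already established: the list of simple singularities from Theorem \ref{SuperADESingularitiesIn11dSuperSpacetime}, and the orthogonal vector-space decomposition \eqref{SpacetimeActions} of $\mathbb{R}^{10,1}$. The fundamental observation is that for any two commuting group actions $G_1, G_2$ on $\mathbb{R}^{10,1\vert\mathbf{32}}$ by super-isometries, the fixed locus of the product action satisfies
\[
  \big(\mathbb{R}^{10,1\vert\mathbf{32}}\big)^{G_1 \times G_2}
  \;=\;
  \big(\mathbb{R}^{10,1\vert\mathbf{32}}\big)^{G_1}
  \,\cap\,
  \big(\mathbb{R}^{10,1\vert\mathbf{32}}\big)^{G_2}
\]
at the level of both the bosonic and fermionic subspaces, and (by linearity) the BPS fraction is computed by intersecting the individual spinor fixed-subspaces. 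This reduces the entire proposition to (i) verifying that each pair in the table consists of commuting actions, (ii) intersecting bosonic fixed loci, and (iii) intersecting fermionic fixed loci while tracking the residual $\mathrm{Spin}$ representation.

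First, for commutativity: each simple singularity from Theorem \ref{SuperADESingularitiesIn11dSuperSpacetime} is labelled by which summands of \eqref{SpacetimeActions} it acts on non-trivially---an $\mathrm{MO9}$ acts only on $\mathbb{R}^1$, an $\mathrm{MK6}$ acts only on one of the $\mathbb{R}^4$ factors (say through $\mathrm{SU}(2)_R$), an $\mathrm{M2}$ acts diagonally on $\mathrm{SU}(2)_L \times \mathrm{SU}(2)_R$, etc. For each of the pairings in the table, inspection of the ``Intersection law'' columns shows that the $\mathbb{R}^4 \oplus \mathbb{R}^4 \oplus \mathbb{R}^1$ support of the two actions either coincide within a common factor containing the smaller in its centre (e.g.\ $(\mathbb{Z}_2)_R \subset \mathrm{SU}(2)_R$) or land in orthogonal summands; in both cases the actions manifestly commute. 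Next, the bosonic intersection is read off from \eqref{SpacetimeActions} by intersecting the kernels of the two relevant group actions; this reproduces the bosonic factor $\mathbb{R}^{p_{\mathrm{int}},1}$ shown in column 6 of the table.

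The core computation is the fermionic intersection, and this is where essentially all the work of the proposition lies. For each constituent simple singularity, the explicit fermionic fixed space has already been identified: for the $p$-brane involutions by Lemmas \ref{1braneFixed}, \ref{2braneFixed}, \ref{5braneFixed}, \ref{6braneFixed}, \ref{9braneFixed}, written in the octonionic model of $\mathbf{32} \simeq \mathbb{O}^4$ from Example \ref{OctonionPresentationOf32OfPin11}; and for the enlargements by finite subgroups of $\mathrm{SU}(2)_L$ or $\mathrm{SU}(2)_R$ by Proposition \ref{M2AndMK6ADE}. I would intersect these subspaces of $\mathbb{O}^4$ pairwise, using that the $\mathrm{SU}(2)_{L,R}$ actions act by multiplication in the quaternionic subalgebra $\mathbb{H} \subset \mathbb{O}$ identified in the proof of Lemma \ref{5braneFixed}. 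For example, $\mathrm{M2} \parallel \mathrm{MK6}$: the $\mathrm{M2}$-locus is the $\mathrm{SU}(2)_L \times \mathrm{SU}(2)_R$-fixed subspace inside $8\cdot\mathbf{2}$, while the $\mathrm{MK6}$-locus \eqref{MK6FixedFermions} is $\mathbb{H}^4 \subset \mathbb{O}^4$; intersecting and re-branching the residual $\mathrm{Spin}$ group yields exactly $\mathbb{R}^{2,1 \vert 4\cdot\mathbf{2}}$, i.e.\ $\sfrac{1}{4}$ BPS. The other entries are analogous but demand the matching tables for the branching of $\mathbf{16}$ and $\mathbf{8} \oplus \overline{\mathbf{8}}$ through the ADE subgroups, as in \cite{MFFGME10, MF10}.

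The main obstacle, and the one place where the result is most delicate, is the dichotomy ``$\sfrac{3}{16}$ \emph{or} $\sfrac{1}{4}$'' appearing for the $\mathrm{NS1}_H$, $\mathrm{M1}$ and $\mathrm{MW}_{\mathrm{ADE}}$ rows. The BPS fraction is determined by whether the $\mathrm{M2}$-constituent is the $\mathbb{Z}_2$-action (fixing $8 \cdot \mathbf{2} = 16$ spinors by Lemma \ref{2braneFixed}) or a higher cyclic $\mathbb{Z}_{n+1}$ with $n \geq 2$ (which by Proposition \ref{M2AndMK6ADE} fixes only $6 \cdot \mathbf{2} = 12$ spinors). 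Intersecting with the $\mathrm{MO9}_H$ fixed locus $\mathbb{R}^{9,1\vert\mathbf{16}}$ of Lemma \ref{9braneFixed}, one must verify that in the first case the residual fermions form $8 \cdot \mathbf{1}_+$ (hence $\sfrac{1}{4}$-BPS, $= 8/32$) and in the second case they form $6 \cdot \mathbf{1}_+$ (hence $\sfrac{3}{16}$-BPS, $= 6/32$); the spinor pairing of Lemma \ref{1braneFixed} then identifies this fixed super-subspace as $\mathbb{R}^{1,1\vert \geq 6 \cdot \mathbf{1}}$. Tracking the $\mathbb{Z}_{n+1}$-equivariance through the octonionic model to see that the fixed fermions land compatibly in the $\mathbf{1}_+$ rather than $\mathbf{1}_-$ of $\mathrm{Spin}(1,1)$ is the subtle step; once confirmed, the remaining rows ($\mathrm{M5}_{\mathrm{ADE}}$, $\mathrm{NS5}_H$, $\tfrac{1}{2}\mathrm{NS5}_I$) follow by the same method applied to Lemmas \ref{5braneFixed}--\ref{6braneFixed}.
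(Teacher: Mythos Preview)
Your proposal is correct and follows essentially the same approach as the paper: verify commutativity of the two group actions via the decomposition \eqref{SpacetimeActions} (using that $(\mathbb{Z}_2)_{L,R}$ sits in the centre of $\mathrm{SU}(2)_{L,R}$), then compute the fermionic fixed locus as the intersection of the explicit octonionic subspaces of $\mathbf{32} \simeq \mathbb{O}^4$ established in Lemmas \ref{1braneFixed}--\ref{9braneFixed} and Proposition \ref{M2AndMK6ADE}. The paper's proof is considerably terser and works out $\mathrm{M5}_{\mathrm{ADE}}$ as its illustrative example (intersecting $\mathbb{H}^2 \oplus \mathbb{H}^2\ell$ from \eqref{M5FixedFermions} with $\mathbb{H}^4$ from \eqref{MK6FixedFermions} to get $\mathbb{H}^2 \simeq 1 \cdot \mathbf{8}$) rather than your $\mathrm{M2} \parallel \mathrm{MK6}$, and it does not spell out the $\sfrac{3}{16}$--$\sfrac{1}{4}$ dichotomy in the proof body as you do; your treatment of that point is more explicit than the paper's.
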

\begin{proof}
  In each case the existence is  given by checking that the two factors in the Cartesian product group $G_1 \times G_2$,
  that labels the singularity type, indeed have commuting actions on super spacetime.
  Apart from the immediate commutativity following from \eqref{SpacetimeActions}, this makes use of the fact that
  $(\mathbb{Z}_2)_{L,R} \subset \mathrm{SU}(2)_{L,R}$ is the inclusion of the group center.

  What remains to verify is that the intersections have spinor reps/BPS-degrees as shown, but this is directly checked
  in each case by means of the explicit formulas for the fixed loci established in Sec. \ref{SingleBlackBraneSpecies}.

  For example, in the case labeled $\mathrm{M5}_{\mathrm{ADE}}$, the proofs of Lemmas \ref{5braneFixed} and \ref{6braneFixed} show that
  for $G_{\mathrm{ADE}} = \mathbb{Z}_2$ the
   fermionic fixed locus is the intersection of $\mathbb{H}^2 \oplus \mathbb{H}^2 \ell$ from \eqref{M5FixedFermions} with
   $\mathbb{H}^2 \oplus \mathbb{H}^2$ from \eqref{MK6FixedFermions}, which is $\mathbb{H}^2 \simeq 1 \cdot \mathbf{8}$;
   and by Lemma \ref{M2AndMK6ADE} this conclusion still holds for general $G_{\mathrm{ADE}}$.
  \end{proof}

 The case which we denote $\mathrm{M2} \,\Vert\, \mathrm{MK6}$ also appears as \cite[7.2 and around (68)]{MF10}.

\newpage
\hypertarget{Figure1}{$\,$}

\vspace{-0cm}

\begin{center}
\begin{tabular}{l|l}
\vspace{-0cm}
\\
\raisebox{-12pt}{\footnotesize
\begin{tabular}{ccll}
  $\mathrm{MK6}$ &$=$& $\big( \mathbb{R}^{10,1\vert\mathbf{32}} \big)^{(G_{\mathrm{ADE}})_{R}}$
  & (\ref{TheMK6})
  \\
  $\phantom{A}$
  \\
  $\mathrm{MO9}$ & $=$ & $\big( \mathbb{R}^{10,1\vert\mathbf{32}} \big)^{G_{\mathrm{HW}}}$
  &  (\ref{TheMO9})
  \\
  $\phantom{A}$
 \\
 $\tfrac{1}{2}\mathrm{NS5}_I$ & $=$ & $\mathrm{MK6} \dashv \mathrm{MO9}_I  $
 & (\ref{TheBlackNS5})
 \\
  $\phantom{A}$
 \\
 $\tfrac{1}{2}\mathrm{NS5}_H$ & $=$ & $\mathrm{MK6} \dashv \mathrm{MO9}_H $
 & (\ref{TheBlackNS5})
\end{tabular}
}
&

\raisebox{-110pt}{
\includegraphics[width=.5\textwidth]{half-M5}
}

\vspace{+0cm}

\\
\hline
\vspace{-0cm}
\\
\raisebox{-30pt}{\footnotesize
\begin{tabular}{ccll}
  $\mathrm{M2}$ &$=$& $\big( \mathbb{R}^{10,1\vert\mathbf{32}} \big)^{ (G_{\mathrm{ADE}})_{\Delta}  }$
  & (\ref{TheBlackM2})
  \\
  $\phantom{A}$
  \\
  $\mathrm{MO9}$ & $=$ & $\big( \mathbb{R}^{10,1\vert\mathbf{32}} \big)^{G_{\mathrm{HW}}}$
  &  (\ref{TheMO9})
  \\
  $\phantom{A}$
 \\
 $\mathrm{NS1}_H$ & $=$ & $\mathrm{M2} \dashv \mathrm{MO9}_H$
 & (\ref{TheBlackNS1H})
 \\
  $\phantom{A}$
 \\
 $\mathrm{E1}$ & $=$ & $\mathrm{M2} \dashv \mathrm{MO9}_I $
 & (\ref{TheBlackNS1H})
\end{tabular}
}
&
\raisebox{-130pt}{
\includegraphics[width=.5\textwidth]{half-M2}
}

\vspace{-0cm}

\\
\hline
\vspace{-1.6cm}
\\
\raisebox{-128pt}{\footnotesize
\begin{tabular}{ccll}
  $\mathrm{MK6}$ &$=$& $\big( \mathbb{R}^{10,1\vert\mathbf{32}} \big)^{ (G_{\mathrm{ADE}})_{R} }$
  & (\ref{TheMK6})
  \\
  $\phantom{A}$
  \\
  $\mathrm{M5}$ & $=$ & $\big( \mathbb{R}^{10,1\vert\mathbf{32}} \big)^{ G_{\mathrm{W}} }$
  &  (\ref{TheBlackM5})
  \\
  $\phantom{A}$
 \\
 $\mathrm{M5}_{\mathrm{ADE}}$ & $=$ & $\mathrm{M5} \; \Vert \; \mathrm{MK6}$
 & (\ref{TheBlackNS1H})
\end{tabular}
}
&
\raisebox{-230pt}{
\includegraphics[width=.5\textwidth]{half-NS5}
}
\end{tabular}
\end{center}
{\footnotesize
{\bf Figure 1. Intersections of simple singularities in 11d super spacetime}.  The main axes indicate
fixed loci of the simple singularities from Theorem \ref{SuperADESingularitiesIn11dSuperSpacetime};
 their intersections are
shown according to Prop. \ref{NonSimpleRealSingularities}.
Also indicated is the sphere around the vertical singularity,
together with its cone, according to \eqref{NearHorizonGeometry}.
The inset also displays behavior under S/T-duality.
}

\newpage

$\,$

\vspace{-1.5cm}
\begin{center}
\begin{tabular}{l|l}
\\
\begin{tabular}{ccll} \footnotesize
  $\mathrm{M2}$ &$=$& \footnotesize $\big( \mathbb{R}^{10,1\vert\mathbf{32}} \big)^{ (G_{\mathrm{ADE}})_{\Delta}  } $
  & \footnotesize \eqref{TheBlackM2}
  \\
  $\phantom{A}$
  \\ \footnotesize
  $\mathrm{MO9}$ & $=$ & \footnotesize $\big( \mathbb{R}^{10,1\vert\mathbf{32}} \big)^{G_{\mathrm{HW}}}$
  &  \footnotesize \eqref{TheMO9}
  \\
  $\phantom{A}$
 \\ \footnotesize
 $\mathrm{M1}$ & $=$ &  \footnotesize $ \mathrm{M2} \dashv \mathrm{M5}$
 & \footnotesize \eqref{TheSelfDualString}
\end{tabular}
&

\raisebox{-110pt}{
\includegraphics[width=.5\textwidth]{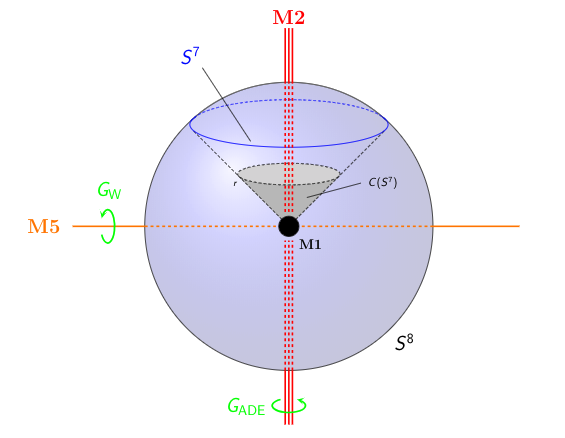}
}

\vspace{-0cm}

\\
\hline
\\
\begin{tabular}{ccll} \footnotesize
  $\mathrm{MK6}$ &$=$&   \footnotesize $\big( \mathbb{R}^{10,1\vert\mathbf{32}} \big)^{ (G_{\mathrm{ADE}})_{R} }$
  & \footnotesize  \eqref{TheMK6}
  \\
  $\phantom{A}$
  \\ \footnotesize
  $\mathrm{MO9}$ & $=$ & \footnotesize $\big( \mathbb{R}^{10,1\vert\mathbf{32}} \big)^{G_{\mathrm{HW}}}$
  & \footnotesize  \eqref{TheMO9}
  \\ \footnotesize
  $\phantom{A}$
 \\ \footnotesize
  $\mathrm{M5}$ &$=$& \footnotesize $\big( \mathbb{R}^{10,1\vert\mathbf{32}} \big)^{ G_W }$
  & \footnotesize \eqref{TheBlackM5}
  \\ \footnotesize
  $\phantom{A}$
  \\ \footnotesize
  $\mathrm{NS5}_H$ &  $=$ &  \footnotesize $\mathrm{MO9} \;\Vert\;  \mathrm{M5} \;\Vert\; \mathrm{MK6} $
\end{tabular}
&

\raisebox{-100pt}{
\includegraphics[width=.5\textwidth]{triple-pic}
}

\end{tabular}
\end{center}
\label{IntersectionGraphicsB}
{\footnotesize
{\bf \hypertarget{Figure2}{Figure 2}. Further intersections of simple singularities in 11d super spacetime}.  The main axes
indicate the simple singularities from Theorem \ref{SuperADESingularitiesIn11dSuperSpacetime}; their intersections are
displayed according to Prop. \ref{NonSimpleRealSingularities}.
}


\vspace{5mm}
\begin{example}[M2 inside MK6]
The example $\mathrm{M2} \Vert \mathrm{MK6}$ in Prop. \ref{NonSimpleRealSingularities} is implicitly considered in
 \cite[Sec. 7.2 and around (68)]{MF10}, there
viewed as a black M2 in an orbifold background. As in footnote \ref{FigOFarMistake}, we may identify the orbifold away
from the M2-orbifold singularity,
which is $\mathbb{R}^{10,1\vert\mathbf{32}} \sslash (\mathbb{Z}_2)_L$,
as that of a black MK6 according to Theorem \ref{SuperADESingularitiesIn11dSuperSpacetime}, hence the complete
orbifold as the intersection of the two.
\end{example}

Notice how the group theory implies \emph{intersection laws}:

\begin{example}[M2 intersecting M5]
In order to have an intersection of singularities (Def.\ \ref{IntersectionOfSimpleSingularities}), the corresponding group
actions need to commute with each other, so that their product group action makes sense. For possible intersections of the
black M2 with the black M5, as in Prop. \ref{NonSimpleRealSingularities},
this means that the action of the diagonal subgroup in $(\mathbb{Z}_2)_L \times (\mathbb{Z}_2)_{\mathrm{HW}}$
has to commute with the action of the diagonal subgroup action in $\mathrm{SU}(2)_L \times \mathrm{SU}(2)_R$. But
the first $(\mathbb{Z}_2)_L$ is in the center of the latter. Hence the condition is that $(\mathbb{Z}_2)_{\mathrm{HW}}$
commutes with the diagonal in $\mathrm{SU}(2)_L \times \mathrm{SU}(2)_R$. This diagonal
 is generated from elements of the form $\mathbf{\Gamma}_{a b} + \mathbf{\Gamma}_{cd}$
with indices $a,b,c,d$ ranging, say in $\{2,\cdots, 9\}$. But $(\mathbb{Z}_2)_{\mathrm{HW}}$ is generated from a single
$\mathbf{\Gamma}_{e}$. For that to commute with the others, we must have either $e =1$ or $e = 10$. But this implies
that the fixed spaces of the two actions do not contain each other, hence that the M2 intersects the M5 without being
contained in it.
\end{example}

\newpage

\section{Real ADE-equivariant rational cohomotopy}
 \label{ADEEquivariantRationalCohomotopy}

Here we discuss certain group actions on the 4-sphere (Def.\ \ref{SuspendedHopfAction} below), as well as the resulting incarnation of the 4-sphere
as an object in equivariant rational super homotopy theory (Def.\ \ref{GEquivariantRationalSuperHomotopy}).
The motivation for the particular actions considered was explained in Remark \ref{OriginOfGroupActionsOn4Sphere} of Sec.
 \ref{MBraneInterpretation}.
The main result of this section is the explicit system of minimal dgc-algebra models for this equivariant 4-sphere, this is Prop. \ref{ADEWEquivariant4SpheredgAlgebraicCoefficients} below.

\medskip
By Example \ref{RationalCohomotopyOfSuperSpaces} and Example \ref{GSpacesAsGSuperspaces},
taking this equivariant 4-sphere as the coefficient for a generalized cohomology theory (Def.\ \ref{CohomoloyFromHomotopy})
defines \emph{real ADE-equivariant rational super cohomotopy} in degree 4. We study this cohomology theory on super-spacetimes
 in Sec. \ref{ADEEquivariantMBraneSuperCucycles} below.

\subsection{Real ADE-actions on the 4-sphere}
\label{RealADEActionsOnThe4Sphere}

Following Remark \ref{OriginOfGroupActionsOn4Sphere}, we consider the following group actions on the 4-sphere:

\begin{defn}[Actions on the 4-sphere]
 \label{SuspendedHopfAction}
We may regard the 4-sphere as the unit sphere in the direct sum of the real numbers $\mathbb{R}$ with the quaternions $\mathbb{H}$
(Example \ref{TheFourRealNormedDivisionAlgebras}):
\begin{equation}
  \label{The4SphereInRPlusH}
  S^4
    \simeq
  S(
    \mathbb{R} \oplus
    \underset{\mathbb{H}}{\underbrace{{ \overset{ \mathrm{Im}(\mathbb{H}) }{\overbrace{\mathbb{R} \oplus \mathbb{R}^2 } } \oplus \mathbb{R}  }}}
  )
  \,.
\end{equation}
This decomposition induces the following group actions (Def.\ \ref{GroupActions}) on the 4-sphere (we use the shorthand notation for actions
from Remark \ref{ActionShorthandNotation}):
\begin{enumerate}[{\bf (i)}]
\vspace{-1mm}
\item Multiplication on the left and right by unit quaternions in $\mathbb{H}$
  \eqref{SU2ActionsOnQuaternions} preserves the 4-sphere
  \eqref{UnitQuaternionsActOrthogonallyOnQuaternions}, giving two actions of
  $\mathrm{SU}(2)$, which we denote by $\mathrm{SU}(2)_L$ and $\mathrm{SU}(2)_R$,
  respectively.  \vspace{-2mm}
  \item These two actions manifestly commute with each other, and hence we have the corresponding action of the
  Cartesian product of $\mathrm{SU}(2)$ with itself, which we denote by $\mathrm{SU}(2)_L \times \mathrm{SU}(2)_R$.
  \vspace{-2mm}
  \item
    We denote the action induced from this via the diagonal homomorphism $\mathrm{SU}(2) \overset{\Delta}{\longrightarrow} \mathrm{SU}(2) \times \mathrm{SU}(2)$
    by $\mathrm{SU}(2)_\Delta$. This factors (via \eqref{ImaginaryQuaternionsRotationAction}) through the action induced by the canonical
    action of $\mathrm{SO}(3)$ on $\mathrm{Im}(\mathbb{H}) \simeq_{\mathbb{R}} \mathbb{R}^3$.
  \vspace{-2mm}
  \item There is then an inclusion $S^1 \hookrightarrow \mathrm{SU}(2)$ such that the corresponding restriction of the
   diagonal action fixes the second coordinate in \eqref{The4SphereInRPlusH}.
   We accordingly denote this induced action by $S^1_{\Delta}$.
  \vspace{-2mm}
  \item We denote by $(\mathbb{Z}_2)_{\mathrm{HW}}$ the $\mathbb{Z}_2$-action induced by the involution (Example \ref{Z2ActionsAreInvolutions}) given by reflection of the
  last coordinate in \eqref{The4SphereInRPlusH} (i.e. multiplication by $-1$ on the real part of the quaternionic
  coordinate in \eqref{The4SphereInRPlusH}).
  \vspace{-2mm}
  \item    This commutes with the $\mathrm{SU}(2)_{\Delta}$-action, so that there is the corresponding action of the Cartesian product group,
    which we accordingly denote by $\mathrm{SU}(2)_{\Delta} \times (\mathbb{Z}_2)_{\mathrm{HW}}$.
\end{enumerate}

\end{defn}
\begin{remark}[Summary of actions]
With the evident shorthand notation, the action on the 4-sphere, from Def.\ \ref{SuspendedHopfAction} are given, in terms of the decomposition \eqref{The4SphereInRPlusH},
as follows:
\vspace{-3mm}
\begin{equation}
  \label{ActionsOnThe4SphereInRPlusH}
  S^4
    \simeq
  S(
    \mathbb{R} \;\; \oplus \;\;
    \underset{
      \footnotesize
      \xymatrix{ \mathbb{H} \ar@(dl,ul)[]^{\mathrm{SU}(2)_L} \ar@(ur,dr)[]^{\mathrm{SU}(2)_R}  }
    }{
      \underbrace{
        {
          \overset{
            \footnotesize
            \xymatrix{ \mathrm{Im}(\mathbb{H}) \ar@(ul,ur)[]^{ \mathrm{SU}(2)_{\Delta}  }
 }
          }{
            \overbrace{
              \mathbb{R}
                \oplus
              \xymatrix{ \mathbb{R}^2 \ar@(dl,dr)[]|{ S^1_\Delta } }
            }
          }
          \oplus
          { \xymatrix{ \mathbb{R} \ar@(ul,ur)[]^{ (\mathbb{Z}_2)_{\mathrm{HW}} } } }
        }
      }
    }
  )
\end{equation}
which may by collected into two actions of Cartesian products as
$$
  \xymatrix{
    S^4 \ar@(ul,ur)[]^{ \mathrm{SU}(2)_L \times \mathrm{SU}(2)_R }
  }
\qquad \text{and} \qquad
  \xymatrix{
    S^4 \ar@(ul,ur)[]^{ \mathrm{SU}(2)_\Delta \times (\mathbb{Z}_2)_{\mathrm{HW}} }
  }
  \hspace{-1cm}.
$$
\end{remark}

\begin{example}[Suspended Hopf action]
Consider the canonical inclusion $S^1 \simeq U(1) \hookrightarrow \mathrm{SU}(2)_L$
of the circle group into the special unitary group (as the subgroup of diagonal matrices).
The induced action $S^1_L$ on the 4-sphere, by Def.\ \ref{SuspendedHopfAction}, is the image under topological suspension of the $S^1$-action that exhibits
the complex Hopf fibration $S^3 \to S^2$ (Def.\ \ref{HopfFibration}) as an $S^1$-principal bundle.
\begin{center}
 \includegraphics[width=.4\textwidth]{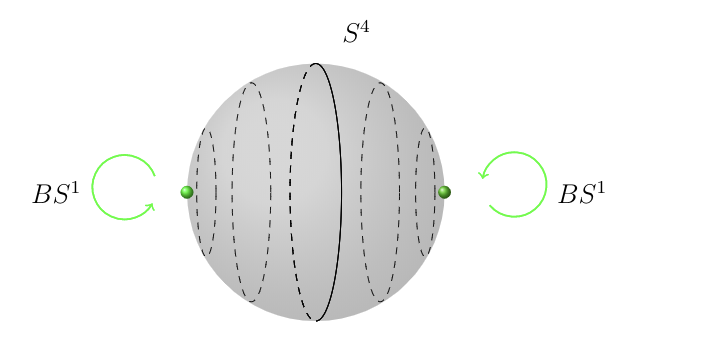}
\end{center}
\end{example}


\subsection{Systems of fixed loci of the 4-Sphere}

\begin{lemma}[The systems of real ADE-fixed points of the 4-sphere]
  \label{GADEGWFixedInFourSphere}
  Let $G_{\mathrm{ADE}} \subset \mathrm{SU}(2)$ be a finite subgroup of $\mathrm{SU}(2)$ (Remark \ref{ADEGroups})
  and write $(G_{\mathrm{ADE}})_{L,R,\Delta}$ for its various actions
  on the 4-sphere, via Def.\ \ref{SuspendedHopfAction}.
  The following displays the systems of fixed point spaces of $S^4$
  (according to Example \ref{FixedPointSpaceRepresented}) for the two actions of product groups
  from Def.\ \ref{SuspendedHopfAction}.
  Here we let $G \subset G_{\mathrm{ADE}}$ denote any \emph{non-cyclic} subgroup. If there is none such, the corresponding entries of the diagrams
  do not appear.

  \item {\bf (i)} For the action $\mathrm{SU}(2)_\Delta \times (\mathbb{Z}_2)_{\mathrm{HW}}$ in Def.\ \ref{SuspendedHopfAction}
  the system of fixed points
  $$
    \left( S^4\right)^{(-)}
    \maps
    \mathrm{Orb}^{\mathrm{op}}_{ (G_{\mathrm{ADE}})_{\Delta} \times (\mathbb{Z}_2)_{\mathrm{HW}}}
    \longrightarrow
    \mathrm{Spaces}
  $$
  is, in part, as follows:
  $$
    \raisebox{45pt}{
    \xymatrix@C=-4pt@R-14pt{
      &
      \frac{
        (G_{\mathrm{ADE}})_{\Delta} \times G_{\mathrm{HW}}
      }{
        \{e_{\mathrm{ADE}}\} \times \{e_{\mathrm{HW}}\}
      }
      \ar@(ul,ur)[]^{(G_{\mathrm{ADE}})_{\Delta} \times G_{\mathrm{HW}} }
      \ar[rr]
      \ar[ddl]
      \ar[dddd]|{ \phantom{ {A \atop A} \atop {A \atop A} } }
      &&
      \frac{
        (G_{\mathrm{ADE}})_{\Delta} \times G_{\mathrm{HW}}
      }{
        \{e_{\mathrm{ADE}}\} \times G_{\mathrm{HW}}
      }
      \ar@(ul,ur)[]^{(G_{\mathrm{ADE}})_{\Delta}  }
      \ar[ddl]
      \ar[dddd]
      \\
      \\
      \frac{
        (G_{\mathrm{ADE}})_{\Delta} \times G_{\mathrm{HW}}
      }{
        (\mathbb{Z}_{n+1})_\Delta \times \{e_{\mathrm{HW}}\}
      }
      \ar@(ul,ur)[]^{   }
      \ar[rr]
      \ar[ddr]
      &&
      \frac{
        (G_{\mathrm{ADE}})_{\Delta} \times G_{\mathrm{HW}}
      }{
        (\mathbb{Z}_{n+1})_{\Delta} \times G_{\mathrm{HW}}
      }
      \ar@(ul,ur)[]^{    }
      \ar[ddr]
      \\
      \\
      &
      \frac{
        (G_{\mathrm{ADE}})_{\Delta} \times G_{\mathrm{HW}}
      }{
        G_{\Delta} \times \{e_{\mathrm{HW}}\}
      }
      \ar[rr]
      &&
      \frac{
        (G_{\mathrm{ADE}})_{\Delta} \times G_{\mathrm{HW}}
      }{
        G_{\Delta} \times G_{\mathrm{HW}}
      }
    }
    }
   \xymatrix{\hspace{.8cm}\ar@{|->}[r]&\hspace{.8cm}}
    \raisebox{45pt}{
    \xymatrix@C=17pt@R=15pt{
      &
      S^4
      \ar@(ul,ur)[]^{(G_{\mathrm{ADE}})_{\Delta} \times G_{\mathrm{HW}} }
      \ar@{<-^{)}}[rr]
      \ar@{<-^{)}}[ddl]
      \ar@{<-^{)}}[dddd]|{ \phantom{ {A \atop A} \atop {A \atop A} } }
      &&
      S^3
      \ar@(ul,ur)[]^{(G_{\mathrm{ADE}})_{\Delta}  }
      \ar@{<-^{)}}[ddl]
      \ar@{<-^{)}}[dddd]
      \\
      \\
      S^2
      \ar@{<-^{)}}[rr]
      \ar@{<-^{)}}[ddr]
      &&
      S^1
      \ar@{<-^{)}}[ddr]
      \\
      \\
      &
      S^1
      \ar@{<-^{)}}[rr]
      &&
      S^0
    }
    }
  $$

\item {\bf (ii)} For the action $\mathrm{SU}(2)_L \times \mathrm{SU}(2)_R $ in Def.\ \ref{SuspendedHopfAction}
  the system of fixed points
  $$
    \left( S^4\right)^{(-)}
    \maps
    \mathrm{Orb}^{\mathrm{op}}_{
      (G_{\mathrm{ADE}})_{L}
      \times
      (G_{\mathrm{ADE}})_{R}
    }
    \longrightarrow
    \mathrm{Spaces}
  $$
  is, in part, as follows:
  $$
    \raisebox{45pt}{
    \xymatrix@C=12pt@R-15pt{
      &
      \frac{
        (G_{\mathrm{ADE}})_{L} \times (G_{\mathrm{ADE}})_{R}
      }{
        \{e_{\mathrm{ADE}}\} \times \{e_{\mathrm{ADE}}\}
      }
      \ar@(ul,ur)[]^{(G_{\mathrm{ADE}})_{L} \times (G_{\mathrm{ADE}})_{R} }
      \ar[dr]
      \ar[rr]
      \ar[dddd]
      &&
      \frac{
        (G_{\mathrm{ADE}})_{L} \times (G_{\mathrm{ADE}})_{R}
      }{
        \{e_{\mathrm{ADE}}\} \times (G_{\mathrm{ADE}})_{R}
      }
      \ar@(ul,ur)[]^{(G_{\mathrm{ADE}})_{L}  }
      \ar[dddd]
      \\
      &&
      \frac{
        (G_{\mathrm{ADE}})_{L} \times (G_{\mathrm{ADE}})_{R}
      }{
        (\mathbb{Z}_{n+1})_\Delta
      }
      \ar[dd]
      \\
      &&
      \\
      &&
      \frac{
        (G_{\mathrm{ADE}})_{L} \times (G_{\mathrm{ADE}})_{R}
      }{
        G_\Delta
      }
      \ar[dr]
      \\
      &
      \frac{
        (G_{\mathrm{ADE}})_{\Delta} \times G_{\mathrm{HW}}
      }{
        (G_{\mathrm{ADE}})_L \times \{ e_{\mathrm{ADE}} \}
      }
      \ar[rr]
      &&
      \frac{
        (G_{\mathrm{ADE}})_{L} \times (G_{\mathrm{ADE}})_{R}
      }{
        (G_{\mathrm{ADE}})_{L} \times (G_{\mathrm{ADE}})_{R}
      }
    }
    }
    \xymatrix{\hspace{.5cm}\ar@{|->}[r]&\hspace{.2cm}}
    \raisebox{45pt}{
    \xymatrix@C=16pt@R17pt{
      &
      S^4
      \ar@(ul,ur)[]^{(G_{\mathrm{ADE}})_{L} \times (G_{\mathrm{ADE}})_{R} }
      \ar@{<-^{)}}[dr]
      \ar@{<-^{)}}[rr]
      \ar@{<-^{)}}[dddd]
      &&
      S^0
      \ar@(ul,ur)[]^{ (G_{\mathrm{ADE}})_{L}  }
      \ar@{=}[dddd]
      \\
      &&
      S^2
      \ar@{<-^{)}}[dd]
      \\
      &&
      \\
      &&
      S^1
      \ar@{<-^{)}}[dr]
      \\
      &
      S^0
      \ar@{=}[rr]
      &&
      S^0
    }
    }
  $$
\end{lemma}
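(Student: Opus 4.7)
The strategy is direct linear algebra: for each conjugacy class of subgroup $H$ of $(G_{\mathrm{ADE}})_{\Delta} \times G_{\mathrm{HW}}$ (respectively of $(G_{\mathrm{ADE}})_L \times (G_{\mathrm{ADE}})_R$), I will identify the $\mathbb{R}$-linear subspace of the ambient $\mathbb{R} \oplus \mathbb{H}$ fixed pointwise by $H$ under the action \eqref{ActionsOnThe4SphereInRPlusH}. Since that action is by isometries preserving the unit sphere, one has $(S^4)^H = S\bigl((\mathbb{R} \oplus \mathbb{H})^H\bigr)$, whose homeomorphism type depends only on the real dimension of the fixed subspace. Functoriality over $\mathrm{Orb}_G$ (Def.\ \ref{OrbitCategory}) is then automatic, since, via \eqref{InducedMapOnFixedPoints}, any $G$-map $G/H_1 \to G/H_2$ induces on fixed loci the inclusion of the smaller fixed linear subspace into the larger, with the stated residual group action on each stratum read off as the sub-quotient of the ambient product group that preserves that stratum.

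The computation assembles from four elementary observations about the constituent actions of Def.\ \ref{SuspendedHopfAction}. First, each of $\mathrm{SU}(2)_L$ and $\mathrm{SU}(2)_R$ acts freely on $\mathbb{H}\setminus\{0\}$ by quaternion multiplication, so for any nontrivial subgroup of either factor alone the fixed subspace of $\mathbb{R}\oplus\mathbb{H}$ is precisely the summand $\mathbb{R}$, giving $(S^4)^H = S^0$. Second, the diagonal $\mathrm{SU}(2)_{\Delta}$ acts on $\mathbb{H}$ by conjugation $q \mapsto g q g^{-1}$, fixing $\mathrm{Re}(\mathbb{H})\simeq\mathbb{R}$ pointwise and rotating $\mathrm{Im}(\mathbb{H})\simeq\mathbb{R}^3$ via the double cover $\mathrm{SU}(2)\twoheadrightarrow\mathrm{SO}(3)$. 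Third, a finite cyclic subgroup $\mathbb{Z}_{n+1}\subset\mathrm{SU}(2)$ maps to a nontrivial cyclic rotation group in $\mathrm{SO}(3)$, fixing a $1$-dimensional axis in $\mathrm{Im}(\mathbb{H})$, whereas each non-cyclic binary ADE subgroup ($2\mathbb{D}_{n+2}, 2T, 2O, 2I$) maps to its polyhedral quotient ($\mathbb{D}_{n+2}, T, O, I$) in $\mathrm{SO}(3)$ and, containing rotations about more than one axis, fixes only $\{0\}\subset\mathrm{Im}(\mathbb{H})$. Fourth, $(\mathbb{Z}_2)_{\mathrm{HW}}$ reflects the summand $\mathrm{Re}(\mathbb{H})$ and fixes $\mathbb{R}\oplus\mathrm{Im}(\mathbb{H})$ pointwise.

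Combining these, the first diagram follows by computing the six fixed real dimensions $5, 4, 3, 2, 2, 1$, yielding the asserted spheres $S^4, S^3, S^2, S^1, S^1, S^0$, respectively; the six entries arise as $\{e\}\times\{e\}$, $\{e\}\times G_{\mathrm{HW}}$, $\mathbb{Z}_{n+1,\Delta}\times\{e\}$, $\mathbb{Z}_{n+1,\Delta}\times G_{\mathrm{HW}}$, $G_\Delta\times\{e\}$, $G_\Delta\times G_{\mathrm{HW}}$, and the required intersections of fixed subspaces are read off directly from observations (ii)--(iv). For the second diagram, observation (i) forces both $(S^4)^{\{e\}\times (G_{\mathrm{ADE}})_R}$ and $(S^4)^{(G_{\mathrm{ADE}})_L\times\{e\}}$ to equal $S^0$; the diagonal subgroups reduce via observation (ii) to the analysis of the first case, producing $S^2$ for cyclic $\mathbb{Z}_{n+1,\Delta}$ and $S^1$ for non-cyclic $G_\Delta$; and the full product likewise fixes only $\mathbb{R}\subset\mathbb{R}\oplus\mathbb{H}$, giving $S^0$. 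The arrows in each diagram are the evident inclusions of nested fixed subspheres.

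The main obstacle is purely bookkeeping in the third observation: one must check, subgroup by subgroup in the non-cyclic ADE list of Remark \ref{ADEGroups}, that the image in $\mathrm{SO}(3)$ has no nonzero fixed vector. This reduces to the classical fact that each dihedral or polyhedral rotation group contains rotations about at least two distinct axes, whose common fixed subspace is $\{0\}$.
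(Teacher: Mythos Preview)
Your proof is correct and is precisely the ``careful inspection'' that the paper invokes in its one-line proof; you have simply made explicit the linear-algebra content of the decomposition \eqref{ActionsOnThe4SphereInRPlusH}, namely that $(S^4)^H = S\bigl((\mathbb{R}\oplus\mathbb{H})^H\bigr)$ and that the relevant fixed subspaces are read off from the free action of $\mathrm{SU}(2)_{L,R}$ on $\mathbb{H}\setminus\{0\}$, the conjugation action of $\mathrm{SU}(2)_\Delta$ via $\mathrm{SO}(3)$ on $\mathrm{Im}(\mathbb{H})$, and the reflection $(\mathbb{Z}_2)_{\mathrm{HW}}$ on $\mathrm{Re}(\mathbb{H})$. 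One minor caveat worth noting (which the paper also glosses over): in your observation (iii) the claim that $\mathbb{Z}_{n+1}$ maps to a \emph{nontrivial} rotation group fails for $n+1=2$, since $\{\pm1\}\subset\mathrm{SU}(2)$ is the kernel of the double cover and hence $(\mathbb{Z}_2)_\Delta$ acts trivially on all of $\mathbb{H}$; the stated $S^2$ stratum therefore presumes $n+1\geq 3$.
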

\begin{proof}
  This follows directly by careful inspection, manifestly so using the arrangement of the actions in \eqref{ActionsOnThe4SphereInRPlusH}.
\end{proof}

To first approximation, and for the purposes of the present article, we will be interested in the system of fixed point spaces from Lemma \ref{GADEGWFixedInFourSphere}
only rationally (Def.\ \ref{RationalHomotopyTheory}). Hence in the remainder of this section we work out an explicit model for the corresponding image of the 4-sphere
in equivariant rational homotopy theory (Example \ref{EquivariantrationalHomotopyTheory}) and hence, via Example \ref{GSpacesAsGSuperspaces},
in equivariant rational super homotopy theory (Def.\ \ref{GEquivariantRationalSuperHomotopy}). The result is Prop. \ref{ADEWEquivariant4SpheredgAlgebraicCoefficients} below.


\begin{lemma}[Rational image of real ADE-actions on the 4-sphere]
  \label{ADEActionOn4SphereRationallyTrivial}
  \vspace{-1mm}
\item {\bf (i)}   For each element $g \in \mathrm{SU}(2)_{L,R,\Delta}$ the corresponding action on the 4-sphere, via Def.\ \ref{SuspendedHopfAction}
  is an orientation-preserving isometry (of the round 4-sphere)
  while the nontrivial element $\sigma \in (\mathbb{Z}_2)_{\mathrm{HW}}$ acts as an orientation-\emph{reversing} isometry.

 \item {\bf (ii)}  It follows that the DGC-algebra homomorphism corresponding to $\rho(g)$ after passing to the minimal
  Sullivan model of the $n$-spheres (Example \ref{SpheredgcAlgebraModel}) is the \emph{identity} for all
  $g \in G_{\mathrm{ADE}}$,  while the DGC-algebra homomorphism corresponding to $\rho(\sigma)$ acts as minus the identity on $\omega_4$ and as the identity on $\omega_7$:
  \begin{equation}
    \label{TransformationOfGeneratorsFor4Sphere}
  \mbox{
  \begin{tabular}{|c||c|c|}
    \hline
    & $g \in G_{\mathrm{ADE}}$ & $e \neq \sigma \in G_{\mathrm{HW}}$
    \\
    \hline
    $\omega$ & $\rho(g)^\ast(\omega)$ & $\rho(\sigma)^\ast(\omega)$
    \\
    \hline
    \hline
    $\omega_4$ & $\phantom{-}\omega_4$ & $-\omega_4$
    \\
    \hline
    $\omega_7$ & $\phantom{-}\omega_7$ & $\phantom{-}\omega_7$
    \\
    \hline
  \end{tabular}
  }
  \end{equation}
\end{lemma}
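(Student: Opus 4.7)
The strategy is to verify (i) by direct inspection of the explicit linear actions in Def.\ \ref{SuspendedHopfAction}, and then to deduce (ii) by transporting this orientation datum through the minimal Sullivan model of Example \ref{SpheredgcAlgebraModel} and using functoriality of rationalization.

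For part (i), I would argue as follows. All four actions $\mathrm{SU}(2)_{L,R,\Delta}$ and $(\mathbb{Z}_2)_{\mathrm{HW}}$ are restrictions to the unit sphere of linear actions on the ambient $\mathbb{R}^5 \simeq \mathbb{R} \oplus \mathbb{H}$ in \eqref{The4SphereInRPlusH}, so it suffices to check orientation of these linear actions. Quaternion multiplication by a unit quaternion is an $\mathbb{R}$-linear isometry of $\mathbb{H} \simeq_{\mathbb{R}} \mathbb{R}^4$ (by $|pq|=|p||q|$, cf.\ \eqref{SU2ActionsOnQuaternions}--\eqref{UnitQuaternionsActOrthogonallyOnQuaternions}), and since $\mathrm{SU}(2) \simeq S^3$ is path-connected, the resulting maps lie in the identity component $\mathrm{SO}(5) \subset \mathrm{O}(5)$ (extended trivially on the $\mathbb{R}$-summand); hence they are orientation-preserving. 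The diagonal action $\mathrm{SU}(2)_\Delta$ is a composition of two such and is therefore also orientation-preserving. By contrast, the nontrivial element $\sigma \in (\mathbb{Z}_2)_{\mathrm{HW}}$ acts by reflecting the last $\mathbb{R}$-summand in \eqref{ActionsOnThe4SphereInRPlusH}, hence as a linear map of $\mathbb{R}^5$ with determinant $-1$, so its restriction to $S^4$ reverses orientation.

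For part (ii), recall from Example \ref{SpheredgcAlgebraModel} that the minimal Sullivan model of $S^4$ has two generators $\omega_4$ in degree $4$ and $\omega_7$ in degree $7$, with $d\omega_4 = 0$ and $d\omega_7$ proportional to $\omega_4 \wedge \omega_4$. Under Sullivan's equivalence (Prop.\ \ref{SullivanEquivalence}), $\omega_4$ corresponds, up to rational rescaling, to the de Rham class of the normalized volume form of the round $S^4$; hence an orientation-preserving isometry $\rho(g)$ satisfies $\rho(g)^*\omega_4 = \omega_4$, while the orientation-reversing $\rho(\sigma)$ satisfies $\rho(\sigma)^*\omega_4 = -\omega_4$. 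For the degree-$7$ generator, I would apply $\rho(h)^*$ to the defining relation $d\omega_7 \propto \omega_4 \wedge \omega_4$; since $\omega_4 \wedge \omega_4$ is unchanged in either case (being even in $\omega_4$), the element $\rho(h)^*\omega_7 - \omega_7$ is closed. Then a minimal-model degree count, noting that the minimal model has trivial content in degree $6$ (its generators sit only in degrees $4$ and $7$), forces this difference to vanish, giving $\rho(h)^*\omega_7 = \omega_7$ on the nose for every $h \in G_{\mathrm{ADE}} \cup G_{\mathrm{HW}}$. This produces exactly the transformation table \eqref{TransformationOfGeneratorsFor4Sphere}.

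The only genuinely delicate point is the last step, matching the cohomological uniqueness of the action on $\omega_7$ with an \emph{on-the-nose} equality of generators rather than equality up to a coboundary. I expect this to reduce cleanly to the degree count just indicated, using only that the minimal Sullivan model of $S^4$ has no generators below degree $4$ and none strictly between degrees $4$ and $7$. Everything else is direct verification from the explicit definitions of the actions.
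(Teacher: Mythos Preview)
Your proposal is correct and follows essentially the same approach as the paper: both verify orientation behaviour on $\mathbb{R}^5$ by checking that the $\mathrm{SU}(2)$-actions land in $\mathrm{SO}(4) \subset \mathrm{SO}(5)$ while $\sigma$ is a single reflection, then identify $\omega_4$ with the volume class and fix $\omega_7$ by compatibility with the differential. Your degree-count argument for $\omega_7$ (the closed difference lives in degree $7$, where the only element is a scalar multiple of $\omega_7$, which is not closed) is a mild rephrasing of the paper's observation that $\omega_7$ is the \emph{unique} primitive for $-\tfrac{1}{2}\omega_4 \wedge \omega_4$ in the minimal model; both arguments come to the same thing, and your version has the minor virtue of handling the orientation-reversing case explicitly rather than leaving it implicit.
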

\begin{proof}
  The orientation-preserving isometries   of $S^4$ form the group $\mathrm{SO}(5)$, acting on
  $S^4 \simeq S(\mathbb{R}^5)$ via its canonical action on $\mathbb{R}^5$. Now by
  Def.\ \ref{SuspendedHopfAction} and using (\ref{UnitQuaternionsActOrthogonallyOnQuaternions})
  it follows that the $\rho(g)$ acts through the subgroup $\mathrm{SO}(4) \hookrightarrow \mathrm{SO}(5)$,
  and hence itself as an orientation-preserving isometry.

  This means that under pullback of differential forms along $\rho(g)$, the canonical volume form of $S^4$
  is sent to itself.
  But, under passing to minimal Sullivan models in Example \ref{SpheredgcAlgebraModel}, this volume form is identified with the
  generator $\omega_4 \in \mathrm{CE}(\mathfrak{l}(S^4))$. Hence
  $$
    \rho(g)^\ast(\omega_4) = \omega_4
    \,.
  $$
  With this, respect for the CE-differential implies that also $\rho(g)^\ast(\omega_7) = \omega_7$, since this is the
  only primitive for $-\tfrac{1}{2} \omega_4 \wedge \omega_4$ in $\mathrm{CE}(\mathfrak{l}(S^4))$, according to (\ref{CEAlgebraFor4Sphere}):
  \begin{equation}
    \label{omega7TransformationFromOmega4}
    \xymatrix{
      \rho(g)^\ast(\omega_7)
      \ar@{|->}[d]_{d_{\mathfrak{l}(S^4)}}
      && \omega_7 \ar@{|->}[ll]_{\rho(g)^\ast}
      \ar@{|->}[d]^{d_{\mathfrak{l}(S^4)}}
      \\
      -\tfrac{1}{2} \omega_4 \wedge \omega_4
      &&
      -\tfrac{1}{2} \omega_4 \wedge \omega_4\;.
      \ar@{|->}[ll]_{\rho(g)^\ast}
    }
  \end{equation}

  \vspace{-9mm}
\end{proof}

\medskip
\begin{remark}[Equivariant rational homotopy improving on plain rational homotopy]
Lemma \ref{ADEActionOn4SphereRationallyTrivial} implies that the ADE-actions on the 4-sphere are invisible to
plain rational homotopy theory (Def.\ \ref{RationalHomotopyTheory}). But \emph{equivariant} rational homotopy theory (Example \ref{EquivariantrationalHomotopyTheory}),
and hence \emph{equivariant} rational super homotopy theory (Def.\ \ref{GEquivariantRationalSuperHomotopy}),
is more fine-grained and does
recognize that the ADE-action on the 4-sphere is not in fact trivial, by remembering its fixed point loci.
This is made explicit by Prop. \ref{ADEWEquivariant4SpheredgAlgebraicCoefficients}.
\end{remark}

\newpage

\begin{prop}[System of dgc-algebra models for fixed points real ADE-action on the 4-Sphere]
  \label{ADEWEquivariant4SpheredgAlgebraicCoefficients}
 Under passage to minimal dgc-algebra models $\mathrm{CE}\left( \mathfrak{l}(S^n)\right)$ for the $n$-spheres (Example \ref{SpheredgcAlgebraModel}),
 the systems of fixed points of the real ADE-actions on the 4-sphere, from Lemma \ref{GADEGWFixedInFourSphere},
 becomes the following:

\item   {\bf (i)}
  For the $(G_{\mathrm{ADE}})_{\Delta} \times (\mathbb{Z}_2)_{\mathrm{HW}}$-action the system of minimal dgc-algebras
  $$
    \mathrm{CE}\Big(
      \mathfrak{l}\big(
        ( S^4)^{(-)}
      \big)
    \Big)
    \maps
    \mathrm{Orb}_{ (G_{\mathrm{ADE}})_{\Delta} \times (\mathbb{Z}_2)_{\mathrm{HW}} }
    \xymatrix{\ar[r]&}
    \mathrm{dgcAlg}
  $$
  is, in the parts corresponding to those shown in Lemma \ref{GADEGWFixedInFourSphere}, given  by
  $$
    \hspace{-0.3cm}
    \raisebox{100pt}{
    \xymatrix@C=-16pt@R-4pt{
      &
      \frac{
        (G_{\mathrm{ADE}})_\Delta \times G_{\mathrm{HW}}
      }{
        \{e_{\mathrm{ADE}}\} \times \{e_{\mathrm{HW}}\}
      }
      \ar[rr]
      \ar[d]_{(g_{\mathrm{ADE}}, \sigma_{\mathrm{HW}} )}
      &&
      \frac{
        (G_{\mathrm{ADE}})_\Delta \times G_{\mathrm{HW}}
      }{
        \{e_{\mathrm{ADE}}\} \times G_{\mathrm{HW}}
      }
      \ar[d]^{g_{\mathrm{ADE}}}
      \\
      &
      \frac{
        (G_{\mathrm{ADE}})_{\Delta} \times G_{\mathrm{HW}}
      }{
        \{e_{\mathrm{ADE}}\} \times \{e_{\mathrm{HW}}\}
      }
      \ar[rr]
      \ar[d]_{(g_{\mathrm{ADE}}, e_{\mathrm{HW}} )}
      &&
      \frac{
        (G_{\mathrm{ADE}})_{\Delta} \times G_{\mathrm{HW}}
      }{
        \{e_{\mathrm{ADE}}\} \times G_{\mathrm{HW}}
      }
      \ar[d]^{g_{\mathrm{ADE}}}
      \\
      &
      \frac{
        (G_{\mathrm{ADE}})_{\Delta} \times G_{\mathrm{HW}}
      }{
        \{e_{\mathrm{ADE}}\} \times \{e_{\mathrm{HW}}\}
      }
      \ar[rr]
      \ar[ddl]
      \ar[dddd]|{ \phantom{ {A \atop A} \atop {A \atop A} } }
      &&
      \frac{
        (G_{\mathrm{ADE}})_{\mathrm{\Delta}} \times G_{\mathrm{HW}}
      }{
        \{e_{\mathrm{ADE}}\} \times G_{\mathrm{HW}}
      }
      \ar[ddl]
      \ar[dddd]
      \\
      \\
      \frac{
        (G_{\mathrm{ADE}})_{\Delta} \times G_{\mathrm{HW}}
      }{
        (\mathbb{Z}_{n+1})_\Delta \times \{e_{\mathrm{HW}}\}
      }
      \ar[rr]
      \ar[ddr]
      &&
      \frac{
        (G_{\mathrm{ADE}})_\Delta \times G_{\mathrm{HW}}
      }{
        (\mathbb{Z}_{n+1})_\Delta \times G_{\mathrm{HW}}
      }
      \ar[ddr]
      \\
      \\
      &
      \frac{
        (G_{\mathrm{ADE}})_\Delta \times G_{\mathrm{HW}}
      }{
        G_{\Delta} \times \{e_{\mathrm{HW}}\}
      }
      \ar[rr]
      &&
      \frac{
        (G_{\mathrm{ADE}})_{\Delta} \times G_{\mathrm{HW}}
      }{
        G_{\Delta} \times G_{\mathrm{HW}}
      }
    }
    }
    \xymatrix{\hspace{.2cm}\ar@{|->}[r]&\hspace{.1cm}}
    \hspace{-.0cm}
    \raisebox{100pt}{
    \xymatrix@C=-4pt@R22pt{
      &
      \mathrm{CE}(\mathfrak{l}(S^4))
      \ar[rr]^{
         \mbox{
        \tiny
        $
        \begin{aligned}
          \omega_4 & \mapsto 0
          \\
          \omega_7 & \mapsto 0
        \end{aligned}
        $}
      }
      \ar[d]_{
        \mbox{
        \tiny
        $
        \begin{aligned}
          \omega_4 & \mapsto - \omega_4
          \\
          \omega_7 & \mapsto + \omega_7
        \end{aligned}
        $}
      }
      &&
      \mathrm{CE}(\mathfrak{l}(S^3))
      \ar[d]^{
        h_3 \mapsto h_3
       }
      \\
      &
      \mathrm{CE}(\mathfrak{l}(S^4))
      \ar[rr]^{
         \mbox{
        \tiny
        $
        \begin{aligned}
          \omega_4 & \mapsto 0
          \\
          \omega_7 & \mapsto 0
        \end{aligned}
        $}
      }
      \ar[d]_{
        \mbox{
        \tiny
        $
        \begin{aligned}
          \omega_4 & \mapsto \omega_4
          \\
          \omega_7 & \mapsto \omega_7
        \end{aligned}
        $}
      }
      &&
      \mathrm{CE}(\mathfrak{l}(S^3))
      \ar[d]^{
        h_3 \mapsto h_3
       }
      \\
      &
      \mathrm{CE}(\mathfrak{l}(S^4))
      \ar[rr]|{
         \mbox{
        \tiny
        $
        \begin{aligned}
          \omega_4 & \mapsto 0
          \\
          \omega_7 & \mapsto 0
        \end{aligned}
        $}
      }
      \ar[ddl]|{
        \mbox{
        \tiny
        $
        \begin{aligned}
          \omega_4 & \mapsto 0
          \\
          \omega_7 & \mapsto 0
        \end{aligned}
        $}
      }
      \ar[dddd]|{ \phantom{ {A \atop A} \atop {A \atop A} } }
      &&
      \mathrm{CE}(\mathfrak{l}(S^3))
      \ar[ddl]|{ h_3 \mapsto 0 }
      \ar[dddd]
      \\
      \\
      \mathrm{CE}\left(\mathfrak{l}\left(S^2\right)\right)
      \ar[rr]|{
        \mbox{
        \tiny
        $
        \begin{aligned}
          \omega_2 & \mapsto 0
          \\
          \omega_3 & \mapsto 0
        \end{aligned}
        $
        }
      }
      \ar[ddr]|{
        \mbox{
        \tiny
        $
        \begin{aligned}
          \omega_2 & \mapsto 0
          \\
          \omega_3 & \mapsto 0
        \end{aligned}
        $
        }
      }
      &&
      \mathrm{CE}\left(\mathfrak{l}\left(S^1\right)\right)
      \ar[ddr]|{ h_1 \mapsto (0,0) }
      \\
      \\
      &
      \mathrm{CE}\left(\mathfrak{l}\left(S^1\right)\right)
      \ar[rr]|{ h_1 \mapsto (0,0) }
      &&
      \mathrm{CE}\left(\mathfrak{l}\left(S^0\right)\right)
    }
    }
  $$

\item  {\bf (ii)}
  For the $(G_{\mathrm{ADE}})_{L} \times (G_{\mathrm{ADE}})_{R}$-action, the system of minimal dgc-algebras
  $$
      \mathrm{CE}\Big(
      \mathfrak{l}\big(
        ( S^4)^{(-)}
      \big)
    \Big)
    \maps
    \mathrm{Orb}_{ (G_{\mathrm{ADE}})_{L} \times (G_{\mathrm{ADE}})_{R} }
    \xymatrix{\ar[r]&}
    \mathrm{dgcAlg}
  $$
  is, in the parts corresponding to those shown in Lemma \ref{GADEGWFixedInFourSphere}, given by
  $$
    \raisebox{45pt}{
    \xymatrix@C=-20pt@R-9pt{
      &
      \frac{
        (G_{\mathrm{ADE}})_{L} \times (G_{\mathrm{ADE}})_{R}
      }{
        \{e_{\mathrm{ADE}}\} \times \{e_{\mathrm{ADE}}\}
      }
      \ar[rr]
      \ar[d]_{ (g_{\mathrm{ADE}}, g'_{\mathrm{ADE}}) }
      &&
      \frac{
        (G_{\mathrm{ADE}})_{L} \times (G_{\mathrm{ADE}})_{R}
      }{
        \{e_{\mathrm{ADE}}\} \times (G_{\mathrm{ADE}})_{R}
      }
      \ar[d]_{ g_{\mathrm{ADE}} }
      \\
      &
      \frac{
        (G_{\mathrm{ADE}})_{L} \times (G_{\mathrm{ADE}})_{R}
      }{
        \{e_{\mathrm{ADE}}\} \times \{e_{\mathrm{ADE}}\}
      }
      \ar[dr]
      \ar[rr]
      \ar[dddd]
      &&
      \frac{
        (G_{\mathrm{ADE}})_{L} \times (G_{\mathrm{ADE}})_{R}
      }{
        \{e_{\mathrm{ADE}}\} \times (G_{\mathrm{ADE}})_{R}
      }
      \ar[dddd]
      \\
      &&
      \frac{
        (G_{\mathrm{ADE}})_{L} \times (G_{\mathrm{ADE}})_{R}
      }{
        (\mathbb{Z}_{n+1})_\Delta
      }
      \ar[dd]
      \\
      &&
      \\
      &&
      \frac{
        (G_{\mathrm{ADE}})_{L} \times (G_{\mathrm{ADE}})_{R}
      }{
        G_\Delta
      }
      \ar[dr]
      \\
      &
      \frac{
        (G_{\mathrm{ADE}})_{\Delta} \times G_{\mathrm{HW}}
      }{
        (G_{\mathrm{ADE}})_L \times \{ e_{\mathrm{ADE}} \}
      }
      \ar[rr]
      &&
      \frac{
        (G_{\mathrm{ADE}})_{L} \times (G_{\mathrm{ADE}})_{R}
      }{
        (G_{\mathrm{ADE}})_{L} \times (G_{\mathrm{ADE}})_{R}
      }
    }
    }
   \xymatrix{\hspace{.3cm}\ar@{|->}[r]&\hspace{0cm}}
    \raisebox{45pt}{
    \xymatrix@C=18pt@R19pt{
      &
      \mathrm{CE}(\mathfrak{l}(S^4))
      \ar[rr]^-{
        \mbox{
          \tiny
          $
          \begin{aligned}
            \omega_4 & \mapsto 0
            \\
            \omega_7 & \mapsto 0
          \end{aligned}
          $
        }
      }
      \ar[d]_-{
       \mbox{
          \tiny
          $
          \begin{aligned}
            \omega_4 & \mapsto + \omega_4
            \\
            \omega_7 & \mapsto + \omega_7
          \end{aligned}
         $
        }
      }
      &&
      \mathrm{CE}(\mathfrak{l}(S^0))
      \ar@{=}[d]
      \\
      &
      \mathrm{CE}(\mathfrak{l}(S^4))
      \ar[dr]|-{
        \mbox{
          \tiny
          $
          \begin{aligned}
            \omega_4 & \mapsto 0
            \\
            \vspace{-1mm}
            \omega_7 & \mapsto 0
          \end{aligned}
          $
      }
      }
      \ar[rr]|-{
        \mbox{
          \tiny
          $
          \begin{aligned}
            \omega_4 & \mapsto 0
            \\
            \omega_7 & \mapsto 0
          \end{aligned}
          $
        }
      }
      \ar[dddd]|-{
        \mbox{
          \tiny
          $
          \begin{aligned}
            \omega_4 & \mapsto 0
            \\
            \omega_7 & \mapsto 0
          \end{aligned}
          $
        }
      }
      &&
      \mathrm{CE}(\mathfrak{l}(S^0))
      \ar@{=}[dddd]
      \\
      &&
      \mathrm{CE}(\mathfrak{l}(S^2))
      \ar[dd]|-{
        \mbox{
          \tiny
          $
          \begin{aligned}
            \omega_2 & \mapsto 0
            \\
            \omega_3 & \mapsto 0
          \end{aligned}
          $
        }
      }
      \\
      &&
      \\
      &&
      \mathrm{CE}(\mathfrak{l}(S^1))
      \ar[dr]|{ h_1 \mapsto 0 }
      \\
      &
      \mathrm{CE}(\mathfrak{l}(S^0))
      \ar@{=}[rr]
      &&
      \mathrm{CE}(\mathfrak{l}(S^0))
    }
    }
  $$
\end{prop}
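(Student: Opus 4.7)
The plan is to obtain the asserted two diagrams of dgc-algebras as the image, under the minimal Sullivan model functor $\mathrm{CE} \circ \mathfrak{l}$ (Example \ref{SpheredgcAlgebraModel}), of the systems of fixed-point spaces already computed in Lemma \ref{GADEGWFixedInFourSphere}, with the automorphism components on each orbit encoded via Lemma \ref{ADEActionOn4SphereRationallyTrivial}. Concretely, the plan breaks into three steps: (i) identify, for each object $G/H$ of the relevant orbit category, the Sullivan minimal model of the fixed-point sphere $(S^4)^H$; (ii) determine the induced dgc-algebra map for every inclusion between neighbouring fixed-point spheres; (iii) determine the induced dgc-algebra map for the automorphism loops $G/H \to G/H$ coming from elements of $G$.

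For step (i), Lemma \ref{GADEGWFixedInFourSphere} identifies each fixed locus as a round sphere $S^k$ for $k \in \{0,1,2,3,4\}$, so Example \ref{SpheredgcAlgebraModel} applies verbatim: we obtain the polynomial-plus-one-odd-generator models with generators $(\omega_4,\omega_7)$, $h_3$, $(\omega_2,\omega_3)$, $h_1$, or the semigroup algebra on $S^0$, respectively. For step (iii), by Lemma \ref{ADEActionOn4SphereRationallyTrivial}, every element $g \in G_{\mathrm{ADE}}$ acts on $S^4$ as an orientation-preserving isometry of the round 4-sphere and hence induces the identity on $(\omega_4,\omega_7)$, while the nontrivial element of $(\mathbb Z_2)_{\mathrm{HW}}$ reverses orientation and hence sends $\omega_4 \mapsto -\omega_4$, from which $\omega_7 \mapsto +\omega_7$ follows by the rigidity argument \eqref{omega7TransformationFromOmega4}. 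The same reasoning, applied to the fixed round spheres of lower dimension (on which $G_{\mathrm{ADE}}$ still acts by orientation-preserving isometries, being a subgroup of $\mathrm{SO}$), yields the identity action on all lower-degree generators.

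The content of step (ii), which is the main computational task, is to show that each inclusion $S^k \hookrightarrow S^\ell$ of neighbouring fixed loci in the diagrams induces the asserted \emph{zero} map on all generators of positive degree. The key observation is that, by the explicit realization \eqref{ActionsOnThe4SphereInRPlusH} of the actions in terms of the decomposition $\mathbb R \oplus \mathbb H$, each such inclusion is a standard equatorial embedding of one round sphere into another of strictly higher dimension. Hence the induced map on rational cohomology is the zero map in all positive degrees, so every cohomologically non-trivial generator in the source minimal model (namely $\omega_4$, $h_3$, $\omega_2$, $h_1$, according to the dimension) is forced to map to an exact element. Using the freeness of the Sullivan minimal model, we are free to choose these exact elements to be zero on the nose; this determines the map on the remaining (odd, non-closed) generator $\omega_7$ or $\omega_3$ by differential-compatibility, and one checks directly that mapping it to zero is consistent, since $d\omega_7 = -\tfrac12 \omega_4 \wedge \omega_4$ and $d\omega_3 = -\tfrac12 \omega_2 \wedge \omega_2$ both pull back to zero once $\omega_4$ or $\omega_2$ do.

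The anticipated main obstacle is purely bookkeeping: verifying that the dgc-algebra maps chosen in step (ii) assemble into an honest functor out of the orbit category, i.e.\ that they commute with the automorphisms determined in step (iii) on every square of the two displayed diagrams. This is a finite check, carried out square-by-square: the squares containing only orientation-preserving automorphisms commute trivially since all generators are then fixed, while the squares involving the HW-reflection commute because the zero map is equivariant for any sign change of $\omega_4$ in source and the (trivial) action on $h_3$, $h_1$, $\omega_2$, $\omega_3$ in target.
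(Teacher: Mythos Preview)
Your proposal is correct and follows essentially the same approach as the paper. The paper's proof is extremely terse: it invokes Lemma \ref{ADEActionOn4SphereRationallyTrivial} for the automorphism arrows at the top of each diagram (your step (iii)), and for all remaining arrows simply notes that every map from a lower-dimensional sphere into a higher-dimensional one is null-homotopic, hence represented by the zero map on minimal models (your step (ii)). Your cohomological phrasing of step (ii)---arguing that the equatorial inclusion kills positive-degree cohomology, hence sends generators to exact elements which may be chosen to vanish---is equivalent to, though slightly more laborious than, the paper's direct null-homotopy argument; the latter immediately gives a factorization through a point in the homotopy category and so dispenses with the separate treatment of the non-closed generators $\omega_7,\omega_3$ and the functoriality bookkeeping.
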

\begin{proof}
  In both cases the vertical maps at the top are obtained by using Lemma \ref{ADEActionOn4SphereRationallyTrivial}
  in Lemma \ref{GADEGWFixedInFourSphere}. That all the other maps appearing in Lemma \ref{GADEGWFixedInFourSphere}
  are represented by zero-maps on the minimal dgc-algebra models reflects the basic fact that
  every map from a sphere of lower dimension into one of higher dimensions is null homotopic.
\end{proof}

Prop. \ref{ADEWEquivariant4SpheredgAlgebraicCoefficients} serves to determine the explicit coefficients of
real ADE-equivariant rational cohomotopy in degree 4. We will make repeated use of this in the proofs in Sec.
\ref{ADEEquivariantMBraneSuperCucycles}.

\newpage

\section{Real ADE-equivariant brane cocycles}
\label{ADEEquivariantMBraneSuperCucycles}

In the previous subsection we discussed real ADE-actions both
on $\mathbb{R}^{10,1\vert \mathbf{32}}$ (Sec. \ref{ADESingularitiesInSuperSpacetime}) and on $S^4$ (Sec. \ref{ADEEquivariantRationalCohomotopy}).
This now allows us to discuss the possible equivariant enhancements (Example \ref{EquivariantEnhancementOfSuperCocycles}) of the M2/M5-brane cocycle (Prop. \ref{M2M5SuperCocycle})
that are compatible with these actions.

\begin{theorem}[Equivariant enhancements of the fundamental brane cocycles]
  \label{RealADEEquivariantEndhancementOfM2M5Cocycle}

With respect to the real ADE-actions \eqref{SpacetimeActions} on $D = 11$, $\mathcal{N} = 1$ super spacetime from Theorem \ref{SuperADESingularitiesIn11dSuperSpacetime}
and the real ADE-actions \eqref{ActionsOnThe4SphereInRPlusH} on the 4-sphere from Def.\ \ref{SuspendedHopfAction}, we have
non-trivial cocycles in real ADE-equivariant rational cohomotopy of superspaces as shown in
 \hyperlink{EquivariantEnhancementsList}{\bf Table 3},
providing equivariant enhancement (Example \ref{EquivariantEnhancementOfSuperCocycles})
of the fundamental M2/M5-cocycle (Prop. \ref{M2M5SuperCocycle}) as well as of the zero-cocycle on super spacetime.

\end{theorem}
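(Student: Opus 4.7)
The plan is to unpack the statement via Example \ref{EquivariantEnhancementOfSuperCocycles}, which identifies an equivariant enhancement of the M2/M5-cocycle with: (a) a system of stratum-wise super-cocycles $\mu(H)\colon \mathfrak{g}(H) \to S^4(H)$ for each subgroup $H$ of the acting product ADE$\times$HW-group, together with (b) a system of homotopies $\mu(f)$ filling the squares associated to every morphism $f$ in the orbit category, subject to functoriality under composition and identities. The two systems of fixed-point objects are the ones already computed: the fixed super-spacetimes are classified in Theorem \ref{SuperADESingularitiesIn11dSuperSpacetime} and Prop. \ref{NonSimpleRealSingularities}, while the fixed 4-spheres (with their minimal Sullivan models) are tabulated in Prop. \ref{ADEWEquivariant4SpheredgAlgebraicCoefficients}. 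So Theorem \ref{RealADEEquivariantEndhancementOfM2M5Cocycle} reduces to: (i) matching these two systems stratum by stratum, (ii) exhibiting the cocycle on each stratum, and (iii) producing the coherent homotopy data.

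First, for the top stratum (trivial subgroup) I would observe that by Lemma \ref{ADEActionOn4SphereRationallyTrivial} the $G_{\mathrm{ADE}}$-action on $S^4$ is rationally trivial on both generators $\omega_4,\omega_7$, while the HW-involution flips $\omega_4$ and preserves $\omega_7$. Hence at the top stratum the cocycle is forced to be the fundamental M2/M5 cocycle of Prop. \ref{M2M5SuperCocycle}, possibly sign-twisted in a way that is automatically consistent with the HW-reversal on the underlying $\mathbb{R}^{9,1|\mathbf{16}}\hookrightarrow \mathbb{R}^{10,1|\mathbf{32}}$ from Lemma \ref{9braneFixed}. Next, for each non-trivial stratum $H$, I would read off the pair $(\mathfrak{g}(H), S^4(H))$ from the two tables and observe the crucial coincidence: in every case $S^4(H)$ is either the point, an odd sphere $S^{1,3,7}$, or a sphere $S^{2,4}$ for which Prop. \ref{TheOldBraneScan} (the old brane scan), together with its M-theoretic lift Prop. \ref{M2M5SuperCocycle}, provides a unique $\mathrm{Spin}$-invariant cocycle of the required degree on the matching super-Minkowski spacetime $\mathfrak{g}(H)$. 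Taking $\mu(H)$ to be this fundamental brane cocycle (e.g. $\mu_{F1}^{H/I}$ on the MO9 stratum, $\mu_{F1}^{D=3}$ on the $\sfrac12$-BPS M2 stratum valued in $S^2$, the M2/M5 cocycle itself on an MK6, and the zero cocycle whenever the fixed sphere is odd-dimensional or zero-dimensional) gives the entries listed in Table 3.

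The technical core is step (iii): for every morphism $G/H_1 \to G/H_2$ in the orbit category one must produce a homotopy between the composite $\mathfrak{g}(H_2)\hookrightarrow \mathfrak{g}(H_1)\stackrel{\mu(H_1)}{\longrightarrow}S^4(H_1)$ and $\mathfrak{g}(H_2)\stackrel{\mu(H_2)}{\longrightarrow}S^4(H_2)\hookrightarrow S^4(H_1)$. This is precisely where the super Nambu–Goto / Green–Schwarz densities enter: by the forthcoming Prop. \ref{TrivializationsOfRestrictionsOfM2M5Cocycle}, pulled back together with the identification of super-volume forms as Green–Schwarz Lagrangians in Prop. \ref{GreenSchwarzFromSuperVolumeForm}, the restriction of each relevant M-brane cochain to a BPS super-subspace trivializes via an explicit super-volume form $\mathrm{svol}_{p+1}$, and this $\mathrm{svol}_{p+1}$ is the required homotopy datum. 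Substituting yields, for each non-trivial relation in the orbit category, an explicit 2-cell whose cochain representative is a Green–Schwarz density on the embedded black-brane worldvolume, as illustrated by diagram \eqref{AgainNS1H}.

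The main obstacle I expect is the coherence verification: one must check, for every composable pair $H_1\subset H_2\subset H_3$ arising in the orbit categories of the product groups $G_{\mathrm{ADE}}\times G_{\mathrm{HW}}$ and $(G_{\mathrm{ADE}})_L\times (G_{\mathrm{ADE}})_R$, that the two possible stacked homotopies agree up to higher gauge equivalence. I would attack this by using that the minimal Sullivan models in Prop. \ref{ADEWEquivariant4SpheredgAlgebraicCoefficients} are concentrated in very few degrees — most target strata are zero-, one-, or two-cell models — so the only potentially nontrivial coherence is a degree-matching condition between composed Green–Schwarz densities, which is then a finite check against the intersection-law entries of \hyperlink{SingularitiesTable}{Table 1}. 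Enumerating the cases and checking them against the Old Brane Scan then both produces the list of entries in Table 3 and shows that no further equivariant enhancements exist.
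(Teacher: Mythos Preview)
Your overall strategy matches the paper's: unpack via Example \ref{EquivariantEnhancementOfSuperCocycles}, feed in the fixed-locus classifications (Thm.\ \ref{SuperADESingularitiesIn11dSuperSpacetime}, Prop.\ \ref{NonSimpleRealSingularities}, Prop.\ \ref{ADEWEquivariant4SpheredgAlgebraicCoefficients}), assign Old Brane Scan cocycles stratum-wise, and fill the squares with the super-volume-form homotopies of Prop.\ \ref{TrivializationsOfRestrictionsOfM2M5Cocycle}. However, two structural ingredients that the paper uses are missing from your plan, and your stratum-wise assignments contain errors.

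First, you never establish \emph{plain equivariance} of the top-level cocycle. Before any homotopy data can be discussed, one must check that $\mu_{M2/M5}$ (and likewise $\mu_{F1}^{H/I}$, $\mu_{F1}^{D=3}$) actually intertwines the $G$-actions on domain and codomain in the sense of \eqref{GEquivarianceMap}. This is not automatic and is the content of Lemma \ref{M2CocycleIsADEInvariant} and Props.\ \ref{PlainEquivarianceM2M5}, \ref{PlainEquivarianceF1}, \ref{PlainEquivarianceF1OnM2}: one computes that $\mu_{M2}$ is $G_{\mathrm{ADE}}$-invariant but picks up a sign under $G_{\mathrm{HW}}$, and that this matches exactly the transformation \eqref{TransformationOfGeneratorsFor4Sphere} of $\omega_4$. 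Your remark that the cocycle is ``forced'' and ``automatically consistent'' papers over a genuine computation. Second, you miss the key simplification that makes coherence nearly trivial: for a \emph{simple} singularity (Def.\ \ref{SimpleSingularities}) the system of bosonic fixed loci is \emph{constant} away from the trivial subgroup (Example \ref{ConstantSystemsOfFixedSubspaces}), and by Prop.\ \ref{ADEWEquivariant4SpheredgAlgebraicCoefficients} so is the coefficient system. Hence the entire equivariant cocycle is determined by the \emph{single} square \eqref{EquivariantCocycleGlobalComponent}, and no higher coherence needs checking. For intersections of two simple singularities the paper reduces to a commuting square of four such components and observes (via Example \ref{ComponentsForEquivariantEnhancementOfFundamentalBraneCocycles}) that the pasting of consecutive homotopies vanishes for degree reasons; your proposed brute-force enumeration over all composable triples $H_1\subset H_2\subset H_3$ is unnecessary. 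Finally, your specific assignments are muddled: the $\mathrm{MK}6$ row in Table 3 enhances the \emph{zero} cocycle at the top (not $\mu_{M2/M5}$), with target $S^0$ on the fixed stratum; and your rule ``zero cocycle whenever the fixed sphere is odd-dimensional'' directly contradicts your own MO9 entry, where the fixed sphere is $S^3$ and the cocycle is $\mu_{F1}^{H/I}$. The theorem also does not assert completeness, so your final clause about showing ``no further equivariant enhancements exist'' is not part of what needs to be proved.
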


\begin{proof}
We need to produce data as discussed in Example \ref{EquivariantEnhancementOfSuperCocycles}.
First of all this means to check that the fundamental brane cocycles at each stage (from Example \ref{FundamentalF1Cocycle} and Prop. \ref{M2M5SuperCocycle}) are plain equivariant in the first place, hence that, in the notation of Remark \ref{EquivarianceCommutingDiagram}, we have

\vspace{-.3cm}

$$
  \xymatrix{
    \mathbb{R}^{10,1\vert \mathbf{32}}
    \ar@(ul,ur)[]^G
    \ar[rr]^-{\mu_{{}_{M2/M5}}}
    &&
    S^4
    \ar@(ul,ur)[]^G
  },
  \phantom{AAAAA}
    \xymatrix{
      \mathbb{R}^{9,1\vert \mathbf{16}}
      \ar@(ul,ur)[]^{(\mathbb{Z}_{n+1})_{\Delta}}
      \ar[rr]^{\mu_{{}_{F1}}^{H/I}}
      &&
      S^3
      \ar@(ul,ur)[]^{(\mathbb{Z}_{n+1})_{\Delta}}
    }
   \!\!\!\!\! ,
  \phantom{AAAAA}
    \xymatrix{
      \mathbb{R}^{2,1\vert \mathbf{2}}
      \ar@(ul,ur)[]^{ G_{\mathrm{HW}} }
      \ar[rr]^{\mu_{{}_{F1}}^{D=3}}
      &&
      S^2
      \ar@(ul,ur)[]^{ G_{\mathrm{HW}} }
    }
   \!\!\! .
$$

\vspace{-.0cm}

We check this in Prop. \ref{PlainEquivarianceM2M5}, Prop. \ref{PlainEquivarianceF1}, and Prop. \ref{PlainEquivarianceF1OnM2} below.

Now first consider the case of simple singularities (Def.\ \ref{SimpleSingularities}),
where the system of fixed loci in superspacetime is constant away from the trivial subgroup (Example \ref{ConstantSystemsOfFixedSubspaces}).
By Prop. \ref{ADEWEquivariant4SpheredgAlgebraicCoefficients}, in these cases also the system of coefficients for real ADE-equivariant
rational cohomotopy is constant away from the trivial subgroup; and hence a cocycle is entirely determined by the single component
corresponding to unique morphism between the two extreme cases of $G$-orbits:

\vspace{-.1cm}

 \begin{equation}
   \label{EquivariantCocycleGlobalComponent}
  \raisebox{20pt}{
  \xymatrix@R=15pt@C=2em{
    G/\{e\}
    \ar[d]
    \ar@(ul,ur)[]^{G}
    &
    \mathbb{R}^{10,1\vert \mathbf{32}}
    \ar@(ul,ur)[]^{G}
    \ar[rrr]^-{ \mu_{\{e\}} }_<<<<<<{\ }="s"
    &&\;\;\;&
    S^4
    \ar@(ul,ur)[]^{G}
    \\
    G/G
    &
    \left( \mathbb{R}^{10,1\vert \mathbf{32}} \right)^G
    \ar@{^{(}->}[u]
    \ar[rrr]_{  \mu_{G}  }^>>>>>>{\ }="t"
    &&&
    \left( S^4 \right)^G.
    \ar[u]
    &
    \ar@{=>}^\eta "s"; "t"
  }}
\end{equation}
Prop. \ref{TrivializationsOfRestrictionsOfM2M5Cocycle} and Example
\ref{ComponentsForEquivariantEnhancementOfFundamentalBraneCocycles} below establish the
 possible data describing such diagrams. In particular, it shows that the choice of homotopy
 $\eta$ may depend in each case only on the bosonic fixed locus. This means that all cocycles
 on simple singularities are fixed by their component \eqref{EquivariantCocycleGlobalComponent}.

Analogously, for non-simple singularities (Def.\ \ref{SimpleSingularities}) a cocycle involves this kind of
data at each orbit type where the bosonic singular locus changes (see again Example \ref{ConstantSystemsOfFixedSubspaces}).
For intersections of two simple singularities (Def.\ \ref{IntersectionOfSimpleSingularities}) this means that
the cocycle is now determined by its components on four morphisms in the orbit category, as shown here:

\vspace{-3mm}
\hspace{-.8cm}
  \scalebox{.9}{
  $
  \xymatrix@C=-3.2pt@R=4pt{
    &&&&
    &&& &&&&&&&&&
    S^4
    \ar@(ul,ur)[]^{ G_1 \times G_2 }
    \\
    \\
    &&  &&
    &&&
    \mathbb{R}^{10,1\vert \mathbf{32}}
    \ar@(ul,ur)[]^{ G_1 \times G_2 }
    \ar[uurrrrrrrrr]^{ \color{blue} \mu_{ {}_{\{e_1\} \times \{e_2\}} }  }_<<<<<<<<<<<<{\ }="s1"_<<<<<<<<<<{\ }="s2"
    \\
    &
    \frac{G_1 \times G_2}{ \{e_1\} \times \{e_2\} }
    \ar[ddl]
    \ar[ddr]
    &&&
    &&&&&&&&& \left( S^4\right)^{G_1 \times \{e_2\}}
    \ar@{..>}[uuurrr]
    &&& &&&
    \left( S^4 \right)^{ \{e_1\} \times G_2 }
    \ar[uuulll]
    \\
    \\
    \frac{G_1 \times G_2}{ G_1 \times \{e_2\} }
    \ar[dr]
    &&
    \frac{G_1 \times G_2}{ \{e_1\} \times G_2 }
    \ar[dl]
    &&
    \left( \mathbb{R}^{10,1\vert \mathbf{32}} \right)^{G_1 \times \{e_2\}}
    \ar@{^{(}->}[uuurrr]
    \ar@{..>}[uurrrrrrrrr]|<<<<<<<<{ \color{blue} \mu_{{}_{ G_1 \times \{e_2\} }} }^>>>>>>>>>>>>>>>>>>>>>>>>>>>>>>>>>>>>>>{\ }="t2"_<<<<<<<<<<<<<<<<<<<{\ }="s4"
    &&& &&&
    \left( \mathbb{R}^{10,1\vert \mathbf{32}} \right)^{\{e_1\} \times G_2}
    \ar@{_{(}->}[uuulll]
    \ar[uurrrrrrrrr]|>>>>>>>>>>>>>{\color{blue} \mu_{{}_{ \{e_1\} \times  G_2}}  }^>>>>>>>>>>>>>>>>>>>>>>>>>>>>>>>>>>>>{\ }="t1"_<<<<<<<<<{\ }="s3"
    \\
    & \frac{G_1 \times G_2}{ G_1 \times G_2 } &&&
    &&&   &&&&&&&&& \left( S^4 \right)^{ G_1 \times G_2 }
    \ar[uuurrr]
    \ar@{..>}[uuulll]
    \\
    \\
    &&&&
    &&&
    \left( \mathbb{R}^{10,1\vert \mathbf{32}} \right)^{G_1 \times G_2}
    \ar@{^{(}->}[uuurrr]
    \ar@{_{(}->}[uuulll]
    \ar[uurrrrrrrrr]|{  }^>>>>>>>>>>>>>>>>>>>>>>>>>>>>>>>>>>>{\ }="t3"^<<<<<<<<<<<<<<{\ }="t4"
    \ar@{=>}|<<<<<<<<<<<{ \color{cyan} \phantom{{A \atop A}} \eta^{ \{e_1\} \times \{e_1\} }_{ \{e_1\} \times G_2 }   \phantom{{A \atop A}} } "s1"; "t1"
    \ar@{::>}|<<<<<{ \color{cyan} \phantom{{A \atop A}} \eta^{ \{e_1\} \times \{e_2\} }_{ G_1 \times \{e_2\} }  \phantom{{A \atop A}} } "s2"; "t2"
    \ar@{=>}|{ \color{cyan} \phantom{{A \atop A}} \eta^{ \{e_1\} \times G_2 }_{ G_1 \times G_2 } } "s3"; "t3"
    \ar@{::>}|<<<<<{ \color{cyan} \phantom{{A \atop A}} \eta^{ G_1 \times \{e_2\} }_{G_1 \times G_2} } "s4"; "t4"
  }
  $
  }
The data assigned to each square is determined by Example
\ref{ComponentsForEquivariantEnhancementOfFundamentalBraneCocycles},
which shows that the pasting composition of the
consecutive homotopies shown vanishes. This implies the 2-commutativity of the diagram.
\end{proof}

\newpage

\hypertarget{EquivariantEnhancementsList}{$\,$}

\vspace{-1.5cm}

\hspace{-1.3cm}
\scalebox{.8}{
\begin{tabularx}{1.315\textwidth}{|X|c|c|c|}
  \hline
  \begin{tabular}{c}
    \hspace{-5mm} \bf Black brane
    \\
   \hspace{-5mm} \bf species
  \end{tabular}
  &
  \begin{tabular}{c}
    \bf Type of
    \\
    \bf singularity
    \\
    \bf in $\mathbb{R}^{10,1}$
  \end{tabular}
  &
  \multicolumn{2}{|c|}{
      \bf Systems of {\color{blue} fundamental brane species} at singular loci
  }
  \\
  \hline
  \hline
  {\rm M2}
  &
  $(G_{\mathrm{ADE}})_{\Delta}$
  &
  \raisebox{20pt}{
  $
  \xymatrix@R=2pt{
    \mathrm{ST}
  &
    \mathbb{R}^{10,1\vert \mathbf{32}}
    \ar@(ul,ur)[]^{ (G_{\mathrm{ADE}})_{\Delta} }
    \ar[rrrr]^{ \color{blue} \mu_{{}_{M2/M5}}}_<<<<{\ }="s"
    &&&&
    S^4
    \ar@(ul,ur)[]^{ ( G_{\mathrm{ADE}} )_{\Delta} }
    \\
    \\
    \\
    \\
    \mathrm{M2}
    &
    \mathbb{R}^{2,1\vert 8 \cdot \mathbf{2}}
    \ar@{^{(}->}[uuuu]
    \ar[rrrr]_{0}^>>>>>>>{\ }="t"
    &&&&
    S^1
    \ar[uuuu]_0
    \ar@{=>}|{ \color{cyan} \phantom{{A \atop A} \atop {A \atop A}} \mathrm{svol}_{2+1}
     \phantom{{A \atop A} \atop {A \atop A}} } "s"; "t"
  }
  $
  }
  &
  \raisebox{20pt}{
  $
  \xymatrix@R=2pt{
    \mathrm{ST}
    &
    \mathbb{R}^{10,1\vert \mathbf{32}}
    \ar@(ul,ur)[]^{ (\mathbb{Z}_{n+1})_{\Delta} }
    \ar[rrrr]^{ \color{blue} \mu_{{}_{M2/M5}}  }_<<<<{\ }="s"
    &&&&
    S^4
    \ar@(ul,ur)[]^{ ( \mathbb{Z}_{n+1} )_{\Delta} }
    \\
    \\
    \\
    \\
    \mathrm{M2}
    &
    \mathbb{R}^{2,1\vert 8 \cdot \mathbf{2}}
    \ar@{^{(}->}[uuuu]
    \ar[rrrr]_{ \color{blue} \mu_{{}_{F1}}^{D=3}  }^>>>>>>>{\ }="t"
    &&&&
    S^2
    \ar[uuuu]_0
    \ar@{=>}|{ \color{cyan} \phantom{{A \atop A} \atop {A \atop A}} \mathrm{svol}_{2+1}
    \phantom{{A \atop A} \atop {A \atop A}} } "s"; "t"
  }
  $
  }
  \\
  \hline
  {\rm MO5}
  &
  $(\mathbb{Z}_2)_{\mathrm{W}}$
  &
  \raisebox{20pt}{
  $
  \xymatrix@R=2pt{
    \mathrm{ST}
    &
    \mathbb{R}^{10,1\vert \mathbf{32}}
    \ar@(ul,ur)[]^{ G_{\mathrm{W}} }
    \ar[rrrr]^{ \color{blue} \mu_{{}_{M2/M5}} }_<<<<{\ }="s"
    &&&&
    S^4
    \ar@(ul,ur)[]^{ G_{\mathrm{W}} }
    \\
    \\
    \\
    \\
    \mathrm{M5}
    &
    \mathbb{R}^{5,1\vert 2 \cdot \mathbf{8}}
    \ar@{^{(}->}[uuuu]
    \ar[rrrr]_-{ \color{blue}  0  }^>>>>>>>{\ }="t"
    &&&&
    S^1
    \ar[uuuu]_0
    \ar@{=>}|{ \color{cyan} \phantom{{A \atop A} \atop {A \atop A}} \mathrm{svol}_{5,1}
     \phantom{{A \atop A} \atop {A \atop A}} } "s"; "t"
  }
  $
  }
  &
  \\
  \hline
  $\mathrm{MO9}$
  &
  $(\mathbb{Z}_2)_{\mathrm{HW}}$
  &
  &
  \raisebox{20pt}{
  $
  \xymatrix@R=2pt{
    \mathrm{ST}
    &
    \mathbb{R}^{10,1\vert \mathbf{32}}
    \ar@(ul,ur)[]^{ G_{\mathrm{HW}} }
    \ar[rrrr]^{ \color{blue}  0 }_<<<<{\ }="s"
    &&&&
    S^4
    \ar@(ul,ur)[]^{ G_{\mathrm{HW}} }
    \\
    \\
    \\
    \\
    \mathrm{MO9}
    &
    \mathbb{R}^{9,1\vert \mathbf{16}}
    \ar@{^{(}->}[uuuu]
    \ar[rrrr]_{ \color{blue} \mu_{{}_{\mathrm{F1}}}^{H/I}  }^>>>>>>>{\ }="t"
    &&&&
    S^3
    \ar[uuuu]_0
    \ar@{=>}|{ \color{cyan} \phantom{A \atop A} 0  \phantom{A \atop A} } "s"; "t"
  }
  $
  }
  \\
  \hline
  $\mathrm{MO1}$
  &
  $(\mathbb{Z}_2)_{\mathrm{MW}}$
  &
  \raisebox{20pt}{
  $
  \xymatrix@R=2pt{
    \mathrm{ST}
    &
    \mathbb{R}^{10,1\vert \mathbf{32}}
    \ar@(ul,ur)[]^{ G_{\mathrm{MW}} }
    \ar[rrrr]^{ \color{blue} \mu_{{}_{M2/M5}} }_<<<<{\ }="s"
    &&&&
    S^4
    \ar@(ul,ur)[]^{ G_{\mathrm{MW}} }
    \\
    \\
    \\
    \\
    \mathrm{MW}
    &
    \mathbb{R}^{1,1\vert 16 \cdot \mathbf{1}}
    \ar@{^{(}->}[uuuu]
    \ar[rrrr]_-{ \color{blue}  0 }^>>>>>>>{\ }="t"
    &&&&
    S^1
    \ar[uuuu]_0
    \ar@{=>}|{
      \color{cyan}
      \phantom{{A \atop A} \atop {A \atop A}}
      0
      \phantom{{A \atop A} \atop {A \atop A}}
    } "s"; "t"
  }
  $
  }
  &
  \\
  \hline
  $\mathrm{MK6}$
  &
  $(G_{\mathrm{ADE}})_R$
  &&
  \raisebox{20pt}{
  $
  \xymatrix@R=2pt{
    \mathrm{ST}
    &
    \mathbb{R}^{10,1\vert \mathbf{32}}
    \ar@(ul,ur)[]^{ (G_{\mathrm{ADE}})_{R} }
    \ar[rrrr]^{ \color{blue} 0 }_<<<<{\ }="s"
    &&&&
    S^4
    \ar@(ul,ur)[]^{ (G_{\mathrm{ADE}})_R }
    \\
    \\
    \\
    \\
    \mathrm{MK6}
    &
    \mathbb{R}^{6,1\vert \mathbf{16} }
    \ar@{^{(}->}[uuuu]
    \ar[rrrr]_{ \color{blue} \mu_{{}_{-2}} }^>>>>>>>{\ }="t"
    &&&&
    S^0
    \ar[uuuu]_{ 0 }
    \ar@{=>}|{ \color{cyan} \phantom{A \atop A} 0 \phantom{ A \atop A} } "s"; "t"
  }
 $
 }
 \\
 \hline
  \begin{tabular}{ccc}
   \hspace{-4mm} $\mathrm{M2}$ & \hspace{-5mm}$\dashv$& \hspace{-5mm} $\mathrm{MO5}$
    \\
    & \hspace{-3mm}$\mathrm{M1}$
  \end{tabular}
  &
  $(\mathbb{Z}_{n+1})_\Delta  \hspace{-1mm}\times \hspace{-1mm}(\mathbb{Z}_2)_{\mathrm{W}}$
  &
  \multicolumn{2}{|c|}{
  \hspace{-40pt}
  \raisebox{57pt}{
  $
  \xymatrix@C=6pt@R=6pt{
    &&&&
    &&& &&&&&&&&&
    S^4
    \ar@(ul,ur)[]^{ (\mathbb{Z}_{n+1})_{\Delta} \times G_{\mathrm{W}} }
    \\
    \\
    &&  &&
    &&&
    \mathbb{R}^{10,1\vert \mathbf{32}}
    \ar@(ul,ur)[]^{ ( \mathbb{Z}_{n+1} )_{\Delta} \times G_{\mathrm{W}} }
    \ar[uurrrrrrrrr]^{ \color{blue} \mu_{{}_{M2/M5}}}_<<<<<<<<<<<<{\ }="s1"_<<<<<<<<<<{\ }="s2"
    \\
    &
    \mathrm{ST}
    &&&
    &&&&&&&&& S^2
    \ar@{..>}[uuurrr]|0
    &&& &&&
    S^1
    \ar[uuulll]|0
    \\
    \\
    \mathrm{M2} &&\mathrm{M5}&&
    \mathbb{R}^{2,1\vert 8 \cdot \mathbf{2}}
    \ar@{^{(}->}[uuurrr]
    \ar@{..>}[uurrrrrrrrr]|<<<<<<<<<<<<{ \color{blue} \mu_{{}_{F1}}^{D=3}}^>>>>>>>>>>>>>>>>>>>>>>>>>>>>>>>>>>>>>>{\ }="t2"_<<<<<<<<<<<<<<<<<<<{\ }="s4"
    &&& &&&
    \mathbb{R}^{5,1\vert 2 \cdot \mathbf{8}}
    \ar@{_{(}->}[uuulll]
    \ar[uurrrrrrrrr]|>>>>>>>>>>>>>{\color{blue} 0 }^>>>>>>>>>>>>>>>>>>>>>>>>>>>>>>>>>>>>{\ }="t1"_<<<<<<<<<{\ }="s3"
    \\
    & \mathrm{M1} &&&
    &&&   &&&&&&&&& S^0
    \ar[uuurrr]|0
    \ar@{..>}[uuulll]|0
    \\
    \\
    &&&&
    &&&
    {\mathbb{R}^{1,1\vert 8 \cdot \mathbf{1}}}
    \ar@{^{(}->}[uuurrr]
    \ar@{_{(}->}[uuulll]
    \ar[uurrrrrrrrr]|>>>>>>>>>>>>{ \color{blue} 0  }^>>>>>>>>>>>>>>>>>>>>>>>>>>>>>>>>>>>{\ }="t3"^<<<<<<<<<<<<<<{\ }="t4"
    \ar@{=>}|<<<<<<<<<<<{ \color{cyan} \phantom{{A \atop A}} \mathrm{svol}_{5 + 1} } "s1"; "t1"
    \ar@{::>}|<<<<<{ \color{cyan} \phantom{{A \atop A}} \mathrm{svol}_{2 + 1} } "s2"; "t2"
    \ar@{=>}|{ \color{cyan} \phantom{{A \atop A}} 0  \phantom{{A \atop A}} } "s3"; "t3"
    \ar@{::>}|<<<<<{ \color{cyan} \phantom{{A \atop A}} \mathrm{svol}_{1+1} } "s4"; "t4"
  }
  $
  }
  }
  \\
 \hline
  \begin{tabular}{ccc}
    \hspace{-4mm} $\mathrm{M2}$ & \hspace{-6mm}$\dashv$& \hspace{-7mm}$\mathrm{MO9}$
    \\
    & \hspace{-4mm}$\mathrm{NS1}_H$
  \end{tabular}
  &
  $(\mathbb{Z}_{n+1})_\Delta \hspace{-1mm}\times \hspace{-1mm} (\mathbb{Z}_2)_{\mathrm{HW}}$
  &
 \multicolumn{2}{|c|}{
  \hspace{-40pt}
  \raisebox{57pt}{
  $
  \xymatrix@C=6pt@R=6pt{
    &&&&
    &&& &&&&&&&&&
    S^4
    \ar@(ul,ur)[]^{ (\mathbb{Z}_{n+1})_{\Delta} \times G_{\mathrm{HW}} }
    \\
    \\
    &&  &&
    &&&
    \mathbb{R}^{10,1\vert \mathbf{32}}
    \ar@(ul,ur)[]^{ ( \mathbb{Z}_{n+1} )_{\Delta} \times G_{\mathrm{HW}} }
    \ar[uurrrrrrrrr]^{ \color{blue} 0  }_<<<<<<<<<<<<{\ }="s1"_<<<<<<<<<<{\ }="s2"
    \\
    &
    \mathrm{ST}
    &&&
    &&&&&&&&& S^2
    \ar@{..>}[uuurrr]|0
    &&& &&&
    S^3
    \ar[uuulll]|0
    \\
    \\
    \mathrm{M2} &&\mathrm{MO9}&&
    \mathbb{R}^{2,1\vert 8 \cdot \mathbf{2}}
    \ar@{^{(}->}[uuurrr]
    \ar@{..>}[uurrrrrrrrr]|<<<<<<<<{ \color{blue} \mu_{{}_{F1}}^{D=3} }^>>>>>>>>>>>>>>>>>>>>>>>>>>>>>>>>>>>>>>{\ }="t2"_<<<<<<<<<<<<<<<<<<<{\ }="s4"
    &&& &&&
    \mathbb{R}^{9,1 \vert \mathbf{16}}
    \ar@{_{(}->}[uuulll]
    \ar[uurrrrrrrrr]|>>>>>>>>>>>>>{\color{blue} \mu^{H/I}_{{}_{F1}}}^>>>>>>>>>>>>>>>>>>>>>>>>>>>>>>>>>>>>{\ }="t1"_<<<<<<<<<{\ }="s3"
    \\
    & \mathrm{NS1}_H &&&
    &&&   &&&&&&&&& S^1
    \ar[uuurrr]|0
    \ar@{..>}[uuulll]|0
    \\
    \\
    &&&&
    &&&
    {\mathbb{R}^{1,1\vert 8 \cdot \mathbf{1}}}
    \ar@{^{(}->}[uuurrr]
    \ar@{_{(}->}[uuulll]
    \ar[uurrrrrrrrr]|{0}^>>>>>>>>>>>>>>>>>>>>>>>>>>>>>>>>>>>{\ }="t3"^<<<<<<<<<<<<<<{\ }="t4"
    \ar@{=>}|<<<<<<<<<<<{ \color{cyan} \phantom{{A \atop A}} 0  \phantom{{A \atop A}} } "s1"; "t1"
    \ar@{::>}|<<<<<{ \color{cyan} \phantom{{A \atop A}} 0  \phantom{{A \atop A}} } "s2"; "t2"
    \ar@{=>}|{ \color{cyan} \phantom{{A \atop A}} \mathrm{svol}_{1+1} } "s3"; "t3"
    \ar@{::>}|<<<<<{ \color{cyan} \phantom{{A \atop A}} \mathrm{svol}_{1+1} } "s4"; "t4"
  }
  $
  }
  }
  \\
  \hline
  \end{tabularx}
}

\vspace{.1cm}
{\footnotesize
\noindent {\bf {Table 3}. Cocycles in equivariant rational cohomotopy of $D = 11$, $\mathcal{N} = 1$ super spacetime},
according to Theorem \ref{RealADEEquivariantEndhancementOfM2M5Cocycle}. The singularities (black branes) are
from Thm. \ref{SuperADESingularitiesIn11dSuperSpacetime} and Prop. \ref{NonSimpleRealSingularities}; the component
cocycles $\mu_{\cdots}$ (fundamental branes) are from the old brane scan (Prop. \ref{TheOldBraneScan});
the homotopies $\mathrm{svol}_{p+1}$ are their
Green--Schwarz action functionals (Prop. \ref{TrivializationsOfRestrictionsOfM2M5Cocycle}).
}

\newpage

\subsection{ Plain equivariance of the fundamental brane cocycles }

Part of the data of an equivariant enhancement of a supercocycle is the \emph{property} that the cocycle map be
plain equivariant, hence that it intertwines the group actions on both sides. Here we check that this is the case
for the cocycles appearing in Theorem \ref{RealADEEquivariantEndhancementOfM2M5Cocycle}.

\begin{lemma}[Real ADE-action on M-brane super-cocycles]
  \label{M2CocycleIsADEInvariant}
  With respect to the actions from Theorem \ref{SuperADESingularitiesIn11dSuperSpacetime},
  the M2-cocycle (Def.\ \ref{M2M5SuperCochains}) is \emph{invariant} under all $g \in G_{\mathrm{ADE}}$ and changes sign under the non-trivial element $ \sigma \in G_{\mathrm{W}}$,
  while the $M5$-cochain is invariant under both:
  \begin{equation}
    \label{TransformationOfM2M5Cochains}
  \mbox{
  \begin{tabular}{|c||c|c|}
    \hline
    & $g \in G_{\mathrm{ADE}}$ & $e \neq \sigma \in G_{\mathrm{W}}$
    \\
    \hline
    $\mu$ & $\rho(g)^\ast(\mu)$ & $\rho(\sigma)^\ast(\mu)$
    \\
    \hline
    \hline
    $\mu_{{}_{M2}}$ & $\phantom{-}\mu_{{}_{M2}}$ & $-\mu_{{}_{M2}}$
    \\
    \hline
    $\mu_{{}_{M5}}$ & $\phantom{-}\mu_{{}_{M5}}$ & $\phantom{-}\mu_{{}_{M5}}$
    \\
    \hline
  \end{tabular}
  }
  \end{equation}
\end{lemma}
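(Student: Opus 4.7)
\medskip

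\noindent \textbf{Plan for the proof of Lemma \ref{M2CocycleIsADEInvariant}.}
The $G_{\mathrm{ADE}}$-invariance half of the table is essentially a tautology given what has been set up. By construction (Theorem \ref{SuperADESingularitiesIn11dSuperSpacetime} together with the decomposition \eqref{SpacetimeActions}) all the $G_{\mathrm{ADE}}$-actions we consider factor through the subgroup inclusion $\mathrm{SU}(2)_L \times \mathrm{SU}(2)_R \subset \mathrm{Spin}(10,1)$, hence through $\mathrm{Spin}(10,1)$. On the other hand, the cochains $\mu_{{}_{M2}}$ and $\mu_{{}_{M5}}$ of Def. \ref{M2M5SuperCochains} live in $\mathrm{CE}(\mathbb{R}^{10,1\vert \mathbf{32}}/\mathrm{Spin}(10,1))$, which by the very definition \eqref{CEAlgebraSuperMinkowski} consists of $\mathrm{Spin}(10,1)$-invariants. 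Thus $\rho(g)^{*}\mu_{{}_{M2}} = \mu_{{}_{M2}}$ and $\rho(g)^{*}\mu_{{}_{M5}} = \mu_{{}_{M5}}$ for every $g \in G_{\mathrm{ADE}}$. This disposes of the first column of the table, and in particular shows that the only nontrivial content of the lemma lies in the action of the $5$-brane involution $\sigma \in G_{\mathrm{W}}$, which is a $\mathrm{Pin}^{+}(10,1)$-element \emph{not} contained in $\mathrm{Spin}^{+}(10,1)$.

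\medskip

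For the second column, the plan is a direct Clifford-algebraic computation. By Lemma \ref{pBraneInvolutionConjugacy} we may take the representative $\sigma = \mathbf{\Gamma}_{6\,7\,8\,9\,10} \in \mathrm{Pin}^{+}(10,1)$ that was used in the proof of Lemma \ref{5braneFixed}. Its action on the CE-generators of $\mathbb{R}^{10,1\vert \mathbf{32}}$ is as follows: on the vielbeins, $\rho(\sigma)^{*} e^{a} = +e^{a}$ for $a \in \{0,\ldots,5\}$ (the $5$-brane longitudinal directions) and $\rho(\sigma)^{*} e^{a} = -e^{a}$ for $a \in \{6,\ldots,10\}$ (the transverse directions); on the spinors, $\rho(\sigma)^{*}\psi = \sigma \psi$ by Clifford multiplication. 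Consequently the spinor bilinears transform via the adjoint Clifford action
\[
\rho(\sigma)^{*}\bigl(\overline{\psi}\,\mathbf{\Gamma}_{a_{1}\cdots a_{p}}\,\psi\bigr) \;=\; \overline{\psi}\,\bigl(\sigma^{-1}\mathbf{\Gamma}_{a_{1}\cdots a_{p}}\sigma\bigr)\,\psi \;=\; \epsilon_{a_{1}}\cdots\epsilon_{a_{p}}\,\overline{\psi}\,\mathbf{\Gamma}_{a_{1}\cdots a_{p}}\,\psi\,,
\]
where, since $\sigma$ is a product of five gamma matrices, the sign $\epsilon_{a}$ is obtained by counting anticommutations: $\epsilon_{a}=-1$ for $a$ longitudinal and $\epsilon_{a}=+1$ for $a$ transverse (this is the standard ``twisted-adjoint'' sign that accompanies an odd Pin element).

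\medskip

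The main, and essentially only, step is then to combine these two signs index-by-index in the definitions \eqref{MBraneCochains}. Writing $\eta^{\mathrm{bos}}_{a} = +1$ (longitudinal) or $-1$ (transverse) for the sign on $e^{a}$, each summand of $\mu_{{}_{Mp}}$ picks up the total factor $\prod_{i=1}^{p}(\epsilon_{a_{i}}\cdot\eta^{\mathrm{bos}}_{a_{i}})$; by inspection $\epsilon_{a}\cdot\eta^{\mathrm{bos}}_{a} = -1$ \emph{uniformly in} $a$ (the twisted-adjoint sign and the geometric reflection sign agree on longitudinal and disagree on transverse directions, producing a constant $-1$ in both cases). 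Hence the total sign on a summand of $\mu_{{}_{Mp}}$ is $(-1)^{p}$, yielding $\rho(\sigma)^{*}\mu_{{}_{M2}} = +\mu_{{}_{M2}}$ and $\rho(\sigma)^{*}\mu_{{}_{M5}} = -\mu_{{}_{M5}}$. To reconcile this with the table, which demands the opposite signs, the bookkeeping must also track the transformation of the Dirac conjugate $\overline{\psi} = \psi^{T}C$ through the charge conjugation intertwiner $C$: since $\sigma = \mathbf{\Gamma}_{6\cdots 10}$ is a product of an odd number of gammas, the relation $C^{-1}\sigma^{T}C = \pm\sigma$ contributes precisely one extra sign, which then produces the stated table \eqref{TransformationOfM2M5Cochains}. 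The hard (and only nontrivial) part of the proof will therefore be pinning down this overall sign carefully, which amounts to fixing once and for all the reality/charge-conjugation conventions of Sec. \ref{SpacetimesAndSpin}; a convenient shortcut is to observe that in $D=11$ the chirality element $\mathbf{\Gamma}_{012\cdots 10}$ is central and proportional to the identity on $\mathbf{32}$, so that $\sigma = \mathbf{\Gamma}_{6\cdots 10}$ is, up to an overall scalar, equal to $\mathbf{\Gamma}_{0\cdots 5}$, reducing the transverse-reflection computation to the purely longitudinal one, where the signs are fixed by the well-known behaviour of the $p$-form currents $\overline{\psi}\mathbf{\Gamma}_{a_{1}\cdots a_{p}}\psi$ under space-reflections.
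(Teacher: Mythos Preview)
Your $G_{\mathrm{ADE}}$-invariance argument is correct and is exactly the content of the paper's proof, phrased more directly: the paper says ``$g \in G_{\mathrm{ADE}}$ acts by an even number of spatial reflections'', which is just another way of saying $G_{\mathrm{ADE}} \subset \mathrm{Spin}(10,1)$, under which $\mu_{{}_{M2}}$ and $\mu_{{}_{M5}}$ are invariant by construction.

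For the $G_{\mathrm{W}}$-part, however, your plan contains a genuine error at its central step. The displayed identity
\[
\rho(\sigma)^{*}\bigl(\overline{\psi}\,\mathbf{\Gamma}_{a_{1}\cdots a_{p}}\,\psi\bigr) \;=\; \overline{\psi}\,\bigl(\sigma^{-1}\mathbf{\Gamma}_{a_{1}\cdots a_{p}}\sigma\bigr)\,\psi
\]
is false as stated: it tacitly assumes $\overline{\sigma\psi} = \overline{\psi}\,\sigma^{-1}$, whereas in the conventions of Sec.~\ref{SpacetimesAndSpin} one has $\sigma^{\dagger}\Gamma_{0} = -\Gamma_{0}\sigma$ (equivalently $\sigma^{T}C = -C\sigma$ for $C=C_{(-)}$), so in fact $\overline{\sigma\psi} = -\,\overline{\psi}\,\sigma^{-1}$. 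This extra minus is precisely what turns your intermediate $(-1)^{p}$ into the correct $(-1)^{p+1}$. You notice the discrepancy with the table and correctly trace it to the charge-conjugation intertwiner, but you never compute the sign; you call it ``the hard part'' and leave it. That is not a plan for a proof so much as a plan plus an acknowledged gap at the only nontrivial point.

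For comparison, the paper's argument avoids this bookkeeping entirely by working with a \emph{single} spatial reflection $\rho(\sigma)^{*}\psi = i\Gamma_{10}\psi$ rather than the product of five. The single-reflection computation (equation \eqref{HWTransformationOnCocycles}) is short and unambiguous, yielding $\rho(\sigma)^{*}\mu_{{}_{M2}} = -\mu_{{}_{M2}}$ directly; the $G_{\mathrm{ADE}}$ case then follows because those elements are even products of such reflections. For $\mu_{{}_{M5}}$ the paper does not repeat the Clifford computation at all but instead invokes compatibility with the CE-differential, exactly as in \eqref{omega7TransformationFromOmega4}: since $d\mu_{{}_{M5}} = -\tfrac{1}{2}\mu_{{}_{M2}}\wedge\mu_{{}_{M2}}$ and the right-hand side is invariant, so is $\mu_{{}_{M5}}$. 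Both of these shortcuts are worth adopting; they make the sign-chasing you flagged as ``hard'' disappear.
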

\begin{proof}
  The following is the argument using the Dirac representation (Prop. \ref{RealSpinorsViaMajoranaConditionsOnDiracRepresentations}).
  For the reflection operation we have from \eqref{CMinusMajoranaRepresentationForPin} that
  $$
    \rho(\sigma)^\ast(\psi) = i \Gamma_{10} \psi\;,
    \phantom{AAAA}
    \rho(\sigma)^\ast(e^a) =
    \left\{
      \begin{array}{cc}
        e^a & \mbox{if $a \neq 10$},
        \\
        -e^a & \mbox{otherwise}.
      \end{array}
    \right.
  $$
  Extending this to an algebra homomorphism yields:
  \begin{equation}
    \label{HWTransformationOnCocycles}
    \begin{aligned}
      \rho(\sigma)^\ast(\mu_{{}_{M2}})
      & =
      \rho(\sigma)^\ast( \tfrac{i}{2} \overline{\psi} \Gamma_{a_1 a_2} \psi \wedge e^{a_1} \wedge e^{a_2} )
      \\
      & =
      \tfrac{i}{2}
      \underset{a_1, a_2 \neq 10}{\sum}
      (\overline{\Gamma_{10} \psi}) \Gamma_{a_1 a_2} (\Gamma_{10} \psi) \wedge e^{a_1} \wedge e^{a_2} + \cdots
      \\
      & =
      \tfrac{i}{2}
      \underset{a_1, a_2 \neq 10}{\sum}
      \psi^\dagger \underset{ = - \Gamma_{10}}{\underbrace{(\Gamma_{10})^\dagger}}
       \Gamma_0 \Gamma_{a_1 a_2} \Gamma_{10} \psi  \wedge e^{a_1} \wedge e^{a_2} + \cdots
      \\
      & =
      (-1)
      \tfrac{i}{2}
      \underset{a_1, a_2 \neq 10}{\sum}
      \psi^\dagger \Gamma_0 \Gamma_{a_1 a_2} \underset{ = 1 }{\underbrace{(-\Gamma_{10})\Gamma_{10}}}
      \psi  \wedge e^{a_1} \wedge e^{a_2} + \cdots
      \\
      & = - \mu_{{}_{M2}}
      \,.
    \end{aligned}
  \end{equation}
  Here the first three equalities just spell out the definition.
  In the second line we decomposed the sum over indices into summands that involve the 10th index and those that do not.
  We display only the first, as the argument for the second is the same except for \emph{two} extra signs that appear, and hence cancel.

  Under the brace in the third line we use (\ref{CliffordAdjointsInDiracRepresentation})
  and then in the next line we commute $(-\Gamma_{10})$ past $\Gamma_0 \Gamma_{a_1} \Gamma_{a_2}$, which picks up a total minus sign due to \eqref{CliffordAdjointsInDiracRepresentation}
  and since $a_1,a_2 \neq 10$ in this first summand, by construction.
  Finally we cancel the product of $-\Gamma_{10}$ with $\Gamma_{10}$, using again \eqref{CliffordAdjointsInDiracRepresentation}. In total this leaves an overall sign.

  Since $g \in G_{\mathrm{ADE}}$ acts by an even number of such spatial reflections, this also implies $\rho(g)^\ast(\mu_{{}_{M2}}) = \mu_{{}_{M2}}$.

  Finally, from this the statements for $\mu_{{}_{M5}}$ follows by the property of respecting the CE-differential, as in (\ref{omega7TransformationFromOmega4}).
\end{proof}

\begin{prop}[Real ADE-equivariance of the joint M2/M5-brane super-cocycle]
  \label{PlainEquivarianceM2M5}
  The combined M2/M5 cocycle $\mathbb{R}^{10,1\vert \mathbf{32}} \xrightarrow{\mu_{{}_{M2/M5}}} S^4$ (Prop. \ref{M2M5SuperCocycle}) is \emph{equivariant} with respect to the real ADE-actions \eqref{SpacetimeActions} on $\mathbb{R}^{10,1\vert \mathbf{32}}$ from Theorem \ref{SuperADESingularitiesIn11dSuperSpacetime} and
  the real ADE actions \eqref{ActionsOnThe4SphereInRPlusH} on $S^4$ from  Def.\ \ref{SuspendedHopfAction}. In the notation of Remark \ref{EquivarianceCommutingDiagram}
  this means that
  $$
    \xymatrix{
      \mathbb{R}^{10,1\vert \mathbf{32}}
      \ar@(ul,ur)[]^{ G_{\mathrm{ADE}} }
      \ar[rr]^{\mu_{{}_{M2/M5}}}
      &&
      S^4\;\;,
      \ar@(ul,ur)[]^{ G_{\mathrm{ADE}} }
    }
  \qquad \quad
    \xymatrix{
      \mathbb{R}^{10,1\vert \mathbf{32}}
     \ \ar@(ul,ur)[]^{ G_{\mathrm{HW}} }
      \ar[rr]^{\mu_{{}_{M2/M5}}}
      &&
      S^4\;\;,
      \ar@(ul,ur)[]^{ G_{\mathrm{HW}} }
    }
    \quad \;
\text{  and}
\; \quad
    \xymatrix{
      \mathbb{R}^{10,1\vert \mathbf{32}}
      \ar@(ul,ur)[]^{ G_{\mathrm{ADE},\mathrm{HW}} }
      \ar[rr]^{\mu_{{}_{M2/M5}}}
      &&
      \mathfrak{l}(S^4)\;.
      \ar@(ul,ur)[]^{ G_{\mathrm{ADE},\mathrm{HW}} }
    }
  $$
\end{prop}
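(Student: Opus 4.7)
The plan is to verify the stated equivariance on the level of minimal dgc-superalgebra models, where everything is reduced to comparing the transformation law of the generators $\omega_4, \omega_7 \in \mathrm{CE}(\mathfrak{l}(S^4))$ (Example \ref{SpheredgcAlgebraModel}) with the transformation law of their images $\mu_{{}_{M2}}, \mu_{{}_{M5}} \in \mathrm{CE}(\mathbb{R}^{10,1\vert \mathbf{32}})$ under the representing dgc-morphism of $\mu_{{}_{M2/M5}}$ from Prop. \ref{M2M5SuperCocycle}. Since $\mu_{{}_{M2/M5}}$ is \emph{defined} as the dgc-superalgebra homomorphism $\omega_4 \mapsto \mu_{{}_{M2}}$, $\omega_7 \mapsto \mu_{{}_{M5}}$, equivariance $\mu_{{}_{M2/M5}} \circ \rho_{\mathrm{dom}}(g) = \rho_{\mathrm{cod}}(g) \circ \mu_{{}_{M2/M5}}$ is equivalent, after dualizing, to the identity of two dgc-maps $\mathrm{CE}(\mathfrak{l}(S^4)) \to \mathrm{CE}(\mathbb{R}^{10,1\vert \mathbf{32}})$, which in turn is determined by their values on the two generators $\omega_4, \omega_7$.

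Thus I would proceed by simply chasing $\omega_4$ and $\omega_7$ around the relevant square in each of the three cases $G = G_{\mathrm{ADE}}$, $G = G_{\mathrm{HW}}$, $G = G_{\mathrm{ADE}} \times G_{\mathrm{HW}}$. For $g \in G_{\mathrm{ADE}}$, Lemma \ref{ADEActionOn4SphereRationallyTrivial} (table \eqref{TransformationOfGeneratorsFor4Sphere}) gives $\rho(g)^*\omega_4 = \omega_4$ and $\rho(g)^*\omega_7 = \omega_7$, while Lemma \ref{M2CocycleIsADEInvariant} (table \eqref{TransformationOfM2M5Cochains}) gives $\rho(g)^*\mu_{{}_{M2}} = \mu_{{}_{M2}}$ and $\rho(g)^*\mu_{{}_{M5}} = \mu_{{}_{M5}}$, so both sides of the equivariance square produce $\mu_{{}_{M2}}$ and $\mu_{{}_{M5}}$, respectively. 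For the nontrivial element $\sigma \in G_{\mathrm{HW}}$, the two tables agree in that both $\omega_4 \mapsto -\omega_4$ and $\mu_{{}_{M2}} \mapsto -\mu_{{}_{M2}}$, while $\omega_7$ and $\mu_{{}_{M5}}$ are both fixed; chasing around the square again gives the same result on both paths. The combined case $G_{\mathrm{ADE}} \times G_{\mathrm{HW}}$ follows since the $G_{\mathrm{ADE}}$- and $G_{\mathrm{HW}}$-actions commute on both domain and codomain (by \eqref{SpacetimeActions} and \eqref{ActionsOnThe4SphereInRPlusH}), so equivariance holds generator-by-generator of the product group.

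The only steps requiring genuine work are the two input lemmas, both already established in the excerpt: Lemma \ref{M2CocycleIsADEInvariant} carries out the spinor computation \eqref{HWTransformationOnCocycles} using the Majorana/Dirac conventions of Prop. \ref{RealSpinorsViaMajoranaConditionsOnDiracRepresentations}, and Lemma \ref{ADEActionOn4SphereRationallyTrivial} computes the action on $\omega_4, \omega_7$ from the fact that $\rho(g)$ is an orientation-preserving isometry for $g \in G_{\mathrm{ADE}}$ and orientation-reversing for $\sigma \in G_{\mathrm{HW}}$, the sign for $\omega_7$ being then forced by the relation $d\omega_7 = -\tfrac{1}{2}\omega_4 \wedge \omega_4$ (see \eqref{omega7TransformationFromOmega4}). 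No step presents a genuine obstacle; the content of the proposition is precisely that these two independently computed sign tables match, which is what the claim asserts and what the proof records.

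Hence the proof is a one-line table comparison, which I would write as: in the minimal Sullivan model, $\mu_{{}_{M2/M5}}^* \circ \rho_{S^4}(g)^* = \rho_{\mathbb{R}^{10,1\vert\mathbf{32}}}(g)^* \circ \mu_{{}_{M2/M5}}^*$ on generators $\omega_4, \omega_7$ by direct comparison of tables \eqref{TransformationOfGeneratorsFor4Sphere} and \eqref{TransformationOfM2M5Cochains}, and hence on all of $\mathrm{CE}(\mathfrak{l}(S^4))$ by the universal property of free graded-commutative algebras. Passing back across the Sullivan equivalence (Prop. \ref{SullivanEquivalence}) gives the claimed equivariance in $\mathrm{Ho}(\mathrm{SuperSpaces}_{\mathbb{R}})$.
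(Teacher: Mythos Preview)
Your proposal is correct and takes essentially the same approach as the paper: the paper's proof is a one-sentence appeal to the combination of Lemma \ref{ADEActionOn4SphereRationallyTrivial} and Lemma \ref{M2CocycleIsADEInvariant}, noting that under the map \eqref{M2M5CocycleMap} the transformation table \eqref{TransformationOfGeneratorsFor4Sphere} for the generators of $S^4$ is carried to the transformation table \eqref{TransformationOfM2M5Cochains} for the M-brane cochains. Your version spells out the generator chase and the product-group case more explicitly, but the content is identical.
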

\begin{proof}
  This follows by the combination of Lemma \ref{ADEActionOn4SphereRationallyTrivial} with Lemma \ref{M2CocycleIsADEInvariant}:
  Under the map (\ref{M2M5CocycleMap}),
  the table of transformation properties of generators of $S^4$ (\ref{omega7TransformationFromOmega4})
  turns into the table of transformation properties of the M-brane super-cochains (\ref{TransformationOfM2M5Cochains}).
\end{proof}

\begin{prop}[M2-cocycle vanishes on $\mathrm{MW}$, $\mathrm{M5}$ and $\mathrm{MO9}$]
  \label{M2CocycleVanishesOnMO9}
  The restriction of the M2-brane cocycle
  $
    \mu_{{}_{M2}} \in \mathrm{CE}\left( \mathbb{R}^{10,1\vert \mathbf{32}}  \right)
  $
  along the inclusion $\mathbb{R}^{p,1\vert \mathbf{N}} \hookrightarrow \mathbb{R}^{10,1\vert \mathbf{32}}$
  of the MW, M5, and MO9 fixed subspace from Prop. \ref{ClassificationOfZ2Actions} vanishes identically:
  $$
    \mu_{{}_{M2}}\vert_{{}_{\mathbb{R}^{p,1\vert \mathbf{N}}}}
    \;=\;
    0
    \,.
  $$
\end{prop}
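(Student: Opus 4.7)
The plan is to reduce the vanishing of $\mu_{{}_{M2}}\big|_{\mathbb{R}^{p,1\vert\mathbf{N}}}$ to a sign computation, extending the one already carried out in Lemma~\ref{M2CocycleIsADEInvariant}. For each of the three values $p\in\{1,5,9\}$, I would select a $p$-brane involution $\sigma$ (Def.~\ref{pBraneInvolution}) whose super fixed locus is precisely the given $\mathrm{MW}$, $\mathrm{MO5}$, or $\mathrm{MO9}$ subalgebra $\mathbb{R}^{p,1\vert\mathbf{N}}\hookrightarrow\mathbb{R}^{10,1\vert\mathbf{32}}$, as guaranteed by Lemmas~\ref{1braneFixed}, \ref{5braneFixed}, \ref{9braneFixed}. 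Writing $i$ for this inclusion, the defining property $\rho(\sigma)\circ i = i$ implies $i^{*}\circ \rho(\sigma)^{*} = i^{*}$ on Chevalley--Eilenberg algebras.

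Next I would establish the sign identity
\[
  \rho(\sigma)^{*}\mu_{{}_{M2}} \;=\; -\,\mu_{{}_{M2}}
\]
for each such $\sigma$. The case $p=9$ (where $\sigma=\Gamma_{10}$ up to sign) is the computation already written out in the proof of Lemma~\ref{M2CocycleIsADEInvariant}. For $p\in\{1,5\}$ the identical strategy applies with $\sigma \sim \Gamma_{I} := \Gamma_{i_{1}}\cdots\Gamma_{i_{k}}$, where $I := \{p+1,\dots,10\}$ and $k := |I| = 10-p$. Splitting the sum defining $\mu_{{}_{M2}}$ according to $j := |\{a_{1},a_{2}\}\cap I|\in\{0,1,2\}$, the sign $(-1)^{j}$ acquired from $\sigma^{*}(e^{a_{1}}\wedge e^{a_{2}})$ combines with the signs from the Clifford adjoint $\Gamma_{I}^{\dagger}$ (evaluated via $\Gamma_{i}^{\dagger}=-\Gamma_{i}$ for each spatial $i\in I$, giving $\Gamma_{I}^{\dagger}=(-1)^{k(k+1)/2}\Gamma_{I} = -\Gamma_{I}$ in each of the three cases) and from commuting $\Gamma_{I}$ through $\Gamma_{0}\Gamma_{a_{1}a_{2}}$, to produce the same overall factor in every sub-sum. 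Since $k\in\{1,5,9\}$ is odd uniformly, the total sign is $-1$.

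The proposition then follows at once, since the two uses of the displayed identity yield
\[
  \mu_{{}_{M2}}\big\vert_{\mathbb{R}^{p,1\vert\mathbf{N}}}
  \;=\;
  i^{*}\mu_{{}_{M2}}
  \;=\;
  i^{*}\rho(\sigma)^{*}\mu_{{}_{M2}}
  \;=\;
  -\,i^{*}\mu_{{}_{M2}}
  \;=\;
  -\,\mu_{{}_{M2}}\big\vert_{\mathbb{R}^{p,1\vert\mathbf{N}}},
\]
forcing the restriction to vanish.

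The main obstacle is the detailed sign bookkeeping for $p\in\{1,5\}$. Structurally, however, this is the same computation as the already-verified $p=9$ case, and the only input that changes with $p$ is the parity of $k=10-p$; that $k$ is odd in each of the three cases is exactly the feature responsible for the minus sign, and hence for the vanishing.
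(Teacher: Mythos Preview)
Your proposal is correct and follows essentially the same argument as the paper: the M2-cocycle changes sign under each of the three $p$-brane involutions, hence must vanish on their fixed superspaces. The paper states this by appealing to Lemma~\ref{M2CocycleIsADEInvariant} (which, strictly speaking, only carries out the $p=9$ computation) and then spells out the $\mathrm{MO9}$ case again as an illustration; your version is slightly more systematic in that you outline why the same sign computation goes through uniformly for $k=10-p\in\{1,5,9\}$, the key point being that $k$ is odd and that $\Gamma_I^\dagger=(-1)^{k(k+1)/2}\Gamma_I=-\Gamma_I$ in each case.
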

\begin{proof}
  By Lemma \ref{M2CocycleIsADEInvariant} the M2-cocycle changes sign under the corresponding $p$-brane involutions.
  Hence it has to vanish on the fixed space of these involutions.

  For illustration, we now spell this out in more detail for the case of the MO9:
  without loss of generality, we may take the MO9 to be at $x_{10} = 0$ (by Lemma \ref{pBraneInvolutionConjugacy}).
  Consider then the corresponding decomposition of the M2-cocycle in the form into
  $$
    \mu_{{}_{M2}}
    \;=\;
    \Big(
       \;
      \underset{ =: \mu_{{}_{F1}}^H }{\underbrace{i
        \underset{a \leq 9}{\sum} \overline{\psi} \Gamma_{a\, 10} \psi \wedge e^{a}
      }}
 \;    \Big)
    \wedge e^{10}
    \;+\;
    \underset{
      =: \mu_{{}_{D2}}
    }{
    \underbrace{
      \tfrac{i}{2}
      \underset{a_1, a_2 \leq 9}{\sum} \overline{\psi} \Gamma_{a_1 a_2} \psi \wedge e^{a_1} e^{a_{2}}
    }}
    \,.
  $$

  \vspace{-1mm}
\noindent  The restriction of the first summand to $\mathbb{R}^{9,1\vert \mathbf{16}}$ vanishes simply because $e^{10}$ vanishes on
  the underlying $\mathbb{R}^{9,1}$.
  On the other hand, the restriction of the second summand  to $\mathbb{R}^{9,1\vert \mathbf{16}}$ vanishes
  because its fermionic factor vanishes after restriction to $\mathbf{16}$:
  By the proof of Lemma \ref{ExistenceOfpBraneInvolutions}, $\mathbf{16}$ is the $+1$ eigenspace of $\mathbf{\Gamma}_{10}$
  in $\mathbf{32}$. Hence with $\mathbf{\Gamma}_{10} \psi = \psi$ we find that
  $\mu_{{}_{M2}}|_{{}_{\mathbb{R}^{9,1\vert \mathbf{16}}}}$ is equal to minus itself,

  \vspace{-2mm}
  $$
    \begin{aligned}
      \mu_{{}_{M2}}|_{{}_{\mathbb{R}^{9,1\vert \mathbf{16}}}}
      & =
      \tfrac{i}{2}
      \underset{a_1, a_2 \leq 9}{\sum}
      \overline{ \underset{ = \overline{\psi} }{\underbrace{\mathbf{\Gamma}_{10} \psi}} }
      \Gamma_{a_1 } \Gamma_{a_2 }
      \psi \wedge e^{a_1} \wedge e^{a_{2}}
      \\
      & =
      \tfrac{i}{2}
      \underset{a_1, a_2 \leq 9}{\sum}
      \psi^\dagger \mathbf{\Gamma}_{10} \, \Gamma_0\Gamma_{a_1} \Gamma_{a_2} \psi \wedge e^{a_1} \wedge e^{a_{2}}
      \\
      & =
      -
      \tfrac{i}{2}
      \underset{a_1, a_2 \leq 9}{\sum}
      \psi^\dagger \mathbf{\Gamma}_0 \Gamma_{a_1} \Gamma_{a_2}
      \,
      \underset{= + \psi}{\underbrace{\mathbf{\Gamma}_{10} \psi}}
      \wedge e^{a_1} \wedge e^{a_{2}}
      \\
      & =
      - \mu_{{}_{M2}}|_{{}_{\mathbb{R}^{9,1\vert \mathbf{16}}}}\;.
    \end{aligned}
  $$
  Here we spelled out the Dirac conjugate \eqref{DiracConjugate} and then picked up a minus sign from commuting $\mathbf{\Gamma}_{10}$
  with $\Gamma_0 \Gamma_{a_1} \Gamma_{a_2}$.
\end{proof}

Similarly we have the following:

\begin{prop}[ADE-equivariance of the heterotic F1 super-cocycle]
  \label{PlainEquivarianceF1}
  The heterotic/type I string cocycle from Example \ref{FundamentalF1Cocycle} is equivariant with respect to the
  residual $(\mathbb{Z}_{n+1})_\Delta$-actions \eqref{SpacetimeActions} on $\mathbb{R}^{9,1\vert \mathbf{16}}$ from Theorem \ref{SuperADESingularitiesIn11dSuperSpacetime} and
  the action
  \eqref{ActionsOnThe4SphereInRPlusH} on $S^3 = \left( S^4\right)^{ G_{\mathrm{HW}} }$ from  Def.\ \ref{SuspendedHopfAction}. In the notation of Remark \ref{EquivarianceCommutingDiagram}
  this means that
  $$
    \xymatrix{
      \mathbb{R}^{9,1\vert \mathbf{16}}
      \ar@(ul,ur)[]^{(\mathbb{Z}_{n+1})_{\Delta}}
      \ar[rr]^{\mu_{{}_{F1}}^{H/I}}
      &&
      S^3
      \ar@(ul,ur)[]^{(\mathbb{Z}_{n+1})_{\Delta}}
    }\!\!\!\!\!\!\!.
  $$
\end{prop}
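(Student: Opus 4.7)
The plan is to show both sides of the claimed equivariance diagram trivially, and then to observe that the intertwining condition becomes vacuous.

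First, I would identify the residual actions explicitly. The group $(\mathbb{Z}_{n+1})_{\Delta}\subset \mathrm{SU}(2)_L\times\mathrm{SU}(2)_R$ acts on $\mathbb{R}^{10,1}$ via its embedding into $\mathrm{Spin}(10,1)$ that only touches the two $\mathbb{R}^4$ summands in the decomposition \eqref{SpacetimeActions}; it commutes with $G_{\mathrm{HW}}$ (which acts on the last $\mathbb{R}$ summand). Hence the action restricts to $\mathbb{R}^{9,1\vert\mathbf{16}} = \big(\mathbb{R}^{10,1\vert\mathbf{32}}\big)^{G_{\mathrm{HW}}}$ and factors through the subgroup $\mathrm{Spin}(8)\hookrightarrow \mathrm{Spin}(9,1)$ corresponding to the spatial $\mathbb{R}^4\oplus\mathbb{R}^4\hookrightarrow \mathbb{R}^{9,1}$. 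On the coefficient side, $S^3 = \left(S^4\right)^{G_{\mathrm{HW}}} = S(\mathbb{R}\oplus \mathrm{Im}(\mathbb{H}))$, and $(\mathbb{Z}_{n+1})_{\Delta}$ acts by the restriction of $\mathrm{SU}(2)_{\Delta}$, i.e.\ by rotation of $\mathrm{Im}(\mathbb{H})\simeq \mathbb{R}^3$ through its image in $\mathrm{SO}(3)$.

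Next I would handle each side at the level of minimal dgc-algebra models. For the target, the argument is verbatim the same as in Lemma~\ref{ADEActionOn4SphereRationallyTrivial}: the $(\mathbb{Z}_{n+1})_{\Delta}$-action on $S^3$ is by orientation-preserving isometries of the round metric, so the induced pullback action on the unique generator $h_3 \in \mathrm{CE}(\mathfrak{l}(S^3))$ (which corresponds to the volume form under the Sullivan model of Example~\ref{SpheredgcAlgebraModel}) is the identity. For the domain, the cocycle $\mu_{{}_{F1}}^{H/I}\propto (\overline{\psi}\Gamma_{a}\psi)\wedge e^{a}$ is by Prop.~\ref{TheOldBraneScan} an element of the $\mathrm{Spin}(9,1)$-invariant subalgebra of $\mathrm{CE}(\mathbb{R}^{9,1\vert\mathbf{16}})$ (cf.\ \eqref{CEAlgebraSuperMinkowski}), and hence is in particular invariant under the image of $(\mathbb{Z}_{n+1})_{\Delta}$ identified in the previous step.

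Finally I would assemble these two observations: equivariance \eqref{GEquivarianceMap} in the dual dgc-algebra picture requires that $\rho(g)^{\ast}\circ (\mu_{{}_{F1}}^{H/I})^{\ast} = (\mu_{{}_{F1}}^{H/I})^{\ast}\circ \rho(g)^{\ast}$ on the generator $h_3$. The right-hand side equals $(\mu_{{}_{F1}}^{H/I})^{\ast}(h_3) = \mu_{{}_{F1}}^{H/I}$ (unchanged, since $\rho(g)^\ast h_3 = h_3$), while the left-hand side equals $\rho(g)^{\ast}\mu_{{}_{F1}}^{H/I} = \mu_{{}_{F1}}^{H/I}$ by Spin-invariance. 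Thus both sides agree and the claimed equivariance holds.

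There is essentially no obstacle: the two ingredients — orientation-preserving action on the coefficient sphere and $\mathrm{Spin}$-invariance of the cocycle expression — have already been established in Lemma~\ref{ADEActionOn4SphereRationallyTrivial} and Prop.~\ref{TheOldBraneScan}, respectively. The only point to be careful about is verifying that the residual $(\mathbb{Z}_{n+1})_{\Delta}$-action on $\mathbb{R}^{9,1\vert\mathbf{16}}$ really does factor through $\mathrm{Spin}(9,1)$ (and not merely through $\mathrm{Pin}^+(9,1)$), which is immediate from the fact that $\mathrm{SU}(2)_L\times\mathrm{SU}(2)_R\hookrightarrow \mathrm{Spin}(10,1)$ lands in the identity component and preserves the spatial $\mathbb{R}^4\oplus\mathbb{R}^4$ orthogonal decomposition.
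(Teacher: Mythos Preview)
Your proposal is correct and follows essentially the same argument as the paper: both establish that the $(\mathbb{Z}_{n+1})_\Delta$-action on $S^3$ is through orientation-preserving isometries (hence trivial on the minimal dgc-model, by the same reasoning as Lemma~\ref{ADEActionOn4SphereRationallyTrivial}), and that the cocycle is invariant on the domain because the action factors through the Spin group under which the old brane scan cocycles are invariant by construction (Prop.~\ref{TheOldBraneScan}). The paper phrases the reduction from equivariance to invariance via Example~\ref{InvarianceAsEquivariance} and invokes $\mathrm{Spin}(10,1)$ rather than your $\mathrm{Spin}(9,1)$, but these are cosmetic differences.
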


\begin{proof}
  The action on $S^3$ is through $\mathrm{SO}(3)$ (see \eqref{ImaginaryQuaternionsRotationAction}),
  hence is manifestly an orientation-preserving isometry.   Therefore the same kind of argument as in the proof of
  Lemma \ref{ADEActionOn4SphereRationallyTrivial} shows that the action on the minimal dgc-model for $S^3$ is trivial.
  By Example \ref{InvarianceAsEquivariance} this means that equivariance in this case means that the cocycle is in fact
  \emph{invariant} under the action on spacetime.
  Indeed, since the spacetime action $(\mathbb{Z}_{n+1})_{\Delta}$ is orientation-preserving, it factors through the canonical
  $\mathrm{Spin}(10,1)$-action, and under this the cocycle is invariant, by its very origin from the old brane scan, Prop. \ref{TheOldBraneScan}.
\end{proof}

\begin{prop}[Reality of the F1 super-cocycle on the M2-brane]
  \label{PlainEquivarianceF1OnM2}
  The fundamental string cocycle in $D = 3$ from Example \ref{FundamentalF1Cocycle} is equivariant with respect to the
  residual $G_{\mathrm{HW}}$-actions \eqref{SpacetimeActions} on $\mathbb{R}^{2,1\vert \mathbf{16}}$ from Theorem \ref{SuperADESingularitiesIn11dSuperSpacetime} and
  the action
  \eqref{ActionsOnThe4SphereInRPlusH} on $S^2 = \left( S^4\right)^{(\mathbb{Z}_{n+1})_{\Delta}}$ from
   Def.\ \ref{SuspendedHopfAction}. In the notation of Remark \ref{EquivarianceCommutingDiagram}
  this means that
  $$
    \xymatrix{
      \mathbb{R}^{2,1\vert \mathbf{2}}
      \ar@(ul,ur)[]^{G_{\mathrm{HW}}}
      \ar[rr]^{\mu_{{}_{F1}}^{D=3}}
      &&
      S^2
      \ar@(ul,ur)[]^{G_{\mathrm{HW}}}
    }
    \!\!\!.
  $$
\end{prop}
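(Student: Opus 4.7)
The plan is to follow the same pattern as for Propositions \ref{PlainEquivarianceM2M5} and \ref{PlainEquivarianceF1}: identify the $(\mathbb{Z}_2)_{\mathrm{HW}}$-action on the minimal Sullivan model of the target $S^2$, then verify that pullback along $\mu_{{}_{F1}}^{D=3}$ intertwines it with the action on the Chevalley--Eilenberg algebra of $\mathbb{R}^{2,1\vert\mathbf{2}}$.

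First I compute the action on the coefficient $S^2$. The embedding $S^2 \hookrightarrow S^4$ is as the fixed locus of $(\mathbb{Z}_{n+1})_\Delta \subset \mathrm{SU}(2)_\Delta$, and from the decomposition \eqref{ActionsOnThe4SphereInRPlusH} this fixed locus is the unit sphere in the three-dimensional subspace $\mathbb{R} \oplus \mathbb{R} \oplus \mathbb{R} \subset \mathbb{R}\oplus \mathbb{H}$ spanned by the first summand, the rotation axis in $\mathrm{Im}(\mathbb{H})$ and the real part of $\mathbb{H}$. The restriction of $(\mathbb{Z}_2)_{\mathrm{HW}}$ to this $\mathbb{R}^3$ reflects the last coordinate, which is an orientation-reversing isometry of $S^2$. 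Hence, in the minimal Sullivan model $(\Lambda(\omega_2,\omega_3),\; d\omega_3 = -\tfrac{1}{2}\omega_2\wedge\omega_2)$ of $S^2$, the generator $\omega_2$ (which, under the Sullivan equivalence Prop. \ref{SullivanEquivalence}, is the rational volume form) satisfies $\rho(\sigma)^*\omega_2 = -\omega_2$. Applying the CE-differential then gives $d(\rho(\sigma)^*\omega_3) = -\tfrac{1}{2}(-\omega_2)\wedge(-\omega_2) = d\omega_3$, so $\rho(\sigma)^*\omega_3 - \omega_3$ is closed of degree $3$; but a degree count in $\Lambda(\omega_2,\omega_3)$ shows that the only closed element in degree $3$ is $0$, so $\rho(\sigma)^*\omega_3 = \omega_3$ (this is the analogue, one degree lower, of the argument for $S^4$ given in \eqref{omega7TransformationFromOmega4}).

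It remains to check, using Example \ref{FundamentalF1Cocycle}, that the two components $\mu^*\omega_2 = 0$ and $\mu^*\omega_3 = h_3 \propto \overline{\psi}\Gamma_a\psi \wedge e^a$ of the cocycle are compatibly equivariant. The $\omega_2$-component is automatic because both $\rho_{\mathrm{HW}}^*(0)$ and $\mu^*(-\omega_2)$ vanish. For the $\omega_3$-component we need $\rho_{\mathrm{HW}}^*(h_3) = h_3$. This is a direct computation on $\mathrm{CE}(\mathbb{R}^{2,1\vert\mathbf{2}})$ entirely parallel to the one in the proof of Lemma \ref{M2CocycleIsADEInvariant}: with $\rho_{\mathrm{HW}}^*(\psi) = \Gamma_{10}\psi$, $\rho_{\mathrm{HW}}^*(e^a) = e^a$ for $a\in\{0,1\}$ and $\rho_{\mathrm{HW}}^*(e^{10}) = -e^{10}$, one moves the two factors of $\Gamma_{10}$ past $\Gamma_0$ and $\Gamma_a$ using the Clifford anticommutation relations together with $(\Gamma_{10})^\dagger = -\Gamma_{10}$ and $\Gamma_{10}^2 = -1$. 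Unlike the analogous computation for $\mu_{{}_{M2}}$ (which has two vielbein factors and produces an overall sign $-1$), the cochain $h_3$ has only one vielbein factor, so the net sign on each summand is $+1$: for $a\in\{0,1\}$ both the bilinear and the vielbein are invariant, while for $a = 10$ the bilinear $\overline{\psi}\Gamma_{10}\psi$ and the vielbein $e^{10}$ each pick up a sign which cancel. Hence $\rho_{\mathrm{HW}}^*(h_3) = h_3 = \mu^*(\rho(\sigma)^*\omega_3)$, establishing equivariance.

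The only subtle point, and hence the part where one must be most careful, is reconciling the sign pattern on the two sides: naively one might expect that orientation-reversal of $S^2$ forces $\omega_3 \mapsto -\omega_3$ too, and a correspondingly flipped $h_3$ on the domain; the Sullivan model argument shows instead that $\omega_3$ is invariant, and this is precisely matched by the ``one factor of $e^a$'' sign count on the domain (in contrast to the ``two factors'' case giving a minus sign for $\mu_{{}_{M2}}$).
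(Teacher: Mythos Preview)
Your proof is correct and follows essentially the same approach as the paper's. The paper likewise first argues (by the same orientation-reversal reasoning for the volume form, and the same ``unique primitive'' argument as in \eqref{omega7TransformationFromOmega4}) that the rational generators of $S^2$ transform as $\omega_2\mapsto -\omega_2$, $\omega_3\mapsto\omega_3$ under $\sigma\in G_{\mathrm{HW}}$, and then invokes the computation of \eqref{HWTransformationOnCocycles} to conclude that $h_3$ is $G_{\mathrm{HW}}$-invariant on the domain; your version simply spells out both steps in more detail, in particular making explicit the ``one vielbein factor versus two'' sign count that distinguishes $h_3$ from $\mu_{{}_{M2}}$. (A minor notational point: following \eqref{CMinusMajoranaRepresentationForPin} the action on spinors is $\rho(\sigma)^\ast(\psi)=i\Gamma_{10}\psi=\mathbf{\Gamma}_{10}\psi$ rather than $\Gamma_{10}\psi$, but the extra factors of $i$ cancel in the bilinear, so your sign count is unaffected.)
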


\begin{proof}

  The same kind of computation as in \eqref{HWTransformationOnCocycles} shows that $\mu_{{}_{F1}}^{D=3}$ is \emph{invariant}
  under $G_{\mathrm{HW}}$.
  Moreover, the same kind of argument as in Lemma \ref{ADEActionOn4SphereRationallyTrivial}
  leading to \eqref{TransformationOfGeneratorsFor4Sphere} shows that the rational generators of the 2-sphere from Lemma \ref{SpheredgcAlgebraModel}
  transform as
  \begin{equation}
    \label{TransformationOfGeneratorsFor2Sphere}
  \mbox{
  \begin{tabular}{|c||c|c|}
    \hline
    & $g \in G_{\mathrm{ADE}}$ & $e \neq \sigma \in G_{\mathrm{HW}}$
    \\
    \hline
    $\omega$ & $\rho(g)^\ast(\omega)$ & $\rho(\sigma)^\ast(\omega)$
    \\
    \hline
    \hline
    $\omega_2$ & $\phantom{-}\omega_2$ & $-\omega_2$
    \\
    \hline
    $\omega_3$ & $\phantom{-}\omega_3$ & $\phantom{-}\omega_3$
    \\
    \hline
  \end{tabular}
  }
  \end{equation}
  Hence, as opposed to the degree 2 generator $\omega_2$, the degree 3 generator $\omega_3$ is
   invariant, too, and the claim follows.
\end{proof}

\medskip

This concludes our discussion of the plain real ADE-equivariance of the fundamental brane cocycles, under the various action appearing in Theorem \ref{RealADEEquivariantEndhancementOfM2M5Cocycle}. This is the \emph{extra property} that an equivariant enhancement has to satisfy.
Now we turn to the \emph{extra structure} that equivariant enhancement involves (the ``extra degrees of freedom'').

\medskip

\subsection{ Components of equivariant enhancement: The GS-action functional}
\label{FundamentalBraneCocyclesOnSystemsOfIntersectingBlackBranes}

Here we work out the component data of equivariant enhancements of super-cocycles (Example \ref{EquivariantEnhancementOfSuperCocycles})
of relevance in Theorem \ref{RealADEEquivariantEndhancementOfM2M5Cocycle},
that is associated with morphisms in the orbit category (Def.\ \ref{OrbitCategory}) along which the bosonic dimension of the fixed locus
in super-spacetime decreases (see Example \ref{ConstantSystemsOfFixedSubspaces}). The main result of this section is
Prop. \ref{TrivializationsOfRestrictionsOfM2M5Cocycle} below, which says, in the terminology of Sec. \ref{Physics}, that this data is given by
\emph{fundamental brane cocycles trivializing over their own black brane
superspacetime locus via their Green--Schwarz action functional}.
Notice that the Green-Schwarz Lagrangian density of a brane,
when restricted this way to a solitonic configuration of that brane, is the local density of its
\emph{brane instanton contribution} \cite{BeckerBeckerStrominger95, HarveyMoore99}.
We conclude this section in Example \ref{ComponentsForEquivariantEnhancementOfFundamentalBraneCocycles}
by explaining how this is used in the proof
of Theorem \ref{RealADEEquivariantEndhancementOfM2M5Cocycle}.

\medskip

\begin{remark}[Left-invariant differential forms on the super Minkowski super Lie group]
    \label{SuperMinkowskiLeftInvariantDifferentialForms}
    If we regard a super Minkowski spacetime $\mathbb{R}^{p,1\vert \mathbf{N}}$
    (Def.\ \ref{MinkowskiSuper}) as a super Lie algebra, via Example
    \ref{SuperLieAlgebraAsSuperSpaces}, we can exponentiate this super Lie algebra to
    obtain a simply-connected super Lie group. It turns out that the underlying super
    manifold of this group is just $\R^{p,1|{\bf N}}$, so we abuse notation by
    denoting the algebra and group by the same symbol.
    This super manifold comes with canonical coordinate functions
    $$
      \{x^a\}_{a = 0}^p\;, \qquad \{ \theta^\alpha \}_{\alpha = 1}^N
    $$
    as well as their de Rham differentials
    $$
      \{d_{\mathrm{dR}} x^a\}_{a = 0}^p\;,  \qquad \{ d_{\mathrm{dR}}\theta^\alpha \}_{\alpha = 1}^N
      \,.
    $$
    But the non-triviality of the super Lie bracket implies that, unlike ordinary (bosonic) Minkowski spacetime,
    the differential forms $d_{\mathrm{dR}}x^a$ are \emph{not} left-invariant (nor right-invariant)
    under the left or right action of the super Lie group on itself. In physics jargon, we say they
    are \emph{not supersymmetric}. Instead, a basis of left-invariant (hence supersymmetric) differential
    1-forms on the super Minkowski super Lie group
    is given by
    \begin{equation}
      \label{LeftInvariant1FormsOnSuperMinkowski}
      \begin{aligned}
        e^a & := d_{\mathrm{dR}}x^a + \overline{\theta} \Gamma^a d_{\mathrm{dR}}\theta\;,
        \\
        \psi^\alpha & := d_{\mathrm{dR}}\theta^\alpha\;.
      \end{aligned}
    \end{equation}
    Of course the Chevalley-Eilenberg algebra of a super Lie group is isomorphic to the sub dgc-algebra of its de Rham algebra
    on the left-invariant differential forms, and under this correspondence \eqref{LeftInvariant1FormsOnSuperMinkowski} readily yields
    \eqref{CEAlgebraSuperMinkowski}.
\end{remark}

\begin{defn}[Supersymmetric volume form]
  \label{SupersymmetricVolumeForm}
  Let $\mathbb{R}^{p,1\vert \mathbf{N}}$ be a super Minkowski spacetime (Def.\ \ref{MinkowskiSuper}).
  Then the \emph{supersymmetric volume form} on $\mathbb{R}^{p,1\vert \mathbf{N}}$ is defined by
  $$
    \mathrm{svol}_{p+1} \;:=\; e^0 \wedge e^1 \wedge \cdots \wedge e^p
    \;\in\; \mathrm{CE}( \mathbb{R}^{p,1\vert \mathbf{N}})\;.
  $$
  \end{defn}
  Via remark \ref{SuperMinkowskiLeftInvariantDifferentialForms} this may equivalently be expressed in terms of
  plain differential forms on the super Minkowski supermanifold as
  \begin{equation}
    \label{SupersymmetricVolumeFormDecomposition}
    \mathrm{svol}_{p+1}
    \;=\;
    \underset{
      \mathrm{vol}_{p+1}
    }{
    \underbrace{
      d_{\mathrm{dR}}x^0 \wedge d_{\mathrm{dR}}x^1 \wedge \cdots \wedge d_{\mathrm{dR}}x^p
    }}
    +
    \underset{
      \Theta_{p+1}
    }{
    \underbrace{
      (\overline{\theta} \Gamma^a d_{\mathrm{dR}}\theta) \wedge ( \cdots )
    }
    }
    \,,
  \end{equation}

  \vspace{-3mm}
\noindent  where the first summand is the ordinary (bosonic) volume form, while the second summand is a fermionic correction that
  makes the sum be left-invariant (hence supersymmetric).

\begin{lemma}[Supersymmetric volume form on black $p$-brane trivializes fundamental $p$-brane cocycle]
  \label{VolumeFormOnBlackpBraneTrivializedFundamentalpBraneCocycle}
  Let
  $
    \mathbb{R}^{p,1\vert \mathbf{N}} \hookrightarrow \mathbb{R}^{10,1\vert \mathbf{32}}
  $
  be one of the fixed point BPS subspaces (Def.\ \ref{BPSSuperLieSubalgebra}) in the classification from Theorem \ref{SuperADESingularitiesIn11dSuperSpacetime}.
  Then $\pm 1$ times the supersymmetric form on $\mathbb{R}^{p,1\vert \mathbf{N}}$ (Def.\ \ref{SupersymmetricVolumeForm})
  provides a trivialization for the $p$-brane cocycle on $\mathbb{R}^{p,1\vert \mathbf{N}}$, in that
  $$
    d(\mathrm{svol}_{p+1})
    \;=\;
    \pm
    \tfrac{1}{p!} \big(\overline{\psi} \mathbf{\Gamma}_{a_1 \cdots a_p} \psi \big) \wedge e^{a_1} \wedge \cdots e^{a_p}
    \;\;\;
    \in
    \mathrm{CE}\big( \mathbb{R}^{p,1\vert \mathbf{N}}  \big)\;.
  $$
\end{lemma}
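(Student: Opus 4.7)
The plan is a direct computation using the CE-differential on super-Minkowski spacetime combined with the BPS projection condition on the preserved spinors. First, since by Def.\ \ref{MinkowskiSuper} the CE-algebra of $\mathbb{R}^{p,1|\mathbf{N}}$ has $d e^a = \overline{\psi}\Gamma^a\psi$ and $d\psi = 0$, and since the 2-forms $\overline{\psi}\Gamma^a\psi$ are bosonic of even total degree and hence commute past the $e^b$, the Leibniz rule gives
\begin{equation*}
  d(\mathrm{svol}_{p+1})
  \;=\;
  \sum_{k=0}^{p} (-1)^{k}\,
  \bigl(\overline{\psi}\,\Gamma^{k}\,\psi\bigr)
  \wedge
  e^{0}\wedge\cdots\wedge\widehat{e^{k}}\wedge\cdots\wedge e^{p}.
\end{equation*}

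Next, I would identify the algebraic content of the BPS condition. By Theorem \ref{SuperADESingularitiesIn11dSuperSpacetime} and the explicit calculations in Lemmas \ref{1braneFixed}, \ref{2braneFixed}, \ref{5braneFixed}, \ref{6braneFixed}, \ref{9braneFixed}, the preserved spinor subspace $\mathbf{N}\subset\mathbf{32}$ is the $(+1)$-eigenspace of a product $\Gamma_{b_1\cdots b_q}$ of the Clifford generators \emph{transverse} to $\mathbb{R}^{p,1}$. Combined with the 11d volume identity $\Gamma_{0\,1\cdots 10}=\pm 1$, this is equivalent to the worldvolume BPS projection
\begin{equation*}
  \Gamma_{0\,1\cdots p}\,\psi \;=\; \varepsilon\,\psi
  ,\qquad \varepsilon\in\{\pm 1\},
\end{equation*}
for every $\psi\in\mathbf{N}$. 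Using this to insert $\Gamma_{0\,1\cdots p}=\Gamma_{0}\Gamma_{1}\cdots\Gamma_{p}$ into each term, and commuting $\Gamma_{k}$ through $\Gamma_{0}\cdots\Gamma_{k-1}$ (picking up a factor $(-1)^{k}\eta_{kk}$), one gets
\begin{equation*}
  \overline{\psi}\,\Gamma^{k}\,\psi
  \;=\;
  \varepsilon\,(-1)^{k}\,
  \overline{\psi}\,\Gamma_{0\cdots\widehat{k}\cdots p}\,\psi,
\end{equation*}
where the sign $\eta_{kk}\eta^{kk}=1$ cancels. Substituting back gives
\begin{equation*}
  d(\mathrm{svol}_{p+1})
  \;=\;
  \varepsilon \sum_{k=0}^{p}
  \bigl(\overline{\psi}\,\Gamma_{0\cdots\widehat{k}\cdots p}\,\psi\bigr)
  \wedge
  e^{0}\wedge\cdots\widehat{e^{k}}\cdots\wedge e^{p}.
\end{equation*}

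Finally, I would match this with the brane-scan cocycle of Prop.\ \ref{TheOldBraneScan}. Since on $\mathbb{R}^{p,1|\mathbf{N}}$ all $e^{a}$ with $a>p$ vanish, the sum $\tfrac{1}{p!}(\overline{\psi}\Gamma_{a_1\cdots a_p}\psi)\wedge e^{a_1}\wedge\cdots\wedge e^{a_p}$ ranges only over tuples with $a_i\in\{0,\dots,p\}$; the joint antisymmetry of $\Gamma_{a_1\cdots a_p}$ and of the wedge collapses the $p!$ orderings for each fixed missing index $k$, reproducing exactly the single sum above (up to the overall normalization $\propto$ in \eqref{OldBraneScanCocycle}). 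This proves the claim with sign $\varepsilon=\pm 1$ governed by the orientation of the BPS projector.

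The main obstacle is the verification of the projection $\Gamma_{0\cdots p}\psi=\pm\psi$ in a uniform way across the cases of Theorem \ref{SuperADESingularitiesIn11dSuperSpacetime}; this is handled case-by-case via the octonionic presentation of Example \ref{OctonionPresentationOf32OfPin11}, exactly as in the fixed-point computations of Sec.\ \ref{SingleBlackBraneSpecies}. Everything else is a bookkeeping exercise in Clifford algebra.
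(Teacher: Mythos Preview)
Your proposal is correct and follows essentially the same strategy as the paper: both derive the worldvolume projection $\mathbf{\Gamma}_{0\,1\cdots p}\vert_{\mathbf{N}}=\pm 1$ from the transverse $p$-brane involution together with 11d chirality $\mathbf{\Gamma}_{0\cdots 10}=\pm 1$, use it to convert each single-$\Gamma$ bilinear in $d(\mathrm{svol}_{p+1})$ into a $p$-fold one, and then match against the brane-scan cocycle. The only cosmetic difference is that the paper packages the combinatorics via the $\epsilon$-tensor identity $\mathbf{\Gamma}_{a_0}\vert_{\mathbf{N}}=\pm\tfrac{1}{p!}\epsilon_{a_0a_1\cdots a_p}\mathbf{\Gamma}^{a_1\cdots a_p}$ and the contraction $\epsilon_{a_0a_1\cdots a_p}\epsilon^{a_0b_1\cdots b_p}=\delta^{b_1\cdots b_p}_{a_1\cdots a_p}$, whereas you use explicit omitted-index (``hat'') notation; and the paper obtains the projection uniformly from Lemma~\ref{ExistenceOfpBraneInvolutions} rather than case-by-case via the octonionic model, so your remark about the ``main obstacle'' is unnecessarily cautious.
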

\begin{proof}
The fixed superspaces $\mathbb{R}^{p,1\vert \mathbf{N}}$ in the classification of Theorem
\ref{SuperADESingularitiesIn11dSuperSpacetime}
are, in particular, fixed by the corresponding $p$-brane involution (Def.\ \ref{pBraneInvolution}). Hence
the proof of Lemma \ref{ExistenceOfpBraneInvolutions} shows that the restriction of the complementary product of Clifford generators
to the fixed Spin sub-representation $\mathbf{N} \hookrightarrow \mathbf{32}$ is the identity, possibly up to a sign:
$
  \mathbf{\Gamma}_{p+1} \mathbf{\Gamma}_{p+2} \cdots \mathbf{\Gamma}_{10} \vert_{{}_{\mathbf{N}}}
  =
  \pm 1
$.
But also
$
  \mathbf{\Gamma}_0 \mathbf{\Gamma}_1 \cdots \mathbf{\Gamma}_{10} = \pm 1
$,
since $\mathbf{32}$ is chiral.
Together this implies that
$
  \mathbf{\Gamma}_0 \mathbf{\Gamma}_1 \cdots \mathbf{\Gamma}_p\vert_{{}_{\mathbf{N}}}
  =
  \pm 1
$
and hence that
\begin{equation}
  \label{EveryGammaProductOfTheOthersOnFixedPBrane}
  \begin{aligned}
    \mathbf{\Gamma}_{a_0}\vert_{{}_{\mathbf{N}}}
    & =
    \pm
    \tfrac{1}{p!}
    \epsilon_{ a_0 a_1 \cdots a_p  } \mathbf{\Gamma}^{a_1 \cdots a_p}
  \end{aligned}
\end{equation}
with the same sign $\pm$ for all $0 \leq a_0 \leq p$.
Using this we compute as follows:
$$
  \begin{aligned}
    d (\mathrm{svol}_{p+1})
    & =
    \tfrac{1}{(p+1)!} \epsilon_{a_0 \cdots a_p} d \left( e^{a_0} \wedge \cdots \wedge e^{a_p} \right)
    \\
    & \propto \;
    \tfrac{1}{p!} \epsilon_{a_0 a_1 \cdots a_p} \left( \overline{\psi} \mathbf{\Gamma}^{a_0} \psi \right) \wedge e^{a_1} \wedge \cdots e^{a_p}
    \\
    &
    \propto \;
    \pm \tfrac{1}{p!^2}\;
      \underset{
        = \delta_{a_1 \cdots a_p}^{b_1 \cdots b_p}
      }{
      \underbrace{
        \epsilon_{a_0 a_1 \cdots a_p}\epsilon^{a_0 b_1 \cdots b_p}
      }}
    \left( \overline{\psi} \mathbf{\Gamma}_{b_1 \cdots b_p} \psi \right) \wedge e^{a_1} \wedge \cdots e^{a_p}
    \\
    & \propto  \; \pm \tfrac{1}{p!} \left(\overline{\psi} \mathbf{\Gamma}_{a_1 \cdots a_p} \psi \right) \wedge e^{a_1} \wedge \cdots \wedge e^{a_p}\;.
  \end{aligned}
$$
Here the first step is just a combinatorial rewriting of the supersymmetric volume form, the second step
collects the results of applying the CE-differential \eqref{CEAlgebraSuperMinkowski} via its derivation property, the
third step uses \eqref{EveryGammaProductOfTheOthersOnFixedPBrane}, and in the fourth step we
used the combinatorics of the skew-symmetrized Kronecker symbol. The omitted proportionality
factors in each step are just integer powers of $i$
that are fixed by spinor conventions and using that the result must be real.
\end{proof}

\begin{prop}[Super volume form is Green--Schwarz Lagrangian on super embedding]
  \label{GreenSchwarzFromSuperVolumeForm}
  Let
  \begin{equation}
    \label{SuperembeddingCanonical}
    \mathbb{R}^{p,1\vert \mathbf{N}}
    \xymatrix{\ar@{^{(}->}[r]&}
     \mathbb{R}^{10,1\vert \mathbf{32}}
  \end{equation}
  be one of the fixed point BPS subspaces (Def.\ \ref{BPSSuperLieSubalgebra}) in the classification from Theorem \ref{SuperADESingularitiesIn11dSuperSpacetime}.
  Then its supersymmetric volume form (Def.\ \ref{SupersymmetricVolumeForm}) is the restriction of the Green--Schwarz-type Lagrangian
  for a super $p$-brane along the embedding \eqref{SuperembeddingCanonical}:
  \begin{equation}
    \label{SuperVolumeFormAsLagrangian}
    \mathrm{svol}_{p+1} \;=\; \mathbf{L}_{GS,p+1}\vert_{{}_{\mathbb{R}^{p,1\vert \mathbf{N}}}}
    \,.
  \end{equation}
\end{prop}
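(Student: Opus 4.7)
The strategy is to decompose both sides of \eqref{SuperVolumeFormAsLagrangian} according to the split in \eqref{GreenLagrangian} into a Nambu--Goto piece and a WZW piece, and then match them term by term on the fixed super subspace.

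First, I would pull back the left-invariant 1-forms $e^a \in \mathrm{CE}(\mathbb{R}^{10,1\vert \mathbf{32}})$ along the embedding $\mathbb{R}^{p,1\vert \mathbf{N}} \hookrightarrow \mathbb{R}^{10,1\vert \mathbf{32}}$. Because this embedding is induced by a super Lie subalgebra inclusion (Example \ref{SuperLieAlgebrasWithGActionAsGSuperspaces}, Thm. \ref{SuperADESingularitiesIn11dSuperSpacetime}), the pullback of $e^a$ for $a \in \{0,\ldots,p\}$ is precisely the left-invariant 1-form $e^a$ of $\mathbb{R}^{p,1\vert \mathbf{N}}$, while $e^a$ for $a > p$ pulls back to zero. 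Using Remark \ref{SuperMinkowskiLeftInvariantDifferentialForms}, this immediately yields
\[
  \mathrm{svol}_{p+1}\;=\;e^0\wedge\cdots\wedge e^p\;=\;\underbrace{d_{\mathrm{dR}}x^0\wedge\cdots\wedge d_{\mathrm{dR}}x^p}_{\mathrm{vol}_{p+1}}\;+\;\Theta_{p+1},
\]
where $\Theta_{p+1}$ is exactly the fermionic correction of \eqref{SupersymmetricVolumeFormDecomposition}. Thus the bosonic part of $\mathrm{svol}_{p+1}$ manifestly agrees with the Nambu--Goto Lagrangian $\mathrm{vol}_{p+1}$ appearing in \eqref{GreenLagrangian}.

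Next I would identify the fermionic piece $\Theta_{p+1}$ with the WZW contribution to the Green--Schwarz Lagrangian on $\mathbb{R}^{p,1\vert \mathbf{N}}$. By definition \eqref{GreenLagrangian}, the WZW term is characterized by $d\Theta_{p+1} = \mu_{p+2}$ where $\mu_{p+2}$ is the fundamental super $p$-brane cocycle of the old brane scan \eqref{OldBraneScanCocycle} on the super subspace $\mathbb{R}^{p,1\vert \mathbf{N}}$. Applying the de Rham differential to $\mathrm{svol}_{p+1}$, the bosonic summand $\mathrm{vol}_{p+1}$ is closed, so $d(\mathrm{svol}_{p+1}) = d\Theta_{p+1}$. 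But Lemma \ref{VolumeFormOnBlackpBraneTrivializedFundamentalpBraneCocycle} computes exactly this differential and identifies it, up to the overall sign, with the fundamental $p$-brane cocycle $\tfrac{1}{p!}\bigl(\overline{\psi}\,\mathbf{\Gamma}_{a_1\cdots a_p}\psi\bigr)\wedge e^{a_1}\wedge\cdots\wedge e^{a_p}$ of \eqref{OldBraneScanCocycle}. Hence $\Theta_{p+1}$ trivializes $\mu_{p+2}\vert_{\mathbb{R}^{p,1\vert \mathbf{N}}}$, so up to a sign it is a legitimate WZW potential for the Green--Schwarz Lagrangian on the super subspace.

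The main obstacle is the standard ambiguity in the choice of WZW potential: two primitives of $\mu_{p+2}$ differ by a closed form, and a priori any such representative qualifies as a ``WZW term.'' To conclude strict equality as in \eqref{SuperVolumeFormAsLagrangian}, I would invoke the Spin-invariance characterization of Green--Schwarz Lagrangians (as recalled around Prop. \ref{TheOldBraneScan}): the Green--Schwarz WZW term is the unique $\mathrm{Spin}(p,1)$-invariant, supersymmetric (i.e.\ left-invariant on the super-translation group) primitive of $\mu_{p+2}$ of bidegree $(p{+}1,\text{mixed})$ without purely bosonic part. Both $\Theta_{p+1}$ and the standard Green--Schwarz $\Theta_{p+1}^{\mathrm{GS}}$ satisfy these conditions, so they agree up to normalization. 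Fixing the sign by matching one component (for instance the top-degree-in-$\psi$ piece, which by the computation in Lemma \ref{VolumeFormOnBlackpBraneTrivializedFundamentalpBraneCocycle} is determined by the chirality relation $\mathbf{\Gamma}_0\cdots\mathbf{\Gamma}_p\vert_{\mathbf{N}} = \pm 1$) pins down the identification \eqref{SuperVolumeFormAsLagrangian}, absorbing the $\pm$ into the conventional sign of the Green--Schwarz action.
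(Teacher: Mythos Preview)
Your overall architecture is right and matches the paper: decompose $\mathrm{svol}_{p+1}$ via \eqref{SupersymmetricVolumeFormDecomposition}, identify $\mathrm{vol}_{p+1}$ with the Nambu--Goto piece, and use Lemma \ref{VolumeFormOnBlackpBraneTrivializedFundamentalpBraneCocycle} to see that the fermionic remainder $\Theta_{p+1}$ is \emph{a} primitive of the $p$-brane cocycle. The gap is in your last step, where you argue that $\Theta_{p+1}$ is \emph{the} Green--Schwarz WZW term.

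Your uniqueness argument appeals to ``the unique $\mathrm{Spin}(p,1)$-invariant, supersymmetric (i.e.\ left-invariant) primitive \dots''. But $\Theta_{p+1}$ is \emph{not} left-invariant: it is $\mathrm{svol}_{p+1} - \mathrm{vol}_{p+1}$, the difference of a left-invariant form and a non-invariant one, and explicitly involves the bare coordinates $\theta$ rather than only $\psi = d\theta$. The same is true of the standard GS WZW term (it is only quasi-invariant). So neither side lives in $\mathrm{CE}(\mathbb{R}^{p,1\vert \mathbf{N}})$, and the uniqueness statement you invoke is not one that Prop.~\ref{TheOldBraneScan} provides: that proposition classifies \emph{cocycles} in the CE-algebra, not primitives in the larger de Rham algebra of the supergroup. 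Without that, ``unique up to normalization'' is unjustified.

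The paper closes this gap differently: it performs the explicit expansion of $\Theta_{p+1} = \mathrm{svol}_{p+1} - \mathrm{vol}_{p+1}$ and then applies the gamma-matrix identity \eqref{EveryGammaProductOfTheOthersOnFixedPBrane}, i.e.\ $\mathbf{\Gamma}_{a_0}\vert_{\mathbf{N}} = \pm \tfrac{1}{p!}\epsilon_{a_0 a_1\cdots a_p}\mathbf{\Gamma}^{a_1\cdots a_p}$, which holds precisely because $\mathbf{N}$ is a BPS fixed locus. This converts each factor $(\overline{\theta}\,\Gamma^{a_0} d\theta)$ into $(\overline{\theta}\,\Gamma_{a_1\cdots a_p} d\theta)$, and the resulting expression is then matched term-by-term against the standard GS WZW Lagrangian in the literature (e.g.\ \cite{deWitHoppeNicolai88}, \cite{DasguptaNicolaiPlefka03}) for $p=2$, with the other $p$ analogous. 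So the missing ingredient in your plan is this concrete use of \eqref{EveryGammaProductOfTheOthersOnFixedPBrane} to produce the specific GS form, rather than an abstract uniqueness claim.
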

\begin{proof}
  For any embedding \eqref{SuperembeddingCanonical},
  the decomposition \eqref{SupersymmetricVolumeFormDecomposition} says that $\mathrm{svol}_{p+1}$ is
  the sum
  $$
    \mathrm{svol}_{p+1} \;=\; \mathbf{L}_{\mathrm{NG},p+1}\vert_{{}_{\mathbb{R}^{p,1\vert \mathbf{N}}}} + \Theta_{p+1}
  $$
  of the restriction of the Nambu-Goto Lagrangian $\mathbf{L}_{\mathrm{NG},p+1}$, which is the plain bosonic volume form
  $$
    \mathrm{vol}_{p+1} \;=\; \mathbf{L}_{\mathrm{NG}}\vert_{{}_{\mathbb{R}^{p,1\vert \mathbf{N}}}}
    \,,
  $$
  with a fermionic term $\Theta_{p+1}$.
  But if the embedding is, moreover, that of a BPS fixed locus from Theorem \ref{SuperADESingularitiesIn11dSuperSpacetime},
  then, since  $d(\mathrm{vol}_{p+1}) = 0$ (by Remark \ref{SuperMinkowskiLeftInvariantDifferentialForms}),
  Lemma \ref{VolumeFormOnBlackpBraneTrivializedFundamentalpBraneCocycle} says that $\Theta_{p+1}$ is a fermionic potential for the
  $p$-brane cocycle:
  $$
    d \Theta_{p+1} \;=\; \mu_p \;\propto\; \tfrac{1}{p!} \left(\overline{\psi} \Gamma_{a_1 \cdots a_p} \psi \right) \wedge e^{a_1} \wedge \cdots e^{a_p}
    \,,
  $$
  hence it is a \emph{WZW-term} for this cocycle. Now the Green--Schwarz action functional comes from a  very particular
  choice of such a potential, but \eqref{EveryGammaProductOfTheOthersOnFixedPBrane} implies that this is precisely the choice
  provided here by the supersymmetric volume form. We make this explicit for $p = 2$,
  and the other cases work analogously:
  $$
    \begin{aligned}
      \Theta_{2+1}
      & :=
      \mathrm{svol}_{2+1} - \mathrm{vol}_{2+1}
      \\
      & =
      \tfrac{1}{3!}
         \epsilon_{a_0 a_1 a_2}
         (d x^{a_0} +  \overline{\theta}\Gamma^{a_0}d \theta)
       \wedge
       (d x^{a_1} +  \overline{\theta}\Gamma^{a_1}d \theta)
       \wedge
       (d x^{a_2} +  \overline{\theta}\Gamma^{a_2}d \theta)
       \,-\,
       dx^0 \wedge dx^1 \wedge dx^2
      \\
      & =
      \tfrac{1}{3!}\;
      \underset{
        = \overline{\theta}\Gamma^{a_1 a_2} d \theta
      }{
      \underbrace{
        \epsilon_{a_0 a_1 a_2}
        (\overline{\theta}\Gamma^{a_0}d \theta)
      }
      }
      \big(
        3 d x^{a_1}
        \wedge
        ( d x^{a_2} +  \overline{\theta}\Gamma^{a_2}d \theta)
        +
        (\overline{\theta}\Gamma^{a_1}d \theta)
        \wedge
        (\overline{\theta}\Gamma^{a_2}d \theta)
      \big)
      \\
      &  =
      (\overline{\theta} \Gamma_{a_1 a_2} d \theta)
      \big(
        \tfrac{1}{2}
        d x^{a_1}
        \wedge
        ( d x^{a_2} +  \overline{\theta}\Gamma^{a_2}d \theta)
        +
        \tfrac{1}{6}
        (\overline{\theta}\Gamma^{a_1}d \theta)
        \wedge
        (\overline{\theta}\Gamma^{a_2}d \theta)
      \big)
      \,,
    \end{aligned}
  $$
  where under the brace we used \eqref{EveryGammaProductOfTheOthersOnFixedPBrane}. This is indeed the form of the Green--Schwarz-WZW type Lagrangian term in the
  sigma model for the super 2-brane \cite[(2.1)]{deWitHoppeNicolai88} \cite[(3)]{DasguptaNicolaiPlefka03}. The only difference is
  that here the bosonic indices range only over $a_i \in \{0, 1, 2\}$ instead of $a_i \in \{0,\cdots, 10\}$, and the fermionic coordinates
  range only over $\mathbf{16}$ inside $\mathbf{32}$. But this means exactly that we have the restriction of the full Green--Schwarz-WZW term along the embedding
  \eqref{SuperVolumeFormAsLagrangian}, as claimed.
\end{proof}

\begin{prop}[Fundamental brane cocycles on their black brane cobounded by  Green--Schwarz action]
\label{TrivializationsOfRestrictionsOfM2M5Cocycle}

\noindent {\bf (i)}
  Let
  $
    \mathbb{R}^{p,1\vert \mathbf{N}} \hookrightarrow \mathbb{R}^{10,1\vert \mathbf{32}}
  $
  be one of the fixed point BPS subspaces (Def.\ \ref{BPSSuperLieSubalgebra}) in the classification from Theorem \ref{SuperADESingularitiesIn11dSuperSpacetime}
  with $p \leq 5$, hence the $\mathrm{MO1}$ or $\mathrm{M2}$ or $\mathrm{MO5}$.
Then the restriction of the M2/M5-brane cocycle (Prop. \ref{M2M5SuperCocycle}) to this subspace is homotopic to the trivial cocycle,
and the homotopy is given by
the corresponding Green--Schwarz Lagrangian,
according to Prop. \ref{GreenSchwarzFromSuperVolumeForm}.
$$
  \raisebox{20pt}{
  \xymatrix@C=2em{
    \mathbb{R}^{10,1\vert \mathbf{32}}
    \ar[rrr]^-{\mu_{{}_{M2/M5}}}
    &&&
    S^4
    \\
    \mathbb{R}^{p,1 \vert 16}
    \ar@{^{(}->}[u]_>{\ }="s"
    \ar[rrr]^>{\ }="t"
    &&&
    \ast
    \ar@{^{(}->}[u]
    \ar@{=>}^{\;\; \pm \mathrm{svol}_{p+1}} "s"; "t"
  }}
$$
On the other hand, for the cases with $p \geq 6$ in the classification from Theorem \ref{SuperADESingularitiesIn11dSuperSpacetime},
no such trivializing homotopy exists.

\item {\bf (ii)}
Let
  $
    \mathbb{R}^{1,1\vert 8 \cdot \mathbf{1}} \hookrightarrow \mathbb{R}^{9,1,1\vert \mathbf{16}}
  $
the relative fixed locus inclusion corresponding to the item $\mathrm{NS1}_H$ in Prop. \ref{NonSimpleRealSingularities}.
Then the restriction of the fundamental heterotic string cocycle (Example \ref{FundamentalF1Cocycle}) along
that inclusion has a trivializing homotopy given by its Green--Schwarz action functional, according to Prop. \ref{SuperembeddingCanonical}:
$$
  \raisebox{20pt}{
  \xymatrix@C=2em{
    \mathbb{R}^{9,1\vert \mathbf{16}}
    \ar[rrr]^-{\mu^{\mathrm{het}}_{{}_{F1}}}
    &&&
    S^3
    \\
    \mathbb{R}^{1,1 \vert 8 \cdot \mathbf{1}}
    \ar@{^{(}->}[u]_>{\ }="s"
    \ar[rrr]^>{\ }="t"
    &&&
    \ast
    \ar@{^{(}->}[u]
    \ar@{=>}^{\;\; \pm \mathrm{svol}_{1+1}} "s"; "t"
  }}
$$
\end{prop}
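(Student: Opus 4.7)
\medskip

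\noindent\textbf{Proof plan.} The strategy is to work entirely at the level of dgc-superalgebras, using the minimal Sullivan model of $S^4$ from Example \ref{SpheredgcAlgebraModel} with generators $\omega_4, \omega_7$ subject to $d\omega_4=0$ and $d\omega_7=-\tfrac{1}{2}\omega_4\wedge\omega_4$. Unwinding the definition of dg-homotopy (via a path object $A\otimes\Omega^{\bullet}[0,1]$), a homotopy from $\mu_{M2/M5}|_{\mathbb{R}^{p,1|\mathbf{N}}}$ to the zero cocycle amounts to a pair of elements $h_3\in\mathrm{CE}(\mathbb{R}^{p,1|\mathbf{N}})^3$ and $h_6\in\mathrm{CE}(\mathbb{R}^{p,1|\mathbf{N}})^6$ trivializing, respectively, $\mu_{M2}|_{\mathbb{R}^{p,1|\mathbf{N}}}$ and $\mu_{M5}|_{\mathbb{R}^{p,1|\mathbf{N}}}+(\text{quadratic correction in }h_3)$. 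The first step is therefore to record these coboundary relations explicitly, so that the problem reduces to providing $h_3, h_6$.

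For the positive cases of part (i), I would handle each $p$ by producing $(h_3,h_6)$ from the supersymmetric volume form of Def.\ \ref{SupersymmetricVolumeForm}. For the $\mathrm{M2}$-locus ($p=2$), set $h_3:=\mathrm{svol}_{2+1}$: then Lemma \ref{VolumeFormOnBlackpBraneTrivializedFundamentalpBraneCocycle} gives $dh_3=\pm\mu_{M2}|_{\mathbb{R}^{2,1|8\cdot\mathbf{2}}}$, while $\mu_{M5}|_{\mathbb{R}^{2,1|8\cdot\mathbf{2}}}=0$ since the 5-fold wedge $e^{a_1}\wedge\cdots\wedge e^{a_5}$ vanishes with only three bosonic $e^a$ available, so $h_6=0$ works and the quadratic correction is automatically a boundary. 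Symmetrically, for the $\mathrm{MO5}$-locus ($p=5$), set $h_6:=\mathrm{svol}_{5+1}$, giving $dh_6=\pm\mu_{M5}|_{\mathbb{R}^{5,1|2\cdot\mathbf{8}}}$ by the same lemma, while $\mu_{M2}|_{\mathbb{R}^{5,1|2\cdot\mathbf{8}}}=0$ by the argument of Prop. \ref{M2CocycleVanishesOnMO9} (which applies verbatim to the $5$-brane involution). For the $\mathrm{MO1}$-locus ($p=1$), both components of $\mu_{M2/M5}$ vanish on restriction: $\mu_{M5}$ trivially by bosonic degree, and $\mu_{M2}$ because the spinor-to-vector pairing on $\mathbb{R}^{1,1|16\cdot\mathbf{1}_+}$ is lightlike (eq.\ \eqref{SpinorPairingOn1Brane} of Lemma \ref{1braneFixed}), so any $\overline{\psi}\Gamma_{a_1a_2}\psi\wedge e^{a_1}\wedge e^{a_2}$ with $a_1,a_2\in\{0,1\}$ must be proportional to $(b_0+b_1)^{\wedge 2}=0$; so the zero homotopy suffices, and we may equally take $h_3=\mathrm{svol}_{1+1}$ since $d(\mathrm{svol}_{1+1})$ also vanishes identically on this fixed locus by the same light-cone argument.

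For part (ii), the argument is entirely analogous: restricting the heterotic/type I string cocycle $\mu^{H/I}_{F1}=i(\overline{\psi}\Gamma_a\psi)\wedge e^a$ along $\mathbb{R}^{1,1|8\cdot\mathbf{1}}\hookrightarrow\mathbb{R}^{9,1|\mathbf{16}}$ lands on a BPS fixed sublocus to which Lemma \ref{VolumeFormOnBlackpBraneTrivializedFundamentalpBraneCocycle} applies (the relevant $p$-brane involution of the ambient 10d spacetime has the same structural content as in the 11d analysis), giving $d(\mathrm{svol}_{1+1})=\pm\mu^{H/I}_{F1}|_{\mathbb{R}^{1,1|8\cdot\mathbf{1}}}$; this provides the required homotopy to $0$ in the rational 3-sphere. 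The content here, via Prop. \ref{GreenSchwarzFromSuperVolumeForm}, is that the homotopy datum is precisely the heterotic Green--Schwarz Lagrangian. The main obstacle is the \emph{obstruction} half of part (i), i.e.\ showing that for $p\geq 6$ ($\mathrm{MK6}$ and $\mathrm{MO9}$) no such trivializing homotopy exists. For $\mathrm{MK6}$, Lemma \ref{M2CocycleIsADEInvariant} shows that $\mu_{M2}$ restricts non-trivially (the $\mathrm{SU}(2)_R$-action leaves $\mu_{M2}$ invariant, so there is no symmetry obstruction forcing vanishing), and by the Spin-invariant super-cohomology classification of Prop. \ref{TheOldBraneScan} the class $[\mu_{M2}|_{\mathbb{R}^{6,1|\mathbf{16}}}]$ is a non-zero element of $H^4(\mathbb{R}^{6,1|\mathbf{16}})^{\mathrm{Spin}(6,1)}$, obstructing the existence of any $h_3$; for $\mathrm{MO9}$ one argues symmetrically with $[\mu_{M5}|_{\mathbb{R}^{9,1|\mathbf{16}}}]$, which appears on the old brane scan as the heterotic NS5-cocycle and is likewise non-trivial. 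This yields the non-existence statement and completes the proof.
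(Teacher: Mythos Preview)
Your approach is essentially the same as the paper's: both reduce to producing coboundaries for the restricted cochains $\mu_{M2}$ and $\mu_{M5}$ via Lemma \ref{VolumeFormOnBlackpBraneTrivializedFundamentalpBraneCocycle}, invoke Prop.\ \ref{M2CocycleVanishesOnMO9} where one component vanishes by symmetry, and cite the old brane scan (Prop.\ \ref{TheOldBraneScan}) for the obstruction at $p=6$ (non-trivial M2-cocycle on $\mathbb{R}^{6,1|\mathbf{16}}$) and $p=9$ (non-trivial NS5-cocycle on $\mathbb{R}^{9,1|\mathbf{16}}$).

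There is one genuine slip in your $p=1$ (MO1) case. The light-cone property of Lemma \ref{1braneFixed} concerns the \emph{vector} pairing $\overline{\psi}\Gamma^a\psi$, not the bivector $\overline{\psi}\Gamma_{a_1a_2}\psi$, so ``proportional to $(b_0+b_1)^{\wedge 2}=0$'' is not a valid deduction. The vanishing $\mu_{M2}|_{\mathrm{MO1}}=0$ follows instead from Prop.\ \ref{M2CocycleVanishesOnMO9} (which covers MW$=$MO1): the MO1 involution reflects an odd number of spatial coordinates, hence sends $\mu_{M2}\mapsto-\mu_{M2}$, forcing it to vanish on the fixed locus. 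Relatedly, your remark that one may ``equally take $h_3=\mathrm{svol}_{1+1}$'' is incorrect: by Lemma \ref{VolumeFormOnBlackpBraneTrivializedFundamentalpBraneCocycle}, $d(\mathrm{svol}_{1+1})$ equals the non-zero 1-brane cocycle $\pm\overline{\psi}\Gamma_a\psi\wedge e^a$ on $\mathbb{R}^{1,1|16\cdot\mathbf{1}_+}$, not $\mu_{M2}|_{\mathrm{MO1}}=0$, so $\mathrm{svol}_{1+1}$ is not a valid homotopy here. Only the zero homotopy works, consistent with the MO1 entry in \hyperlink{EquivariantEnhancementsList}{Table 3}. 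With this correction your argument goes through.
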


\begin{proof}
  First observe the existence of trivializing coboundaries, as claimed:
  \begin{itemize}
  \vspace{-3mm}
    \item For $p = 1$ both the restrictions of $\mu_{{}_{M2}}$ and $\mu_{{}_{M5}}$ vanish identically, for degree reasons,
     while the restriction of $\mu_{{}_{F1}}$ is trivialized by the super volume form according to Lemma \ref{VolumeFormOnBlackpBraneTrivializedFundamentalpBraneCocycle}.
   \vspace{-3mm}
    \item For $p = 2$ the cochain $\mu_{{}_{M5}}$ still vanishes identically by degree reasons,
    while $\mu_{{}_{M2}}$ has a coboundary,
    given by the volume form, according to Lemma \ref{VolumeFormOnBlackpBraneTrivializedFundamentalpBraneCocycle}.
   \vspace{-3mm}
    \item For $p = 5$, the cochain $\mu_{{}_{M2}}$ vanishes by Prop. \ref{M2CocycleVanishesOnMO9},
      while now $\mu_{{}_{M5}}$ has a coboundary, again by Lemma \ref{VolumeFormOnBlackpBraneTrivializedFundamentalpBraneCocycle}.
  \end{itemize}
  \vspace{-2mm}
  By Example \ref{dgCoboiundariesHomotopies} these coboundaries yield the required homotopy.
This concludes the cases where the trivializing homotopy does exist.

\noindent We now turn to the cases where it does not exist:
  \begin{itemize}
   \vspace{-2mm}
    \item
     For restriction to $p = 6$ the M2-cocycle remains non-trivial,
     by the old brane scan (Prop. \ref{TheOldBraneScan}), which is sufficient for the M2/M5-cocycle to be non-trivial.
    \vspace{-3mm}
    \item
    For $p = 9$ the $\mu_{{}_{M2}}$-component of $\mu_{{}_{M2/M5}}$ vanishes, by Prop. \ref{M2CocycleVanishesOnMO9},
    and hence here the homotopy would exhibit a coboundary for the NS5-brane cocycle (this is studied from the 
    topological point of view in \cite{NS5}).  
    But, by the old brane scan (Prop. \ref{TheOldBraneScan}), this does not exist
    (meaning that the fundamental NS5-brane does exist in heterotic superspacetime).
  \end{itemize}

\vspace{-.5cm}
\end{proof}

We summarize in the following example how Prop. \ref{TrivializationsOfRestrictionsOfM2M5Cocycle} is used
iteratively to prove the existence of the equivariantly enhanced cocycles in Theorem \ref{RealADEEquivariantEndhancementOfM2M5Cocycle}:

\begin{example}[Components for equivariant enhancement of fundamental brane cocycles]
  \label{ComponentsForEquivariantEnhancementOfFundamentalBraneCocycles}
  Consider a diagram in rational super homotopy theory (Def.\ \ref{RationalSuperHomotopyTheory}) of the form
  \begin{equation}
  \label{eq-cptdiag}
    \xymatrix@R=.9em@C=3em{
      \mathbb{R}^{d, 1 \vert \mathbf{N}}
      \ar[rr]^{\mu}_<<<<<<<{\ }="s"
      &&
      S^{n}
      \\
      \\
      \mathbb{R}^{p, 1 \vert \mathbf{N}/\mathbf{k}}
      \ar@{^{(}->}[uu]^{\iota}
      \ar@{-->}[rr]_{\mu'}^>>>>>>{\ }="t"
      &&
      S^{k \lt n}
      \ar[uu]_0
      \ar@{==>}^{\eta} "s"; "t"
    }
  \end{equation}
  where $\mathbb{R}^{d,1\vert \mathbf{N}}$ is a super Minkowski spacetime (Def.\ \ref{MinkowskiSuper}), the
  left vertical inclusion is that of a fixed BPS subspace from Theorem \ref{SuperADESingularitiesIn11dSuperSpacetime}, Prop. \ref{NonSimpleRealSingularities},
  and the right vertical inclusion that of a fixed locus in a sphere, as appearing in Prop. \ref{ADEWEquivariant4SpheredgAlgebraicCoefficients}.

\begin{itemize}
\vspace{-2mm}
\item  We discuss the data involved in choosing the dashed maps, if the solid maps are given.
  The sphere coefficients may be represented via their minimal dgc-algebra models, according to Example \ref{SpheredgcAlgebraModel}.
  Hence if the sphere dimension is odd, then a morphism from a super Minkowski spacetime to it is equivalently a Spin-invariant cocycle
  in that degree, and classified by the old brane scan (Prop. \ref{TheOldBraneScan}).
  If the sphere coefficient is even, then such a morphism is equivalently one such cocycle in that degree,
  again classified by the old brane scan, together with a second element that trivialized the product of that cocycle with itself.
  In the cases considered below, that cocycle happens to vanish, so that the second element is a cocycle, and classified by the old brane scan.
  Hence in each case of interest, the choice for $\mu'$ is classified by an element
  \begin{equation}
    \label{Choice1}
    [\mu'] \;\in\; H^\bullet( \mathbb{R}^{p,1\vert \mathbf{N}/\mathbf{k}} )^{\mathrm{Spin}(p,1)}
  \end{equation}
  and these are controlled by the old brane scan.

\vspace{-2mm}
\item  We consider the cases where the right vertical map in \eqref{eq-cptdiag} is induced by the inclusion of a
  sphere of lower dimension into that of higher dimension. These inclusions are all null-homotopic
  and hence are represented by the zero-homomorphism on the representing dgc-algebras, as in Prop. \ref{ADEWEquivariant4SpheredgAlgebraicCoefficients}.
  This means that, independently of the choice of the cocycle $\mu'$,
  the homotopy $\eta$ is a null homotopy trivializing the restriction of
  $\mu$ along the fixed super subspace inclusion. By Example
  \ref{dgCoboiundariesHomotopies} these are given by coboundaries $\alpha$
  in the corresponding dg-algebra:
  $
    d \alpha = \iota^\ast(\mu)
    \in
    \mathrm{CE}\left( \mathbb{R}^{p,1\vert \mathbf{N}/\mathbf{k}} \right)
  $.
  By Prop. \ref{TrivializationsOfRestrictionsOfM2M5Cocycle}, in all cases considered here,
  one such choice is the super volume form $\mathrm{svol}_{p+1}$, hence the Green--Schwarz Lagrangian (Prop. \ref{GreenSchwarzFromSuperVolumeForm}).

\vspace{-2mm}
 \item  If $\alpha$ is any other choice, it follows that the difference $\mathrm{svol}_{p+1} - \alpha $ is a cocycle, and it is
  non-trivial as a cocycle precisely if the corresponding two homotopies are not themselves related by a higher homotopy.
  Hence, in the examples considered, the space of choices for $\eta$ is in each case the vector space
  $H^{p+1}\left(\mathbb{R}^{p,1\vert \mathbf{N}/\mathbf{k}}\right)$, again given by the old brane scan:
  \begin{equation}
    \label{Choice2}
    [\eta] = [\mathrm{svol}_{p+1}] + H^{p+1}\big(\mathbb{R}^{p,1\vert \mathbf{N}/\mathbf{k}}\big)
    \,.
  \end{equation}
  \end{itemize}
\end{example}

\medskip
This concludes the analysis of the available choices of components for
equivariant enhancement of fundamental brane cocycles,
and hence completes the proof of Theorem \ref{RealADEEquivariantEndhancementOfM2M5Cocycle}.


\appendix

\section{Mathematics background and conventions}

For ease of reference, here we collect some standard background material that we use in the main text:
on \emph{Spacetime and Spin} in Sec. \ref{SpacetimesAndSpin} and on \emph{Homotopy and Cohomology} in Sec. \ref{HomotopyTheory}.

\subsection{Spacetime and Spin}
  \label{SpacetimesAndSpin}

For reference and to fix some essential conventions, we briefly recall some details on real Spin representations in Lorentzian signature.

\begin{defn}[Spin geometry of Minkowski spacetime]
  \label{LorentzGroupsAndTheirSpinCovers}
For $p \in \mathbb{N}$ we write $\mathbb{R}^{p,1}$ for the corresponding Minkowski spacetime.
The underlying real vector space is $\mathbb{R}^{p+1}$.
\item {\bf (i)} With its canonical coordinate functions labeled as
$
  (x^0, x^1, \cdots, x^p)
$,
the inner product (Minkowski metric) is taken with the mostly plus
signature: for two vectors $u$ and $v$ in $\R^{p,1}$, we have
\begin{equation}
  \label{MinkowskiMetric}
  \eta(u,v) = -u^0 v^0 + u^1 v^1 + \cdots + u^p v^p
  \,.
\end{equation}

\item {\bf (ii)} We write
\begin{equation}
  \label{GeneralCliffordAlgebraSigns}
  \Cl(p,1)
  \;:=\;
  \R \langle \mathbf{\Gamma}_0, \mathbf{\Gamma}_1, \cdots, \mathbf{\Gamma}_p \rangle
  /
  \left(
    \mathbf{\Gamma}_a \cdot \mathbf{\Gamma}_b
    +
    \mathbf{\Gamma}_b \cdot \mathbf{\Gamma}_a
    =
    + 2 \eta_{a,b}
  \right)
\end{equation}
for the Clifford algebra of $\R^{p,1}$. This is the quotient of the free real
associative algebra on $p+1$ generators $\mathbf{\Gamma}_a$ by the Clifford relation,
which says that  their anticommutator is twice the corresponding entry in the Minkowski
metric. We write
\begin{equation}
  \label{SkewSymmetrizedCliffordAlgebra}
  \mathbf{\Gamma}_{a_1 \cdots a_q}
  \;:=\;
  \tfrac{1}{q!}
  \underset{\sigma \in \Sigma_q}{\sum} (-1)^{{\vert \sigma\vert}}
  \mathbf{\Gamma}_{a_{\sigma(1)}} \cdot \mathbf{\Gamma}_{a_{\sigma(2)}} \cdots \cdot \mathbf{\Gamma}_{a_{\sigma(q)}}
\end{equation}
for the skew-symmetrized products of the Clifford generators.

\item {\bf (iii)} Recall the various groups acting on Minkowski spacetime:
\begin{itemize}

\item the  \emph{Lorentz group} ${\rm O}(p,1) \hookrightarrow \mathrm{GL}(p+1)$ is the subgroup of linear transformations that preserve the Minkowski metric (\ref{MinkowskiMetric});

\item the \emph{orthochronous Lorentz group} ${\rm O}^+(p,1) \hookrightarrow {\rm O}(p,1)$
  is the subgroup of transformations that preserve time-orientation;

\item the \emph{Pin group} $\Pin(p,1) \to {\rm O}(p,1)$ is the double cover of the Lorentz
  group given by the standard Clifford algebraic construction: the group $\Pin(p,1)$
  is the subgroup of invertible elements of the Clifford algebra generated by unit
  vectors in $\R^{p,1}$:
  \[ \Pin(p,1) : = \langle u \in \R^{p,1} \, : \, \eta(u,u) = \pm 1 \rangle \subseteq
  \Cl(p,1) . \]
  A unit vector $u \in \Pin(p,1)$ maps to the reflection in ${\rm O}(p,1)$ through
  the hyperplane orthogonal to $u$, and this map on generators extends to a
  homomorphism $\Pin(p,1) \to {\rm O}(p,1)$. It is well known that this homomorphism
  is a double cover.

\vspace{-2mm}
\item the \emph{orthochronous Pin group} $\Pin^+(p,1) \to {\rm O}^+(p,1)$ is subgroup of
  $\Pin(p,1)$ that double covers the orthochronous Lorentz group ${\rm O}^+(p,1)$.

\item the \emph{Spin group} $\Spin(p,1) \to \SO^+(p,1)$ is the subgroup of $\Pin(p,1)$ that
  double covers the connected Lorentz group $\SO^+(p,1)$. In terms of the Clifford
  algebra, it is the subgroup of invertible elements generated by products of pairs
  of unit vectors with the same sign:
  \[
  \Spin(p,1) = \langle uv \in \Cl(p,1) : u,v \in \R^{p,1} \mbox{ and } \eta(u,u) = \eta(v,v) = \pm 1 \rangle .
   \]
  The Lie group $\Spin(p,1)$ is connected and simply-connected.

\item The  \emph{ Lie algebra of the Spin group}
is the Lie subalgebra of the Clifford algebra on commutators of vectors:
  \[
   {\rm Lie}(\Spin(p,1)) \simeq \{ [u,v] \in \Cl(p,1) : u,v \in \R^{p,1} \} .
   \]
  Note that this subspace is actually a Lie subalgebra with respect to the commutator in
  $\Cl(p,1)$. The double cover map $\Spin(p,1) \to \SO^+(p,1)$ induces an isomorphism of Lie algebras
  \[
  {\rm Lie}(\Spin(p,1) \simeq \so(p,1)
   \]
  with the Lorentz Lie algebra $\so(p,1)$ of the connected Lorentz group $\SO^+(p,1)$.
\end{itemize}
\end{defn}

Note that because the Spin group $\Spin(p,1)$ is connected and simply-connected, we
can describe a representation in two ways:

  \vspace{-2mm}
\begin{enumerate}[{\bf 1.}]
\item We can give the action of the generators $uv \in \Spin(p,1)$ on a vector space,
  where $u$ and $v$ are unit vectors of the same sign.

  \vspace{-2mm}
\item We can give the action of the Lie algebra $\so(p,1)$ on a vector space, and
  exponentiate to get the action of the Lie group $\Spin(p,1)$. In particular, it
  suffices to give the action of a basis of the Lie algebra $\{ [u,v] \in \Cl(p,1) :
  u,v \in \R^{p,1} \}$. A natural choice is the basis given by skew-symmetrized
  products of two gamma matrices, $\{ \Gamma_{ab} \}$.
\end{enumerate}

\begin{remark}[Technology for real Spin representations]
 \label{WaysOfGoingAboutRealSpinReps}

There are two alternative ways of constructing and handling the real Spin
representations that appear in the super-Minkowski spacetimes in  Def.\ \ref{MinkowskiSuper}:
\begin{itemize}

  \vspace{-2mm}
\item One may carve out real Spin representations from complex Dirac or Weyl
  representations by imposing a reality condition, called the \emph{Majorana condition}.
  This is the standard method used in the physics literature. A textbook reference
  for standard conventions is \cite[Sec. II.7]{CDF}, while a conceptual account is
  in \cite{FF}.  We recall this as Prop. \ref{RealSpinorsViaMajoranaConditionsOnDiracRepresentations} below;
  this serves for
  comparing the results in Sections \ref{ADESingularitiesInSuperSpacetime}, \ref{ADEEquivariantRationalCohomotopy},  and \ref{ADEEquivariantMBraneSuperCucycles} to the
  bulk of the string theory literature.

  \vspace{-2mm}
  \item Alternatively, one may use the real normed division algebras and matrices over
    them. The most famous example of this is identifying 4-dimensional spacetime,
    $\R^{3,1}$, with the $2 \times 2$ complex hermitian matrices, and generating the
    Weyl representations of $\Spin(3,1)$ on $\C^2$ from the action of these
    matrices. Yet this sort of construction continues to work for normed division
    algebras other than $\C$, and for spacetimes other than dimension 4. We recall
    this approach in Sec. \ref{SpinViaDivisionAlgebras} below; this serves to
    streamline the proofs of the theorems in Sections \ref{ADESingularitiesInSuperSpacetime}, \ref{ADEEquivariantRationalCohomotopy},  and \ref{ADEEquivariantMBraneSuperCucycles}.
 \end{itemize}
\end{remark}

\subsubsection*{Real Pin-representations via Majorana condition}
\label{MajoranaReps}

\begin{prop}[Real spinors via Majorana conditions on Dirac representations]
  \label{RealSpinorsViaMajoranaConditionsOnDiracRepresentations}
  Let
  $$
    p + 1 \in \{2\nu, 2\nu+1\}, \;\;\; \nu \in \mathbb{N}, \;\;\; 2\nu \geq 4
    \,.
  $$
  and let $N = 2^\nu$.
\item  {\bf (i) Dirac representations} (as in \cite[Sec. II.7.1]{CDF}): There exist complex matrices
$$
  \Gamma_a  \;\in\; {\rm End}_{\mathbb{C}}\left( \mathbb{C}^N  \right),
  a \in \{0,1, \cdots, p\}
$$
with the following properties:
\begin{equation}
  \label{CliffordAdjointsInDiracRepresentation}
\begin{aligned}
  & \phantom{AAAAA}  \Gamma_a \Gamma_b + \Gamma_b \Gamma_a = - 2 \eta_{a b}\;,
  \\
  &\left( \Gamma_0\right)^2 = +1\;,
  \phantom{AAAA}
  \left( \Gamma_a\right)^2 = -1
  \,,
\\
 &(\Gamma_0)^\dagger =  \Gamma_0\;,
  \phantom{AAAAA}
  (\Gamma_a)^\dagger = - \Gamma_a\;,
  \phantom{AA}
  \mbox{for $a,b \in \{1,\cdots, p\}$}.
\end{aligned}
\end{equation}
\item{\bf (ii) Charge conjugation matrices} (as in \cite[Sec. II.7.2]{CDF}):
Moreover, there exist \emph{charge conjugation matrices}
$$
  C_{(\pm)} \in \mathrm{End}_{\mathbb{C}}( \mathbb{C}^{N} )
$$
with real entries $(C_{(\pm)})^\ast = C_{(\pm)}$ and  related to the above $\Gamma$-matrices by
$
  C_{(\pm)} \Gamma = \pm \Gamma_a^t C_{(\pm)}
$
according to the following table:
\begin{equation}
\label{ChoicesOfChargeConjugation}
\mbox{
\begin{tabular}{|c|c|c|}
  \hline
  $p + 1$ & $C_{(+)}$ & $C_{(-)}$
  \\
  \hline
  \hline
  $3+1$ & $\ast$ & $\ast$
  \\
  \hline
  $4+1$ & $\ast$ &
  \\
  \hline
  $5+1$ & $\ast$ & $\ast$
  \\
  \hline
  $6+1$ &  & $\ast$
  \\
  \hline
  $7+1$ & $\ast$ & $\ast$
  \\
  \hline
  $8+1$ & $\ast$ &
  \\
  \hline
  $9+1$ & $\ast$ & $\ast$
  \\
  \hline
  $10+1$ &  & $\ast$
  \\
  \hline
\end{tabular}
}
\end{equation}
\item{\bf (iii) Majorana condition} (as in \cite[Sec. II.7.3]{CDF}):
Given a Dirac spinor $\psi \in \mathbb{C}^{N}$ we say that its \emph{Dirac conjugate} is
\begin{equation}
  \label{DiracConjugate}
  \overline{\psi}
  \;:=\;
  \psi^\dagger \Gamma_0
  \,.
\end{equation}
This $\psi$ is called a \emph{Majorana spinor} if its Dirac conjugate equals its \emph{Majorana conjugate},
which means
\begin{equation}
  \label{MajoranaCondition}
  \psi^t C = \psi^\dagger \Gamma_0
  \; \xymatrix{\ar@{<=>}[r]&} \;
  \mbox{$\psi$ is Majorana.}
\end{equation}
\item{\bf (iv) Majorana Spin representations} (see \cite{FF}): The subspace of Majorana spinors inside $\mathbb{C}^{N}$
$$
\mathbf{N} \subset \mathbb{C}^{N}.
$$
is preserved by multiplication by the $\Gamma_{a b}$. This set is a basis for
$\so(p,1)$ and this defines a real representation of $\Spin(p,1)$ on ${\bf N}$
with dimension $N = 2^\nu$.
The Dirac conjugation \eqref{DiracConjugate} induces on ${\bf N}$
the following quadratic and $\mathrm{Spin}(p,1)$-equivariant spinor-to-vector pairing
\begin{equation}
  \label{ViaDiracConjugateSpinorToVectorPairing}
  \xymatrix@R=-1pt{
    \mathbf{N}
    \ar[rr]^{ \overline{(-)}\Gamma(-) }
    &&
    \mathbb{R}^{p,1}
    \\
    \psi
    \ar@{|->}[rr]
    &&
    \left( \overline{\psi} \Gamma^a \psi \right)_{a = 0}^p
  }
\end{equation}

\item{\bf (v) Majorana Pin representations}:
For charge conjugation matrix $C_{(+)}$, the action of a single
$\Gamma_a$ preserves the Majorana condition. But for $C_{(-)}$ it does not.
For $C_{(-)}$ instead the product $i \Gamma_a$ preserves the Majorana condition. We will write
$$
  \mathbf{\Gamma}_a := i {\Gamma}_a
  \,.
$$
Instead of the relations \eqref{CliffordAdjointsInDiracRepresentation},
the relations satisfied by these boldface gamma matrices are the following:
\begin{equation}
  \label{CMinusMajoranaRepresentationForPin}
\begin{aligned}
  & \phantom{AAAA}
  \mathbf{\Gamma}_a \mathbf{\Gamma}_b + \mathbf{\Gamma}_b \mathbf{\Gamma}_a = + 2\eta_{a b}\;,
  \\
  &\left( \mathbf{\Gamma}_0\right)^2 = -1\;,
  \phantom{AAAA}\;
  \left( \mathbf{\Gamma}_a\right)^2 = +1\;,
 \\
  &(\mathbf{\Gamma}_0)^\dagger = - \mathbf{\Gamma}_0\;,
  \phantom{AAAA}
  (\mathbf{\Gamma}_a)^\dagger = + \mathbf{\Gamma}_a\;,
   \phantom{AA}
    \mbox{for $a \in \{1,\cdots, p\}$},
\end{aligned}
\end{equation}
Since now, for $C_{(-)}$, the subspace of Majorana spinors inside $\mathbb{C}^{N}$
is preserved by the action of each $\mathbf{\Gamma}_a$, equipped with this action it is a real
representation of the Pin group (Remark \ref{WaysOfGoingAboutRealSpinReps})
$
  \mathbf{N} \subset \mathbb{C}^{N}
$
of real dimension $N = 2^\nu$.
\end{prop}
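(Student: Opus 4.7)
The plan is to build the representations explicitly from the well-known structure of the complex Clifford algebra $\mathrm{Cl}(p,1)\otimes\mathbb C$ and then cut out the real subspace by a reality involution, tracking signs via dimension $\mathrm{mod}\ 8$.

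First, I would construct the Dirac matrices $\Gamma_a$ inductively. Starting from the $2\times 2$ Pauli-matrix presentation in dimension $2+1$, I would pass from $p+1$ to $p+3$ by the tensor product formulae
\[
 \Gamma_a \mapsto \Gamma_a\otimes \sigma_3 \quad (a=0,\dots,p),\qquad \Gamma_{p+1}\mapsto 1\otimes \sigma_1,\qquad \Gamma_{p+2}\mapsto 1\otimes i\sigma_2.
\]
A direct induction shows that the Clifford relations \eqref{CliffordAdjointsInDiracRepresentation} hold with the stated signs (the mostly-plus signature gives the overall minus in the anticommutator), and that the hermiticity pattern $(\Gamma_0)^\dagger=\Gamma_0$, $(\Gamma_a)^\dagger=-\Gamma_a$ is preserved by the induction step. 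The dimension count $N=2^\nu$ follows from the representation theory of the complex Clifford algebra: $\mathrm{Cl}(p,1)\otimes\mathbb C$ is a simple (resp.\ sum of two simple) matrix algebra for $p+1$ even (resp.\ odd) and the constructed representation exhausts it.

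Next, the matrices $\pm \Gamma_a^t$ satisfy the same Clifford relations as $\Gamma_a$, so by Schur's lemma applied to the simple complex algebra $\mathrm{Cl}(p,1)\otimes\mathbb C$ there exist intertwiners $C_{(\pm)}$ (unique up to scalar when they exist) with $C_{(\pm)}\Gamma_a=\pm \Gamma_a^t C_{(\pm)}$. The question of which of $C_{(+)}$, $C_{(-)}$ actually exist, and whether they can be chosen with real entries, is settled by the classical computation of the invariants
\[
 \varepsilon_{(\pm)}:=C_{(\pm)}^{\,t}\,C_{(\pm)}^{-1}\in\{+1,-1\},
\]
whose values depend only on $p+1 \ \mathrm{mod}\ 8$ and reproduce Table \eqref{ChoicesOfChargeConjugation}; this is the Bott-periodicity pattern controlling the reality of Clifford modules. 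The existence of $C_{(\pm)}$ with real entries is equivalent to the corresponding $\varepsilon_{(\pm)}=+1$.

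For part (iv), I would show that the Majorana condition \eqref{MajoranaCondition} is equivalent to requiring $\psi^\ast=B\psi$ for a suitable antilinear involution $B$ built from $C_{(+)}$ and $\Gamma_0$. Using $C_{(+)}\Gamma_a=+\Gamma_a^t C_{(+)}$ together with $(\Gamma_0)^\dagger=\Gamma_0$ and $(\Gamma_a)^\dagger=-\Gamma_a$, a short computation shows that each $\Gamma_{ab}$ commutes with $B$, hence preserves the real subspace $\mathbf N\subset\mathbb C^N$; exponentiating gives the $\mathrm{Spin}(p,1)$-action. The pairing $\psi\mapsto \bar\psi\Gamma^a\psi$ is real on $\mathbf N$ and transforms as a vector under $\mathrm{Spin}(p,1)$ by the standard identity $[\Gamma_{bc},\Gamma^a]=2\delta^a{}_{[b}\Gamma_{c]}$, using the adjoint relation $\overline{\Gamma_{bc}\psi}=-\bar\psi\Gamma_{bc}$. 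For part (v), the key observation is that $\mathbf\Gamma_a:=i\Gamma_a$ reverses the sign in the Clifford relation and flips hermiticity, giving \eqref{CMinusMajoranaRepresentationForPin}, and that now the intertwiner condition becomes $C_{(-)}\mathbf\Gamma_a=-\mathbf\Gamma_a^t C_{(-)}$ with the appropriate reality, so that single factors $\mathbf\Gamma_a$ preserve $\mathbf N$ and generate the action of the full Pin group.

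The main obstacle will be the dimension-by-dimension bookkeeping of signs required to reproduce Table \eqref{ChoicesOfChargeConjugation} exactly: one must carefully track how $\varepsilon_{(\pm)}$ transforms under the inductive step $p+1\mapsto p+3$ of the tensor product construction, and match this with the $\mathrm{mod}\ 8$ pattern. Everything else is either a direct Clifford-algebraic computation or a standard Schur-lemma argument.
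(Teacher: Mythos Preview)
The paper does not give its own proof of this proposition: each part is stated with an inline citation to \cite{CDF} or \cite{FF} and then used as input for the rest of the appendix. Your plan is essentially the standard textbook route followed in those references --- inductive tensor-product construction of the $\Gamma_a$, Schur-lemma existence of the intertwiners $C_{(\pm)}$, and the mod-$8$ sign bookkeeping for the table --- so there is nothing to compare against in the paper itself.

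Two small points worth tightening. First, in part~(iv) you argue specifically with $C_{(+)}$, but the statement must hold whenever a Majorana condition is available, and in $p+1=7$ or $11$ only $C_{(-)}$ exists (see the table). The fix is easy: for \emph{either} sign, the relation $C_{(\pm)}\Gamma_a = \pm\Gamma_a^t C_{(\pm)}$ applied twice shows $C_{(\pm)}\Gamma_{ab} = \Gamma_{ab}^t C_{(\pm)}$, and this is what makes $\Gamma_{ab}$ commute with the reality involution. Second, your claim that ``existence of $C_{(\pm)}$ with real entries is equivalent to $\varepsilon_{(\pm)}=+1$'' conflates two separate issues: in odd $p+1$ the obstruction to \emph{existence} of $C_{(\pm)}$ is that $\Gamma_a\mapsto -\Gamma_a$ may be inequivalent to the identity (detected by the volume element), while $\varepsilon_{(\pm)}$ records the \emph{symmetry} of $C$ once it exists. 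Both feed into the Majorana story, but they are distinct invariants and should be tracked separately in your induction.
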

\begin{example}[Real Spin representations]
  \label{RelevantExamplesOfRealSpinRepresentations}
  The following are the irreducible real representations (up to isomorphism)
  of $\mathrm{Spin}(p,1)$ (Def.\ \ref{LorentzGroupsAndTheirSpinCovers}) for values of $p$ of relevance in the main text,
  obtainable via Prop. \ref{RealSpinorsViaMajoranaConditionsOnDiracRepresentations}:
  $$
  \renewcommand{\arraystretch}{1.2}
  \begin{tabular}{|c||c|}
     \hline
     \begin{tabular}{c}
       Spacetime dimension
       \\
       $p + 1$
     \end{tabular}
     &
     \begin{tabular}{c}
       Supersymmetry
       \\
       $\mathbf{N}$
     \end{tabular}
     \\
     \hline
     \hline
     $10 + 1$ & $\mathbf{32}$
     \\
     \hline
     $9 +1$ & $\mathbf{16}$, $\overline{\mathbf{16}}$
     \\
     \hline
     $6 + 1$ & $\mathbf{16}$
     \\
     \hline
     $5 + 1$ & $\mathbf{8}$, $\overline{\mathbf{8}}$
     \\
     \hline
     $4 + 1$ & $\mathbf{8}$
     \\
     \hline
     $3+1$ & $\mathbf{4}$
     \\
     \hline
     $2+1$ & $\mathbf{2}$
     \\
     \hline
     $1 + 1$ & $\mathbf{1}$, $\overline{\mathbf{1}}$
     \\
     \hline
  \end{tabular}
  \renewcommand{\arraystretch}{1.2}
  $$
  We are particularly interested in the $\mathbf{32}$ of $\mathrm{Spin}(10,1)$.
  Notice that by (\ref{ChoicesOfChargeConjugation})
  the charge conjugation matrix in $D = 10+1 $ is $C_{(-)}$ and hence the gamma matrices representing the $\mathrm{Pin}(10,1)$-action on $\mathbf{32}$  are those from
  (\ref{CMinusMajoranaRepresentationForPin}).
\end{example}

\begin{remark}[Notation for Irrep decomposition -- Number of supersymmetries]
  \label{NumberOfSupersymmetries}
  Given irreducible real spinor representations $\mathbf{N}$ or $\overline{\mathbf{N}}$ as in Example \ref{RelevantExamplesOfRealSpinRepresentations},
  a general real spinor representation $\Delta$ is a direct sum of these. The multiplicities of the
  direct summands is traditionally denoted by $\mathcal{N}$ or $\mathcal{N}_{\pm} \in \mathbb{N}$:
  $$
    \Delta \;=\;  \mathcal{N} \cdot \mathbf{N}
     \phantom{AAA}
     \mbox{or}
     \phantom{AAA}
    \Delta \;=\; \mathcal{N}_+ \cdot \mathbf{N}
      \,\oplus\,
      \mathcal{N}_- \cdot \overline{\mathbf{N}}
    \,.
  $$
  Hence if the irreducible representations are understood, any other representation may be denoted simply by
  $$
    \mathcal{N}
    \phantom{AA}
    \mbox{or}
    \phantom{AA}
    \left( \mathcal{N}_+, \mathcal{N}_-\right)
    \,.
  $$
  When these real spinor representations serve as constituents of super Minkowski
  spacetimes (Def.\ \ref{MinkowskiSuper}) one calls the natural numbers $\mathcal{N}$ or
  $\mathcal{N}_{\pm}$ the \emph{number of supersymmetries}.
\end{remark}

\subsubsection*{Spinor representations  via normed division algebras}
\label{SpinViaDivisionAlgebras}

The observation that real $\mathrm{Spin}(p,1)$-representations for $p+1 \in \{3,4,5,6,7,10,11\}$
may be related to the real division algebras is due to \cite{KugoTownsend82}.
A comprehensive account is given in \cite{BaezHuerta10, BaezHuerta11}. Here we briefly recall the facts that we need.

\begin{defn}[Cayley--Dickson construction]
  \label{CayleyDicksonConstruction}
  Let $A$ be a real star-algebra (unitual, but not necessarily commutative nor associative), with star involution denoted by $\overline{(-)}$.
  Then its \emph{Cayley--Dickson double} $\mathrm{CD}(A)$ is the real star algebra obtained by adjoining a new generator $\ell$
  subject to the following relations:
  \begin{equation}
    \label{RelationsForCayleyDickson}
    \ell^2 = -1
    \,,
    \phantom{AA}
  \mbox{and}
    \phantom{AA}
    a (\ell b) = (a \overline{b}) \ell
    \,,
    \phantom{AA}
    (a \ell) b = \ell (a \overline{b})
    \,,
    \phantom{AA}
    (\ell a) (b \ell^{-1}) = \overline{a b}
  \end{equation}
  for all $a, b \in A$.
  This implies that the underlying real vector space is
  $$
    \mathrm{CD}(A) \simeq_{\mathbb{R}} A \oplus \ell A
    \,.
  $$
\end{defn}
\begin{example}[The four real normed division algebras]
  \label{TheFourRealNormedDivisionAlgebras}
  The first iterations of the Cayley--Dickson construction
  (Def.\ \ref{CayleyDicksonConstruction}) yield the real algebras of
   \vspace{-2mm}
    \item {\bf 1.} real numbers $\mathbb{R}$,
    \vspace{-2mm}
    \item {\bf 2.} complex numbers $\mathbb{C} \simeq \mathrm{CD}(\mathbb{R})$,
     \vspace{-2mm}
    \item {\bf 3.} quaternions $\mathbb{H} \simeq \mathrm{CD}(\mathbb{C})$,
     \vspace{-2mm}
    \item {\bf 4.} octonions $\mathbb{O} \simeq \mathrm{CD}(\mathbb{H})$.

\vskip .1em
\noindent These four algebras also happen to be precisely the finite-dimensional
`normed division algebras' over the real numbers. Recall that a normed division
algebra $\K$ is a real algebra, not necessarily associative, with unit 1 and equipped
with a norm $| \cdot |$ such that:
\[ |xy| = |x||y| \mbox{ for all } x, y \in \K . \]
We say that an algebra equipped with such a norm is \emph{normed}. Note that being normed immediately implies that $\K$ has no zero divisors, so $\K$ is indeed a division algebra.

Remarkably, there are only four normed division algebras: $\R$, $\C$, $\H$ and $\O$,
constructed above.  In the first step of this construction, going from $\mathbb{R}$
to $\mathbb{C}$, the adjoined generator $\ell$ is identified with the imaginary unit
$i \in \mathbb{C}$.  In the second step the adjoined generator is usually denoted
$j$, leading to the imaginary quaternions subject to the relations
  $$
  i j = k\,, \phantom{AA} j i = - k\;,
  $$
  and their cyclic permutations.
  When working with the octonions, we will exclusively use the Cayley--Dickson presentation,
  and hence in the main text $\ell$ always denotes a unit octonion orthogonal to $i$, $j$ and $k := i j$.
  Notice simple but important relations implied by (\ref{RelationsForCayleyDickson}), such as
  $
    \ell^{-1} = - \ell
   $,
  which lead to manipulations such as
  \begin{equation}
    \label{SomeRelationAmongImaginaryQuaternions}
    \overline{k} \ell = -k\ell = k \ell^{-1}
    \,.
  \end{equation}
\end{example}
\begin{prop}[Basic properties of the quaternions]
  \label{PropertiesQuaternions}
  We collect some well-known facts about quaternions (Example \ref{TheFourRealNormedDivisionAlgebras}):

   \vspace{-2mm}
    \item {\bf (i)}
      The quaternions $\mathbb{H}$ are isomorphic to $\mathbb{R}^4$ as a normed vector space:
      \begin{equation}
        \label{QuaternionsAsNormedVectorSpace}
        \mathbb{H} \simeq \mathbb{R}^4\;.
      \end{equation}

 \vspace{-3mm}
    \item {\bf (ii)} A quaternion $q \in \mathbb{H}$ of unit norm $\vert q\vert = 1$
      is also called a \emph{unit quaternion}, for short.  As a submanifold of
      $\mathbb{H}$, the space of unit quaternions is the 3-sphere
      $$
        S(\mathbb{H}) \simeq S^3
        \,.
      $$
      Quaternion multiplication turns $S(\H)$ into a Lie group. This group is isomorphic to $\mathrm{SU}(2)$:
      \begin{equation}
        \label{UnitQuaternionsAndSU2}
        S(\mathbb{H}) \simeq \mathrm{SU}(2)
        \,.
      \end{equation}

 \vspace{-3mm}
    \item {\bf (iii)}
      Thanks to quaternion multiplication, the group $\SU(2)$ acts on $\mathbb{H}$ in two ways (Def.\ \ref{GSpace}):
      \begin{equation}
        \label{SU2ActionsOnQuaternions}
        \xymatrix@R=-2pt{
          \SU(2) \times \mathbb{H} \ar[rr]^{\rho_L} && \mathbb{H}
          \\
          (q, v) \ar@{|->}[rr] && q v
        }
        \qquad
        \text{and}
        \qquad
        \xymatrix@R=-2pt{
           \SU(2) \times \mathbb{H}  \ar[rr]^{\rho_R} && \mathbb{H}\;.
          \\
          (q, v ) \ar@{|->}[rr] &&  v \overline{q}
        }
      \end{equation}
      These actions commute with each other because $\H$ is associative, and they
      preserve the norm because $\H$ is normed:
      $$
        \vert q v\vert = \underset{=1}{\underbrace{{\vert q\vert}}} \, \vert v\vert ={\vert v\vert }
        \,,
      $$
      with a similar calculation for the right action. Finally, in either case
      $\SU(2)$ acts on $\H$ by orientation-preserving transformations, because
      $\SU(2)$ is connected.  In summary, the two actions $\rho_{L,R}$ of
      $\mathrm{SU}(2)$ factor through the special orthogonal group in 4 dimensions:
      \begin{equation}
        \label{UnitQuaternionsActOrthogonallyOnQuaternions}
       \xymatrix{
        \rho_{L,R}
        \maps
        \mathrm{SU}(2)
        \ar[r] &
        \mathrm{SO}(4)
               }.
      \end{equation}

       \vspace{-3mm}
    \item {\bf (iv)} Because the actions $\rho_L$ and $\rho_R$ commute with each
      other, they define an action of $\SU(2) \times \SU(2)$ on $\H$. Restricting
      this to the diagonal $\SU(2)$ subgroup, we get an action of $\SU(2)$ on $\H$:
      \[ \begin{array}{rcl}
        \SU(2) \times \H & \longrightarrow & \H\;. \\
        (q,v) & \mapsto & qv\bar{q}
      \end{array}
      \]
      This action is trivial on the real quaternions, and preserves the 3-dimensional
      subspace of imaginary quaternions. In fact, $\H$ decomposes into the
      irreducible representations:
      \[
       \H \simeq \R \oplus \Im(\H) .
      \]
      The action of $\SU(2)$ on the summand $\Im(\H)$ preserves the norm, and this
      induces the famous homomorphism
      \begin{equation}
      \label{ImaginaryQuaternionsRotationAction}
      \SU(2) \to \SO(3) ,
    \end{equation}
     a double cover of $\SO(3)$.
\end{prop}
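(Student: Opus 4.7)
The plan is to treat each of the four items as a standard bookkeeping exercise flowing from the Cayley--Dickson presentation $\mathbb{H} \simeq \mathbb{C} \oplus j\mathbb{C}$ of Def.\ \ref{CayleyDicksonConstruction}, combined with the definition of the quaternionic norm $|q|^2 = q \overline{q}$. Since the statement is labeled as a collection of ``well-known facts,'' no new idea is needed; the work is to fix normalizations consistent with the rest of the paper so that the actions $\rho_{L,R}$ in \eqref{SU2ActionsOnQuaternions} are the ones used in Sec.\ \ref{ADESingularitiesInSuperSpacetime} and Sec.\ \ref{ADEEquivariantRationalCohomotopy}.

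For item (i), I would expand $q = a + bi + cj + dk$ in the basis induced iteratively by the Cayley--Dickson construction, compute $q \overline{q} = a^2 + b^2 + c^2 + d^2$ directly from $\bar i = -i$, $\bar j = -j$, $\bar k = -k$ and the relations \eqref{RelationsForCayleyDickson}, and conclude that the linear isomorphism $\mathbb{H} \to \mathbb{R}^4$ sending $(a,b,c,d)$ to its coordinate tuple is an isometry. For item (ii), the diffeomorphism $S(\mathbb{H}) \simeq S^3$ is then immediate. To identify $S(\mathbb{H})$ with $\mathrm{SU}(2)$ I would exhibit the explicit algebra homomorphism
\[
  a + bi + cj + dk \;\longmapsto\;
  \begin{pmatrix} a + b\,\mathrm{i} & c + d\,\mathrm{i} \\ -c + d\,\mathrm{i} & a - b\,\mathrm{i} \end{pmatrix},
\]
verify that this is an $\mathbb{R}$-algebra map (by checking on the generators $i,j,k$), that it carries $\overline{(-)}$ to matrix conjugate-transpose, and that unit norm corresponds precisely to the matrix being unitary with determinant $1$. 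Smoothness and bijectivity on the unit sphere then give the Lie group isomorphism \eqref{UnitQuaternionsAndSU2}.

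For item (iii), the two actions \eqref{SU2ActionsOnQuaternions} are well-defined by associativity of $\mathbb{H}$; mutual commutation is again associativity $q(v\overline{q'}) = (qv)\overline{q'}$; norm-preservation is multiplicativity $|qv| = |q||v|$; and continuity plus connectedness of $S^3 \simeq \mathrm{SU}(2)$ forces the image in $\mathrm{O}(4)$ to lie in the identity component $\mathrm{SO}(4)$, giving \eqref{UnitQuaternionsActOrthogonallyOnQuaternions}. For item (iv), the subspaces $\mathbb{R}\cdot 1$ and $\Im(\mathbb{H})$ are characterized intrinsically (as the $+1$ and $-1$ eigenspaces of the involution $\overline{(-)}$), so the diagonal action $v \mapsto q v \overline{q}$ preserves this decomposition. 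It is trivial on $\mathbb{R}\cdot 1$ and by the previous item gives an orthogonal action on $\Im(\mathbb{H}) \simeq \mathbb{R}^3$, hence a smooth group homomorphism $\mathrm{SU}(2) \to \mathrm{SO}(3)$.

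The only point that is more than formal is the claim in item (iv) that $\mathrm{SU}(2) \to \mathrm{SO}(3)$ is a \emph{double cover}, i.e.\ the surjectivity together with the computation of the kernel. For the kernel, I would observe that $q v \overline{q} = v$ for all $v \in \Im(\mathbb{H})$ forces $q$ to commute with every imaginary quaternion, hence (since the centralizer of $\Im(\mathbb{H})$ in $\mathbb{H}$ is $\mathbb{R}$) to lie in $S(\mathbb{R}) = \{\pm 1\}$. Surjectivity then follows either by a dimension count (both Lie algebras are three-dimensional, so the induced Lie algebra map with trivial kernel is an isomorphism, and $\mathrm{SO}(3)$ is connected) or by explicitly realizing the rotation by angle $\theta$ around the unit imaginary axis $u \in \Im(\mathbb{H})$ as conjugation by $q = \cos(\theta/2) + \sin(\theta/2) u$. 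This last computation, together with fixing the sign conventions to match \eqref{ImaginaryQuaternionsRotationAction} and the $\mathrm{SU}(2)_{L,R,\Delta}$-actions on $S^4$ of Def.\ \ref{SuspendedHopfAction}, is the only mildly delicate step; everything else is a direct unwinding of definitions.
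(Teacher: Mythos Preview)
Your proposal is correct and thorough, but the paper does not actually supply a separate proof of this proposition: it is stated as a collection of ``well-known facts,'' with the justifications (associativity for commutation of $\rho_L$ and $\rho_R$, multiplicativity of the norm, connectedness of $\mathrm{SU}(2)$ for orientation-preservation) already given inline within the statement itself. Your write-up therefore goes well beyond what the paper provides, supplying the explicit algebra isomorphism $\mathbb{H} \to \mathrm{Mat}_{2\times 2}(\mathbb{C})$ for item (ii) and the kernel/surjectivity argument for the double cover in item (iv), neither of which the paper spells out.
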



The finite subgroups of $\SU(2)$ are of particular interest in the main text:
\begin{remark}[The finite subgroups of $\mathrm{SU}(2)$ \cite{Klein1884}]
  \label{ADEGroups}
  The finite subgroups of $\mathrm{SU}(2)$
  are given, up to conjugacy, by the following classification (where $n \in \mathbb{N}$):
  \begin{center}
  \begin{tabular}{|c|c|c|}
    \hline
    \begin{tabular}{c}
      \bf Label
    \end{tabular}
    &
    \begin{tabular}{c}
      \bf Finite
      \\
      \bf subgroup
      \\
      \bf of $\mathrm{SU}(2)$
    \end{tabular}
    &
     \begin{tabular}{c}
       \bf Name of
      \\
      \bf group
    \end{tabular}
    \\
    \hline
    \hline
    $\mathbb{A}_{\mathrlap{n}}$ &
    $\phantom{2}\mathbb{Z}_{\mathrlap{n+1}}$ &   Cyclic
    \\
    \hline
    $\mathbb{D}_{\mathrlap{n+4}}$ &
     $2\mathrm{D}_{\mathrlap{n+2}}$ & Binary dihedral
    \\
    \hline
    $\mathbb{E}_{\mathrlap{6}}$ & $2\mathrm{T}$ & Binary tetrahedral
    \\
    \hline
    $\mathbb{E}_{\mathrlap{7}}$ & $2\mathrm{O}$ & Binary octahedral
    \\
    \hline
    $\mathbb{E}_{\mathrlap{8}}$ & $2\mathrm{I}$ & Binary icosahedral
    \\
    \hline
  \end{tabular}
  \end{center}
The full proof for the case of finite subgroups of $\mathrm{SL}(2,\mathbb{C})$ is given in \cite{MBD1916}, recalled in detail in \cite[Sec. 2]{Serrano14}.
Full proof for the case of $\mathrm{SO}(3)$ is also spelled out in \cite[Theorem 11]{Rees05}; from this the proof for the case of $\mathrm{SU}(2)$
is spelled out in \cite[Theorem 4]{Keenan03}.
\end{remark}

\begin{defn}[Hopf fibration]
  \label{HopfFibration}
  Let $\mathbb{K}$ be one of the four normed division algebras (Example \ref{TheFourRealNormedDivisionAlgebras}).
  Then the corresponding \emph{Hopf fibration} is the map between unit spheres given by
  \begin{equation}
    \label{HopfMap}
    \xymatrix@R=-2pt{
      S(\mathbb{K}^2)
      \ar[rrr]^-{H_{\mathbb{K}}}
      &&&
      S\left( \mathbb{R} \oplus \mathbb{K} \right)
      \\
      (x,y)
        \ar@{|->}[rrr]
        &&&
      \left( \vert y\vert^2 - \vert x\vert^2,   2 \, x  \overline{y} \right)\,,
    }
  \end{equation}
  where $S(V)$ denotes the unit sphere inside the normed vector space $V$. The image
  lies in $S(\R \oplus \K)$ because the normed division algebra is normed.
\end{defn}
Hence we have
\begin{center}
\begin{tabular}{|c|ccc|}
  \hline
  \begin{tabular}{c}
    \bf Normed
    \\
    \bf algebra
  \end{tabular}
  &
  \multicolumn{3}{c|}{
    \bf Hopf fibration

    }
  \\
  \hline
  \hline
  $\mathbb{R}$ & $S^1$ &  $\overset{\cdot 2}{\longrightarrow}$ &  $S^1$
  \\
  \hline
  $\mathbb{C}$ & $S^3$ & $\overset{}{\longrightarrow}$ &  $S^2$
  \\
  \hline
  $\mathbb{H}$ & $S^7$ &  $\overset{}{\longrightarrow}$ &  $S^4$
  \\
  \hline
  $\mathbb{O}$ & $S^{15}$ & $\overset{}{\longrightarrow}$ &  $S^8$
  \\
  \hline
\end{tabular}
\end{center}

The key statement for us is the following:
\begin{prop}[Real Spin representations via real normed division algebras
(see {\cite{BaezHuerta10, BaezHuerta11}})]
  \label{RealSpinRepresentationFromDivisionAlgebra}
  Let $\K \in \{\mathbb{R}, \mathbb{C}, \mathbb{H}, \mathbb{O}\}$ be one of the
  normed division algebras (Example \ref{TheFourRealNormedDivisionAlgebras}).
  Write $\h_2(\K)$ for the real vector space of $2 \times 2$ hermitian matrices with
  coefficients in $\K$:
  \[ \h_2(\K) := \left\{ \begin{pmatrix} t + x & \overline{y} \\ y & t -
      x \end{pmatrix} : t,x \in \R, \, y \in \K \right\} . \]
  Let $k$ denote the dimension of $\K$. Then:
  \begin{enumerate}

  \vspace{-2mm}
    \item There is an isomorphism of inner product spaces (``forming Pauli matrices over $\mathbb{K}$'')
    $$
      \left(
        \h_2(\K), -\mathrm{det}
      \right)
        \stackrel{\simeq}{\longrightarrow}
        (\mathbb{R}^{k+1,1}, \eta)
          $$
    identifying $\mathbb{R}^{k+1,1}$ equipped with its Minkowski inner product
    $$
      \eta(A,B) := -A^0 B^0 + A^1 B^1 + \cdots + A^{k+1} B^{k+1}, \mbox{ for } A, B \in \R^{k+1,1}
    $$
    with the space of $2 \times 2$ hermitian matrices equipped with the negative of
    the determinant operation.

\vspace{-2mm}
  \item Let ${\bf N}$ and $\overline{\bf N}$ both denote the vector space $\K^2$. Then ${\bf N} \oplus
    \overline{\bf N}$ is a module of the Clifford algebra $\Cl(k+1,1)$, with the action of a
    vector in $A \in \R^{k+1,1}$ given by
    \[ \Gamma(A) (\psi, \phi) = (\tilde{A}_L \phi, A_L \psi) \]
    for any element $(\psi, \phi) \in {\bf N} \oplus \overline{\bf N}$, where we are
    using the identification of vectors in $\R^{k+1,1}$ with $2 \times 2$ hermitian
    matrices. Here $\widetilde {(-)}$ is the operation $\widetilde{A} = A - \tr(A)
    1$, and $(-)_L$ denotes the linear map given by left multiplication by a matrix.

\vspace{-2mm}
  \item Realizing the Spin group $\Spin(k+1,1)$ inside the Clifford algebra
    $\Cl(k+1,1)$ by the standard construction, this induces irreducible
    representations $\rho$ and $\overline{\rho}$ of $\Spin(k+1,1)$ on ${\bf N}$ and
    $\overline{\bf N}$, respectively. Explicitly, recall that $\Spin(k+1,1)$ is the
    subgroup of the Clifford algebra generated by products of pairs of unit vectors
    of the same sign:
    \[ \Spin(k+1, 1) = \langle AB \in \Cl(k+1,1) \, : \, A, B \in \R^{k+1,1} \mbox{
    and } \eta(A,A) = \eta(B,B) = \pm 1 \rangle . \]
    Then restricting the Clifford action to these elements, a generator $AB$ of
    $\Spin(k+1,1)$ acts as
    \[ \rho(AB) = \tilde{A}_L B_L \mbox{ on } {\bf N} \]
    and as
    \[ \overline{\rho}(AB) = A_L \tilde{B}_L \mbox{ on } \overline{\bf N}, \]
    where again $\widetilde {(-)}$ is the operation $\widetilde{A} = A - \tr(A)
    1$, and $(-)_L$ denotes the linear map given by left multiplication by a matrix.




\vspace{-2mm}
      \item Moving up by one dimension, there is an isomorphism of inner product spaces
  $$
    \left\{
      \left(
        \begin{array}{cc}
          x^0 & \widetilde{A}
          \\
          A & - x^0
        \end{array}
      \right)
      \;:\;
      a \in \mathbb{R}
      \,,
      A \in \h_2( \mathbb{K} )
    \right\}
    \;\simeq\;
    \mathbb{R}^{k+2,1}
  $$
  between the subspace on the right of $4 \times 4$ matrices over $\K$, equipped with
  the inner product given by $-\mathrm{det}(A) + a^2$, and Minkowski spacetime
  $\R^{k+2,1}$.

\vspace{-2mm}
\item Let $\mathscr{N}$ denote the vector space $\K^4$. Then $\mathscr{N}$ is
  a module of the Clifford algebra $\Cl(k+2,1)$ with the action of a vector
  $\mathcal{A} \in \R^{k+2,1}$ given by:
  \[
  \Gamma(\mathcal{A}) \Psi = \mathcal{A}_L \Psi
  \]
  for any element $\Psi \in \mathscr{N}$. Here we are using the identification of
  vectors in $\R^{k+2,1}$ with a subspace of $4 \times 4$ matrices over $\K$, and
  $(-)_L$ denotes the linear operator given by left multiplication by a matrix.

\vspace{-2mm}
\item Realizing the Spin group $\Spin(k+2,1)$ inside the Clifford algebra
  $\Cl(k+2,1)$ by the standard construction, this induces an irreducible representation
  $\rho$ of $\Spin(k+2,1)$ on $\mathscr{N}$.
  Explicitly, recall that $\Spin(k+2,1)$ is the subgroup of the
  Clifford algebra generated by products of pairs of unit vectors of the same sign:
  \[ \Spin(k+2, 1) = \langle \A \B \in \Cl(k+2,1) \, : \, \A, \B \in \R^{k+2,1}
  \mbox{ and } \eta(\A,\A) = \eta(\B,\B) = \pm 1 \rangle . \]
  Then restricting the Clifford action to these elements, a generator $\A\B$ of
  $\Spin(k+2,1)$ acts as
  \[
   \rho(\A\B) = \A_L \B_L \mbox{ on } \mathscr{N}
   \]
  where again $(-)_L$ denotes the linear map given by left multiplication by a
  matrix.

\vspace{-2mm}
\item The representations ${\bf N}$, $\overline{\bf N}$ and $\mathscr{N}$
  constructed above are the irreducible real spinor representations in the following
  table (and as in Example \ref{RelevantExamplesOfRealSpinRepresentations}):
 \begin{center}
\begin{tabular}{|c|c|ccccc|c|}
  \hline
  \multirow{2}{*}{
  \begin{tabular}{c}
   \bf Dimension
   \\
   $D = p + 1$
  \end{tabular}
  }
  &
  \multirow{2}{*}{
  \begin{tabular}{c}
    \bf Real irreps of
    \\
    $\mathrm{Spin}(p,1)$
  \end{tabular}
  }
  &
  \multicolumn{5}{c|}{
    \begin{tabular}{c}
      \bf Clifford modules via
      \\
      \bf real normed division algebra
    \end{tabular}
  }
  \\
  &&
  $\{\widetilde{\sigma}\sigma\}$
  &&
  \scalebox{.6}{
  $
    \left\{
      \left(
        \begin{array}{cc}
          \!\!\!1\! & \!0\!\!\!
          \\
          \!\!\!0\! & \!-1\!\!\!
        \end{array}
      \right)\;, \;
      \left\{
      \left(
        \begin{array}{cc}
          \!\!\!0\! & \!\tilde \sigma\!\!\!
          \\
          \!\!\!\sigma\! & \!0\!\!\!
        \end{array}
      \right)
      \right\}
  \right\}$
  }
  &&
  $\{\sigma \widetilde \sigma\}$
  \\
  \hline
  \hline
  $10+1$ & $\phantom{{A \top A} \atop {A \atop A}}\mathbf{32}\phantom{{A \top A} \atop {A \atop A}}$ &  && $\mathbb{O}^4$ &&
  \\
  \hline
  $9+1$ & $\mathbf{16}$, $\overline{\mathbf{16}}$ & $\mathbb{O}^2$ && $\phantom{{A \top A} \atop {A \atop A}}$ && $\mathbb{O}^2$
  \\
  \hline
  $6+1$ & $\phantom{{A \top A} \atop {A \atop A}}\mathbf{16}\phantom{{A \top A} \atop {A \atop A}}$ &  && $\mathbb{H}^4$ &&
  \\
  \hline
  $5+1$ & $\mathbf{8}$, $\overline{\mathbf{8}}$ & $\mathbb{H}^2$ && $\phantom{{A \top A} \atop {A \atop A}}$  && $\mathbb{H}^2$
  \\
  \hline
  $4+1$ & $\phantom{{A \top A} \atop {A \atop A}}\mathbf{8}\phantom{{A \top A} \atop {A \atop A}}$ &  && $\mathbb{C}^4$ &&
  \\
  \hline
  $3+1$ & $\phantom{{A \top A} \atop {A \atop A}}\mathbf{4}\phantom{{A \top A} \atop {A \atop A}}$ & $\mathbb{C}^2$ &$\!\!\!\!\simeq\!\!\!\!\!\!\!\!\!\!$& $\mathbb{R}^4$ &$\!\!\!\!\!\!\!\!\!\!\simeq\!\!\!\!$& $\mathbb{C}^2$
  \\
  \hline
  $2+1$ & $\phantom{{A \top A} \atop {A \atop A}} \mathbf{2} \phantom{{A \top A} \atop {A \atop A}}$ & $\mathbb{R}^2$ && $\simeq$ && $\mathbb{R}^2$
  \\
  \hline
\end{tabular}
\end{center}
Here the symbol ``$\simeq$'' in the last two lines denotes isomorphism of \emph{real} representations.
  \end{enumerate}
\end{prop}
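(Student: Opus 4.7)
\medskip

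\noindent \textbf{Proof plan.} The approach is to verify each of the seven clauses by direct algebraic computation, with the non-associativity of $\mathbb{O}$ handled uniformly by Artin's theorem (any subalgebra of an alternative algebra generated by two elements is associative). The structure will be: bosonic identification first, then Clifford action, then Spin action, then the one-dimension-higher analogue, and finally matching dimensions with the table.

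For clause (1), I will write a generic hermitian matrix $A = \bigl(\begin{smallmatrix} t+x & \bar y \\ y & t-x\end{smallmatrix}\bigr)$ with $t,x \in \mathbb{R}$, $y \in \K$ and compute $-\det(A) = -t^2 + x^2 + y\bar y$. Since $\K$ is a normed division algebra, $y\bar y = |y|^2$ agrees with the Euclidean norm on $\K \simeq_\mathbb{R} \mathbb{R}^k$, giving the Minkowski form on $\mathbb{R}^{k+1,1}$ via the coordinates $(t,x,y) \leftrightarrow (A^0,A^1,\ldots,A^{k+1})$.

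For clauses (2)--(3), the key computation is verifying the Clifford relation $\Gamma(A)^2 = \eta(A,A)\cdot \mathrm{id}$ (with the sign convention of \eqref{GeneralCliffordAlgebraSigns}). Squaring the proposed action gives $\Gamma(A)^2(\psi,\phi) = (\tilde A_L A_L\psi,\, A_L \tilde A_L\phi)$, so it suffices to show $\tilde A A = -\det(A) = \eta(A,A)$ as operators via left multiplication on $\K^2$. For $2\times 2$ hermitian $A$, Cayley--Hamilton reads $A^2 - \mathrm{tr}(A)\,A + \det(A)\,\mathbf{1} = 0$, hence $\tilde A A = A^2 - \mathrm{tr}(A)\,A = -\det(A)$. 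The only subtle point is the octonionic case: the expansion of $(A_L)^2$ and $(A\tilde A)_L$ on a vector in $\K^2$ only involves products of at most three elements drawn from the two octonionic entries of $A$ (and the spinor component), so Artin's theorem rescues associativity for the argument. Polarizing gives the full Clifford relation $\Gamma(A)\Gamma(B) + \Gamma(B)\Gamma(A) = 2\eta(A,B)$. The induced Spin action is then read off by substituting a generator $AB$ of $\mathrm{Spin}(k+1,1)$ (pair of unit vectors of like sign) into the Clifford representation, which directly gives $\rho(AB) = \tilde A_L B_L$ on $\mathbf{N}$ and $\bar\rho(AB) = A_L \tilde B_L$ on $\overline{\mathbf{N}}$.

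For clauses (4)--(6), I will repeat the same pattern one dimension higher. Clause (4) is immediate: the determinant of the block matrix $\mathcal{A} = \bigl(\begin{smallmatrix} a & \tilde A \\ A & -a\end{smallmatrix}\bigr)$ computed formally equals $-a^2 - \det(A)$ (using $A\tilde A = -\det(A)$ from above), recovering $\eta = -a^2 + \eta_{k+1,1}$ on $\mathbb{R}^{k+2,1}$. For clause (5), I check $\Gamma(\mathcal A)^2 = \mathcal A_L \mathcal A_L = \eta(\mathcal A,\mathcal A)\cdot\mathrm{id}$ on $\K^4$, which reduces to the identity $\mathcal A^2 = -\eta(\mathcal A,\mathcal A)\,\mathbf 1_{4\times 4}$ in matrix entries; again the octonionic case is handled by Artin's theorem, since each output entry of $\mathcal A^2 \Psi$ involves only two octonionic parameters together with one spinor component. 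Clause (6) follows by the same substitution $AB \mapsto A_L B_L$.

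For clause (7), dimension counting finishes the proof: $\mathbf{N} \simeq \K^2$ has real dimension $2k$ and $\mathscr{N} \simeq \K^4$ has real dimension $4k$, matching the irreducible real spinor dimensions in the table ($k = 1,2,4,8$ giving $2,4,8,16$ and $4,8,16,32$ respectively). Irreducibility in each case follows because these dimensions coincide with the known minimal real spinor dimensions for the given signatures (cf.\ Example~\ref{RelevantExamplesOfRealSpinRepresentations}), forcing our representations to be the unique (up to conjugation/isomorphism) irreducibles. For the lowest-dimensional entries ($D = 2+1, 3+1$) I will also spell out the classical isomorphisms $\mathbb{R}^2$ and $\mathbb{C}^2 \simeq \mathbb{R}^4$ exhibited in the table.

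The main obstacle will be the octonionic verification of the Clifford relations, since one must be confident that every product appearing in the squared action respects the hypotheses of Artin's theorem. The safest bookkeeping is to expand both $\Gamma(A)^2$ and $\Gamma(\mathcal A)^2$ entrywise and check that each matrix entry is a product of octonions drawn from at most two generators (the off-diagonal parameter of the matrix and the spinor entry being acted on), rather than trying to manipulate $\K^{n\times n}$ as an associative matrix algebra globally---which it is not when $\K = \mathbb{O}$.
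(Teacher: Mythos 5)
The paper offers no proof of this proposition: it is recalled from Baez--Huerta \cite{BaezHuerta10, BaezHuerta11}, and your plan reconstructs essentially the argument of those references -- identify $-\det$ with the Minkowski form, verify the single-vector Clifford identity via the trace-reversal/Cayley--Hamilton relation $\tilde A A = A\tilde A = -\det(A)\,\mathbf{1}$, handle $\K=\mathbb{O}$ entrywise by Artin's theorem, obtain the two-vector relation by polarization (which only uses composition of linear operators on $\K^2$ resp.\ $\K^4$ and is therefore unproblematic), read off the Spin generators, and match dimensions against the known real spinor irreducibles. That is the right route and there is no conceptual gap; the only caveat on irreducibility is to keep the dimension count non-circular, i.e.\ to cite the classification underlying Example \ref{RelevantExamplesOfRealSpinRepresentations} (which comes from the independent Majorana construction of Prop.\ \ref{RealSpinorsViaMajoranaConditionsOnDiracRepresentations}) and note that a Clifford module restricted to $\Spin(p,1)$ decomposes into spinorial irreps of those minimal dimensions, so $\mathbf{N}$, $\overline{\mathbf{N}}$, $\mathscr{N}$ cannot split.

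Three slips in the write-up should be fixed before expanding the sketch. First, the formal determinant of $\bigl(\begin{smallmatrix} a & \tilde A \\ A & -a \end{smallmatrix}\bigr)$ is $-a^2 - \tilde A A = -a^2 + \det(A)$, not $-a^2 - \det(A)$; its negative is $a^2 - \det(A) = a^2 + \eta_{k+1,1}(A,A)$, so the new direction is \emph{spacelike}, in agreement with the stated form $-\det(A) + a^2$, whereas your ``$\eta = -a^2 + \eta_{k+1,1}$'' would give signature $(k+1,2)$. Second, with the paper's convention \eqref{GeneralCliffordAlgebraSigns}, $\mathbf{\Gamma}_a\mathbf{\Gamma}_b + \mathbf{\Gamma}_b\mathbf{\Gamma}_a = +2\eta_{ab}$, the identity to check in clause (5) is $\mathcal{A}_L^{\,2} = +\eta(\mathcal{A},\mathcal{A})\,\mathrm{id}$ -- which is exactly what the block computation yields, $\mathcal{A}^2 = (a^2 - \det A)\,\mathbf{1}$ -- so your stated target $\mathcal{A}^2 = -\eta(\mathcal{A},\mathcal{A})\,\mathbf{1}$ is both sign-wrong and inconsistent with the convention you correctly used one dimension lower. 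Third, in the octonionic bookkeeping each entry of $\mathcal{A}$ involves only the single octonionic parameter $y$ (all matrix entries lie in the associative subalgebra generated by $y$), so together with one spinor component there are exactly \emph{two} generators and Artin applies; your phrase ``two octonionic parameters together with one spinor component'' describes three generators, for which Artin would not suffice -- which is precisely why the mixed relation for two distinct vectors must be obtained by polarizing the single-vector identity, in the $(k+2,1)$ case just as you already do in the $(k+1,1)$ case.
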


\begin{example}[The octonionic presentation of $\mathbf{32}$]
\label{OctonionPresentationOf32OfPin11}
We can identify the 32-dimensional vector space $\bf 32$ with the space $\O^4$:
$$
  \mathbf{32} \simeq \O^4.
$$
Under this identification, the Clifford algebra $\mathrm{Cl}(10,1)$ (see \eqref{GeneralCliffordAlgebraSigns})
acts on $\bf 32$ by left multiplication by the following $4 \times 4$ matrices with entries in the octonions, written as $2 \times 2$ matrices with $2 \times 2$ blocks:
\begin{equation}
  \label{11dGammaOct}
  \mathbf{\Gamma}^0 := \begin{pmatrix} 0 & -1 \\ 1 & 0 \end{pmatrix}, \quad
  \mathbf{\Gamma}^1 := \begin{pmatrix} 0 & \tau \\ \tau & 0 \end{pmatrix}, \quad
  \mathbf{\Gamma}^2 := \begin{pmatrix} 0 & \varepsilon \\ \varepsilon & 0 \end{pmatrix}, \quad
  \mathbf{\Gamma}^{i+2} := \begin{pmatrix} 0 & Je_i \\ Je_i & 0 \end{pmatrix}, \quad
  \mathbf{\Gamma}^{10} := \begin{pmatrix} 1 & 0 \\ 0 & -1 \end{pmatrix} .
\end{equation}
Here, besides the imaginary octonions $e_1, \ldots, e_7$, we have used the $2 \times 2$ real matrices:
\[
\tau := \begin{pmatrix} 1 & 0 \\ 0 & -1 \end{pmatrix}, \quad
\varepsilon := \begin{pmatrix} 0 & 1 \\ 1 & 0 \end{pmatrix}, \quad
J := \begin{pmatrix} 0 & -1 \\ 1 & 0 \end{pmatrix} .
\]
Defining the tensor product of matrices $A$ and $B$ to be the matrix
\[ A \otimes B =
\begin{pmatrix}
  a_{11} B & a_{12} B & \cdots & a_{1n} B \\
  a_{21} B & a_{22} B & \cdots & a_{2n} B \\
  \vdots & & \ddots & \vdots  \\
  a_{n1} B & a_{n2} B & \cdots & a_{nn} B \\
\end{pmatrix},
\]
we can rewrite the octonionic gamma matrices (\ref{11dGammaOct}) as follows:
\begin{equation}
  \label{11dGammaAsOctonionicAndTensor}
  \mathbf{\Gamma}^0 = J \otimes 1, \qquad
  \mathbf{\Gamma}^1 = \varepsilon \otimes \tau, \qquad
  \mathbf{\Gamma}^2 = \varepsilon \otimes \varepsilon, \qquad
  \mathbf{\Gamma}^{i+2} = \varepsilon \otimes Je_i, \qquad
  \mathbf{\Gamma}^{10} = \tau \otimes 1.
\end{equation}
\end{example}

\subsection{Homotopy and cohomology}
 \label{HomotopyTheory}

For reference, here we collect some basics of abstract homotopy theory
and of the associated generalized cohomology theories.

\subsubsection*{Homotopy theory}

We briefly recall some basics of homotopy theory, as well as some basic examples of relevance in the main text.
For a self-contained introductory account of abstract homotopy theory see \cite{Schreiber17b}. For minimal
background on language of categories required,  see \cite[around Remark 3.3]{Schreiber17a}, and for a
comprehensive reference see \cite{Borceux94}. For going deep and far into homotopy theory, see \cite{Lurie09}.
For exposition of the foundational role of homotopy theory see \cite{Shulman17}.

\begin{defn}[Category with weak equivalences (e.g. {\cite[Def.\ 2.1]{Schreiber17b}})]
  \label{CategoryWithWeakEquivalences}
  A \emph{category with weak equivalences} is a category $\mathcal{C}$
  equipped with a choice of sub-class $W \subset \mathrm{Mor}(\mathcal{C})$ of its
  morphisms, called the \emph{weak equivalences}, such that
  \begin{enumerate}

  \vspace{-3mm}
    \item $W$ contains all the identity morphisms;

  \vspace{-3mm}
    \item if $f,g \in W$ are composable with composite $g \circ f $, and if two elements in the set $\{f, g,  g \circ f \}$
      are weak equivalences, then also the third is.
  \end{enumerate}
  A category with weak equivalences may also be called a \emph{homotopy theory}.
\end{defn}
\begin{defn}[Homotopy categories (e.g. {\cite[Def.\ 2.30]{Schreiber17b}})]
  \label{CategoryHomotopy}
  Given a category with weak equivalences $(\mathcal{C},W)$ (Def.\ \ref{CategoryWithWeakEquivalences}), then its \emph{homotopy category} is the
  category $\mathrm(Ho)( \mathcal{C}[W^{-1}] )$ equipped with a functor
  \begin{equation}
    \label{LocalizationFunctor}
    \gamma \maps \mathcal{C} \longrightarrow \mathrm{Ho}\left(\mathcal{C}\left[W^{-1}\right]\right)
    \,,
  \end{equation}
  called the \emph{localization functor},
  such that
  \begin{enumerate}

\vspace{-3mm}
    \item $\gamma$ sends weak equivalences to actual isomorphisms;

 \vspace{-3mm}
    \item $\left(\mathrm{Ho}\left(\mathcal{C}\left[W^{-1}\right]\right)), \gamma\right)$ is the universal solution with this property, in that if
    $
      F \maps \mathcal{C} \rightarrow \mathcal{D}
    $
    is any other functor, to any other category, such that it sends the weak equivalences $W$ to actual isomorphisms,
    then $F$ actually factors through $\gamma$, up to natural isomorphism
    $$
      \xymatrix{
        \mathcal{C} \ar[d]_\gamma \ar[rr]^-F_<<<<<<<<<<{\ }="s" && \mathcal{D}
        \\
        \mathrm{Ho}\left( \mathcal{C}\left[W^{-1}\right] \right)
        \ar@{-->}[urr]^{\ }="t"
        \ar@{=>}_{\simeq} "s"; "t"
      }
    $$
    and this factorization is unique up to unique isomorphism.
  \end{enumerate}
\end{defn}

The following are basic examples of homotopy theories.

\begin{defn}[Compactly generated topological spaces]
  \label{TopologicalSpaces}
  By a \emph{topological space} we will always mean a \emph{compactly generated topological space} (e.g. {\cite[Def.\ 3.35]{Schreiber17b}}).
  We write ``$\mathrm{Spaces}$'' for the category whose objects are compactly generated topological spaces, and whose morphism are continuous functions.
  For $X,Y$ two such topological spaces, the space
  \begin{equation}
    \label{MappingSpace}
    \mathrm{Maps}(X,Y) \in \mathrm{Spaces}
  \end{equation}
  of continuous functions between them is itself
  naturally a compactly generated topological space (e.g. {\cite[Def.\ 3.39]{Schreiber17b}}) satisfying the universal properties
  of a mapping space (e.g. {\cite[Def.\ 3.41]{Schreiber17b}}).
\end{defn}
\begin{defn}[Classical homotopy theory (e.g. {\cite[Def.\ 3.11]{Schreiber17b}})]
  \label{ClassicalHomotopyCategories}
    A continuous function $f \maps X \to Y$ between topological spaces (Def.\ \ref{TopologicalSpaces}) is called
    a \emph{weak homotopy equivalence}  if it induce a bijection between connected components
  $$
    \pi_0( f ) \maps \pi_0\left(X_1\right)
   \xymatrix{\ar[r]^{\simeq}&} \pi_0\left(X_2\right)
  $$
  and, for every $n \in \mathbb{N}$, $n \geq 1$ and every base point $x \in X$,
   it induces an isomorphism between the $n$th homotopy groups
 $$
   \pi_n( f,x ) \maps \pi_n\left(X_1,x\right)
    \xymatrix{\ar[r]^{\simeq}&}
    \pi_n\left(X_2,f(x)\right)
    \,.
  $$
The resulting homotopy category (Def.\ \ref{CategoryHomotopy})
$$
  \mathrm{Ho}\left(
    \mathrm{Spaces}
  \right)
  \;:=\;
  \mathrm{Ho}\big(\mathrm{Spaces}\big[ \left\{\mbox{weak homotopy equivalences}\right\}^{-1} \big]\big)
$$
is also called the \emph{classical homotopy category}.
\end{defn}
\begin{example}[Based path spaces]
  \label{BasedPathSpaces}
  For $X$ any topological space (Def.\ \ref{TopologicalSpaces}), equipped with a base point $x \in X$, write
  $$
    P_x X \subset \mathrm{Maps}([0,1], X)
  $$
  for its \emph{based path space}, the subspace of the space of continuous functions
  $
    \gamma \maps [0,1] \longrightarrow X
  $
  from the interval to $X$ which take  $0 \in [0,1]$ to the base point $\gamma(0) = x$.
  There is then the endpoint evaluation map
  $$
    \xymatrix@R=-2pt{
      P_x X
      \ar[r]^{\mathrm{ev}_1}
      &
      X\;.
      \\
      \gamma \ar@{|->}[r] & \gamma(1)
    }
  $$
  Moreover, there is the unique map to the point
  $
    P_x X \longrightarrow \ast
  $.
  This is a weak homotopy equivalence (Def.\ \ref{ClassicalHomotopyCategories}).
  Observe that a continuous function $\widehat f$ into a based path space is equivalently a continuous function into $X$ equipped with a homotopy (Def.\ \ref{EquivariantHomotopy})
  to the function that is constant on the base point:
  $$
    \raisebox{20pt}{
    \xymatrix{
      &&
        P_x x
        \ar[d]^{\mathrm{ev_1}}
      \\
      Y
      \ar[urr]^{\widehat f}
      \ar[rr]_{f}
      &&
      X
    }
    }
    \phantom{AAAA}
    \xymatrix{
      \ar@{<->}[rrr]^{ (\widehat f(y))(t) = \eta(y,t) }
      &&&
    }
    \phantom{AAAA}
    \raisebox{20pt}{
    \xymatrix{
      &&
        \ast
        \ar[d]^{x}
      \\
      Y
      \ar@/^1pc/[urr]_>>>>>{\ }="s"
      \ar[rr]_{f}^{\ }="t"
      &&
      X
      \ar@{=>}^\eta "s"; "t"
    }
    }
  $$
\end{example}

It turns out that the classical homotopy category is an extremely rich structure. In order to get a handle
 on these categories, one may filter them in various ways such as
to study homotopy types in controlled approximations. A key instance of this is the rational approximation.
We recall this as Prop. \ref{SullivanEquivalence} below.

\begin{defn}[Rational homotopy theory (e.g. {\cite{Hess06}})]
  \label{RationalHomotopyTheory}

    \vspace{-3mm}
\item  {\bf (i)}
  A continuous function $f \maps X \to Y$ between topological spaces (Def.\ \ref{TopologicalSpaces}) is called a \emph{rational weak homotopy equivalences}
  if it induces a bijection between connected components
  $$
    \pi_0( f ) \maps \pi_0\left(X_1\right)
   \xymatrix{\ar[r]^{\simeq}&} \pi_0\left(X_2\right)
  $$
  and for every $n \in \mathbb{N}$, $n \geq 1$ and every base point $x \in X^H$ they induce an isomorphism between the rationalized $n$th homotopy groups
 $$
   \pi_n( f,x ) \otimes \mathbb{Q}
     \maps \pi_n\left(X_1,x\right)\otimes \mathbb{Q}
    \xymatrix{\ar[r]^{\simeq}&}
    \pi_n\left(X_2,f(x)\right) \otimes \mathbb{Q}
    \,.
  $$

  \vspace{-2mm}
\item {\bf (ii)} The resulting homotopy category (Def.\ \ref{CategoryHomotopy})
$$
  \mathrm{Ho}\left(
    \mathrm{Spaces}_{\mathbb{R}}
  \right)
  \;:=\;
  \mathrm{Ho}
  \left(
    \mathrm{Spaces}\left[ \left\{ \mbox{rational weak homotopy equivalences} \right\} \right]
  \right)
$$
is also called the \emph{rational homotopy category}.

\vspace{-1mm}
\item {\bf (iii)} We also consider the full subcategory
$$
  \mathrm{Ho}\left(
    \mathrm{Spaces}_{\mathbb{Q},\mathrm{nil}, \mathrm{fin}}
  \right)
  \xymatrix{\ar@{^{(}->}[r]&}
  \mathrm{Ho}\left(
    \mathrm{Spaces}_{\mathbb{R}}
  \right)
$$
on those spaces $X$ which are
\begin{itemize}
\vspace{-2mm}
\item of  \emph{finite rational type}
i.e. $H^1(X,\mathbb{Q})$ and $\pi_{k\geq 2}(X) \otimes \mathbb{Q}$
are finite-dimensional $\mathbb{Q}$-vector spaces for all $k \geq 2$;

\vspace{-3mm}
\item \emph{nilpotent}
   i.e. the fundamental group $\pi_1(X)$ is a nilpotent group and such that its action on the higher
   rational homotopy groups is nilpotent (i.e., making them nilpotent $\pi_1(X)$-modules).
\end{itemize}
\end{defn}
The key point about rational homotopy theory (Def.\ \ref{RationalHomotopyTheory}) is that it may be modeled by dg-algebraic means:
\begin{defn}[Rational DG-algebraic homotopy theory (e.g. {\cite{Hess06}})]
  \label{dgAlgebrasAnddgModules}
  \vspace{-2mm}
\item {\bf (i)}  We write
  $\mathrm{dgcAlg}$ for the category whose objects are differential graded-commutative $\mathbb{R}$-algebras
  and whose morphisms are dg-algebra homomorphisms.
  A morphism $\phi \maps A_1 \to A_2$ is called a \emph{quasi-isomorphism} if it induces isomorphisms on all
  cochain cohomology groups:
  $$
    H^n(\phi) \maps H^n(A_1) \overset{\simeq}{\longrightarrow} H^n( A_2 )
    \,.
  $$

  \vspace{-2mm}
\item {\bf (ii)}  We write the corresponding homotopy category (Def.\ \ref{CategoryHomotopy})  as
  $$
    \mathrm{Ho}\left( \mathrm{dgcAlg}^{\mathrm{op}} \right)
    \;:=\;
    \mathrm{Ho}\big( \mathrm{dgcAlg}^{\mathrm{op}}
    \big[ \left\{ \mbox{quasi-isomorphisms}\right\}^{-1} \big] \big)
    \,.
  $$
 \item {\bf (iii)}  We also consider the full subcategory
  $$
    \mathrm{Ho}\big(\mathrm{dgcAlg}^{\mathrm{op}}_{\mathrm{fin}, \mathrm{cn}} \big)
    \xymatrix{\ar@{^{(}->}[r] &}
    \mathrm{Ho}\big(\mathrm{dgcAlg}^{\mathrm{op}}\big)
  $$
  on those algebras $A$ which are
  \begin{itemize}

  \vspace{-2mm}
    \item \emph{of finite type}
       in that they are equivalent to a DGC-algebra that is degreewise finitely generated;

    \vspace{-3mm}
    \item
     \emph{connected}  in that the unit inclusion $\mathbb{Q} \to A$ induces an isomorphism $\mathbb{Q} \simeq H^0(A)$.
  \end{itemize}

  \vspace{-3mm}
  \item {\bf (iv)}
   Finally we write\footnote{ This just reflects the fact that a map from one disjoint union of connected spaces to another is simply a tuple of
   maps between connected spaces, one from each connected component of the domain to a connected component of the codomain. }
   \begin{equation}
     \label{GrothendieckConstructionOnConnecteddgcAlgebras}
     \underset{S \in \mathrm{Set}}{\int} \mathrm{Ho}\big(\mathrm{dgcAlg}^{\mathrm{op}}_{\mathrm{fin}, \mathrm{cn}} \big)^S
   \end{equation}
   for the category whose objects are pairs consisting of a set $S$ and an $S$-indexed tuple of objects of the homotopy category
   of connected finite-type dgc-algebras, and whose morphism are pairs consisting of a function between these sets and a
   tuple of homomorphisms between the corresponding dgc-algebras.
\end{defn}
The following is the classical statement of rational homotopy theory:
\begin{prop}[DG-model for rational homotopy theory ({\cite{Sullivan77, BousfieldGuggenheim76}}, see {\cite[Thm 2.1.10]{VBM18}})]
  \label{SullivanEquivalence}

  \vspace{-2mm}
\item {\bf (i)} There is an adjunction (\cite[Sec. 3]{Borceux94})
  $$
    \xymatrix{
      \mathrm{Ho}(\mathrm{Spaces})
      \ar@{->}@<+6pt>[rr]^{\mathcal{O}}
      \ar@{<-}@<-6pt>[rr]_{\mathcal{S}}^{\bot}
      &&
      \mathrm{Ho}\left( \mathrm{dgcAlg}^{\mathrm{op}}\right)
    }
  $$
  between the classical homotopy category of topological spaces
  (Def.\ \ref{ClassicalHomotopyCategories}) and the opposite of the
  homotopy category of DGC-algebras (Def.\ \ref{dgAlgebrasAnddgModules}), where
  $\mathcal{O}$ denotes the derived functor of forming the DGC-algebra of polynomial differential
  forms of a topological space.

  \vspace{-1mm}
  \item {\bf (ii)} This adjunction restricts to an equivalence of categories (\cite[Sec. 1]{Borceux94})
  \begin{equation}
    \label{EquivalenceSullivan}
    \xymatrix{
      \mathrm{Ho}(\mathrm{Spaces}_{\mathbb{Q}, \mathrm{cn},\mathrm{nil},\mathrm{fin}})
      \ar@{->}@<+6pt>[rr]^{\mathcal{O}}
      \ar@{<-}@<-6pt>[rr]_{\mathcal{S}}^{\simeq}
      &&
      \mathrm{Ho}\big(\mathrm{dgcAlg}^{\mathrm{op}}_{\mathrm{fin}, \mathrm{cn}}\big)
    }
  \end{equation}
  between the \emph{rational} homotopy category of connected nilpotent spaces of finite type (Def.\ \ref{RationalHomotopyTheory})
  and the homotopy category of connected DGC-algebras of finite type (Def.\ \ref{dgAlgebrasAnddgModules}).
  \item {\bf (iii)} Dropping the connectedness assumption on the left, this extends to an equivalence
  \begin{equation}
    \label{NonConnectedSullivanEquivalence}
    \xymatrix{
      \mathrm{Ho}(\mathrm{Spaces}_{\mathbb{Q}, \mathrm{nil},\mathrm{fin}})
      \ar@{->}@<+6pt>[rr]^{\mathcal{O}}
      \ar@{<-}@<-6pt>[rr]_{\mathcal{S}}^{\simeq}
      &&
      \underset{S \in \mathrm{Set}}{\int}
        \mathrm{Ho}\big(\mathrm{dgcAlg}^{\mathrm{op}}_{\mathrm{fin}, \mathrm{cn}}\big)^S
    }
  \end{equation}
  with the category \eqref{GrothendieckConstructionOnConnecteddgcAlgebras} on the right.
\end{prop}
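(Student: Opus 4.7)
\medskip
\noindent \textbf{Proof plan.} The plan is to follow the route of Bousfield--Gugenheim, promoting the classical result of Sullivan by first producing Quillen adjunctions at the level of model categories and only then passing to homotopy categories. First I would introduce Sullivan's simplicial cochain object: the simplicial dgc-algebra $A_{\mathrm{PL}}^\bullet$ whose $n$-simplex component is the dgc-algebra of polynomial differential forms on the standard $n$-simplex with rational (here real) coefficients. This simultaneously produces two functors. On the one hand, for a topological space $X$ one sets
\begin{equation*}
  \mathcal{O}(X) \;:=\; \mathrm{Hom}_{\mathrm{sSet}}\big(\mathrm{Sing}(X),\, A_{\mathrm{PL}}^\bullet\big) \;\in\; \mathrm{dgcAlg},
\end{equation*}
the dgc-algebra of PL polynomial forms on the singular simplicial set of $X$. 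On the other hand, for $A \in \mathrm{dgcAlg}$ one defines the spatial realization
\begin{equation*}
  \mathcal{S}(A) \;:=\; \big|\mathrm{Hom}_{\mathrm{dgcAlg}}(A,\, A_{\mathrm{PL}}^\bullet)\big|,
\end{equation*}
a topological space. By construction $\mathcal{O} \dashv \mathcal{S}$ is an adjoint pair between topological spaces and $\mathrm{dgcAlg}^{\mathrm{op}}$, giving part~(i) once I verify that both functors preserve weak equivalences up to resolution (i.e.\ are Quillen functors with respect to Quillen's model structure on $\mathrm{Spaces}$ and the Bousfield--Gugenheim model structure on $\mathrm{dgcAlg}$ where fibrations are surjections and weak equivalences are quasi-isomorphisms).

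For part (ii) I would invoke the machinery of \emph{minimal Sullivan models}: every connected dgc-algebra $A$ of finite type with $H^1(A)$ nilpotent in an appropriate sense admits a quasi-isomorphism $\mathcal{M}_A \xrightarrow{\simeq} A$ from a minimal Sullivan algebra $\mathcal{M}_A = (\Lambda V, d)$, where $V$ is a graded vector space of finite type and $d$ takes values in decomposables; moreover $\mathcal{M}_A$ is unique up to isomorphism. Dually, for a nilpotent space $X$ of finite rational type one constructs the Postnikov tower and lifts it stage-by-stage to a minimal Sullivan model. The key claim to verify is that the unit $X \to \mathcal{S}\mathcal{O}(X)$ is the $\mathbb{Q}$-localization on such $X$, and that the counit $\mathcal{O}\mathcal{S}(A) \to A$ is a quasi-isomorphism on connected finite-type $A$. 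I would prove both simultaneously by induction on the Postnikov / Sullivan tower: the base case is the Eilenberg--MacLane space $K(\mathbb{Q},n)$, whose minimal model $(\Lambda(v_n), 0)$ realizes (via $\mathcal{S}$) to the rational Eilenberg--MacLane space, and where $\mathcal{O}$ computes its rational cohomology by the PL de Rham theorem $H^\bullet(\mathcal{O}(X)) \cong H^\bullet(X;\mathbb{R})$. The inductive step is a standard homotopy pullback / cofibration argument in both categories, using that $\mathcal{O}$ sends Postnikov principal fibrations to Sullivan elementary extensions.

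Part (iii) is essentially formal: a general (non-connected) space $X$ in the relevant class decomposes as the coproduct $X = \coprod_{s\in\pi_0(X)} X_s$ of its connected components, each of which is nilpotent and of finite rational type. Since both $\mathcal{O}$ and $\mathcal{S}$ send coproducts of spaces to products of dgc-algebras (i.e.\ coproducts in $\mathrm{dgcAlg}^{\mathrm{op}}$), part (ii) applied componentwise upgrades to the equivalence \eqref{NonConnectedSullivanEquivalence} with the indexed-tuple category on the right, which is exactly the Grothendieck construction accounting for $\pi_0$.

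The main obstacle is the PL de Rham theorem $H^\bullet(\mathcal{O}(X);\mathbb{R}) \cong H^\bullet(X;\mathbb{R})$ and the inductive step comparing minimal Sullivan models with rational Postnikov towers; this is where finiteness and nilpotence are used essentially, and where the argument is most delicate (in particular one must verify, via obstruction theory, that $\mathcal{S}$ realizes an elementary Hirsch extension $\Lambda V \to \Lambda V \otimes \Lambda(v)$ as a principal $K(\mathbb{Q},|v|)$-fibration). All of this is classical; since the present paper only \emph{cites} the statement to justify transporting rational homotopy types into dgc-algebraic language, I would simply sketch these ingredients and refer to \cite{Sullivan77, BousfieldGuggenheim76} and the textbook account in \cite{VBM18} for the full verification.
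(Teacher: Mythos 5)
Your sketch is the standard Bousfield--Gugenheim argument (PL polynomial forms adjunction, minimal Sullivan models versus rational Postnikov towers for the connected finite-type nilpotent case, and the formal $\pi_0$-decomposition for the non-connected case), which is precisely the route taken in the references \cite{Sullivan77, BousfieldGuggenheim76, VBM18} that the paper cites; the paper itself offers no proof beyond these citations. So your proposal is correct and coincides in approach with the proof the paper relies on.
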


\begin{example}[Minimal DGC-algebra model for the $n$-spheres]
  \label{SpheredgcAlgebraModel}
  Under the equivalence \ref{SullivanEquivalence} the minimal DGC-algebra models of  the $n$-spheres are, up to isomorphism
  as follows:

 \item {\bf (i)}
  The minimal dgc-algebra model for the 0-sphere consists of two copies of the plain algebra of real numbers:
  \begin{equation}
    \label{CEAlgebraFor0Sphere}
    \mathcal{O}(S^0)
    \;=\;
    \mathrm{CE}(\mathfrak{l}(S^0))
    :=
    \{\mathbb{R}, \mathbb{R}\}
    \,.
  \end{equation}

\item  {\bf (ii)}
  The minimal dgc-algebra model for the odd-dimensional spheres $S^{2n+1}$ are
  \begin{equation}
    \label{CEAlgebraFor3Sphere}
    \mathcal{O}(S^{2n+1})
    \;=\;
    \mathrm{CE}\left(\mathfrak{l}\left(S^{2n+1}\right)\right)
    :=
    \mathbb{R}[ h_{2n+1} ]/
    {\small \left(
        d h_{2n + 1}  = 0
    \right)}\;.
  \end{equation}

  \item {\bf (iii)}
  The minimal dgc-algebra model for the positive even-dimensional spheres $S^{2n+2}$ are
  \begin{equation}
    \label{CEAlgebraFor4Sphere}
    \mathcal{O}(S^{2n+2})
    \;=\;
    \mathrm{CE}\left(\mathfrak{l}\left(S^4\right)\right)
    :=
    \mathbb{R}[ \omega_{2n+2}, \omega_{4n+3} ]/
    {\small \left(
      \begin{aligned}
        d \omega_{2n+2}  & = 0
        \\[-2mm]
        d \omega_{4n+3} & = - \tfrac{1}{2} \omega_{2n+2} \wedge \omega_{2n+2}
      \end{aligned}
    \right)}\;.
  \end{equation}

\item  {\bf (iv)}
  Hence for $k \in \mathbb{N}$, there is a canonical map
  \begin{equation}
    \label{SphereRationalMap}
    \xymatrix@R=-2pt{
      S^{4k + 3}
      \ar[rr]
      &&
      S^{2 k + 2}
      \\
      0 && \omega_{2k+2} \ar@{|->}[ll]
      \\
      h_{4k+3} && \omega_{4k+3} \ar@{|->}[ll]
    }
  \end{equation}
  which represents a non-torsion homotopy class. For $k \in \{0,1,3\}$ this is the (rational image of)
  the complex, quaternionnic, or octonionic Hopf fibration (Def.\ \ref{HopfFibration}), respectively.
\end{example}

\begin{example}[DG-Cocycles as maps in rational homotopy theory]
  \label{dgCocyclesAsMaps}
  Let
  $$
    \mathrm{CE}( b^n \mathbb{R} )
    \;:=\;
    \mathbb{R}[ \underset{ \!\!\!\!\!\!\!\!\mathrm{deg} = n+ 1\!\!\!\!\!\!\!\! }{\underbrace{c}} ]
    \;\;
    \in
    \mathrm{dgcAlg}
  $$
  be the dgc-algebra (Def.\ \ref{dgAlgebrasAnddgModules}) whose underlying graded-commutative algebra
  is freely generated from a single generator in degree $n+1$, and whose differential vanishes.
  Under the Sullivan equivalence (Prop. \ref{SullivanEquivalence}) these are minimal models of the
  Eilenberg-MacLane spaces (Example \ref{ExamplesOfCohomologyTheories})
  $$
    B^{n+1} \mathbb{R} = K(\mathbb{R},n+1) \in \mathrm{Spaces}
  $$
  in that
  $
    \mathcal{O}( B^{n+1} \mathbb{R} ) \simeq \mathrm{CE}( b^n \mathbb{R} )
    $.
  Then for $A \in \mathrm{dgcAlg}$ any dgc-algebra, a dg-algebra homomorphism of the form
  $$
    \xymatrix{
      A \ar@{<-}[r]^-{ \;\;\mu^\ast } & \mathrm{CE}(b^n \mathbb{R})
    }
    ,
  $$
  which, under the Sullivan equivalence (Prop. \ref{SullivanEquivalence}), is a model for a map of spaces
  $$
    \mathcal{S}(A) \longrightarrow B^{n+1} \mathbb{R}
    \,,
  $$
  is equivalently an element $\mu \in A$ of degree $n+1$, which is closed $d \mu = 0 \; \in A$. Hence this is
  a \emph{cocycle} in the cochain cohomology of the cochain complex underlying $A$.

  Now under the Sullivan equivalence (Prop. \ref{SullivanEquivalence}), the dgc-algebra on the right is a model for the
  Eilenberg-MacLane space $K(\mathbb{R}, n+1)$
  $$
    \mathrm{CE}(b^n \mathbb{R}) \simeq \mathcal{O}( K(\mathbb{R}, n+1) )
  $$
  and hence the dg-cocycle $\mu$ is realized equivalently as map of spaces of the form
  \begin{equation}
    \label{MapFordgcocycle}
    \xymatrix{
      \mathcal{S}(A)
      \ar[rr]^-{ \mu := \mathcal{S}(\mu^\ast) }
      &&
      K(\mathbb{R}, n+1)
    }
    \,.
  \end{equation}
\end{example}

\begin{example}[DG-coboundaries as homotopies in rational homotopy theory]
  \label{dgCoboiundariesHomotopies}
  Let
  $$
    \xymatrix{
      A_1 \ar@{<-}[rr]^{ \mu_0^\ast,\, \mu_1^\ast } &&  \mathrm{CE}( b^n \mathbb{R} )
    }
  $$
  be two dg-algebra homomorphisms as in Example \ref{dgCocyclesAsMaps}, hence equivalently two
  dg-cocycles of degree $n+1$ in the given dgc-algebra $A$.
  Then a \emph{dg-homotopy} between these homomorphisms is a dg-algebra homomorphism of the form
  $$
    \xymatrix{
      A \otimes \Omega_{\mathrm{poly}}^\bullet([0,1])
      \ar@{<-}[rr]^-{ \eta^\ast }
      &&
      \mathrm{CE}( b^n \mathbb{R} )
    }
  $$
  to the tensor product algebra of $A$ with the
  de Rham algebra $\Omega_{\mathrm{poly}}^\bullet([0,1])$ of polynomial differential forms on the unit interval,
   such that its restriction to the endpoints of the interval reproduces the given homomorphisms, respectively.
  Explicitly, if we write $t \in \Omega^0_{\mathrm{poly}}([0,1])$ for the canonical coordinate function, this means
  equivalently that $\eta^\ast$  corresponds to an element
  $$
    \eta = \alpha + d t \wedge \beta
    \,,\;\;\;
    \in A \otimes \Omega^\bullet_{\mathrm{poly}}([0,1])
    \;\;\;\;
    \alpha, \beta \in A \otimes \mathbb{R}[t]
  $$
  of degree $n+1$, such that $d \eta = 0 \,\in A \otimes \Omega^\bullet_{\mathrm{poly}}([0,1])$, hence such that
  $$
    d (\alpha(t)) = 0 \;\; \in A
    \;,
    \phantom{AAA}
    d (\beta(t)) =  \frac{\partial}{\partial t} \alpha(t)\;,
  $$
  and satisfying
  $
    \alpha(0) = \mu_0
   $
   and $
    \alpha(1) = \mu_1
    $.
  For example, if $\omega \in A$ is a coboundary between the two cocycles, in the sense of the cochain cohomology of $A$
  \begin{equation}
    \label{dgCoboundary}
    d \omega = \mu_1 - \mu_0 \,\;\;\in A\;,
  \end{equation}
  then we get such an $\eta$ by setting
  $$
    \eta := (1-t) \mu_0 + t \mu_1 + dt \wedge \omega
    \,.
  $$
  Therefore, under the Sullivan equivalence (Prop. \ref{SullivanEquivalence})
  a coboundary \eqref{dgCoboundary} between dg-cocycles corresponds to a homotopy
  (Def.\ \ref{EquivariantHomotopy}) between the corresponding maps of spaces \eqref{MapFordgcocycle}:
  $$
    \xymatrix{
      \mathcal{S}(A)
      \ar@/^1.3pc/[rrr]^-{ \mu_0 }_-{\ }="s"
      \ar@/_1.3pc/[rrr]_-{ \mu_1 }^-{\ }="t"
      &&&
      K(\mathbb{R},n+1)\;.
      \ar@{=>}^{ \mathcal{S}(\eta^\ast) } "s"; "t"
    }
  $$
\end{example}

\subsubsection*{Cohomology}
\label{Coh}

Our main interest in homotopy theories (Def.\ \ref{CategoryWithWeakEquivalences}) here is that
each flavor of homotopy theory induces a corresponding \emph{generalized cohomology theory}
(Def.\ \ref{CohomoloyFromHomotopy} below). This includes \emph{Eilenberg-Steenrod-type
generalized cohomology theories} (Example \ref{ExamplesOfCohomologyTheories} below),
which are often just called ``generalized cohomology theories'', for short, but is in fact much more
general than that: all kinds of \emph{differential} and/or \emph{twisted} and/or \emph{non-abelian}
and/or \emph{equivariant} and/or \emph{orbifolded} and/or ... concepts of cohomology theories
arise via the simple Definition \ref{CohomoloyFromHomotopy} from a suitably chosen ambient
homotopy theory (see \cite{GrS} for recent developments).

\medskip
In the main text we are interested in this general concept of generalized cohomology in order to set up and study the
cohomology theory \emph{equivariant rational cohomotopy of superspaces} (Sec. \ref{ADEEquivariantRationalCohomotopy}).

\begin{defn}[Generalized cohomology theories from homotopy theory]
  \label{CohomoloyFromHomotopy}
Every homotopy theory induces a corresponding generalized cohomology theory: given a category with weak equivalences $(\mathcal{C}, W)$ (Def.\ \ref{CategoryWithWeakEquivalences})
and any object $A \in \mathcal{C}$ then

\begin{itemize}
\vspace{-3mm}
  \item a morphism $c \maps X  \to A$ in $\mathcal{C}$ is an \emph{$A$-valued cocycle on $X$};

  \vspace{-3mm}
  \item the equivalence relation on such morphisms induced by the localization functor (\ref{LocalizationFunctor}) is the \emph{coboundary relation};

  \vspace{-3mm}
  \item the image of $[c] := \gamma(c)$ in the morphisms of the homotopy category $\mathrm{Ho}\left( \mathcal{C}[W]^{-1} \right)$ (Def.\ \ref{CategoryHomotopy}) is the \emph{cohomology class} of the cocycle.
\end{itemize}
Hence the set of $A$-valued cohomology classes on $X$ is
$$
  H(X,A) := \mathrm{Hom}_{\mathrm{Ho}\left( \mathcal{C}[W]^{-1}\right)}\left( X,A\right)
  \,.
$$
\end{defn}

\begin{example}[Examples of generalized cohomology theories]
  \label{ExamplesOfCohomologyTheories}
  Examples of generalized cohomology theories arising from homotopy theories via Def.\ \ref{CohomoloyFromHomotopy} include the following:

  \begin{itemize}
  \vspace{-2mm}
    \item For $(\mathcal{C}, W)$ the category of spectra with \emph{stable weak homotopy equivalences}
    (see e.g. \cite[Def.\ I.4.1]{Schreiber17c}), the
    corresponding cohomology theories are equivalently the {\bf abelian generalized cohomology theories} in the sense of the Eilenberg-Steenrod axioms.
    This is the statement of the \emph{Brown representability theorem} (see e.g. \cite[Sec. 1]{Schreiber17d}).
    For instance
    \begin{itemize}
      \item if $A = \Sigma^n H \mathbb{Z}  \in \mathrm{Spectra}$ is an Eilenberg-MacLane spectrum (e.g. \cite[Def.\ II.6.3]{Schreiber17c}), then this is {\bf ordinary cohomology};
      \item
        if
        \begin{equation}
          \label{KTheorySpectrum}
          A := \mathrm{KU} := (\mathrm{KU}_k)_{k \in \mathbb{Z}} :=
          \left\{
            \begin{array}{ccc}
              B U \times \mathbb{Z} &\vert& k \; \mbox{even}
              \\
              U &\vert& k \; \mbox{odd}
            \end{array}
          \right.
        \end{equation}
        this is {\bf K-theory} (also called \emph{complex topological K-theory} for emphasis, to distinguish from a wealth of variants, such as \eqref{RealKTheory} below)
        which measure D-brane charge in type II string theory \cite{Witten98, FreedWitten99, MooreWitten00, EvslinSati06, Evslin06}.
    \end{itemize}

   \vspace{-3mm}
    \item For $(\mathcal{C},W)$ the category of spaces with $W$ the class of \emph{weak homotopy equivalences} (Def.\ \ref{ClassicalHomotopyCategories}),
    the corresponding cohomology theories are called {\bf non-abelian cohomology}. For instance
    \begin{itemize}
      \item if $A = B G \in \mathrm{Spaces}$ is the
    classifying space of a topological group $G$, then the corresponding cohomology theory is {\bf nonabelian $G$-cohomology} in degree 1,
    classifying $G$-principal bundles (in physics: $G$-instanton sectors);
          \item if $A = S^n \in \mathrm{Spaces}$ is an $n$-sphere, then the corresponding non-abelian cohomology theory is called {\bf cohomotopy} \cite{Spanier49}.
    \end{itemize}

  \vspace{-3mm}
    \item For $(\mathcal{C},W)$ the opposite category of dgc-algebras  with $W$ the class of
     quasi-isomorphisms (Def.\ \ref{dgAlgebrasAnddgModules}), we have that the corresponding cohomology theory is
     simply {\bf cochain cohomology} of the underlying cochain complexes (see Example \ref{dgCoboiundariesHomotopies} and Example \ref{dgCocyclesAsMaps})

    \vspace{-3mm}
    \item For $(\mathcal{C},W)$ the $G$-equivariant homotopy category (Def.\ \ref{HomotopyCategoriesOfGSpaces})
    or the category of $G$-fixed point systems (Def.\ \ref{HomotopyCategoryOfGFixedPointSystems}),
    whose homotopy categories are equivalent by Prop. \ref{ElmendorfTheorem},
    the corresponding cohomology theory is called {\bf Bredon equivariant cohomology},
    after \cite{Bredon67}. For instance:
    \begin{itemize}
      \item if $A \in \mathrm{Spaces}$ represents some cohomology theory, then
      that space equipped with a $\mathbb{Z}_2$-action (Example \ref{Z2ActionsAreInvolutions})
            \vspace{-3mm}
      \begin{equation}
        \label{RealCoefficient}
        \xymatrix{ A \ar@(ul,ur)[]^{ \mathbb{Z}_2 } } \in \mathbb{Z}_2 \mathrm{Spaces}
      \end{equation}
      represents a corresponding {\bf real cohomology theory} on \emph{real spaces} (Example \ref{RealSpace}).
      A prominent example is real K-theory \eqref{RealKTheory}.
    \end{itemize}

  \vspace{-3mm}
    \item For $(\mathcal{C},W)$ the $G$-equivariant \emph{stable} homotopy category (of spectra with $G$-actions),
     the complex K-theory spectrum \eqref{KTheorySpectrum} equipped with $\mathbb{Z}_2$-action
     \vspace{-5mm}
     \begin{equation}
       \label{RealKTheory}
       \mathrm{MR} := \xymatrix{ \mathrm{KU} \ar@(ul,ur)[]^{ \mathbb{Z}_2 } }
       ,
     \end{equation}
     where $e \neq \sigma \in \mathbb{Z}_2$ acts by complex
     conjugation, represents the real cohomology theory \eqref{RealCoefficient} called {\bf real K-theory} \cite{Atiyah66, HuKriz01}, which measures D-brane charge in
     type I string theory, hence in type II string theory in the presence of O-planes, hence on orientifolds \cite[Sec. 5.2]{Witten98}, \cite{Gukov99, Hori99, DFM09, DMR13}. See \cite{GrS2} for recent advances in differential KO-theory.
  \end{itemize}
\end{example}

\begin{remark}[Extra structure on cohomology]
  \label{ExtraStructureOnCohomology}
  For a given coefficient object $A \in \mathcal{C}$ in Def.\ \ref{CohomoloyFromHomotopy}, the induced generalized
  cohomology $H(-,A)$ a priori is only a set. This set inherits an extra algebraic structure to the extent that
  $A \in \mathrm{Ho}\left( \mathcal{C}[W]^{-1}  \right)$ is equipped with such extra structure. For instance,
   if $A$ carries the structure of an (abelian) group in the homotopy category, then
  $H(-,A)$ takes values in (abelian) cohomology \emph{groups}. This is often considered by default.
We highlight that this fails for key   examples of cohomology theories, such as notably for cohomotopy theory
  (Example \ref{ExamplesOfCohomologyTheories}),
  (except in those special degrees where the sphere coefficients happen to admit group structure).
  However, the minimum structure one will usually want to retain is that $A$ is equipped with a
   \emph{point}, namely with a morphism
  $$
  \xymatrix{
    \ast \ar[r]^{\mathrm{pt}_A} & A}
  $$
  in the homotopy category, from the terminal object $\ast$, making it a ``pointed object''.
  In this case  the cohomology sets $H(X,A)$ are also canonically pointed sets,
  namely by the unique cocycle $X \to \ast \xrightarrow{\mathrm{pt}_A} A$ that factors through $\mathrm{pt}_A$.
 This is then called the \emph{trivial cocycle}, while all other cocycles are \emph{non-trivial}.
\end{remark}

\medskip
\medskip

\noindent {\bf Acknowledgements.}
We thank
Vincent Braunack-Mayer,
David Corfield,
Mike Duff,
David Roberts,
and
Christian Saemann for useful comments.
We thank the anonymous referee for careful reading
and useful suggestions.

\medskip


\end{document}